\newcommand{\nf}{\nicefrac}
\renewcommand\bar[1]{\accentset{\rule{.4em}{.7pt}}{#1}}
\newtheorem{theorem}{Theorem}[section]
\newtheorem{proposition}[theorem]{Proposition}
\newtheorem{lemma}[theorem]{Lemma}
\theoremstyle{definition}
\newtheorem{defn}[theorem]{Definition}
\theoremstyle{remark}
\newtheorem*{remark}{Remark}
\newcommand{\kmeansratio}{5.912}
\newcommand{\kmedianratio}{2.406}
\newcommand{\BE}{\mathbb E}
\newcommand{\BP}{\mathbb P}
\newcommand{\BR}{\mathbb R}
\newcommand{\eps}{\varepsilon}
\newcommand{\clients}{\mathcal{D}}
\newcommand{\facilities}{\mathcal{F}}
\newcommand{\opt}{\text{OPT}}
\title{Improved Approximations for Euclidean $k$-means and $k$-median, via Nested Quasi-Independent Sets}
\author{Vincent Cohen-Addad\thanks{Google Research. cohenaddad@google.com}~, Hossein Esfandiari\thanks{Google Research. esfandiari@google.com}~, Vahab Mirrokni\thanks{Google Research. mirrokni@google.com}~, Shyam Narayanan\thanks{Massachusetts Institute of Technology. Work done as an intern at Google Research. shyamsn@mit.edu}}
\date{}
\begin{document}

\maketitle
\begin{abstract}
    Motivated by data analysis and machine learning applications, we consider the popular high-dimensional Euclidean $k$-median and $k$-means problems. We propose a new primal-dual algorithm, inspired by the classic algorithm of Jain and Vazirani~\cite{jain2001lagrangian} and the recent algorithm of Ahmadian, Norouzi-Fard, Svensson, and Ward~\cite{ahmadian2017better}. Our algorithm achieves an approximation ratio of $\kmedianratio$ and $\kmeansratio$ for Euclidean $k$-median and $k$-means, respectively, improving upon the 2.633 approximation ratio of Ahmadian et al.~\cite{ahmadian2017better} and the 6.1291 approximation ratio of Grandoni, Ostrovsky, Rabani, Schulman, and Venkat~\cite{Grandoni21}.
    
    Our techniques involve a much stronger exploitation of the Euclidean metric than previous work on Euclidean clustering. In addition, we introduce a new method of removing excess centers using a variant of independent sets over graphs that we dub a ``nested quasi-independent set''. In turn, this technique may be of interest for other optimization problems in Euclidean and $\ell_p$ metric spaces.
\end{abstract}
\thispagestyle{empty}
\newpage

\thispagestyle{empty}
\tableofcontents
\newpage
\clearpage
\setcounter{page}{1}

 \newpage
\section{Introduction}
The $k$-means and $k$-median problems are among the oldest and most fundamental
clustering problems. Originally motivated by operations research and statistics problems when they
first appeared in the late 50s~\cite{zbMATH03129892,lloyd1957least,hakimi1964optimum} they
are now at the heart of several unsupervised and semi-supervised machine learning models and data mining techniques and are thus
a main part of the toolbox of modern data analysis techniques in a variety of fields.
In addition to their practical relevance, these two problems exhibit strong ties with some classic
optimization problems, such as set cover, and understanding their complexity has thus been a
long standing problem which has inspired several breakthrough techniques.

Given two sets $\clients$, $\facilities$ of points in a metric space and an
integer $k$, the goal of the $k$-median problem is to find a set $S$ of $k$ points
in $\facilities$, called \emph{centers}, minimizing
the sum of distances from each point in $\clients$ to the closest point in $S$. The goal
of the $k$-means problem is to minimize the sum of distances squared.
The complexity of the $k$-median or $k$-means problems heavily depends on the underlying metric space.
In general metric spaces, namely when the distances only need to obey the triangle inequality, the $k$-median and
$k$-means problem
are known to admit a 2.675-approximation~\cite{ByrkaPRST17} and a 9-approximation~\cite{ahmadian2017better}, respectively, and cannot be approximated
better than $1+2/e \sim 1.736$ for $k$-median and $1+8/e \sim 3.943$ for $k$-means assuming P $\neq$ NP.
We know that the upper and lower bounds are tight when we allow a running time of
$f(k) n^{O(1)}$ (i.e.: in the fixed-parameter tractability setting) for arbitrary computable functions $f$~\cite{cohenaddad2019fpt},
suggesting that the lower bounds 
cannot be improved for 
the general case either.
When we turn to slightly more structured metric spaces such as Euclidean metrics
the picture changes drastically. While the problem remains NP-hard when the dimension $d = 2$
(and $k$ is large)~\cite{megiddo1984complexity} or $k=2$ (and $d$ is large)~\cite{dasgupta2008hardness}, both problems admit
$(1+\eps)$-approximation algorithms with running times $f(k,\eps) n d$~\cite{KumarSS10}, with an exponential dependency in $k$, and
$f(d, \eps) n \log^{O(1)} n$~\cite{cohen2019near,Cohen-Addad18}, with
a doubly exponential dependency in $d$ (the latter extends to doubling metrics), a prohibitive running time
in practice.

Arguably, the most practically important setting is when the input points lie in Euclidean space
of large dimension and the number of clusters is non-constant, namely when both $k$ and $d$ are
part of the input~\footnote{Note here that $d$ can always be assumed to
be of $O(\log k/\eps^{-2})$ using dimensionality reduction techniques~\cite{MakarychevMR19} (see also~\cite{BecchettiBC0S19} for a slightly worse bound)}. Unfortunately,
the complexity of the problem in this regime is far from being understood. Recent results
have proven new inapproximability results:
respectively 1.17 and 1.07 for the Euclidean k-means and k-median problems assuming P $\neq$ NP
and 1.73 and 1.27 assuming the Johnson Coverage
hypothesis of~\cite{Cohen-AddadS19,Cohen-AddadLS22}. For the continuous
case, the same series of papers show a hardness of 1.06 and 1.015 for Euclidean $k$-means and $k$-median respectively assuming P $\neq $ NP
and 1.36 and 1.08 assuming the Johnson-Coverage hypothesis
(see also~\cite{CKL21} for further related work on continuous $k$-median and $k$-means in other metric spaces). 
Interestingly, the above hardness results 
implies that there is no algorithmic benefit that could be gained from the $\ell_1$-metric: Assuming the Johnson Coverage hypothesis the hardness bounds for $k$-median and $k$-means are the same in the $\ell_1$-metric that in general metrics~\cite{Cohen-AddadLS22}. However,
it seems plausible
to leverage the structure of the $\ell_2$-metric to obtain approximation algorithms bypassing the
lower bounds for the general metric or $\ell_1$ case (e.g.: obtaining an approximation ratio better than $1+2/e$ for $k$-median).

In a breakthrough result, Ahmadian et al.~\cite{ahmadian2017better} were the first to exploit the structure
of high-dimensional Euclidean metrics to obtain better bounds than the current best-known bounds
for the general metric case. Concretely, they showed how to obtain a 6.3574-approximation
for $k$-means (improving upon the 9-approximation of Kanungo et al.~\cite{KanungoMNPSW04}) and a $1+\sqrt{8/3}+\eps \approx 2.633$-approximation
for $k$-median (improving upon the 2.675-approximation for general metrics~\cite{ByrkaPRST17}).
The bound for $k$-means was recently improved to a 6.1291-approximation (or more precisely, the unique real root to $4x^3-24x^2-3x-1 = 0$) by \cite{Grandoni21}, by tweaking the analysis of Ahmadian et al.~\cite{ahmadian2017better}, and no progress has been made for Euclidean $k$-median after \cite{ahmadian2017better}.

This very active line of research mixes both new hardness of approximation results and approximation algorithms
and aims at answering a very fundamental question: How much can we leverage the Euclidean geometry
to obtain better approximation algorithms?
And conversely, what do we learn about Euclidean geometry when studying basic computational problems?
Our results aim at making progress toward answering the above questions.

\subsection{Our Results}
Our main result consists of better approximation algorithms for both $k$-median and $k$-means,
with ratio $\kmedianratio$ for $k$-median and $\kmeansratio$ for $k$-means, improving
the 2.633-approximation of Ahmadian et al.~\cite{ahmadian2017better} for $k$-median and $6.1291$-approximation of \cite{Grandoni21} for $k$-means.
\begin{theorem}
\label{thm:main:approxratio}
For any $\eps > 0$, there exists a polynomial-time algorithm that returns a solution to the Euclidean $k$-median problem whose cost is at most $\kmedianratio+\eps$ times the optimum.

For any $\eps > 0$, there exists a polynomial-time algorithm that returns a solution to the Euclidean $k$-means problem whose cost is at most $\kmeansratio+\eps$ times the optimum.
\end{theorem}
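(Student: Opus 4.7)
The plan is to follow the Lagrangian-multiplier-preserving (LMP) framework of Jain--Vazirani as refined by Ahmadian et al., and to inject Euclidean-specific structure at every step. Starting from the $k$-median / $k$-means LP, I would introduce a uniform facility opening cost $\lambda$ to turn the problem into facility location, and design an LMP $\alpha$-approximation algorithm that returns a set $S$ satisfying $\mathrm{cost}(S)+\lambda|S|\le \alpha(\opt+\lambda k)$. Two runs with $\lambda$ bracketing the ``correct'' value produce $S_1$ with $|S_1|>k$ and $S_2$ with $|S_2|<k$, and the bi-point solution is then rounded to exactly $k$ centers. The final approximation ratio is controlled by the interplay of (i) the LMP factor $\alpha$ and (ii) the quality of this rounding, so I would push both by exploiting the $\ell_2$ geometry.

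For the LMP subroutine I would use a Jain--Vazirani style dual ascent: raise every client dual $\alpha_j$ simultaneously, and once $\sum_j(\alpha_j-d(j,i))_+\ge\lambda$ the facility $i$ becomes tentatively open and all contributing clients freeze. The Euclidean enhancement is that when a single client $j$ pays toward several facilities, the vectors from $j$ to those facilities form genuine angles, so the pairwise distances between these facilities are strictly smaller than what the triangle inequality alone gives; this yields a tighter reassignment charge than in the general-metric analysis of Ahmadian et al. For $k$-means the same idea is transported through the inequality $\|a+b\|^2\le(1+\eta)\|a\|^2+(1+1/\eta)\|b\|^2$ with $\eta$ optimized at the end.

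The rounding step is where the new \emph{nested quasi-independent set} idea should enter. I would build a conflict graph on $S_1\cup S_2$ whose edges connect facilities that are close enough that keeping both is wasteful. A plain maximal independent set discards too many useful facilities, so instead I would construct a nested sequence $I_0\subseteq I_1\subseteq\cdots$ where $I_t$ is ``quasi-independent'' at scale $\rho^t$ and every facility outside $I_t$ has only a bounded number of $I_t$-neighbors at the coarser scale $\rho^{t-1}$. Picking the level whose size matches $k$ and rerouting the removed facilities onto the kept ones then gives exactly $k$ centers, with the rerouting cost controlled level by level via Euclidean exchange arguments similar to those used in the LMP step.

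The main obstacle I anticipate is the joint analysis: the LMP factor, the quasi-independence slack, and the nesting ratio $\rho$ must all be tuned through one factor-revealing LP to hit the specific constants $\kmedianratio$ and $\kmeansratio$. In particular, proving that a nested quasi-independent set with the required properties can be constructed in polynomial time, and that those properties compose across levels so the rounding costs \emph{telescope} rather than accumulate, is where the bulk of the work lies. Once the combinatorial lemma is in hand, the numerical ratios should fall out of a small optimization problem that can be solved and verified numerically for each of the two objectives.
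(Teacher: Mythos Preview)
Your high-level framework (LMP facility location, dual ascent, then combine to hit exactly $k$) is correct, but you have misplaced the central innovation, and this is a genuine gap.

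You put the nested quasi-independent set in the \emph{bi-point rounding} step, after obtaining two LMP solutions $S_1,S_2$. In the paper, it lives inside the \emph{LMP pruning phase itself}. Your LMP subroutine is essentially the Ahmadian et al.\ algorithm: dual ascent followed by a single conflict-graph independent set, with the Euclidean gain coming from the angle argument you describe. The paper proves (Appendix~\ref{app:limit}) that this algorithm is stuck at $1+\sqrt{2}$ for $k$-median: for $\delta\ge\sqrt{2}$ the reassignment bound is tight, and for $\delta<\sqrt{2}$ the dual can go negative. So your LMP factor cannot reach $\kmedianratio$, and no amount of clever rounding afterwards recovers from an LMP ratio that is already too large.

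What the paper actually does is replace the single maximal independent set in the pruning phase by a three-level structure $(I_1,I_2,I_3)$ tied to three thresholds $\delta_1\ge\delta_2\ge\delta_3$: $I_1$ is maximal in $H(\delta_1)$; $I_2$ extends it in $H(\delta_2)$; and $I_3$ consists of facilities with \emph{exactly one} $H(\delta_2)$-neighbor in $I_2$ (giving a map $q:I_3\to I_2$) and no $H(\delta_3)$-neighbor in $I_2$. The output $S$ always contains $I_1$, and for each $i_2\in I_2$ a coin decides whether to include $i_2$ or instead the fibre $q^{-1}(i_2)\subset I_3$, each with probability $2p$. The anti-correlation between $i_2$ and $q^{-1}(i_2)$ is precisely what breaks the $1+\sqrt{2}$ bottleneck. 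This is quite different from your picture of a geometric multi-scale hierarchy $I_0\subseteq I_1\subseteq\cdots$ at scales $\rho^t$ with ``pick the level of size $k$.''

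The conversion to exactly $k$ centers is also not bi-point rounding. The paper runs the Ahmadian et al.\ quasi-continuous sweep over $\lambda$, tracks the \emph{expected} size $|I_1|+p\,|I_2\cup I_3|$, and stops the first time it drops below $k$. At that moment one has two consecutive nested quasi-independent sets; the final solution is obtained either by tuning $p$ or by interpolating between the two via the negative-submodularity of the clustering cost (Propositions~\ref{prop:submodular_2}--\ref{prop:submodular_3}), together with a reduction (Lemma~\ref{lem:center-reduction}) that allows $k+O(1)$ centers. The final constants then come from a small linear feasibility program in the case parameters, not from a bi-point combination.
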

Our approximation ratio for $k$-median breaks the natural barrier of $1+\sqrt{2} > 2.41$ and our approximation ratio for $k$-means is the first
below 6.
The approximation bound of $1+\sqrt{2}$ for Euclidean $k$-median is indeed a natural barrier for the state-of-the-art approach of Ahmadian et al.~\cite{ahmadian2017better} that relies on using the primal-dual approach of 
Jain and Vazirani~\cite{jain2001lagrangian}.
At a high level, the approximation bound of
3 for general metrics for the algorithm of Jain and Vazirani can be interpreted as an approximation bound of 1+2, where 1 is the optimum service cost and an additional cost of 2 times the optimum 
is added for input clients poorly served by the solution. Since general metrics are only required to 
satisfy the triangle inequality, the 2 naturally arises from bounding the distance from a client to its center in the solution output and an application of the triangle inequality 
to bound this distance. Therefore, one can hope to then obtain a substantial gain when working in Euclidean spaces: The triangle inequality is rarely tight (unless points are aligned) and this leads to the hope of replacing the 2 by $\sqrt{1+1}$ in the above the bound, making the approximation ratio of $1+\sqrt 2$ a natural target for $k$-median. 
In fact, this high-level discussion can be made slightly more formal:
we show that the analysis of the result of Ahmadian et al.~\cite{ahmadian2017better} cannot be
improved below $1+\sqrt{2}$ for $k$-median,
exhibiting a limit of the state-of-the-art approaches.

In this paper, we take one step further, similar to the result of Li and Svensson~\cite{LiS16} for the general metric case who improved for the first below
the approximation ratio of 3, we show how to bypass $1+\sqrt 2$ for 
Euclidean $k$-median (and the bound of 6.129 for $k$-means).

Furthermore, one of our main contributions is to obtain better Lagrangian Multiplier Preserving (LMP) approximations
for the Lagrangian relaxations of both problems. 
To understand this, we need to give a little more
background on previous work and how previous approximation algorithms were derived.
A natural approach to the $k$-median and $k$-means problem is to (1) relax the constraint on the number of
centers $k$ in the solution, (2) find an approximate solution for the relaxed problem, and (3) derive an approximate solution
that satisfies the constraint on the number of centers.
Roughly, an LMP approximate solution $S$ is a solution where we bound the ratio of the cost of $S$ to the optimum cost, but pay a penalty proportional to some $\lambda \ge 0$ for each center in $S$. Importantly, if $|S| = k$, an LMP $\rho$-approximation that outputs $S$ is a $\rho$-approximation for $k$-means (or $k$-median). We formally define an LMP approximation in Section \ref{sec:prelim}.
LMP solutions have played a central role in obtaining better approximation algorithms for $k$-median
in general metrics and more recently in high-dimensional Euclidean metrics. Thus obtaining better LMP solutions for the Lagrangian relaxation of $k$-median and $k$-means has been an important problem. 
A byproduct of our approach is a new 2.395-LMP for Euclidean $k$-median and $3+2\sqrt{2}$-LMP for Euclidean $k$-means.

Our techniques may be of use in other clustering and combinatorial optimization problems over Euclidean space as well, such as Facility Location. In addition, by exploiting the geometric structure similarly, these techniques likely extend to $\ell_p$-metric spaces (for $p > 1$).

  


\subsection{Related Work}
\label{sec:related-work}
The first $O(1)$-approximation for the $k$-median problem in general metrics
is due to Charikar et al.~\cite{CharikarGTS02}. The $k$-median
problem has then been a testbed for a variety of powerful approaches 
such as the primal-dual
schema~\cite{jain2001lagrangian,CharikarG99}, 
greedy algorithms (and dual fitting)~\cite{JainMMSV03}, improved LP
rounding~\cite{CharikarL12}, local-search~\cite{AryaGKMMP04,CohenAddadGHOS22}, and 
LMP-based-approximation~\cite{LiS16}. The current best approximation
guarantee is 2.675~\cite{ByrkaPRST17} and the best
hardness result is
$(1+2/e)$~\cite{GuK99}.
For $k$-means in general metrics, the current best approximation guarantee is 9~\cite{ahmadian2017better} and the current best hardness result is $(1+8/e)$ (which implicitly follows from~\cite{GuK99}, as noted in \cite{ahmadian2017better}).

We have already covered in the introduction the history of 
algorithms for
high-dimensional Euclidean $k$-median and $k$-means 
with running time polynomial in both $k$ and the dimension and that leverage the properties of the Euclidean metrics
(\cite{KanungoMNPSW04,ahmadian2017better,Grandoni21}). In terms of lower
bounds, the first to show that the high-dimensional 
$k$-median and $k$-means problems were APX-hard were Guruswami and Indyk~\cite{GI03},
and later Awasthi et al.~\cite{AwasthiCKS15} showed that the APX-hardness
holds even if the centers can be placed arbitrarily in $\mathbb{R}^d$. The inapproximability 
bound was later slightly improved by Lee et al.~\cite{DBLP:journals/ipl/LeeSW17} until the recent best known bounds of~\cite{Cohen-AddadS19,Cohen-AddadLS22}.
From a more practical point of view, Arthur and Vassilvitskii showed that the widely-used
popular heuristic of Lloyd~\cite{lloyd1957least} can lead to solutions with arbitrarily
bad approximation guarantees~\cite{ArV09}, but can be improved by a simple seeding strategy, called $k$-means++, so as to guarantee that the output is within an $O(\log k)$
factor of the optimum~\cite{ArV07}.

For fixed $k$, there are several known approximation schemes, typically using small coresets \cite{BecchettiBC0S19,FL11,KumarSS10}
There also exists a large body of bicriteria approximations (namely outputting a solution with $(1+c)k$ centers for 
some constant $c > 0$): see, e.g.,~\cite{BaV15,CharikarG05,CoM15,KPR00,MakarychevMSW16}.
There has also been a long line or work on the metric facility location problem,
culminating with the result of Li~\cite{Li13} who gave a 1.488-approximation
algorithm, almost matching the lower bound of 1.463 of Guha and Khuller~\cite{GuK99}.
Note that no better bound is known for high-dimensional Euclidean facility
location.

\subsection{Roadmap}
In Section \ref{sec:prelim}, we describe some preliminary definitions. We also formally define the LMP approximation and introduce the LMP framework of Jain and Vazirani~\cite{jain2001lagrangian} and Ahmadian et al.~\cite{ahmadian2017better}. In Section \ref{sec:overview}, we provide an overview of the new technical results we developed to obtain the improved bounds. In Section \ref{sec:lmp_k_means}, we obtain a $3+2\sqrt{2} \approx 5.828$ LMP approximation for the Euclidean $k$-means problem. In Section \ref{sec:poly_time_alg}, extend our LMP approximation to a $\kmeansratio$-approximation for standard Euclidean $k$-means. Finally, in Section \ref{sec:k_median}, we obtain a $2.395$ LMP approximation for Euclidean $k$-median that can be extended to a $\kmedianratio$-approximation for standard Euclidean $k$-median.

In Appendix \ref{app:limit}, we briefly show that the result of \cite{ahmadian2017better} cannot be extended beyond $1+\sqrt{2}$ for Euclidean $k$-median: this demonstrates the need of our new techniques for breaking this barrier.

\section{Preliminaries} \label{sec:prelim}

Our goal is to provide approximation algorithms for either the $k$-means or $k$-median problem in Euclidean space on a set $\mathcal{D}$ of \textit{clients} of size $n$.
For the entirety of this paper, we consider the discrete $k$-means and $k$-median problems, where rather than having the $k$ centers allowed to be anywhere, we are given a fixed set of facilities $\mathcal{F}$ of size $m$ which is polynomial in $n$, from which the $k$ centers must be chosen from. It is well-known (e.g., \cite{Mat00}) that a polynomial-time algorithm providing a $\rho$-approximation for discrete $k$-means (resp., median) implies a polynomial-time $\rho+\eps$-approximation for standard $k$-means (resp., median) for an arbitrarily small constant $\eps$. 

For two points $x, y$ in Euclidean space, we define $d(x, y)$ as the Euclidean distance  (a.k.a. $\ell_2$-distance)  between $x$ and $y$. In addition, to avoid redefining everything or restating identical results for both $k$-means and $k$-median, we define $c(x, y) := d(x, y)^2$ in the context of $k$-means and $c(x, y) := d(x, y)$ in the context of $k$-means. For a subset $S$ of Euclidean space, we define $d(x, S) := \min_{s \in S} d(x, s)$ and $c(x, S) := \min_{s \in S} c(x, s)$.

For the $k$-means (or $k$-median) problem, for a subset $S \subset \mathcal{F}$, we define $\text{cost}(\mathcal{D}, S) := \sum_{j \in \mathcal{D}} c(j, S)$. In addition, we define $\text{OPT}_k$ to be the optimum $k$-means (or $k$-median) cost for a set $\mathcal{D}$ and a set of facilities $\mathcal{F}$, i.e., $\text{OPT}_k = \min_{S \subset \mathcal{F}, |S| = k} \text{cost}(\mathcal{D}, S)$. Recall that a $\rho$-approximation algorithm is an algorithm that produces a subset of $k$ facilities $S \subset \mathcal{F}$ with $\text{cost}(\mathcal{D}, S) \le \rho \cdot \text{OPT}_k$ in the worst-case. 

\subsection{The Lagrangian LP Relaxation and LMP Solutions} \label{subsec:lagrangian_lp}

We first look at the standard LP formulation for $k$-means/medians. The variables of the LP include a variable $y_i$ for each facility $i \in \mathcal{F}$ and a variable $x_{i, j}$ for each pair $(i, j)$ for $i \in \mathcal{F}$ and $j \in \mathcal{D}$. The standard LP relaxation is the following:

\begin{alignat}{5}
    \text{minimize } & \sum_{i \in \mathcal{F}, j \in \mathcal{D}} x_{i, j} \cdot c(j, i) \hspace{0.5cm} && && && \label{eq:lp_A}\\
    \text{such that } & && \sum_{i \in \mathcal{F}} x_{i, j} && \ge 1 &&\qquad \forall j \in \mathcal{D} \label{eq:lp_B} \\
    & && \quad \sum_{i \in \mathcal{F}} y_i && \le k && \label{eq:lp_C}\\
    & && 0 \le x_{i, j} && \le y_i &&\qquad \forall j \in \mathcal{D}, i \in \mathcal{F} \label{eq:lp_D}
\end{alignat}
The intuition behind this linear program is that we can think of $x_{i, j}$ as the indicator variable of client $j$ being assigned to facility $i$, and $y_i$ as the indicator variable of facility $i$ being opened. We need every facility $j \in \mathcal{D}$ to be assigned to at least one client, that at most $k$ facilities $i$ are opened, and that $x_{i, j}$ is $1$ only if $y_i = 1$ (since clients can only be assigned to open facilities). We also ensure a nonnegativity constraint on $x_{i, j}$ and $y_i$ by ensuring that $0 \le x_{i, j}$. Finally, our goal is to minimize the sum of distances (for $k$-median) or the sum of squared distances (for $k$-means) from each client to its closest facility -- or simply the facility it is assigned to, and if exactly one of the $x_{i, j}$ values is $1$ for a fixed client $j$ and the rest are $0$, then $\sum_{i \in \mathcal{F}} x_{i, j} c(j, i)$ is precisely the distance (or squared distance) from $j$ to its corresponding facility. By relaxing the linear program to have real variables, we can only decrease the optimum, so if we let $L$ be the optimum value of the LP relaxation, then $L \le \text{OPT}_k$.

Jain and Vazirani \cite{jain2001lagrangian} considered the \emph{Lagrangian relaxation} of this linear program, by relaxing the constraint \eqref{eq:lp_C} and adding a dependence on a Lagrangian parameter $\lambda \ge 0$. By doing this, the number of facilities no longer has to be at most $k$ in the relaxed linear program but the objective function penalizes for opening more than $k$ centers. Namely, the goal becomes to minimize
\begin{equation} \label{eq:lp_A'}
    \sum_{i \in \mathcal{F}, j \in \mathcal{D}} x_{i, j} \cdot c(j, i) + \lambda \cdot \left(\sum_{i \in \mathcal{F}} y_i - k\right)
\end{equation}
subject to Constraints \eqref{eq:lp_B} and \eqref{eq:lp_D}. Indeed, for $\lambda \ge 0$, the objective only decreases from \eqref{eq:lp_A} to \eqref{eq:lp_A'} for any feasible solution to the original LP. Therefore, this new linear program, which we will call $\text{LP}(\lambda)$, has optimum $L(\lambda) \le L$. Now, it is known that the Dual linear program to this Lagrangian relaxation of the original linear program can be written as the following, which has variables $\alpha = \{\alpha_j\}_{j \in \mathcal{D}}$:
\begin{alignat}{5}
    \text{maximize } & \left(\sum_{j \in \mathcal{D}} \alpha_j\right) - \lambda \cdot k \hspace{0.5cm} && && && \label{eq:lp_E} \\
    \text{such that} & && \sum_{j \in \mathcal{D}} \max(\alpha_j-c(j, i), 0) && \le \lambda && \qquad \forall i \in \mathcal{F} \label{eq:lp_F}\\
    & && \hspace{3.6cm} \alpha &&\ge 0 && \label{eq:lp_G}
\end{alignat}
We call this linear program $\text{DUAL}(\lambda)$. Because the optimum to $\text{DUAL}(\lambda)$ equals the optimum to the primal $\text{LP}(\lambda)$ by strong duality, this means that for any $\alpha = \{\alpha_j\}_{j \in \mathcal{D}}$ satisfying Conditions \eqref{eq:lp_F} and \eqref{eq:lp_G}, we have that $\left(\sum_{j \in \mathcal{D}} \alpha_j\right) - \lambda \cdot k \le L(\lambda) \le L \le \text{OPT}_{k}$.

For a fixed $\lambda$, we say that $\alpha$ is \emph{feasible} if it satisfies both \eqref{eq:lp_F} and \eqref{eq:lp_G}. Thus, to provide a $\rho$-approximation to $k$-means (or $k$-median), it suffices to provide both a feasible $\alpha$ and a subset $S \subset \mathcal{F}$ of size $k$ such that $\text{cost}(\mathcal{D}, S)$ is at most $\rho \cdot \left(\sum_{j \in \mathcal{D}} \alpha_j - \lambda \cdot |S|\right).$

\medskip

In both the work of Jain and Vazirani~\cite{jain2001lagrangian} and the work of Ahmadian et al.~\cite{ahmadian2017better}, they start with a weaker type of algorithm, called a \emph{Lagrangian Multiplier Preserving} (LMP) approximation algorithm. To explain this notion, let $\text{OPT}(\lambda)$ represent the optimum (minimum) for the modified linear program $\text{LP}'(\lambda)$, which is the same as $\text{LP}(\lambda)$ except without the subtraction of $\lambda \cdot k$ in the objective function \eqref{eq:lp_A'}. (So, $\text{LP}'(\lambda)$ has no dependence on $k$). Note that this is also the optimum (maximum) for $\text{DUAL}'(\lambda)$, which is the same as $\text{DUAL}(\lambda)$ except without the subtraction of $\lambda \cdot k$ in the objective function \eqref{eq:lp_E}. Then, for some fixed $\lambda \ge 0$, we say that a $\rho$-approximation algorithm is \emph{LMP} if it returns a solution $S \subset \mathcal{F}$ satisfying
\[\sum_{j \in \mathcal{D}} c(j, S) \le \rho \cdot \left(\text{OPT}(\lambda) - \lambda \cdot |S|\right).\]
Indeed, if we could find a choice of $\lambda$ and an LMP $\rho$-approximate solution $S$ with size $|S| = k$, we would have found a $\rho$-approximation for $k$-means (or $k$-median) clustering.

\subsection{Witnesses and Conflict Graphs} \label{subsec:witness_and_conflict}

Jain and Vazirani \cite{jain2001lagrangian} proposed a simple primal-dual approach to create a feasible solution $\alpha$ of $\text{DUAL}(\lambda)$ with certain additional properties that are useful for providing an efficient solution to the original $k$-median (or $k$-means) problem efficiently. We describe it briefly as follows, based on the exposition of Ahmadian et al.~\cite[Subsection 3.1]{ahmadian2017better}.

Start with $\alpha = \textbf{0}$, i.e., $\alpha_j = 0$ for all $j \in \mathcal{D}$. We increase all $\alpha_j$'s continuously at a uniform rate, but stop growing each $\alpha_j$ once one of the following two events occurs:

\begin{enumerate}
    \item For some $i \in \mathcal{F}$, a dual constraint $\sum_{j \in \mathcal{D}} \max(\alpha_j-c(j, i), 0) \le \lambda$ becomes tight (i.e., reaches equality). Once this happens, we stop growing $\alpha_j$, and declare that facility $i$ is \emph{tight} for all $i$ such that the constraint became equality. In addition, we will say that $i$ is the \emph{witness} of $j$.
    \item For some already tight facility $i$, we grow $\alpha_j$ until $\alpha_j = c(j, i)$. In this case, we also say that $i$ is the \emph{witness} of $j$.
\end{enumerate}
    We note that this process must eventually terminate for all $j$ (e.g., once $\alpha_j$ reaches $\min_{i \in \mathcal{F}} c(j, i)+\lambda$). This completes our creation of the dual solution $\alpha$ (it is simple to see that $\alpha$ is feasible). 
    
    For any client $j$, we define $N(j) := \{i \in \mathcal{F}: \alpha_j > c(j, i)\}$, and likewise, for any client $i$, we define $N(i) := \{j \in \mathcal{D}: \alpha_j > c(j, i)\}$. For any tight facility $i$, we will define $t_i := \max_{j \in N(i)} \alpha_j$, where $t_i = 0$ by default if $N(i) = \emptyset$. For each client $j$, its witness $i$ will have the useful properties that $t_{i} \le \alpha_j$ and $c(j, i) \le \alpha_j$.
    
    We have already created our dual solution $\alpha$: to create our set of $k$ facilities, we will choose a subset of the tight facilities. First, we define the \textit{conflict graph} on the set of tight facilities. Indeed, Jain and Vazirani \cite{jain2001lagrangian}, Ahmadian et al. \cite{ahmadian2017better}, and we all have slightly different definitions: so we contrast the three.
\begin{itemize}
    \item \cite{jain2001lagrangian} Here, we say that $(i, i')$ forms an edge in the conflict graph $H$ if there exists a client $j$ such that $i, i' \in N(j)$ (or equivalently, $\alpha_j \ge c(j, i)$ and $\alpha_j \ge c(j, i')$).
    \item \cite{ahmadian2017better} Here, we say that $(i, i')$ forms an edge in the conflict graph $H(\delta)$ (where $\delta > 0$ is some parameter) if $c(i, i') \le \delta \cdot \min(t_i, t_{i'})$ and there exists a client $j$ such that $i, i' \in N(j)$.
    \item In our definition, we completely drop the condition from \cite{jain2001lagrangian}, and just say that $(i, i')$ forms an edge in the conflict graph $H(\delta)$ if $c(i, i') \le \delta \cdot \min(t_i, t_{i'})$.
\end{itemize}
    It turns out that in the algorithm of Ahmadian et al.~\cite{ahmadian2017better}, the approximation factor is not affected by whether they use their definition or our definition. But in our case, it turns out that dropping the condition from \cite{jain2001lagrangian} in fact allows us to obtain a better approximation.
    
    To provide an LMP approximation, both Jain and Vazirani~\cite{jain2001lagrangian} and Ahmadian et al.~\cite{ahmadian2017better} constructed a maximal independent set $I$ of the conflict graph $H$ (or $H(\delta)$ for an appropriate choice of $\delta > 0$) and used $I = S$ as the set of centers. For Jain and Vazirani's definition, the independent set $I$ obtains an LMP $9$-approximation for metric $k$-means. For Ahmadian et al.'s definition, the independent set $I$ obtains an LMP $6.1291$-approximation for Euclidean $k$-means if $\delta$ is chosen properly. (We note that Ahmadian et al. only proved a factor of $6.3574$, though
    their argument can be improved to show a $6.1291$-approximation factor as proven by Grandoni et al.~\cite{Grandoni21}). While we will not explicitly prove it, our definition of $H(\delta)$ also obtains the same bound with the same choice of $\delta$. For the Euclidean $k$-median problem, using either Ahmadian et al.'s or our definition, one can obtain an LMP $(1+\sqrt{2})$-approximation: Ahmadian et al.~\cite{ahmadian2017better} only proved it for the weaker $1+\sqrt{8/3}$ approximation factor, but we prove the improved approximation in Subsection \ref{subsec:lmp_k_median_easy}.
    We then show how to obtain a better LMP solution
    and then a better approximation bound.
    


\section{Technical Overview} \label{sec:overview}


The algorithms of both Jain and Vazirani \cite{jain2001lagrangian} and Ahmadian et al \cite{ahmadian2017better} begin by constructing an LMP approximation. Their approximation follows two phases: a \emph{growing} phase and a \emph{pruning} phase. In the growing phase, as described in Subsection \ref{subsec:witness_and_conflict}, they grow the solution $\alpha$ starting from $\alpha = \textbf{0}$, until they obtain a suitable dual solution $\alpha$ for $\text{DUAL}(\lambda)$. In addition, they have a list of \emph{tight} facilities $i$, which we think of as our candidate centers. The pruning phase removes unnecessary facilities: as described in Subsection \ref{subsec:witness_and_conflict}, we create a conflict graph $H(\delta)$ over the tight facilities, and only choose a maximal independent set $I$. Hence, we are pruning out tight facilities to make sure we do not have too many nearby facilities. This way we ensure that the total number of centers is not unnecessarily large. Our main contributions are to improve the pruning phase with a new algorithm and several new geometric insights, and to show how our LMP approximation can be extended to improved algorithms for standard Euclidean $k$-means (and $k$-median) clustering.

\paragraph{Improved LMP Approximation:}
To simplify the exposition, we focus on Euclidean $k$-means. To analyze the approximation, Ahmadian et al.~\cite{ahmadian2017better} compare the cost of each client $j$ in the final solution to its contribution to the dual objective function. The cost of a client $j$ is simply $c(j, I)$ where $I$ is our set of centers, and $j$'s contribution to the dual is $\alpha_j - \sum_{i \in N(j) \cap I} (\alpha_j-c(j, i))$, where we recall the definition of $N(j)$ from Subsection \ref{subsec:witness_and_conflict}. One can show that the sum of the individual dual contributions equals the dual objective \eqref{eq:lp_E}. 
By casework on the size $a = |N(j) \cap I|$, \cite{Grandoni21} (by modifying the work of \cite{ahmadian2017better}) shows that $c(j, I) \le \rho \cdot \left[\alpha_j - \sum_{i \in N(j) \cap I} (\alpha_j-c(j, i))\right],$ where $\rho \approx 6.1291$ if $\delta \approx 2.1777$ is chosen appropriately (in general, we think of $\delta$ as slightly greater than $2$). The only bottlenecks (i.e., where the cost-dual ratio could equal $\rho$) are when $a \in \{0, 2\}$. Our LMP approximation improves the pruning phase by reducing the cost-dual ratio in these cases.

Due to the disjointed nature of the bottleneck cases, our first observation was that averaging between two or more independent sets may be beneficial. This way, if for some client $j$, the first independent set $I_1$ had $a = 0$ and the second set $I_2$ had $a = 2$, perhaps by taking a weighted average of $I_1$ and $I_2$ we can obtain a set $I$ where $a = 1$ with reasonable probability. Hence, the expected cost-dual ratio of $j$ will be below $\rho$. The second useful observation comes from the fact that $t_{i^*} \le \alpha_j$ and $c(j, i^*) \le \alpha_j$, if $i^*$ is the witness of $j$. In the $a=0$ case, \cite{ahmadian2017better} applies this to show that $d(j, i^*) \le \sqrt{\alpha_j}$ and $d(i^*, I) \le \sqrt{\delta \cdot t_{i^*}} \le \sqrt{\delta \cdot \alpha_j}$, which follows by the definition of conflict graph. Hence, the bottleneck occurs when $i^*$ has distance exactly $\sqrt{\alpha_j}$ from $j$, and the nearest point $i \in I$ has distance exactly $\sqrt{\delta \cdot \alpha_j}$ from $i^*$, in the direction opposite $j$. This causes $c(j, I)$ to be $(1+\sqrt{\delta})^2 \cdot \alpha_j$ in the worst case, whereas $j$'s contribution to the dual is merely $\alpha_j$. To reduce this ratio, we either need to make sure that the distance from $i^*$ to either $j$ or $i$ is smaller, or that $j, i^*, i$ do not lie in a line in that order.

A reasonable first attempt is to select two choices $\delta_1 \ge \delta \ge \delta_2$, and consider the nested conflict graphs $H(\delta_1) \supset H(\delta_2)$. We can then create nested independent sets by first creating $I_1$ as a maximal independent set of $H(\delta_1)$, then extending it to $(I_1 \cup I_2) \supset I_1$ for $H(\delta_2)$. Our final set $S$ will be an average of $I_1$ and $(I_1 \cup I_2)$: we include all of $I_1$ and each point in $I_2$ with some probability. The motivation behind this attempt is that if both $N(j) \cap I_1$ and $N(j) \cap I_2$ are empty, the witness $i^*$ of $j$ should be adjacent in $H(\delta_1)$ to some point $i_1 \in I_1$, and adjacent in $H(\delta_2)$ to some point $i_2 \in (I_1 \cup I_2)$. Hence, either $i_1 = i_2$, in which case the distance from $j$ to $I_1$ is now only $(1+\sqrt{\delta_2}) \cdot \sqrt{\alpha_j}$ instead of $(1+\sqrt{\delta_1}) \cdot \sqrt{\alpha_j}$ (see Figure \ref{fig:1a}), or $i_1 \neq i_2$, in which case $i_1, i_2$ must be far apart because $i_1, i_2$ are both in the independent set $I_1 \cup I_2$ (see Figure \ref{fig:1b}). In the latter case, we cannot have the bottleneck case for both $i_1$ and $i_2$, as that would imply $j, i^*, i_2, i_1$ are collinear in that order with $d(i^*, i_2) = \sqrt{\delta_2 \cdot \alpha_j}$ and $d(i^*, i_1) = \sqrt{\delta_1 \cdot \alpha_j}$, so $i_2, i_1$ are too close. Hence, it appears that we have improved one of the main bottlenecks.

Unfortunately, we run into a new issue, if $i_2 = j$ and no other points in $I_1 \cup I_2$ were in $N(j)$. While this case may look good because $|N(j) \cap I_2| = 1$ which is not a bottleneck, it is only not a bottleneck because the contribution of $j$ to the clustering cost and the dual both equal $0$ in this case. So, if we condition on $i_2 \in S$, the cost and dual are not affected by $j$, but if we condition on $i_2 \not\in S$, the cost-dual ratio of $j$ could be $(1+\sqrt{\delta_1})^2$ -- hence, we have again made no improvement. While one could attempt to fix this by creating 
a series of nested independent sets, this approach also fails to work for the same reasons. Hence, we have a new main bottleneck case, where $i_2 = j$, and $j, i^*, i_1$ are collinear in that order with $d(j, i^*) = \sqrt{\alpha_j}$ and $d(i^*, i_1) = \sqrt{\delta_1 \cdot \alpha_j}$ (see Figure \ref{fig:1c}).

We now explain the intuition for fixing this. 
In the main bottleneck case, if we could add the witness $i^*$ of $j$ to $S$, this would reduce $c(j, S)$ significantly if $i_2 \not\in S$, yet does not affect $j$'s contribution to the dual. Unfortunately, adding $i^*$ to $S$ reduces the dual nonetheless, due to other clients. Instead, we will consider a subset of tight facilities that are close, but not too close, to exactly one tight facility in $I_2$ but not close to any other facilities in $I_1 \cup I_2$. In the main bottleneck cases, the witnesses $i^*$ precisely satisfy the condition, as may some additional points. We again prune these points by creating a conflict graph just on these vertices, and pick another maximal independent set $I_3$. Finally, with some probability we will replace each point $i_2 \in I_2$ with the points in $I_3$ that are close to $i_2$. In our main bottleneck case, we will either pick $i^* \in I_3$, or pick another point $i_3$ that is within distance $\sqrt{\delta_1 \cdot \alpha_j}$ of $i^*$, but now must be far from all points in $I_1 \cup I_2$ and therefore will not have distance $(1+\sqrt{\delta_1}) \cdot \sqrt{\alpha_j}$ from $j$ (see Figure \ref{fig:1c}). In addition, $i_3$ might be close, but is not allowed to be too close to $i_2 = j$, so if we replace $i_2$ with $i_3$, we do not run into the issue of $j$'s contribution being $0$ for both the clustering cost and the dual solution. 

Given $I_1, I_2, I_3$, our overall procedure for generating $S$ is to include all of $I_1$, and include points in $I_2$ and $I_3$ with some probability. In addition, each point $i_3 \in I_3$ is close to a unique point $i_2 \in I_2$, and we anti-correlate them being in $S$.
We call the triple $(I_1, I_2, I_3)$ a \emph{nested quasi-independent set}, since $I_1$ and $I_1 \cup I_2$ are independent sets and $I_1 \cup I_2 \cup I_3$ has similar properties, and these sets $I_1, I_1 \cup I_2, I_1 \cup I_2 \cup I_3$ are nested.

We remark that the anti-correlation between selecting $i_2$ and points $i_3$ that are close to $i_2$ is reminiscent of a step of the rounding of bipoint solution of Li and Svensson~\cite{LiS16} (there the
authors combine two solutions by building a collection of stars where the center of the star belongs to a solution, while the leaves belong to another, and choose to open either the center, or the leaves at random). In this context, we will also
allow our algorithm to open slightly $k + C$ centers
(instead of $k$) for some absolute constant $C$. 
Similar to Li and Svensson~\cite{LiS16} and Byrka et al.~\cite{ByrkaPRST17}, we show (Lemma~\ref{lem:center-reduction}) that this is without loss of generality (such an algorithm can be used to obtain an approximation algorithm opening at most $k$ centers). 

The analysis of the approximation bound is very casework heavy, depending on the values of $a=|N(j) \cap I_1|,$ $b=|N(j) \cap I_2|$, and $c=|N(j) \cap I_3|$ for each $j$, and requiring many geometric insights that heavily exploit the structure of the Euclidean metric.

\begin{figure}[htbp]
\centering
    \subfigure[]{
    \begin{tikzpicture}[scale=2.0]
        \node[circle, scale=0.7, fill=gray!50] (1) at (0.0, 0) {$j$};
        \node[circle, scale=0.7, fill=gray!50] (2) at (0.0, -1) {$i^*$};
        \node[circle, scale=0.7, fill=gray!50] (3) at (0.0, -2.4) {$i_1$};
        \node[scale=0.7] at (0.3,-2.4) {$=i_2$};
        
        \draw (1) -- (2);
        \draw (2) -- (3);
        \draw[dotted] (1) to [bend right] (3);

        \node[scale=0.8] at (0.1,-0.5) {$\sqrt{\alpha_j}$};
        \node[scale=0.8] at (0.2,-1.7) {$\sqrt{\delta_2 \cdot \alpha_j}$};
    \label{fig:1a}
    \end{tikzpicture}
    }
    \hspace{2cm}
    \subfigure[]{
    \begin{tikzpicture}[scale=2.0]
        
        \node[circle, scale=0.7, fill=gray!50] (1) at (0, 0) {$j$};
        \node[circle, scale=0.7, fill=gray!50] (2) at (0, -1) {$i^*$};
        \node[circle, scale=0.7, fill=gray!50] (3) at (0.7, -2.3) {$i_1$};
        \node[circle, scale=0.7, fill=gray!50] (4) at (-0.7, -2.1) {$i_2$};
        
        \draw (1) -- (2);
        \draw (2) -- (3);
        \draw (2) -- (4);
        \draw (3) -- (4);
        \draw[dotted] (1) -- (3);
        \draw[dotted] (1) -- (4);
        
        \node[scale=0.8] at (0.1,-0.5) {$\sqrt{\alpha_j}$};
        \node[rotate=58, scale=0.8] at (-0.43,-1.5) {$\sqrt{\delta_2 \cdot \alpha_j}$};
        \node[rotate=-61, scale=0.8] at (0.42,-1.56) {$\sqrt{\delta_1 \cdot \alpha_j}$};
        \node[rotate=-8, scale=0.8] at (-0.02,-2.34) {$\sqrt{\delta_2 \cdot \alpha_j}$};
    \label{fig:1b}
    \end{tikzpicture}
    }
    \hspace{2cm}
    \subfigure[]{
    \begin{tikzpicture}[scale=2.0]
        \node[circle, scale=0.7, fill=gray!50] (1) at (0.0, 0) {$j$};
        \node[circle, scale=0.7, fill=gray!50] (2) at (0.0, -1) {$i^*$};
        \node[circle, scale=0.7, fill=gray!50] (3) at (0.0, -2.6) {$i_1$};
        \node[scale=0.7] at (0.3,0) {$=i_2$};
        
        \node[circle, scale=0.7, fill=blue!30] (4) at (0.5, -0.4) {$i_3$};
        
        \draw (1) -- (2);
        \draw (2) -- (3);
        \draw[dotted] (1) -- (4);
        \draw[dotted] (1) to [bend right] (3);
        
        \node[scale=0.8] at (0.1,-0.5) {$\sqrt{\alpha_j}$};
        \node[scale=0.8] at (0.2,-1.8) {$\sqrt{\delta_1 \cdot \alpha_j}$};
    \label{fig:1c}
    \end{tikzpicture}
    }
\label{fig:euclidean}
\caption{Subfigures (a) and (b) represent the cases when $N(j) \cap I_1, N(j) \cap I_2$ are both empty. In (a), we can bound $d(i^*, i_1)$ as $\sqrt{\delta_2 \cdot \alpha_j} < \sqrt{\delta_1 \cdot \alpha_j}$, and in (b), we can obtain better bounds for $d(j, i_1)$ and $d(j, i_2)$ than by just the triangle inequality. Subfigure (c) represents the problem with having nested two independent sets, when $N(j) \cap I_1 = \emptyset$ but $N(j) \cap I_2$ has a unique point $i_2 = j$. If we replace $i_2 = j$ with the blue node $i_3$ with some probability, the cost-to-dual ratio improves.}
\end{figure}

Finally, we remark that this method of generating a set $S$ of centers also can be used to provide an LMP approximation below $1+\sqrt{2}$ for Euclidean $k$-median. However, due to the differing nature of what the bottleneck cases are, the choices of $\delta_1, \delta_2$, and the construction of the final set $I_3$ will differ. Both to obtain and break the $1+\sqrt{2}$-approximation for Euclidean $k$-median, one must understand how the sum of distances from a facility $j$ to its close facilities $i_1, \dots, i_a \in N(j) \cap S$ relates to the pairwise distances between $i_1, \dots, i_a$. In the $k$-means case, the squared distances make this straightfoward, but the $k$-median case requires more geometric insights (see Lemma \ref{lem:geometric_median}) to improve the $1+\sqrt{8/3}$-approximation of 
Ahmadian et al.~\cite{ahmadian2017better} to a $1+\sqrt{2}$-approximation, even with the same algorithm.

\paragraph{Polynomial-Time Algorithm:}
While we can successfully obtain an LMP $(3+2\sqrt{2})$-approximation for Euclidean $k$-means, this does not imply a $(3+2\sqrt{2})$-approximation for the overall Euclidean $k$-means problem, since our LMP approximation may not produce the right number of cluster centers. To address this issue, Ahmadian et al.~\cite{ahmadian2017better} developed a method of slowly raising the parameter $\lambda$ and generating a series of dual solutions $\alpha^{(0)}, \alpha^{(1)}, \alpha^{(2)}, \dots$, where each pair of consecutive solutions $\alpha^{(j)}$ and $\alpha^{(j+1)}$ are sufficiently similar. The procedure of generating dual solutions $\alpha^{(j)}$ in~\cite{ahmadian2017better} is very complicated, but will not change significantly from their result to ours.
Given these dual solutions, \cite{ahmadian2017better} shows how to interpolate between consecutive dual solutions $\alpha^{(j)}$ and $\alpha^{(j+1)}$, creating a series of conflict graphs where the subsequent graph removes at most one facility at a time (but occasionally adds many facilities).
This property of removing at most one facility ensures that 
the maximal independent set decreases by at most 1 at each step (but could increase by a large number if the removed vertex allows for more points to be added). Using an intermediate-value theorem, they could ensure that at some point, there is an independent set $I$ of the conflict graph 
with size exactly $k$. Hence, they apply the LMP technique to obtain the same approximation ratio.

In our setting, we are not so lucky, because we are dealing with nested independent sets (and even more complicated versions of them). Even if the size of the first part $I_1$ never decreases by more than $1$ at a time, $I_1$ could increase which could potentially make the sizes of $I_2$ or $I_3$ decrease rapidly, even if we only remove one facility from the conflict graph. To deal with this, we instead consider the first time that the expected size of $S$ drops below $k$ (where we recall that all points in $I_1$ are in $S$, but points in $I_2$ and $I_3$ are inserted into our final set $S$ with some probability). Let $(I_1, I_2, I_3)$ represent the set of chosen facilities right before the expected size of $S$ drops below $k$ for the first time, and $(I_1', I_2', I_3')$ represent the set of chosen facilities right after.

To obtain size exactly $k$, we show that one can always do one of the following: either (i) modify the probabilities of selecting points in $I_2$ and $I_3$ so that the size is exactly $k$, or (ii) use submodularity properties of the $k$-means objective function to interpolate between the set $S$ generated by $(I_1, I_2, I_3)$ and the set $S'$ generated by $(I_1', I_2', I_3')$. While $|S| \ge k$ and $|S'| < k$, we do not necessarily have that $S' \subset S$, and in fact we will interpolate between $S'$ and $S \cup S'$ instead by adding random points from $S \backslash S'$ to $S'$. These procedures are relatively computational and require a good understanding of how the dual objective behaves as we modify the probabilities of selecting elements. In addition, because we modify the probabilities to make the size exactly $k$ and because we interpolate between $S'$ and $S \cup S'$ instead of $S'$ and $S$ to obtain our final set of centers, we lose a slight approximation factor from the LMP approximation, but we still significantly improve over the old bound. In addition, this procedure works well for the Euclidean $k$-means \emph{and} $k$-median problems.

One minor problem that we will run into is that a point $i_2$ may have many points in $I_3$ that are all ``close'' to it. As a result, even if the expected size of $S$ is exactly $k$, the variance may be large because we are either selecting $i_2$ or all of the points that are close to $i_2$. In this case, we can use the submodularity properties of the $k$-means objective and coupling methods from probability theory to show that if $i_2$ has too many close points in $I_3$, we could instead include $i_2$ and include an average number of these close points from $I_3$, without increasing the expected cost.

\section{LMP Approximation for Euclidean $k$-means} \label{sec:lmp_k_means}

In this section, we provide a $(3+2\sqrt{2})$ LMP approximation algorithm (in expectation) for the Euclidean $k$-means problem. Our algorithm will be parameterized by four parameters, $\delta_1, \delta_2, \delta_3$, and $p$. We fix $\delta_1, \delta_2, \delta_3$ and allow $p$ to vary, and obtain an approximation constant $\rho(p)$ that is a function of $p$: for appropriate choices of $p$, $\rho(p)$ will equal $3+2\sqrt{2}$. 

In Subsection \ref{subsec:k_means_alg}, we describe the LMP algorithm, which is based on the LMP approximation algorithms by Jain and Vazirani~\cite{jain2001lagrangian} and Ahmadian et al.~\cite{ahmadian2017better}, but using our technique of generating what we call a nested quasi-independent set. In Subsection \ref{subsec:lmp_k_means_bash}, we analyze the approximation ratio, which spans a large amount of casework.

\subsection{The algorithm and setup} \label{subsec:k_means_alg}

Recall the conflict graph $H := H(\delta)$, where we define two tight facilities $(i, i')$ to be connected if $c(i, i') \le \delta \cdot \min(t_i, t_{i'}).$ We set parameters $\delta_1 \ge \delta_2 \ge 2 \ge \delta_3$ and $0 < p < 1$, and define $V_1$ to be the set of all tight facilities. Given the set of tight facilities $V_1$ and conflict graphs $H(\delta)$ over $V_1$ for all $\delta > 0$, our algorithm works by applying the procedure described in Algorithm \ref{alg:lmp} (on the next page) to $V_1$, with parameters $\delta_1, \delta_2, \delta_3$, and $p$.

\begin{figure}[htbp]
\centering
\begin{algorithm}[H]
    \caption{Generate a Nested Quasi-Independent Set of $V_1$, as well as a set of centers $S$ providing an LMP approximation for Euclidean $k$-means}\label{alg:lmp}
\Call{LMP}{$V_1, \{H(\delta)\}, \delta_1, \delta_2, \delta_3, p$}:
    \begin{algorithmic}[1] 
        \State Create a maximal independent set $I_1$ of $H(\delta_1)$. \label{step:I1}
        \State Let $V_2$ be the set of points in $V_1 \backslash I_1$ that are not adjacent to $I_1$ in $H(\delta_2)$. \label{step:V2}
        \State Create a maximal independent set $I_2$ of the induced subgraph $H(\delta_2)[V_2]$. \label{step:I2}
        \State Let $V_3$ be the set of points $i$ in $V_2 \backslash I_2$ such that there is exactly one point in $I_2$ that is a neighbor of $i$ in $H(\delta_2)$, there are no points in $I_1$ that are neighbors of $i$ in $H(\delta_2)$, and there are no points in $I_2$ that are neighbors of $i$ in $H(\delta_3)$. \label{step:V3}
        \State Create a maximal independent set $I_3$ of the induced subgraph $H(\delta_2)[V_3]$. \label{step:I3}
        \State Note that every point $i \in I_3$ has a unique adjacent neighbor $q(i) \in I_2$ in $H(\delta_2)$. We create the final set $S$ as follows: \label{step:S}
        \begin{itemize}
            \item Include every point $i \in I_1$.
            \item For each point $i \in I_2$, flip a fair coin. If the coin lands heads, include $i$ with probability $2p$. Otherwise, include each point in $q^{-1}(i)$ independently with probability $2p$.
        \end{itemize}
    \end{algorithmic}
\end{algorithm}
\end{figure}

We will call the triple $(I_1, I_2, I_3)$ generated by Algorithm \ref{alg:lmp} a \textbf{nested quasi-independent set}. Although $I_1, I_2, I_3$ are disjoint, we call it a nested quasi-independent set since $I_1 \subset I_1 \cup I_2 \subset I_1 \cup I_2 \cup I_3$ are nested, and $I_1$ is a maximal independent set for $H(\delta_1)$ and $I_1 \cup I_2$ is a maximal independent set for $H(\delta_2)$. While $I_1 \cup I_2 \cup I_3$ is not an independent set, it shares similar properties. 
As described in the technical overview (and in Algorithm \ref{alg:lmp}), the LMP approximation algorithm uses $(I_1, I_2, I_3)$ to create our output set of centers $S$. $S$ contains all of $I_1$ and each point in $I_2 \cup I_3$ with probability $p$, but the choices of which points in $I_2 \cup I_3$ are in $S$ are not fully independent.

For the dual solution $\alpha = \{\alpha_j\}_{j \in \mathcal{D}}$ and values $t_i$ for tight $i$ as generated as in Subsection \ref{subsec:witness_and_conflict}, we note the following simple yet crucial facts.

\begin{proposition} \label{prop:t_bounds} \cite{ahmadian2017better} The following hold.
\begin{enumerate}
    \item For any client $j$ and its witness $i$, $i$ is tight and $\alpha_j \ge t_i$. 
    \item For any client $j$ and its witness $i$, $\alpha_j \ge c(j, i)$.
    \item For any tight facility $i$ and any client $j' \in N(i)$, $t_i \ge \alpha_{j'}$.
\end{enumerate}
\end{proposition}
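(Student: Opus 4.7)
The plan is to prove each of the three claims directly from the definitions of ``tight,'' ``witness,'' $N(i)$, and $t_i$ given in Subsection~\ref{subsec:witness_and_conflict}, together with the dynamics of the growing phase. Part~3 is essentially a restatement of the definition of $t_i$: since $t_i = \max_{j'' \in N(i)} \alpha_{j''}$ and $j' \in N(i)$, we immediately get $t_i \ge \alpha_{j'}$.

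For part~2, I would split on the two ways that $i$ can become the witness of $j$. In case~(b), we stop growing $\alpha_j$ precisely at the moment $\alpha_j = c(j,i)$, and the conclusion is immediate. In case~(a), $i$ becomes tight at the instant we stop growing $\alpha_j$; for the growth of $\alpha_j$ to actually be what drives the LHS $\sum_{j'' \in \mathcal{D}} \max(\alpha_{j''} - c(j'',i), 0)$ of $i$'s dual constraint to equality with $\lambda$, we must already have $\alpha_j \ge c(j,i)$ at this instant, for otherwise the derivative of the LHS with respect to $\alpha_j$ is zero and $j$'s growth cannot be responsible for $i$ becoming tight.

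For part~1, the fact that $i$ is tight is immediate in both cases. The bound $\alpha_j \ge t_i$ follows by tracking when clients in $N(i)$ stop growing relative to when $i$ becomes tight. Let $\tau$ be the moment $i$ becomes tight. Every $j' \in N(i)$ has $\alpha_{j'} > c(j',i)$, so any further increase in $\alpha_{j'}$ after time $\tau$ would strictly increase the LHS of $i$'s constraint and hence violate it; thus $\alpha_{j'} \le \tau$. In case~(a) we have $\alpha_j = \tau$, so $\alpha_j \ge \alpha_{j'}$; in case~(b) we have $\alpha_j > \tau$, so $\alpha_j > \alpha_{j'}$. Taking the maximum over $j' \in N(i)$ gives $\alpha_j \ge t_i$ (with the convention that $t_i = 0$ if $N(i) = \emptyset$).

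The main subtlety, and the only non-cosmetic step, is to verify that no client $j'$ can enter $N(i)$ after time $\tau$, so that the snapshot of $N(i)$ at time $\tau$ really is the final $N(i)$ used to define $t_i$. If some $j'$ had $\alpha_{j'} \le c(j',i)$ at time $\tau$ but grew afterwards, then at the moment $\alpha_{j'}$ would hit $c(j',i)$ rule~(b) fires (since $i$ is already tight), freezing $\alpha_{j'} = c(j',i)$, so $j'$ never enters $N(i)$. Once this is observed, everything else is a careful reading of the stopping rules and requires no geometric input.
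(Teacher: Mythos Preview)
Your proof is correct. The paper does not supply its own proof of this proposition; it simply cites \cite{ahmadian2017better} and states the properties, so there is nothing to compare against beyond confirming that your argument tracks the growing process faithfully, which it does.

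One small remark on your case~(a) for part~2: the phrase ``the growth of $\alpha_j$ is what drives the LHS to equality'' is slightly loose, since all still-growing $\alpha_{j'}$ increase simultaneously and collectively drive the constraint. The cleaner way to say it is that in the Jain--Vazirani process, rule~(1) freezes $j$ with witness $i$ precisely when $i$ becomes tight \emph{and} $j$ already has a tight edge to $i$ (i.e.\ $\alpha_j \ge c(j,i)$); under that reading the inequality is immediate. Your derivative observation is really a justification for why the stopping rule must take this form, rather than a derivation from the paper's (somewhat imprecise) wording. Everything else---the feasibility argument that no $j' \in N(i)$ can keep growing past~$\tau$, and the observation that rule~(2) prevents late entry into $N(i)$---is exactly right.
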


These will essentially be the only facts relating witnesses and clients that we will need to use.

\subsection{Main lemma} \label{subsec:lmp_k_means_bash}

We consider a more general setup, as it will be required when converting the LMP approximation to a full polynomial-time algorithm. Let $\mathcal{V} \subset \mathcal{F}$ be a subset of facilities (for instance, this may represent the set of tight facilities) and let $\mathcal{D}$ be the full set of clients. For each $j \in \mathcal{D},$ let $\alpha_j \ge 0$ be some real number, and for each $i \in \mathcal{V}$, let $t_i \ge 0$ be some real number. For each client $j \in \mathcal{D}$, we associate with it a set $N(j) \subset \mathcal{V}$. (For instance, this could be the set $N(j) = \{i \in \mathcal{V}: \alpha_j > c(j, i)\}$).
In addition, suppose that for each client $j \in \mathcal{D},$ there exists a ``witness'' facility $w(j) \in \mathcal{V}$.
Finally, suppose that we have the following assumptions. (These assumptions will hold by Proposition \ref{prop:t_bounds} when $\alpha = \{\alpha_j\}$ is generated by the procedure in Subsection \ref{subsec:witness_and_conflict}, $\mathcal{V}$ is the set of tight facilities, and $N(j) =  \{i \in \mathcal{V}: \alpha_j > c(j, i)\}.$)

\begin{enumerate}
    \item For any client $j \in \mathcal{D}$, the witness $w(j) \in \mathcal{V}$ satisfies $\alpha_j \ge t_{w(j)}$ and $\alpha_j \ge c(j, w(j))$.
    \item For any client $j \in \mathcal{D}$ and any facility $i \in N(j)$, $t_i \ge \alpha_{j} > c(j, i)$.
\end{enumerate}



For the Euclidean k-means problem with the above assumptions, we will show the following:
\begin{lemma} \label{lem:main_lmp}
    Consider the set of conflict graphs $\{H(\delta)\}_{\delta > 0}$ created on the vertices $\mathcal{V}$, where $(i, i')$ is an edge in $H(\delta)$ if $c(i, i') \le \delta \cdot \min(t_i, t_{i'})$. Fix $\delta_1 = \frac{4+8\sqrt{2}}{7} \approx 2.1877, \delta_2 = 2, \delta_3 = 6-4\sqrt{2} \approx 0.3432,$ and let $p < 0.5$ be variable. Now, let $S$ be the randomized set created by applying Algorithm \ref{alg:lmp} on $V_1 = \mathcal{V}$.
    Then, for any $j \in \mathcal{D}$,
\begin{equation}
    \BE[c(j, S)] \le \rho(p) \cdot \BE\left[\alpha_j - \sum_{i \in N(j) \cap S} (\alpha_j-c(j, i))\right], \label{eq:main_lmp}
\end{equation}
    where $\rho(p)$ is some constant that only depends on $p$ (since $\delta_1, \delta_2, \delta_3$ are fixed).
\end{lemma}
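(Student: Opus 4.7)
The plan is to prove the per-client inequality \eqref{eq:main_lmp} by case analysis on the integer triple
\[(a,b,c) := (|N(j)\cap I_1|,\;|N(j)\cap I_2|,\;|N(j)\cap I_3|).\]
Since $I_1 \subset S$ deterministically, every $i_2 \in I_2$ and every $i_3 \in I_3$ lies in $S$ with marginal probability $p$, and for each $i_2 \in I_2$ the events $\{i_2 \in S\}$ and $\{i_3 \in S\}$ for $i_3 \in q^{-1}(i_2)$ are mutually exclusive. Linearity of expectation therefore collapses the right-hand side of \eqref{eq:main_lmp} to
\[\alpha_j \;-\!\! \sum_{i\in N(j)\cap I_1}\!\!(\alpha_j-c(j,i)) \;-\; p\!\!\sum_{i\in N(j)\cap(I_2\cup I_3)}\!\!(\alpha_j-c(j,i)),\]
and the task becomes upper bounding $\BE[c(j,S)]$ in each case by $\rho(p)$ times this quantity.

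The easy half is $a\ge 1$: the closest $i_1^\star \in N(j)\cap I_1$ is deterministically in $S$ with $c(j, i_1^\star) < \alpha_j$. Pairwise non-adjacency of the $a$ points of $N(j)\cap I_1$ in $H(\delta_1)$ gives $c(i,i')>\delta_1\min(t_i,t_{i'})\ge \delta_1\max(c(j,i),c(j,i'))$ by the second assumption, which together with the Euclidean identity relating pairwise squared distances to squared distances from $j$ keeps the dual from becoming too small; a short direct computation yields the desired ratio. The case $a=0$, $b\ge 1$ is similar: with probability $p$ the cost is at most $\min_{i\in N(j)\cap I_2} c(j,i)<\alpha_j$, and with probability $(1-p)^b$ we reduce to the $a=b=0$ analysis below.

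The hard case is $a=b=0$. Here the witness $w(j) \in \mathcal V$ satisfies $c(j,w(j))\le \alpha_j$ and $t_{w(j)}\le\alpha_j$, and $w(j)$ lies in neither $N(j)\cap I_1$ nor $N(j)\cap I_2$. Maximality of $I_1$ in $H(\delta_1)$ (Step~\ref{step:I1}) produces $i_1\in I_1$ with $c(w(j),i_1)\le\delta_1 t_{w(j)}\le\delta_1\alpha_j$, and the Euclidean triangle inequality gives $c(j,i_1)\le (1+\sqrt{\delta_1})^2\alpha_j$. To beat this, I split on whether $w(j)\in V_2$: if not, Step~\ref{step:V2} forces $w(j)$ to be $H(\delta_2)$-adjacent to $I_1$, already giving $c(j,i_1)\le (1+\sqrt{2})^2\alpha_j = (3+2\sqrt2)\alpha_j$, matching the target; if yes, maximality of $I_2$ in $H(\delta_2)[V_2]$ yields $i_2\in I_2$ with $c(j,i_2)\le(3+2\sqrt2)\alpha_j$, but $i_2$ is in $S$ only with probability $p$, so the remaining $(1-p)$ mass routes through $i_1$ at the weaker $(1+\sqrt{\delta_1})^2$ bound. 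The choice $\delta_1=(4+8\sqrt2)/7$ is precisely the value that balances the $p$ and $1-p$ contributions against the dual $\alpha_j$.

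The remaining subtle subcase, highlighted in Section~\ref{sec:overview}, is when $(a,b)=(0,1)$ and the unique $i_2\in N(j)\cap I_2$ coincides with (or is extremely close to) $j$: then $c(j,i_2)\approx 0$, so $\{i_2\in S\}$ (probability $p$) contributes essentially nothing to either side of \eqref{eq:main_lmp}, while $\{i_2\notin S\}$ (probability $1-p$) reverts to the $a=b=0$ bound against the \emph{same} $\alpha_j$. This is what motivated $V_3$: Step~\ref{step:V3} isolates the tight facilities close to a unique $i_2\in I_2$, not too close (the $\delta_3$ cutoff), and not close to any $I_1$-point, and Step~\ref{step:I3} takes a maximal independent set among them. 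In the bottleneck geometry the witness $w(j)$ itself belongs to $V_3$, so maximality yields $i_3\in I_3$ with $c(j,i_3)\le (1+\sqrt{\delta_2})^2\alpha_j$; by construction $i_3$ is anti-correlated with $i_2$ in $S$, so it serves as a backup center exactly when $i_2$ is absent, and the choice $\delta_3=6-4\sqrt2$ balances the gain from $I_3$ against its cost in the dual. I expect the main obstacle to be the sheer volume of casework and the need for Euclidean-specific squared-distance identities (rather than the general triangle inequality alone) to avoid losing constants; the final $\rho(p)$ is then the pointwise maximum over all case ratios, equal to $3+2\sqrt2$ at the value of $p$ that equalizes the binding cases.
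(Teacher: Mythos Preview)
Your overall structure---case analysis on $(a,b,c)$, use of the witness $w(j)$ together with maximality of $I_1,I_2,I_3$, and identification of the $(a,b)=(0,1)$ bottleneck motivating $I_3$---matches the paper. However, your $a=b=0$, $w(j)\in V_2$ bound has a real gap. The naive estimate $p(1+\sqrt{\delta_2})^2\alpha_j+(1-p)(1+\sqrt{\delta_1})^2\alpha_j$ that your text describes evaluates to roughly $6.02\,\alpha_j$ at $p=0.402$, strictly above $3+2\sqrt2\approx 5.828$; your claim that $\delta_1=(4+8\sqrt2)/7$ ``balances the $p$ and $1-p$ contributions'' is not correct (that value of $\delta_1$ is actually pinned by the $a=2$ subcase of Case~5). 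The paper closes the gap with a genuinely Euclidean inequality, Proposition~\ref{prop:triangle}: since $i_2\in V_2$ is by construction not $H(\delta_2)$-adjacent to $I_1$, the two fallback points $i_1$ and $i_2$ are themselves far apart, and exploiting this separation replaces the naive weighted sum by the tighter $1+p\delta_2+(1-p)\delta_1+2\sqrt{p^2\delta_2+(1-p)\delta_1}\approx 5.67$. This separation-of-fallbacks device recurs in subcases \ref{eq:1.b}, \ref{eq:1.e}, \ref{eq:4.a.ii}, \ref{eq:4.b.ii} and is the key Euclidean ingredient your sketch omits.

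Separately, in the $(a,b)=(0,1)$ bottleneck you assert that $w(j)\in V_3$, but this need not hold: $w(j)$ could lie in $I_2$, could have zero or multiple $H(\delta_2)$-neighbors in $I_2$, or could be $H(\delta_3)$-close to some point of $I_2$; each possibility bars $w(j)$ from $V_3$ and needs its own argument (subcases \ref{eq:1.c}--\ref{eq:1.g.ii}). The full proof runs to some twenty subcases, and you should expect that volume of work.
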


\begin{remark}
    While we have not defined $\rho(p)$, we will implicitly define it through our cases. We provide detailed visualizations of the bounds we obtain for each of our cases in Desmos (see Appendix \ref{app:files} for the links). Importantly, we show that we can set $p$ such that $\rho(p) \le 3+2\sqrt{2}$.
\end{remark}

To see why Lemma \ref{lem:main_lmp} implies an LMP approximation (in expectation), fix some $\lambda \ge 0$. Then, we perform the procedure in Subsection \ref{subsec:witness_and_conflict} and let $\mathcal{V}$ be the set of tight facilities, $N(j) = \{i \in \mathcal{V}: \alpha_j > c(j, i)\}$, and $N(i) = \{j \in \mathcal{D}: \alpha_j > c(j, i)\}$. Then, by adding \eqref{eq:main_lmp} over all $j \in \mathcal{D}$, we have that
\begin{align*}
    \BE[\text{cost}(\mathcal{D}, S)] &\le \rho(p) \cdot \BE\left[\sum_{j \in \mathcal{D}} \alpha_j - \sum_{j \in \mathcal{D}} \sum_{i \in N(j) \cap S} (\alpha_j-c(j,i))\right] \\
    &= \rho(p) \cdot \left(\sum_{j \in \mathcal{D}} \alpha_j - \BE\left[\sum_{i \in S} \sum_{j \in N(i)} (\alpha_j-c(j, i))\right]\right) \\
    &= \rho(p) \cdot \left(\sum_{j \in \mathcal{D}} \alpha_j - \lambda \cdot \BE[|S|]\right).
\end{align*}
    Above, the final line follows because $\sum_{j \in N(i)} (\alpha_j-c(j, i)) = \sum_{j \in \mathcal{D}} \max(\alpha_j-c(j, i), 0) = \lambda$, since $i$ is assumed to be tight.
    Thus, we obtain an LMP approximation with approximation factor $\rho(p)$ for any choice of $p$. Given this, we now prove Lemma \ref{lem:main_lmp}.

\begin{proof}[Proof of Lemma \ref{lem:main_lmp}]
    We fix $j$ and do casework based on the sizes $a = |N(j) \cap I_1|$, $b = |N(j) \cap I_2|,$ and $c = |N(j) \cap I_3|$. We show that $\BE[c(j, S)] \le \rho(p) \cdot \BE\left[\alpha_j - \sum_{i \in N(j) \cap S} (\alpha_j-c(j, i))\right]$ for each case of $a, b, c.$ We will call $\BE[c(j, S)]$ the numerator and $\BE\left[\alpha_j - \sum_{i \in N(j) \cap S} (\alpha_j-c(j, i))\right]$ the denominator, and attempt to show this fraction is at most $\rho(p).$ By scaling all distances (and all $\alpha_j, t_i$ values accordingly), we will assume WLOG that $\alpha_j = 1$, so $d(j, w(j)), t_{w(j)} \le 1$.
    
    First, we have the following basic proposition.
\begin{proposition} \label{prop:crude_distance}
    $d(j, S) \le d(j, I_1) \le 1+\sqrt{\delta_1}$.
\end{proposition}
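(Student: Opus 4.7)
The plan is to prove the two inequalities separately, both by direct appeal to the construction in Algorithm \ref{alg:lmp} and the properties of the witness $w(j)$ guaranteed by the assumptions of Lemma \ref{lem:main_lmp}.

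First, the inequality $d(j, S) \le d(j, I_1)$ is essentially free: by Step~\ref{step:S} of Algorithm \ref{alg:lmp}, every point of $I_1$ is deterministically included in $S$, so $I_1 \subseteq S$ and the inequality follows.

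Second, to bound $d(j, I_1) \le 1 + \sqrt{\delta_1}$, I would use the witness $w(j) \in \mathcal{V}$. By the standing assumptions (justified via Proposition \ref{prop:t_bounds}), $c(j, w(j)) \le \alpha_j = 1$ and $t_{w(j)} \le \alpha_j = 1$, so in the $k$-means setting where $c(x,y) = d(x,y)^2$ this gives $d(j, w(j)) \le 1$. Now I split into two cases depending on whether $w(j)$ survived Step~\ref{step:I1}. If $w(j) \in I_1$, then $d(j, I_1) \le d(j, w(j)) \le 1 \le 1 + \sqrt{\delta_1}$. Otherwise, by maximality of $I_1$ in $H(\delta_1)$, there must exist some $i_1 \in I_1$ adjacent to $w(j)$ in $H(\delta_1)$, i.e., $c(w(j), i_1) \le \delta_1 \cdot \min(t_{w(j)}, t_{i_1}) \le \delta_1$, hence $d(w(j), i_1) \le \sqrt{\delta_1}$.

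Combining these via the triangle inequality in Euclidean space,
\[
d(j, I_1) \le d(j, i_1) \le d(j, w(j)) + d(w(j), i_1) \le 1 + \sqrt{\delta_1},
\]
which finishes the proof. I do not anticipate any real obstacle here — this is a ``crude'' bound that merely unpacks the maximality of $I_1$ and the witness guarantees, and it will serve as the baseline from which the subsequent case analysis (on the sizes $a, b, c$) refines.
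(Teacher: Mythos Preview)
Your proof is correct and follows essentially the same approach as the paper: use $I_1 \subseteq S$ for the first inequality, then bound $d(j, I_1)$ via the witness $w(j)$ (with $d(j, w(j)) \le 1$ and $t_{w(j)} \le 1$) together with maximality of $I_1$ in $H(\delta_1)$ to find $i_1 \in I_1$ with $d(w(j), i_1) \le \sqrt{\delta_1}$, and conclude by the triangle inequality. The paper's version is just slightly terser, not separating out the case $w(j) \in I_1$ explicitly.
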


\begin{proof}
    Note that $d(j, I_1) \le d(j, w(j)) + d(w(j), I_1)$. But $d(j, w(j)) \le 1$ and by properties of the independent set $I_1$, there exists $i_1 \in I_1$ such that $d(w(j), i_1) \le \sqrt{\delta_1 \cdot \min(t_{w(j)}, t_{i_1})} \le \sqrt{\delta_1}$, since $t_{w(j)} \le 1$. So, $d(j, I_1) \le 1 + \sqrt{\delta_1}$, as desired.
\end{proof}

    In addition, we have the following simple proposition about Euclidean space, the proof of which is deferred to Appendix \ref{app:bash}.

\begin{proposition} \label{prop:triangle}
    Suppose that we have $4$ points $A, B, C, D$ in Euclidean space and parameters $\nu_1, \nu_2, \nu_3, \sigma_1, \sigma_2, \sigma_3 \ge 0$ such that $d(A, B)^2 \le 1$, $d(B, C)^2 \le \nu_1 \cdot \min(\sigma_1, \sigma_2)$, $d(B, D)^2 \le \nu_2 \cdot \min(\sigma_1, \sigma_3)$, and $d(C, D)^2 \ge \nu_3 \cdot \min(\sigma_2, \sigma_3)$. Moreover, assume that $\sigma_1 \le 1$ and that $\nu_1, \nu_2 \ge \nu_3$. Then, 
\[p \cdot \|C-A\|_2^2 + (1-p) \cdot \|D-A\|_2^2 \le 1 + p \cdot \nu_1 + (1-p) \cdot \nu_2 + 2\sqrt{p \cdot \nu_1 + (1-p) \cdot \nu_2 - p(1-p) \cdot \nu_3}.\]
\end{proposition}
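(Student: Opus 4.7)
The plan is to introduce the weighted centroid $E := p\,C + (1-p)\,D$ and exploit the Euclidean variance decomposition
\[
p\,\|C-X\|_2^2 + (1-p)\,\|D-X\|_2^2 \;=\; \|E-X\|_2^2 \,+\, p(1-p)\,\|C-D\|_2^2,
\]
which holds for every point $X$. Applying it first with $X = A$ rewrites the left-hand side of the target inequality as $\|E-A\|_2^2 + p(1-p)\|C-D\|_2^2$, and applying it again with $X = B$ produces the formula
\[
\|E-B\|_2^2 \;=\; p\,\|C-B\|_2^2 \,+\, (1-p)\,\|D-B\|_2^2 \,-\, p(1-p)\,\|C-D\|_2^2.
\]

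Next I would invoke the triangle inequality $\|E-A\|_2 \le \|E-B\|_2 + \|A-B\|_2 \le \|E-B\|_2 + 1$, where the last step uses the hypothesis $\|A-B\|_2^2 \le 1$. Squaring gives $\|E-A\|_2^2 \le \|E-B\|_2^2 + 2\|E-B\|_2 + 1$. Combining with the two identities above, the $\pm p(1-p)\|C-D\|_2^2$ terms cancel and we obtain
\[
p\,\|C-A\|_2^2 + (1-p)\,\|D-A\|_2^2 \;\le\; 1 + p\,\|C-B\|_2^2 + (1-p)\,\|D-B\|_2^2 + 2\,\|E-B\|_2.
\]
Since $\sigma_1 \le 1$, the hypotheses give $\|C-B\|_2^2 \le \nu_1$ and $\|D-B\|_2^2 \le \nu_2$, bounding the first three terms on the right by $1 + p\nu_1 + (1-p)\nu_2$. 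Therefore it suffices to prove that $\|E-B\|_2^2 \le p\nu_1 + (1-p)\nu_2 - p(1-p)\nu_3$, which (after substituting the formula for $\|E-B\|_2^2$) is equivalent to
\[
p\,\bigl(\nu_1 - \|C-B\|_2^2\bigr) + (1-p)\,\bigl(\nu_2 - \|D-B\|_2^2\bigr) \;\ge\; p(1-p)\,\bigl(\nu_3 - \|C-D\|_2^2\bigr).
\]

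This last inequality is where the real work lies. If $\|C-D\|_2^2 \ge \nu_3$, the right-hand side is non-positive while the left-hand side is non-negative (since $\sigma_1 \le 1$ forces each parenthesised term to be non-negative), and we are done. Otherwise $\|C-D\|_2^2 < \nu_3$, which forces $\min(\sigma_2,\sigma_3) < 1$, and I would split on which of $\sigma_2,\sigma_3$ is smaller. In the case $\sigma_2 \le \sigma_3$, the hypotheses yield $\|C-B\|_2^2 \le \nu_1\sigma_2$ and $\|C-D\|_2^2 \ge \nu_3\sigma_2$, so after discarding the non-negative $(1-p)$ term on the left it reduces to $p\nu_1(1-\sigma_2) \ge p(1-p)\nu_3(1-\sigma_2)$, which follows from $\nu_1 \ge \nu_3$ and $1-p \le 1$. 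The case $\sigma_3 \le \sigma_2$ is handled symmetrically using $\|D-B\|_2^2 \le \nu_2\sigma_3$, $\|C-D\|_2^2 \ge \nu_3\sigma_3$, and $\nu_2 \ge \nu_3$.

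The main obstacle is this last case analysis: pairing the upper bound on $\|C-B\|_2^2$ or $\|D-B\|_2^2$ with the lower bound on $\|C-D\|_2^2$ so that the residual factor $1 - \min(\sigma_2,\sigma_3)$ cancels from both sides. The assumptions $\nu_1,\nu_2 \ge \nu_3$ and $\sigma_1 \le 1$ are exactly what enable this cancellation once the correct case split on $\min(\sigma_2,\sigma_3)$ is made; the remainder of the argument is just the Euclidean variance identity together with a single application of the triangle inequality.
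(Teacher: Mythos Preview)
Your proof is correct and follows essentially the same approach as the paper: writing $E-B = p(C-B)+(1-p)(D-B)$, the paper's expansion $\|v_1\|_2^2 + 2\langle v_1, pv_2+(1-p)v_3\rangle + p\|v_2\|_2^2+(1-p)\|v_3\|_2^2$ together with Cauchy--Schwarz is exactly your variance decomposition plus triangle inequality, and both routes reduce to bounding $\|E-B\|_2^2$ by $p\nu_1+(1-p)\nu_2-p(1-p)\nu_3$. The only stylistic difference is in that last step: the paper substitutes the $\sigma$-bounds into the full expression and then argues the resulting function is maximized at $\sigma_1=\sigma_2=\sigma_3=1$, whereas your case split on $\min(\sigma_2,\sigma_3)$ is a cleaner, more direct way to verify the same inequality.
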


    Finally, we will also make frequent use of the well-known fact that for any set of $h \ge 1$ points $I = \{i_1, \dots, i_h\}$ and any $j$ in Euclidean space, $\frac{1}{2h} \sum_{i, i' \in I} \|i-i'\|_2^2 \le \sum_{i \in I} \|i-j\|_2^2.$ As a direct result of this, if $c(j, i) \le 1$ for all $i \in I$ but $c(i, i') \ge 2$ for all $i \neq i' \in I$, then $\sum_{i \in I} (1-c(j, i)) \le 1$.

    \medskip

    We are now ready to investigate the several cases needed to prove Lemma \ref{lem:main_lmp}.
    
\paragraph{Case 1: $\boldsymbol{a=0, b=1, c=0}$.}
Let $i_2$ be the unique point in $I_2 \cap N(j),$ and let $i^* = w(j)$ be the witness of $j$. Recall that $i^*$ is tight, so $i^* \in \mathcal{V}_1$. Note that $d(j, i^*) \le 1$ and $t_{i^*} \le 1$.

There are numerous sub-cases to consider, which we enumerate.

\begin{enumerate}[label=\alph*)]
    \item $\boldsymbol{i^* \not\in V_2}$.
    In this case, either $i^* \in I_1$ so $d(i^*, I_1) = 0$, or there exists $i_1 \in I_1$ such that $d(i^*, i_1) \le \sqrt{\delta_2 \cdot \min(t_{i^*}, t_{i_1})} \le \sqrt{\delta_2}$. So, $d(j, I_1) \le 1+\sqrt{\delta_2}$. In addition, we have that $i_2 \in S$ with probability $p$. So, if we let $t := d(j, i_2)$, we can bound the fraction by
\[\frac{p \cdot t^2 + (1-p) \cdot (1+\sqrt{\delta_2})^2}{1-p(1-t^2)} = \frac{p \cdot t^2 + (1-p) \cdot (1+\sqrt{\delta_2})^2}{p \cdot t^2 + (1-p)}.\]
    Note that $0 \le t < 1$ since $i_2 \in N(j),$ and the above fraction is maximized for $t = 0$, in which case we get that the fraction is at most 
\begin{equation} \tag{1.a}\label{eq:1.a}
    (1+\sqrt{\delta_2})^2.
\end{equation}

    \item $\boldsymbol{i^* \in V_3}.$ In this case, there exists $i_3 \in I_3$ (possibly $i_3 = i^*$) such that $d(i^*, i_3) \le \sqrt{\delta_2 \cdot \min(t_{i^*}, t_{i_3})}.$ In addition, there exists $i_1 \in I_1$ such that $d(i^*, i_1) \le \sqrt{\delta_1 \cdot \min(t_{i^*}, t_{i_1})}$. Finally, since $I_3 \subset V_2$, we must have that $d(i_1, i_3) \ge \sqrt{\delta_2 \cdot \min(t_{i_1}, t_{i_3})}$. If we condition on $i_2 \in S$, then the numerator and denominator both equal $c(j, i_2)$, so the fraction is $1$ (or $0/0$). Else, if we condition on $i_2 \not\in S$, then the denominator is $1$, and the numerator is at most $p \cdot \|i_3-j\|_2^2 + (1-p) \cdot \|i_1-j\|_2^2$, since $i_1 \in S$ always, and either $q(i_3) \neq i_2$, in which case $\BP(i_3 \in S|i_2 \not\in S) = p$, or $q(i_3) = i_2$, in which case $\BP(i_3 \in S|i_2 \not\in S) = \frac{p}{1-p} \ge p$.
    
    Note that $d(j, i^*) \le 1$, that $t_{i^*} \le 1$, and that $\delta_2, \delta_1 \ge \delta_2$. So, we may apply Proposition \ref{prop:triangle} with $A=j, B=i^*, C=i_3, D=i_1$ and $\nu_1 = \nu_3 = \delta_2, \nu_2 = \delta_1, \sigma_1=t_{i^*}, \sigma_2=t_{i_3}, \sigma_3=t_{i_1}$ to bound numerator (and thus the overall fraction since the denominator equals $1$) by
\begin{equation} \tag{1.b}\label{eq:1.b}
    1 + p \cdot \delta_2 + (1-p) \cdot \delta_1 + 2\sqrt{p^2 \cdot \delta_2 + (1-p) \cdot \delta_1}.
\end{equation}
\end{enumerate}
    In the remaining cases, we may assume that $i^*\in V_2 \backslash V_3$. Then, one of the following must occur:
\begin{enumerate}[label=\alph*)] \setcounter{enumi}{2}
    \item $\boldsymbol{i^* = i_2}$. In this case, define $t = d(j, i^*) \in [0, 1]$, and note that $d(j, I_1) \le d(j, i^*)+d(i^*, I_1) \le t + \sqrt{\delta_1}$. So, with probability $p$, we have that $d(j, S) \le d(j, i^*) = t$, and otherwise, we have that $d(j, S) \le d(j, I_1) = t + \sqrt{\delta_1}$. So, we can bound the ratio by
\begin{equation} \tag{1.c}\label{eq:1.c}
    \max_{0 \le t \le 1} \frac{p \cdot t^2 + (1-p) \cdot (t + \sqrt{\delta_1})^2}{1-p \cdot (1-t^2)} \le \max\left((\sqrt{0.75}+\sqrt{\delta_1})^2, \frac{(1-p) \cdot (1+\sqrt{\delta_1})^2 + 3p/4}{1-p/4}\right).
\end{equation}
    We prove this final inequality in Appendix \ref{app:bash}.

    \item $\boldsymbol{i^* \in I_2}$ \textbf{but} $\boldsymbol{i^* \neq i_2}$. First, we recall that $d(j, i^*) \le 1$. Now, let $t = d(j, i_2)$. In this case, with probability $p$, $d(j, S) \le t$ (if we select $i_2$ to be in $S$), with probability $p(1-p)$, $d(j, S) \le 1$ (if we select $i^*$ but not $i_2$ to be in $S$), and in the remaining event of $(1-p)^2$ probability, we still have that $d(j, S) \le d(j, I_1) \le 1 + \sqrt{\delta_1}$ by Proposition \ref{prop:crude_distance}. So, we can bound the ratio by
\[\max_{0 \le t \le 1} \frac{p \cdot t^2 + p(1-p) \cdot 1 + (1-p)^2 \cdot (1+\sqrt{\delta_1})^2}{1-p \cdot (1-t^2)}.\]
    Note that this is maximized when $t = 0$ (since the numerator and denominator increase at the same rate when $t$ increases), so we can bound the ratio by
\begin{equation} \tag{1.d} \label{eq:1.d}
    \frac{p(1-p) + (1-p)^2 \cdot (1+\sqrt{\delta_1})^2}{1-p} = p+(1-p) \cdot (1+\sqrt{\delta_1})^2.
\end{equation}
    
    \item \textbf{There is more than one neighbor of} $\boldsymbol{i^*}$ \textbf{in} $\boldsymbol{H(\delta_2)}$ \textbf{that is in} $\boldsymbol{I_2}$. In this case, there is some other point $i_2' \in I_2$ not in $N(j)$ such that $d(i^*, i_2') \le \sqrt{\delta_2 \cdot \min(t_{i^*}, t_{i_2'})}.$ So, we have four points $j, i^*, i_1 \in I_1, i_2' \in I_2$ such that $d(j, i^*) \le 1,$ $d(i^*, i_2') \le \sqrt{\delta_2 \cdot \min(t_{i^*}, t_{i_2'})},$ $d(i^*, i_1) \le \sqrt{\delta_1 \cdot \min(t_{i^*}, t_{i_1})},$ and $d(i_1, i_2') \ge \sqrt{\delta_2 \cdot \min(t_{i_1}, t_{i_2'})}.$
    
    If we condition on $i_2 \in S$, then the denominator equals $c(j, i_2)$ and the numerator is at most $c(j, i_2)$, so the fraction is $1$ (or $0/0$). Else, if we condition on $i_2 \not\in S$, then the denominator is $1$, and the numerator is at most $p \cdot \|i_2'-j\|_2^2 + (1-p) \cdot \|i_1-j\|_2^2$. Note that $d(j, i^*) \le 1$, that $t_{i^*} \le 1$, and that $\delta_2, \delta_1 \ge \delta_2$. So, as in \ref{eq:1.b}, we may apply Proposition \ref{prop:triangle} to bound the ratio by
\begin{equation} \tag{1.e} \label{eq:1.e}
    1 + p \cdot \delta_2 + (1-p) \cdot \delta_1 + 2\sqrt{p^2 \cdot \delta_2 + (1-p) \cdot \delta_1}.
\end{equation}
    
    \item \textbf{There are no neighbors of} $\boldsymbol{i^*}$ \textbf{in} $\boldsymbol{H(\delta_2)}$ \textbf{that are in} $\boldsymbol{I_2}$. 
    In this case, we would actually have that $i^* \in I_2,$ because we defined $I_2$ to be a maximal independent set in the induced subgraph $H(\delta_2)[V_2].$ So, if there were no such neighbors and $i^* \in V_2$, then we could add $i^*$ to $I_2$, contradicting the maximality of $I_2$. Having $i^* \in I_2$ was already covered by subcases c) and d).

    \item \textbf{There is a neighbor of} $\boldsymbol{i^*}$ \textbf{in} $\boldsymbol{H(\delta_3)}$ \textbf{that is also in} $\boldsymbol{I_2}$, which means that either $d(i^*, i_2) \le \sqrt{\delta_3 \cdot t_{i^*}}$ so $d(i_2, j) \ge \max(0, d(j, i^*)-\sqrt{\delta_3 \cdot t_{i^*}})$, or there is some other point $i_2' \in I_2$ not in $N(j)$ such that $d(i^*, i_2') \le \sqrt{\delta_3 \cdot \min(t_{i^*}, t_{i_2'})}.$ If $d(i^*, i_2) \le \sqrt{\delta_3 \cdot t_{i^*}}$, then define $t = t_{i^*}$ and $u = d(j, i^*)$. In this case, $d(j, I_1) \le u + \sqrt{\delta_1 \cdot t},$ and $d(j, i_2) \ge \max(0, u-\sqrt{\delta_3 \cdot t}).$ 
    Since $t = t_{i^*} \le 1$ and $u = d(j, i^*) \le 1$, we can bound the overall fraction as at most
\begin{align}
    &\hspace{0.5cm} \max_{0 \le t \le 1} \max_{0 \le u \le 1} \max_{d(j, i_2) \ge \max(0, u-\sqrt{\delta_3 \cdot t})} \frac{(1-p) \cdot (u+\sqrt{\delta_1 \cdot t})^2+p \cdot d(j, i_2)^2}{1-p+p \cdot d(j, i_2)^2} \nonumber \\
    &\le \max\left((\sqrt{\delta_1}+\sqrt{\delta_3})^2, \frac{(1-p) \cdot (1+\sqrt{\delta_1})^2 + p \cdot (1-\sqrt{\delta_3})^2}{1-p+p \cdot (1-\sqrt{\delta_3})^2}\right). \tag{1.g.i} \label{eq:1.g.i}
\end{align}
    We derive the final inequality in Appendix \ref{app:bash}.
    
    Alternatively, if $d(i^*, i_2') \le \sqrt{\delta_3 \cdot \min(t_{i^*}, t_{i_2'})},$ then if we condition on $i_2 \in S,$ the fraction is $1$ (or $0/0$), and if we condition on $i_2 \not\in S$, the denominator is $1$ and the numerator is at most $p \cdot d(j, i_2')^2 + (1-p) \cdot d(j, i_1)^2 \le p \cdot (1+\sqrt{\delta_3})^2+(1-p) \cdot (1+\sqrt{\delta_1})^2.$ (Note that $i_2 \in S$ and $i_2' \in S$ are independent.) Therefore, we can also bound the overall fraction by 
\begin{equation} \tag{1.g.ii} \label{eq:1.g.ii}
    p \cdot (1+\sqrt{\delta_3})^2+(1-p) \cdot (1+\sqrt{\delta_1})^2.
\end{equation}

    \item \textbf{There is a neighbor of} $\boldsymbol{i^*}$ \textbf{in} $\boldsymbol{H(\delta_2)}$ \textbf{that is also in} $\boldsymbol{I_1}$. In this case, $i^*$ would not be in $V_2$, so we are back to sub-case 1.a.
\end{enumerate}

\paragraph{Case 2: $\boldsymbol{a = 0, b = 1, c \ge 1}$.}

Let $i_2$ be the unique point in $N(j) \cap I_2,$ and let $i_3^{(1)}, \dots, i_3^{(c)}$ represent the points in $N(j) \cap I_3$. Let $c_1$ be the number of points in $N(j) \cap I_3$ that are in $q^{-1}(i_2)$, and let $c_2 = c-c_1$ be the number of points in $N(j) \cap I_3$ not in $q^{-1}(i_2).$ We will have four subcases. For simplicity, in this case we keep $\delta_2 = 2.$

Before delving into the subcases, we first prove the following propositions regarding the probability of some point in $I_3$ being selected.

\begin{proposition} \label{prop:anticor_1}
    Let $c = |N(j) \cap I_3|$. Then, the probability that no point in $N(j) \cap I_3$ is in $S$ is at most $\frac{1}{2} + \frac{1}{2} (1-2p)^c$.
\end{proposition}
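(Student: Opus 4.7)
The plan is to decompose $N(j) \cap I_3$ according to the function $q$, since this is exactly the structure on which the randomness in Step~\ref{step:S} of Algorithm~\ref{alg:lmp} is based. Let $i_2^{(1)}, \dots, i_2^{(m)} \in I_2$ be the distinct values of $q(i_3)$ as $i_3$ ranges over $N(j) \cap I_3$, and for each $k$ let $c_k = |q^{-1}(i_2^{(k)}) \cap N(j) \cap I_3|$, so that $c = c_1 + \cdots + c_m$. The selection of points in $q^{-1}(i_2^{(k)})$ depends only on the coin flip for $i_2^{(k)}$ and on independent Bernoulli$(2p)$ trials tied to $i_2^{(k)}$; in particular, the selection events associated with different $i_2^{(k)}$'s are mutually independent.

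Next I would compute the probability that no element of $q^{-1}(i_2^{(k)}) \cap N(j)$ lies in $S$. Conditioning on the coin for $i_2^{(k)}$: on heads (probability $\tfrac{1}{2}$) no element of $q^{-1}(i_2^{(k)})$ is selected, while on tails (probability $\tfrac{1}{2}$) each of the $c_k$ candidates is independently excluded with probability $1-2p$. Thus this probability equals $\tfrac{1}{2} + \tfrac{1}{2}(1-2p)^{c_k}$. By the independence across $k$, the probability that no point of $N(j) \cap I_3$ lies in $S$ is
\[
\prod_{k=1}^{m} \left(\tfrac{1}{2} + \tfrac{1}{2}(1-2p)^{c_k}\right).
\]

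The main (and only nontrivial) step is to show this product is bounded above by $\tfrac{1}{2} + \tfrac{1}{2}(1-2p)^{c}$. Define $f(t) = \tfrac{1}{2}(1+(1-2p)^t)$ and I would verify that $f$ is \emph{super}multiplicative in the additive sense: writing $x = (1-2p)^{c_1}$ and $y = (1-2p)^{c_2}$ (so $x, y \in (0,1]$ because $p < \tfrac{1}{2}$),
\[
f(c_1+c_2) - f(c_1)\,f(c_2) \;=\; \tfrac{1+xy}{2} - \tfrac{(1+x)(1+y)}{4} \;=\; \tfrac{(1-x)(1-y)}{4} \;\ge\; 0.
\]
Iterating this inequality gives $f(c) = f(c_1 + \cdots + c_m) \ge \prod_{k=1}^{m} f(c_k)$, which chained with the previous display yields the claimed bound. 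The potential obstacle is really only bookkeeping---making sure the independence across the distinct values of $q$ is correctly invoked---since the algebraic inequality is a one-line calculation.
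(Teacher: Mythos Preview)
Your proposal is correct and follows essentially the same approach as the paper: group $N(j)\cap I_3$ by fibers of $q$, compute the per-fiber exclusion probability as $\tfrac{1}{2}+\tfrac{1}{2}(1-2p)^{c_k}$, use independence across fibers, and then apply the two-term inequality $(\tfrac{1}{2}+\tfrac{1}{2}x)(\tfrac{1}{2}+\tfrac{1}{2}y)\le \tfrac{1}{2}+\tfrac{1}{2}xy$ iteratively. Your direct computation of the difference as $\tfrac{(1-x)(1-y)}{4}$ is exactly equivalent to the paper's identity-based argument.
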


\begin{proof}
    First, note that $\left(\frac{1}{2} + \frac{1}{2} x\right) \cdot \left(\frac{1}{2} + \frac{1}{2} y\right) + \left(\frac{1}{2} - \frac{1}{2} x\right) \cdot \left(\frac{1}{2} - \frac{1}{2} y\right) = \frac{1}{2} + \frac{1}{2} xy$. Therefore, if $0 \le x, y \le 1$, then $\left(\frac{1}{2} + \frac{1}{2} x\right) \cdot \left(\frac{1}{2} + \frac{1}{2} y\right) \le \frac{1}{2} + \frac{1}{2} xy.$ In general, through induction we have that for any $0 \le x_1, \dots, x_r \le 1,$ that $\prod_{s = 1}^{r} \left(\frac{1}{2} + \frac{1}{2} x_s\right) \le \frac{1}{2} + \frac{1}{2} x_1 \cdots x_r$.
    
    Now, group the points $i_3^{(1)}, \dots, i_3^{(c)} \in N(j) \cap I_3$ into $r \le c$ groups of sizes $c_1, \dots, c_r$, where each group is points that map to the same point in $I_2$ under $q$. Then, for each group $s$, the probability that no point in the group is in $S$ is precisely $\frac{1}{2} + \frac{1}{2} (1-2p)^{c_s}$, because with probability $\frac{1}{2}$ we will only consider picking the point $q(i)$ (for $i$ in group $s$), and otherwise each point in the group will still not be in $S$ with probability $1-2p$. So, the overall probability is 
\[\prod_{s = 1}^{r} \left(\frac{1}{2} + \frac{1}{2} (1-2p)^{c_s}\right) \le \frac{1}{2} + \frac{1}{2} (1-2p)^{c_1+\cdots+c_r} = \frac{1}{2} + \frac{1}{2} (1-2p)^c. \qedhere\] 
\end{proof}

We also note the following related proposition.

\begin{proposition} \label{prop:anticor_2}
    Let $c = |N(j) \cap I_3|$, and $i' = q(i)$ for some arbitrary $i \in N(j) \cap I_3$. Then, the probability that no point in $N(j) \cap I_3$ nor $i'$ is in $S$ is at most $\frac{1}{2} (1-2p) + \frac{1}{2} (1-2p)^c$.
\end{proposition}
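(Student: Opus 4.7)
The plan is to reuse the grouping argument of Proposition \ref{prop:anticor_1}, but treat the group containing $i$ separately since the coin flip at $i' = q(i)$ governs both whether $i'$ enters $S$ and whether the $I_3$-points mapping to $i'$ enter $S$. Partition $N(j) \cap I_3$ into $r$ groups according to the value of $q$, with sizes $c_1, \dots, c_r$ summing to $c$, and index the groups so that group $1$ is the set of points mapping to $i'$. The coin flips across different $I_2$-points are independent, so the event ``no point of $N(j) \cap I_3$ is in $S$ and $i' \notin S$'' factors as a product over the $r$ groups.

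For group $1$, condition on the coin flip at $i'$: with probability $\tfrac12$ (heads) we include $i'$ with probability $2p$ and none of $q^{-1}(i')$ at all, so $\{i' \notin S\}$ has probability $1-2p$; with probability $\tfrac12$ (tails) $i'$ is never selected but each of the $c_1$ relevant points in $q^{-1}(i')$ is independently included with probability $2p$, contributing $(1-2p)^{c_1}$. This gives a factor of $\tfrac12(1-2p) + \tfrac12(1-2p)^{c_1}$. For every other group $s \ge 2$, the analysis is exactly as in the proof of Proposition \ref{prop:anticor_1}, giving $\tfrac12 + \tfrac12(1-2p)^{c_s}$.

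Now I pull a factor of $(1-2p)$ out of the group-$1$ term using
\[\tfrac12(1-2p) + \tfrac12(1-2p)^{c_1} = (1-2p) \cdot \left(\tfrac12 + \tfrac12 (1-2p)^{c_1 - 1}\right),\]
so the whole probability equals
\[(1-2p) \cdot \left(\tfrac12 + \tfrac12(1-2p)^{c_1-1}\right) \cdot \prod_{s=2}^{r}\left(\tfrac12 + \tfrac12(1-2p)^{c_s}\right).\]
Setting $x_1 = (1-2p)^{c_1-1}$ and $x_s = (1-2p)^{c_s}$ for $s \ge 2$, the identity $\prod_{s=1}^{r}(\tfrac12 + \tfrac12 x_s) \le \tfrac12 + \tfrac12 \prod_{s=1}^{r} x_s$ from the proof of Proposition \ref{prop:anticor_1} (valid since each $x_s \in [0,1]$) then yields
\[(1-2p) \cdot \left(\tfrac12 + \tfrac12(1-2p)^{c-1}\right) = \tfrac12(1-2p) + \tfrac12(1-2p)^c,\]
which is the claimed bound.

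The only subtlety is the different treatment of group $1$ versus the other groups, and the slightly non-obvious algebraic step of factoring $(1-2p)$ out of the first term so that the remaining product exactly matches the form handled in Proposition \ref{prop:anticor_1}. Once that factoring is done, the argument reduces cleanly to the previous proposition applied with one exponent shifted by one, and no new probabilistic or geometric input is needed.
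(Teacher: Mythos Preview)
Your proof is correct and follows essentially the same approach as the paper: partition $N(j)\cap I_3$ by $q$-value, compute the group-$1$ factor as $\tfrac12(1-2p)+\tfrac12(1-2p)^{c_1}$, pull out $(1-2p)$, and then apply the product inequality from Proposition~\ref{prop:anticor_1}. The paper's argument is the same, just more terse.
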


\begin{proof}
    Similar to the previous proposition, we group the points $i_3^{(1)}, \dots, i_3^{(c)} \in N(j) \cap I_3$ into $r \le c$ groups of sizes $c_1, \dots, c_r$. Assume WLOG that $i$ is in the first group. Then, the probability that no point in the first group nor $i'$ is in $S$ is $\frac{1}{2} (1-2p) + \frac{1}{2} (1-2p)^{c_1} = (1-2p) \cdot \left(\frac{1}{2} + \frac{1}{2} (1-2p)^{c_1-1}\right)$. So, the overall probability is 
\begin{align*}
    (1-2p) \cdot \left(\frac{1}{2} + \frac{1}{2} (1-2p)^{c_1-1}\right) \cdot \prod_{s = 2}^{r} \left(\frac{1}{2} + \frac{1}{2} (1-2p)^{c_s}\right) &\le (1-2p) \cdot \left(\frac{1}{2} + \frac{1}{2} (1-2p)^{(c_1-1)+c_2+\cdots+c_r}\right) \\
    &= \frac{1}{2} (1-2p) + \frac{1}{2} (1-2p)^c. \qedhere
\end{align*} 
\end{proof}

\begin{enumerate}[label=\alph*)]
    \item $\boldsymbol{c_1 = 0}$. In this case, we have that no pair of points in $N(j) \cap (I_2 \cup I_3)$ are connected in $H(\delta_2)$, which means that they have pairwise distances at least $\sqrt{\delta_2}$ from each other (since $t_i \ge 1$ if $i \in N(j)$). So, $\sum_{i \in N(j) \cap (I_2 \cup I_3)} c(j, i) \ge \frac{1}{2(c+1)} \cdot c(c+1)(\sqrt{\delta_2})^2 = c$ since $|N(j) \cap (I_2 \cup I_3)| = c+1$ and $\delta_2 = 2$. Consequently, $\sum_{i \in N(j) \cap (I_2 \cup I_3)} (1-c(j, i)) \le (c+1)-c = 1$. Therefore, the denominator is at least $1-p$. To bound the numerator, we note that the probability of none of the points in $I_2 \cup I_3$ being in $S$ is at most $(1-p) \cdot \left(\frac{1}{2} + \frac{1}{2}(1-2p)^{c_2}\right)$. This is because $i_2 \not\in S$ with probability $1-p$, no point in $N(j) \cap I_3$ is in $S$ with probability at most $\frac{1}{2} + \frac{1}{2} (1-2p)^{c_2}$ by Proposition \ref{prop:anticor_1}, and these two events are independent since $c_1 = 0$.
    If some point in $N(j) \cap (I_2 \cup I_3)$ is in $S$, then $c(j, S) \le 1$, and otherwise, $c(j, S) \le (1+\sqrt{\delta_1})^2$. Therefore, we can bound the numerator as at most $(1-p) \cdot \left(\frac{1}{2} + \frac{1}{2}(1-2p)^{c_2}\right) \cdot (1+\sqrt{\delta_1})^2 + \left(1 - (1-p) \cdot \left(\frac{1}{2} + \frac{1}{2}(1-2p)^{c_2}\right)\right).$ Overall, we have that the fraction is at most
\begin{equation} \tag{2.a} \label{eq:2.a}
    \frac{(1-p) \cdot \left(\frac{1}{2} + \frac{1}{2}(1-2p)^{c_2}\right) \cdot (1+\sqrt{\delta_1})^2 + \left(1 - (1-p) \cdot \left(\frac{1}{2} + \frac{1}{2}(1-2p)^{c_2}\right)\right)}{1-p}.
\end{equation}

    \item $\boldsymbol{c_2 = 0}$ \textbf{and} $\boldsymbol{c \ge 2}$. In this case, we have that $q(i_3^{(k)}) = i_2$ for all $k \in [c]$ (note that $c = c_1$). Since the points $i_3^{(k)}$ for all $k \in [c]$ have pairwise distance at least $\sqrt{2}$ from each other, we have that $\sum_{i \in N(j) \cap I_3} (1-c(j, i)) \le 1$. Letting $t$ be such that $1-t = \sum_{i \in N(j) \cap I_3} (1-c(j, i)),$ we have that $\BE_{i \sim N(j) \cap I_3} c(j, i) = \frac{c-1}{c} +\frac{t}{c}$. In addition, let $u = d(j, i_2).$ In this case, the denominator of our fraction is $1-p(1-u^2)-p(1-t) = p \cdot u^2 + p \cdot t + (1-2p).$ To bound the numerator, with probability $p$ we have that $i_2 \in S$, in which case $c(j, S) \le u^2$. In addition, there is a disjoint $\frac{1}{2} \cdot \left(1-(1-2p)^c\right)$-probability event where some $i_3^{(k)} \in S$, and conditioned on this event, $\BE[c(j, S)] \le \frac{c-1}{c} + \frac{t}{c}$. Otherwise, we still have that $c(j, S) \le c(j, I_1) \le (1+\sqrt{\delta_1})^2$. So, overall we have that the fraction is at most
\[\frac{p \cdot u^2 + \frac{1}{2} \cdot \left(1-(1-2p)^c\right) \cdot \left(\frac{c-1}{c}+\frac{t}{c}\right) + \left(1-p-\frac{1}{2} \cdot \left(1-(1-2p)^c\right)\right) \cdot (1+\sqrt{\delta_1})^2}{p \cdot u^2 + p \cdot t + (1-2p)}.\]
    This function clearly decreases as $u$ increases (since the numerator and denominator increase at the same rate). In addition, since $\frac{1}{2} \cdot (1-(1-2p)^c) \cdot \frac{1}{c} < p$ (whenever $0 < p < \frac{1}{2}$), we have that the numerator increases at a slower rate than the denominator when $t$ increases, so this function also decreases as $t$ increases. So, we may assume that $t = u = 0$ to get that the fraction is at most
\begin{equation} \tag{2.b} \label{eq:2.b}
    \frac{\frac{1}{2} \cdot (1-(1-2p)^c) \cdot \frac{c-1}{c} + (1-p-\frac{1}{2} \cdot (1-(1-2p)^c)) \cdot (1+\sqrt{\delta_1})^2}{1-2p}.
\end{equation}

    \item $\boldsymbol{c_1, c_2 \ge 1}$. In this case, we have that there exists a point $i_3^{(k)} \in N(j) \cap I_3$ not in $q^{-1}(i_2)$, so it has distance at least $\sqrt{\delta_2}$ from $i_2$. Therefore, by the triangle inequality, $d(i_2, j) \ge \sqrt{\delta_2}-1.$ Let $t = d(i_2, j)$. Next, since all of the points in $N(j) \cap I_3$ have pairwise distance at least $\sqrt{2}$ from each other, $\sum_{i \in N(j) \cap I_3}(1-c(j, i)) \le 1$. Therefore, the denominator of the fraction is at most $1-p(1-t^2)-p \le 1-2p+t^2 p$. 
    
    We now bound the numerator. First, by Proposition \ref{prop:anticor_2} and  since $c \ge 2$, the probability that no point in $N(j) \cap (I_2 \cup I_3)$ is in $S$ is some $p'$, where $p' \le \frac{1}{2} (1-2p)+\frac{1}{2} (1-2p)^c \le (1-2p)(1-p)$. In this event, we have that $c(j, S) \le (1+\sqrt{\delta_1})^2$. In addition, there is a $p$ probability that $i_2 \in S$, in which case $c(j, S) \le t^2$. Finally, with $1-p-p'$ probability, we have that $i_2 \not\in S$ but there is some $i_3^{(k)} \in S$, so $c(j, S) \le 1$. Overall, the fraction is at most
\[\frac{p \cdot t^2 + p' \cdot (1+\sqrt{\delta_1})^2 + (1-p-p')}{1-2p+t^2 p}.\]
    This function clearly decreases as $t$ increases (since the numerator and denominator increase at the same rate), so we may assume that $t = \sqrt{2}-1$ as this is our lower bound on $t$. So, the fraction is at most
\begin{equation} \tag{2.c} \label{eq:2.c}
    \frac{p \cdot (\sqrt{2}-1)^2 + p' \cdot (1+\sqrt{\delta_1})^2 + (1-p-p')}{1-2p+(\sqrt{2}-1)^2 p} \le \frac{1-p(2\sqrt{2}-2) + (1-2p)(1-p) \cdot \left((1+\sqrt{\delta_1})^{2}-1\right)}{1-p(2\sqrt{2}-1)}.
\end{equation}

    \item $\boldsymbol{c_1 = 1}$ \textbf{and} $\boldsymbol{c_2 = 0}$. In this case, $c = 1$, so we simply write $i_3$ as the unique point in $I_3 \cap N(j)$. Let $i^* = w(j)$ be the witness of $j$. Since $c_1 = 1$, this means that $i_2, i_3$ are neighbors in $H(\delta_2)$ and $q(i_3) = i_2$. Finally, we have that $i_2, i_3$ are not connected in $H(\delta_3)$. So, $d(i_2, i_3) \ge \sqrt{\delta_3 \cdot \min(t_{i_2}, t_{i_3})}.$
    Now, note that since $i_2, i_3$ are not in $I_1$, we either have that the witness $i^*$ is in $I_1$, in which case $d(j, I_1) = 1$, or all of $i_2, i_3, i^*$ are adjacent to $I_1$ in $H(\delta_1)$ since $I_1$ was a maximal independent set in $H(\delta_1)$. Therefore, if we define $\beta = d(j, i_2)$ and $\gamma = d(j, i_3)$, this means that $d(j, I_1) \le \min\left(1+\sqrt{\delta_1}, \beta+\sqrt{\delta_1 \cdot t_{i_2}}, \gamma+\sqrt{\delta_1 \cdot t_{i_3}}\right),$ and $d(i_2, i_3) \le \beta+\gamma$ by triangle inequality.
    
    Note that the denominator equals $1-p(1-\beta^2)-p(1-\gamma^2)$. To bound the numerator, note that with probability $p$, $i_2 \in S$ in which case $d(j, S) \le \beta$ and with probability $p$, $i_3 \in S$ in which case $d(j, S) \le \gamma$. Also, these two events are disjoint since $q(i_3) = i_2$. Finally, in the remaining $1-2p$ probability event, $d(j, S) \le d(j, I_1) \le \min(1+\sqrt{\delta_1}, \beta+\sqrt{\delta_1 \cdot t_{i_2}}, \gamma+\sqrt{\delta_1 \cdot t_{i_3}})$.
    
    Letting $t = \min(t_{i_2}, t_{i_3}) \ge 1$, we have that $\min\left(\beta+\sqrt{\delta_1 \cdot t_{i_2}}, \gamma + \sqrt{\delta_1 \cdot t_{i_3}}\right) \le \max(\beta, \gamma) + \sqrt{\delta_1 \cdot t}$. We also know that
    $\sqrt{\delta_3 \cdot t} \le d(i_2, i_3) \le \beta+\gamma$. Therefore, we can bound the ratio by
\begin{align}
    &\hspace{0.5cm}\max_{\substack{t \ge 1 \\ \beta+\gamma \ge \sqrt{\delta_3 \cdot t}}}\frac{(1-2p) \cdot \min(1+\sqrt{\delta_1}, \max(\beta, \gamma)+\sqrt{\delta_1 \cdot t})^2 + p \cdot \beta^2 + p \cdot \gamma^2}{1-p (1-\beta^2) - p (1-\gamma^2)} \nonumber \\
    &= \frac{\left(1-2p\right)\cdot\left(1+\sqrt{\delta_1}\right)^{2}\cdot\left(\delta_1+\left(\sqrt{\delta_1}+\sqrt{\delta_3}\right)^{2}\right)+p\cdot\left(1+\sqrt{\delta_1}\right)^{2}\cdot \delta_3}{\left(1-2p\right)\cdot\left(\delta_1+\left(\sqrt{\delta_1}+\sqrt{\delta_3}\right)^{2}\right)+p\cdot\left(1+\sqrt{\delta_1}\right)^{2}\cdot \delta_3}. \tag{2.d} \label{eq:2.d}
\end{align}
    We prove the final equality in Appendix \ref{app:bash}.

\end{enumerate}

\paragraph{Case 3: $\boldsymbol{a = 0, b \ge 2}$.}

We split this case into three cases. First, we recall that each point $i \in I_3$ corresponds to some point $q(i) \in I_2$. Let $c_1$ represent the number of points such $i \in N(j) \cap I_3$ such that $q(i) \in N(j) \cap I_2$, and let $c_2 = c-c_1$. Note that if $c = 0$, then $c_1 = c_2 = 0$.

\begin{enumerate}[label=\alph*)]
    \item $\boldsymbol{c_1 = 0}$. In this case, all of the points in $N(j) \cap (I_2 \cup I_3)$ must not be connected in $H(\delta_2)$. Therefore, $\sum_{i \in N(j) \cap (I_2 \cup I_3)} (1-c(j, i)) \le 1$ and since $a = 0$, this means that the denominator is at least $1-p$ in expectation. In addition, since $b \ge 2,$ the probability of no point in $N(j) \cap I_2$ being in $S$ is at most $(1-p)^2$, in which case $d(j, S) \le (1+\sqrt{\delta_1})^2$. If there is some point in $N(j) \cap I_2$ in $S$, then $d(j, S) \le 1.$ Therefore, the numerator is at most $(1-p)^2 \cdot (1+\sqrt{\delta_1})^2 + (1-(1-p)^2).$ So, we can bound the fraction by
\begin{equation} \tag{3.a} \label{eq:3.a}
    \frac{(1-p)^2 \cdot (1+\sqrt{\delta_1})^2 + (1-(1-p)^2)}{1-p}.
\end{equation}
    
    \item $\boldsymbol{c_1 = 1, c_2 = 0}$. In this case, the probability of there being a point in $N(j) \cap (I_2 \cup I_3)$ that is part of $S$ is $(1-2p) \cdot (1-p)^{b-1}$. Conditioned on this event, $c(j, S) \le 1$, and otherwise, $c(j, S) \le (1+\sqrt{\delta_1})^2$. So, the numerator is at most $(1-2p) \cdot (1-p)^{b-1} \cdot (1+\sqrt{\delta_1})^2 + \left(1-(1-2p) \cdot (1-p)^{b-1}\right) \cdot 1.$ Finally, we have that all of the points in $N(j) \cap (I_2 \cup I_3)$ are separated by at least $\sqrt{2}$, except for the unique point $i_3 \in N(j) \cap I_3$ and $q(i_3)$, which are separated by at least $\sqrt{\delta_3}$. So, $\sum_{N(j) \cap (I_2 \cup I_3)} c(j, i) \ge \frac{1}{b+1} \cdot \left(\frac{b(b+1)}{2} \cdot 2 + (\delta_3-2)\right) = b + \frac{\delta_3-2}{b+1},$ which means that $\sum_{N(j) \cap (I_2 \cup I_3)} (1-c(j, i)) \le 1 + \frac{2-\delta_3}{b+1}.$ Therefore, the denominator is at least $1-p \cdot \left(1 + \frac{2-\delta_3}{b+1}\right).$ Overall, the fraction is at most
\begin{equation} \tag{3.b} \label{eq:3.b}
    \frac{(1-p)^{b-1}\cdot(1-2p)\cdot(1+\sqrt{\delta_1})^2+\left(1-(1-p)^{b-1}(1-2p)\right)}{1-\left(1+\frac{2-\delta_3}{b+1}\right) \cdot p}
\end{equation}

    \item $\boldsymbol{c_1 \ge 2}$ \textbf{or} $\boldsymbol{c_1 = 1, c_2 \ge 1}$. In this case, we first note that since all of the points in $N(j) \cap I_2$ have distance at least $\sqrt{2}$ from each other and all of the points in $N(j) \cap I_3$ have distance at least $\sqrt{2}$ from each other, both $\sum_{i \in N(j) \cap I_2} (1-c(j, i))$ and $\sum_{i \in N(j) \cap I_3} (1-c(j, i))$ are at most $1$. Let $t$ be such that $1-t = \sum_{i \in N(j) \cap I_2} (1-c(j, i)).$ Then, the denominator is at least $1-p(2-t)$. In addition, with probability $1-(1-p)^b$, at least one of the points in $N(j) \cap I_2$ will be in $S$, conditioned on which the expected value of $c(j, S)$ is at most $\frac{1}{b} \cdot \sum_{i \in N(j) \cap I_2} c(j, i) = \frac{b-1}{b} + \frac{t}{b}.$ Next, note that the probability of no point in $N(j) \cap (I_2 \cup I_3)$ being in $S$ is maximized when all points $i \in N(j) \cap I_3$ with $q(i) \in N(j)$ map to a single point $i_2 \in N(j) \cap I_2$, and all other points in $N(j) \cap I_3$ map to a single point $i_2' \in I_2 \backslash N(j)$. In this case, the probability that no point in $N(j) \cap (I_2 \cup I_3)$ is in $S$ is at most $p' := (1-p)^{b-1} \cdot \left(\frac{1}{2}(1-2p)+\frac{1}{2}(1-2p)^{c_1}\right) \cdot \left(\frac{1}{2} + \frac{1}{2}(1-2p)^{c_2}\right).$
    
    Overall, we have that with probability $1-(1-p)^b$, some point in $I_2 \cap N(j)$ is in $S$, conditioned on which $\BE[c(j, S)] \le \frac{b-1}{b} + \frac{t}{b}$, with probability at most $p',$ no point in $N(j) \cap (I_2 \cup I_3)$ is in $S$, conditioned on which $c(j, S) \le (1+\sqrt{\delta_1})^2,$ and otherwise, some point in $N(j) \cap I_3$ is in $S$, which means $c(j, S) \le 1.$ So, we can bound this fraction overall by
\[\frac{(1-(1-p)^b) \cdot \left(\frac{b-1}{b}+\frac{t}{b}\right)+p' \cdot (\sqrt{\delta_1})^2 + ((1-p)^b-p') \cdot 1}{1-2p+p \cdot t}.\]
    Noting that $(1-(1-p)^b) \cdot \frac{1}{b} \le p$, we have that the numerator increases at a slower rate than the denominator as $t$ increases. Therefore, this fraction is maximized when $t = 0$. So, we can bound the fraction by
\begin{align}
    &\hspace{0.5cm} \frac{(1-(1-p)^b) \cdot \frac{b-1}{b} + p' \cdot (1+\sqrt{\delta_1})^2 + ((1-p)^b-p')}{1-2p} \nonumber \\
    \tag{3.c} \label{eq:3.c}
    &= \frac{\frac{b-1}{b}+\frac{(1-p)^b}{b}+(1-p)^{b-1} \cdot \left(\frac{1}{2}(1-2p)+\frac{1}{2}(1-2p)^{c_1}\right) \cdot \left(\frac{1}{2} + \frac{1}{2}(1-2p)^{c_2}\right) \cdot \left((1+\sqrt{\delta_1})^2-1\right)}{1-2p}.
\end{align}
\end{enumerate}

\paragraph{Case 4: $\boldsymbol{a=0, b=0}$.}

We split this case into three subcases.

\begin{enumerate}[label=\alph*)]
    \item $\boldsymbol{c = 0}.$ In this case, $N(j) \cap S$ is always empty, so the denominator is $1$. To bound the numerator, we consider the witness $i^*$ of $j$. If $i^* \in I_1$, the numerator is at most $c(j, i^*) \le 1$ so we can bound the fraction by $1$. Else, if $i^* \not\in V_2,$ then there exists $i_1 \in I_1$ such that $d(i^*, i_1) \le \sqrt{\delta_2 \cdot \min(t_{i^*}, t_{i_2})} \le \sqrt{\delta_2}$, and since $d(j, i^*) \le 1,$ we have that $d(j, I_1) \le 1+\sqrt{\delta_2})$. Thus, the fraction in the case where $i \in I_1$ or $i^* \not\in V_2$ is at most
\begin{equation} \tag{4.a.i} \label{eq:4.a.i}
    (1+\sqrt{\delta_2})^2.
\end{equation}
    
    Otherwise, there is some $i_1 \in I_1$ of distance at most $\sqrt{\delta_1 \cdot \min(t_{i^*}, t_{i_1})} \le \sqrt{\delta_1}$ away from $i^*$. Next, if $i^* \in I_2,$ the numerator is at most $p \cdot 1 + (1-p) \cdot (1+\sqrt{\delta_1})^2$. Otherwise, there is some $i_2 \in I_2$ of distance at most $\sqrt{\delta_2 \cdot \min(t_{i^*}, t_{i_2})}$ away from $i^*$. Finally, $d(i_1, i_2) \ge \sqrt{\delta_2 \cdot \min(t_{i_1}, t_{i_2})}$. Therefore, as in \ref{eq:1.b}, we can apply Proposition \ref{prop:triangle} to obtain that the numerator, and therefore, the fraction is at most
\begin{equation} \tag{4.a.ii} \label{eq:4.a.ii}
    1+p \cdot \delta_2 + (1-p) \cdot \delta_1 + 2\sqrt{p^2 \cdot \delta_2 + (1-p) \cdot \delta_1},
\end{equation}
    since the above \eqref{eq:4.a.ii} is greater than both $1$ and $p + (1-p) \cdot (1+\sqrt{\delta_1})^2$ for any $0 < p < 1.$
    
    \item $\boldsymbol{c=1}.$ In this case, let $i_3$ be the unique element in $N(j) \cap I_3$. Conditioned in $i_3 \in S$, the denominator equals $c(j, i_3)$ and the numerator is at most $c(j, i_3)$. Otherwise, the denominator is $1$, and the numerator can again be bounded in an identical way to the previous case, since the probability of $i_2 \in S$ is either $p$ (if $q(i_3) \neq i_2$) or $\frac{p}{1-p} > p$ (if $q(i_3) = i_2$). Therefore, the fraction is again at most 
\begin{equation} \tag{4.b.i} \label{eq:4.b.i}
    (1+\sqrt{\delta_2})^2
\end{equation}
    if the witness $i^*$ of $j$ satisfies $i^* \in I_1$ or $i^* \not\in V_2$, and is at most
\begin{equation} \tag{4.b.ii} \label{eq:4.b.ii}
    1+p \cdot \delta_2 + (1-p) \cdot \delta_1 + 2\sqrt{p^2 \cdot \delta_2 + (1-p) \cdot \delta_1}
\end{equation}
    otherwise.

    \item $\boldsymbol{c \ge 2}$. In this case, note that all points in $N(j) \cap I_3$ are separated by at least $\sqrt{2}$, which means that $\sum_{i \in N(j) \cap I_3} (1-c(j, i)) \le 1$. Letting $t$ be such that $1-t = \sum_{i \in N(j) \cap I_3} (1-c(j, i)),$ we have that $\sum_{i \in N(j) \cap I_3} c(j, i) = c-1+t$, so there exists $i_3 \in N(j) \cap I_3$ such that $c(j, i_3) \le \frac{c-1+t}{c}$. In addition, we know that $c(j, I_1) \le (1+\sqrt{\delta_1})^2$. Finally, we also note the denominator equals $1-p(1-t) = 1-p+pt$.
    
    Next, note that
\begin{align*}
    \sum_{i, i' \in N(j) \cap I_3} c(i, i') &= \sum_{i, i' \in N(j) \cap I_3} \left[\|i-j\|^2 + \|i'-j\|^2 - 2 \langle i-j, i'-j \rangle\right] \\
    &= 2 c \cdot \left[\sum_{i \in N(j) \cap I_3} c(j, i)\right] - 2 \left\langle \sum_{i \in N(j) \cap I_3} (i-j), \sum_{i' \in N(j) \cap I_3} (i'-j)\right\rangle \\
    &\le 2 c \cdot \left[\sum_{i \in N(j) \cap I_3} c(j, i)\right] \\
    &= 2c \cdot (c-1+t).
\end{align*}
    Since $c(i, i) = 0$, this means there exists $i \neq i'$ such that $c(i, i') \le \frac{2(c-1+t)}{c-1},$ and since $c(i, i') \ge 2 \cdot \min(t_i, t_{i'})$ for any $i, i' \in I_3$, this means that $\min_{i \in N(j) \cap I_3} t_i \le \frac{c-1+t}{c-1}.$
    
    Let $i = \arg\min_{i \in N(j) \cap I_3} t_i$, and let $i_2 = q(i) \in I_2$. Let $\mathcal{E}_1$ be the event that no point in $N(j) \cap I_3$ nor $i_2$ is in $S$, let $\mathcal{E}_2$ be the event that no point in $N(j) \cap I_3$ is in $S$, and let $\mathcal{E}_3$ be the event that $i_3$ is not in $S$.
    Note that $\mathcal{E}_1$ implies $\mathcal{E}_2$, which implies $\mathcal{E}_3$.
    Now, by Proposition \ref{prop:anticor_2}, $\BP(\mathcal{E}_1)$ equals some $p_1 \le \frac{1}{2} (1-2p) + \frac{1}{2} (1-2p)^c$.
    Likewise, By Proposition \ref{prop:anticor_1}, $\BP(\mathcal{E}_2)$ equals some $p_2 \le \frac{1}{2} + \frac{1}{2} (1-2p)^c$. 
    In addition, $\BP(\mathcal{E}_3) = p_3 = 1-p$.
    Under the event $\mathcal{E}_3^c$, we have that $c(j, S) \le c(j, i_3) \le \frac{c-1+t}{c}$. Next, under the event $\mathcal{E}_3 \backslash \mathcal{E}_2,$ we have that some point in $N(j) \cap I_3$ is in $S$, so $c(j, S) \le 1$. Under the event $\mathcal{E}_2 \backslash \mathcal{E}_1$, we know that $i_2 \in S$, so $c(j, S) \le d(j, i_2)^2 \le (c(j, i) + \sqrt{\delta_2 \cdot t_i})^2 \le \left(1 + \sqrt{2 \cdot \frac{c-1+t}{c-1}}\right)^2$. Finally, we always have that $c(j, S) \le (1+\sqrt{\delta_1})^2$.

    Therefore, we can bound the overall fraction by
\[\frac{p_1 \cdot (1+\sqrt{\delta_1})^2 + (p_2-p_1) \cdot \min\left(1+\sqrt{\delta_1}, 1+\sqrt{2 \cdot \frac{c-1+t}{c-1}}\right)^2 + (p_3-p_2) \cdot 1 + (1-p_3) \cdot \frac{c-1+t}{c}}{1-p+pt}.\]
    Since $(1+\sqrt{\delta_1})^2 \ge \min\left(1+\sqrt{\delta_1}, 1+\sqrt{2 \cdot \frac{c-1+t}{c-1}}\right)^2 \ge 1 \ge \frac{c-1+t}{c},$ the above fraction is an increasing function in the variables $p_1, p_2, p_3$.
    So, we can upper bound this fraction by replacing $p_1, p_2, p_3$ with their respective upper bounds $\frac{1}{2} (1-2p) + \frac{1}{2} (1-2p)^c,$ $\frac{1}{2} + \frac{1}{2} (1-2p)^c$, and $1-p$, as well as replacing $\min\left(1+\sqrt{\delta_1}, 1+\sqrt{2 \cdot \frac{c-1+t}{c-1}}\right)$ with simply $1+\sqrt{2 \cdot \frac{c-1+t}{c-1}}$.
    %
    Next, since $p_2-p_1 = p$ and $1-p_3 = p$, the derivative of the numerator with respect to $t$ is $p \cdot \left(\frac{2}{c-1} + \sqrt{\frac{2}{(c-1)(c-1+t)}}\right) + p \cdot \frac{1}{c} \le p \cdot (2.5 + \sqrt{2})$, and the derivative of the denominator with respect to $t$ is $p t$. 
    Hence, this fraction decreases as $t$ increases, unless the fraction is less than $2.5+\sqrt{2}$.
    Therefore, we can bound the fraction by
\begin{align} 
 \nonumber
    &\hspace{0.5cm}
    \max\left(2.5+\sqrt{2}, \frac{p_1 \cdot (1+\sqrt{\delta_1})^2 + (p_2-p_1) \cdot (1+\sqrt{2})^2 + (p_3-p_2) \cdot 1 + (1-p_3) \cdot \frac{c-1}{c}}{1-p}\right) \\
 \nonumber
    &=\max\left(2.5+\sqrt{2}, \frac{p_1 \cdot (1+\sqrt{\delta_1})^2 + p \cdot (1+\sqrt{2})^2 + (1-p-p_2) \cdot 1 + p \cdot \frac{c-1}{c}}{1-p}\right) \\
\nonumber
    &= \max\left(2.5+\sqrt{2}, \frac{\left(\frac{1}{2}(1-2p)+\frac{1}{2}(1-2p)^c\right) \cdot (1+\sqrt{\delta_1})^2 - \left(\frac{1}{2} + \frac{1}{2}(1-2p)^c\right) +p(1+\sqrt{2})^2+1-\frac{p}{c}}{1-p}\right), \tag{4.c} \label{eq:4.c}
\end{align}
    where we used the facts that $p_3 = 1-p$ and $p_2-p_1 = p$.
\end{enumerate}

\paragraph{Case 5: $\boldsymbol{a \ge 1}$.}
In this case, note that $\BE[c(j, S)] \le \BE[c(j, I_1)] \le 1$ deterministically, since $a = |I_1 \cap N(j)| \ge 1$. However, we can improve upon this, since we may have some points in $I_1 \cap N(j)$ much closer to $j$, or we may have some points in $(I_2 \cup I_3) \cap N(j)$ which are closer and appear with some probability.

Recall that in our algorithm, we flip a coin for each $i \in I_2$ to decide whether we include $i \in S$ with probability $2p$ or include each $q^{-1}(i)$ in $S$ independently with probability $2p$. Let us condition on all of these fair coin flips, and say that a point $i \in I_2 \cup I_3$ \emph{survives} the coin flips if they could be in the set $S$ with probability $2p$ afterwards.
For simplicity, we replace $p$ with $\bar{p} := 2p$. We also let $\bar{I}$ represent the points in $I_2 \cup I_3$ that survive the fair coin flips.

Let the squared distances from $j$ to each of the points in $I_1 \cap N(j)$ be $r_1, \dots, r_a$, and the squared distances from $j$ to each of the points in $\bar{I} \cap N(j)$ be $s_1, \dots, s_h$, where $h = |\bar{I} \cap N(j)|$.
It is trivial to see that $\BE[c(j, S)] \le \min_{1 \le i \le a} r_i \le \frac{r_1+\cdots+r_a}{a},$ and conditioned on at least one of the points in $\bar{I} \cap N(j)$ being selected, we have that $c(j, S)$ in expectation is at most $\frac{s_1 + \cdots + s_h}{h}.$ The probability of at least one of the points in $\bar{I} \cap N(j)$ being selected in $S$ is 
$$1-(1-\bar{p})^h \ge 1-\frac{1}{(1+\bar{p})^h} \ge 1-\frac{1}{1+\bar{p} \cdot h} = \frac{\bar{p} \cdot h}{1+\bar{p} \cdot h} \ge \frac{\bar{p} \cdot h}{a+\bar{p} \cdot h},$$
since conditioned on the initial coin flips, each surviving point in $(\bar{I_2} \cup \bar{I_3})$ is included in $S$ independently with probability $\bar{p}$. Therefore, we can say that
\[\BE[c(j, S)] \le \frac{r_1+\cdots+r_a}{a} \cdot \frac{a}{a+\bar{p} \cdot h} + \frac{s_1+\cdots+s_h}{h} \cdot \frac{\bar{p} \cdot h}{a+\bar{p} \cdot h} = \frac{(r_1+\cdots+r_a)+\bar{p} \cdot (s_1+\cdots+s_h)}{a+\bar{p} \cdot h}.\]

Next, we have that
\begin{align*}
    \BE\left[\alpha_j-\sum_{i \in N(j) \cap S} (\alpha_j-c(j, i))\right] &= \alpha_j - (a \cdot \alpha_j -(r_1+\cdots+r_a))-\bar{p} \cdot (h \cdot \alpha_j-(s_1+\cdots+s_h))\\
    &= \alpha_j \cdot (1-(a+\bar{p} \cdot h)) + [(r_1+\cdots+r_a)+\bar{p} \cdot (s_1+\cdots+s_h)].
\end{align*}

Now, we provide a lower bound for $(r_1+\cdots+r_a)+\bar{p}(s_1+\cdots+s_h).$ To do so, we 
use the fact that all the points in $N(j) \cap I_1$ are separated by at least $\delta_1 \cdot \alpha_j$ in squared distance, and all the surviving points in $N(j) \cap (I_1 \cup \bar{I_2} \cup \bar{I_3})$ are separated by at least $\delta_2 \cdot \alpha_j$ in squared distance, to get 
\[(r_1+\cdots+r_a)+\bar{p}(s_1+\cdots+s_h) \ge \frac{1}{a+\bar{p} \cdot h} \cdot \alpha_j \cdot \left( \delta_1 \cdot \frac{a(a-1)}{2} + \delta_2 \cdot \bar{p} \cdot a \cdot h + \delta_2 \cdot \bar{p}^2 \cdot \frac{h(h-1)}{2}\right).\]

So, if $a \ge 1$ and $(a, h) \neq (1, 0)$, then if we let $T_1 = a+\bar{p} \cdot h,$ $T_3 = \delta_1 \cdot \frac{a(a-1)}{2} + \delta_2 \cdot \bar{p} \cdot a \cdot h + \delta_2 \cdot \bar{p}^2 \cdot \frac{h(h-1)}{2},$ and $T_2 = T_3/T_1$, then the ratio is at most
\begin{equation} \tag{5.a} \label{eq:5.a}
\frac{T_2/T_1}{1-T_1+T_2} = \frac{T_3}{T_1(T_1-T_1^2+T_3)} = \frac{1}{T_1} + \frac{T_1-1}{T_1-T_1^2+T_3}.
\end{equation}

In the case where $a = 1, h = 0,$ this fraction is undefined. However, we note that in this case, $N(j) \cap S$ deterministically contains a unique center $i^*$ and nothing else, so $\BE[c(j, S)] \le c(j, i^*)$ and $\BE\left[\alpha_j - \sum_{i \in N(j) \cap S} (\alpha_j-c(j, i))\right] = \alpha_j - (\alpha_j - c(j, i^*)) = c(j, i^*).$ Therefore, the fraction is $1$.
\end{proof}

Therefore, we have the LMP approximation is at most $\rho(p)$, where $\rho(p)$ is determined via the numerous cases in the proof of Lemma \ref{lem:main_lmp}. The final step is to actually bound $\rho(p)$ based on the cases. Indeed, by casework one can show the following proposition.

\begin{proposition} \label{prop:more_bash}
    For $p \in [0.096, 0.402]$ and $\delta_1 = \frac{4+8\sqrt{2}}{7}, \delta_2 = 2$, and $\delta_3 = 0.265$, we have that 
\begin{multline*}
    \rho(p) \le \max\bigg(3+2\sqrt{2},
    1 + 2p + (1-p) \cdot \delta_1 + 2 \sqrt{2 p^2 + (1-p) \cdot \delta_1},
    \frac{(1-p)\cdot(1+\sqrt{\delta_1})^2+p\cdot(1-\sqrt{\delta_3})^2}{1-p+p (1-\sqrt{\delta_3})^2}, \\ 
    \frac{(1-2p)\cdot(1+\sqrt{\delta_1})^{2}\cdot(\delta_1+(\sqrt{\delta_1}+\sqrt{\delta_3})^{2})+p\cdot\left(1+\sqrt{\delta_1}\right)^2\cdot \delta_3}{(1-2p)\cdot(\delta_1+(\sqrt{\delta_1}+\sqrt{\delta_3})^{2})+p\cdot(1+\sqrt{\delta_1})^2\cdot \delta_3}\bigg).
\end{multline*}

As a consequence, for $p_1 := 0.402$, $\rho(p_1) \le 3+2\sqrt{2}$.
\end{proposition}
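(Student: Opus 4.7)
The plan is to verify case by case that each of the upper bounds \eqref{eq:1.a}--\eqref{eq:5.a} derived in the proof of Lemma \ref{lem:main_lmp} is dominated, for $p \in [0.096, 0.402]$ and the specified $\delta_1, \delta_2, \delta_3$, by one of the four quantities inside the stated maximum. Several cases match a listed quantity outright: with $\delta_2 = 2$, the bound $(1+\sqrt{\delta_2})^2$ in \eqref{eq:1.a}, \eqref{eq:4.a.i}, \eqref{eq:4.b.i} equals $3+2\sqrt{2}$; the bound $1+p\delta_2+(1-p)\delta_1+2\sqrt{p^2\delta_2+(1-p)\delta_1}$ in \eqref{eq:1.b}, \eqref{eq:1.e}, \eqref{eq:4.a.ii}, \eqref{eq:4.b.ii} is exactly the second listed expression after plugging $\delta_2=2$; the second half of \eqref{eq:1.g.i} is the third listed expression; and \eqref{eq:2.d} is the fourth. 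The remaining $p$-only bounds \eqref{eq:1.c}, \eqref{eq:1.d}, \eqref{eq:1.g.ii}, $(\sqrt{\delta_1}+\sqrt{\delta_3})^2$ from \eqref{eq:1.g.i}, and \eqref{eq:3.a} are each rational functions of $p$ on $[0.096, 0.402]$; I would verify each is dominated by one of the four target expressions by a direct comparison, using that the target expressions are monotone in $p$ on this interval.

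The main obstacle is the family of cases carrying integer parameters $b, c, c_1, c_2, a, h$: specifically \eqref{eq:2.a}, \eqref{eq:2.b}, \eqref{eq:2.c}, \eqref{eq:3.b}, \eqref{eq:3.c}, \eqref{eq:4.c}, and \eqref{eq:5.a}. For these I would exploit monotonicity in the integer parameter to reduce to finitely many instances. The factors $(1-2p)^{c_1}, (1-2p)^{c_2}, (1-2p)^c, (1-p)^b$ are monotone in their exponents for $p \in (0,\tfrac{1}{2})$, and in each of \eqref{eq:2.a}--\eqref{eq:4.c} a short computation shows the ratio is monotone in the relevant exponent once $p$ is fixed, so the worst admissible integer value is either the smallest (often $c_1=1$ or $b=2$) or the $c\to\infty$ limit. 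This reduces these cases to finitely many explicit rational functions of $p$, each of which is then bounded against one of the four listed expressions. For \eqref{eq:5.a}, substituting $\bar p = 2p$ and expanding $T_1$ and $T_3$ gives an explicit two-variable rational function of $(a,h)$; for fixed $a \ge 1$ it is maximized at a small value of $h$, and the self-separation contributions $\delta_1 \cdot \tfrac{a(a-1)}{2}$ and $\delta_2 \bar p^2 \tfrac{h(h-1)}{2}$ in the denominator force the ratio below $3+2\sqrt{2}$ once $a$ (or $h$) exceeds a small cutoff, leaving only a finite collection of $(a,h)$ to verify by hand. All of these finite checks are routine algebraic verifications; their detailed plots are in the Desmos links referenced in the remark after Lemma \ref{lem:main_lmp}.

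Finally, for the consequence $\rho(p_1) \le 3+2\sqrt{2}$ at $p_1 = 0.402$, it suffices to plug $p = p_1$, $\delta_1 = (4+8\sqrt{2})/7$, and $\delta_3 = 0.265$ into each of the second, third, and fourth listed expressions and check numerically that each is at most $3+2\sqrt{2} \approx 5.828$; all three inequalities are then closed-form numerical facts. The genuine difficulty of the proof lies not in any single inequality but in carrying the monotonicity reductions consistently across every case so that the ``max over cases'' really is captured by these four expressions on the full interval $p \in [0.096, 0.402]$.
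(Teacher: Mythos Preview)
Your proposal is correct and takes essentially the same approach as the paper: case-by-case verification of the bounds \eqref{eq:1.a}--\eqref{eq:5.a}, with monotonicity reductions for the integer-parameter families and numerical checks via the referenced Desmos files. The only organizational difference is that the paper routes the casework through Lemma~\ref{lem:more_bash}, which groups the subcases into five classes $\rho^{(1)},\ldots,\rho^{(5)}$ (a decomposition reused later in Subsection~\ref{subsec:k_means_improved}), but the underlying verifications are the same as what you outline.
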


We defer the casework to Lemma \ref{lem:more_bash}, and the above proposition follows immediately from it.
Therefore, 
we get a $(3+2\sqrt{2}) \approx 5.828$ LMP approximation for the Euclidean $k$-means problem.

\section{Polynomial-time Approximation Algorithm for Euclidean $k$-means} \label{sec:poly_time_alg}

In this section, we describe how we improve the LMP approximation for Euclidean $k$-means to a polynomial-time approximation algorithm. Unfortunately, we will lose a slight factor in our approximation, but we still obtain a significant improvement over the previous state-of-the-art approximation factor.
While we focus on the $k$-means problem, we note that this improvement can also be applied to the $k$-median problem as well, with some small modifications that we will make note of. In Section \ref{sec:k_median}, we will provide an LMP approximation for $k$-median, and explain how we can use the same techniques in this section to also obtain an improved polynomial-time algorithm for $k$-median as well.

In Subsection \ref{subsec:k_means_alg_prelim}, we describe the polynomial-time algorithm to generate two nested quasi-independent sets $I$ and $I'$, which will be crucial in developing our final set of centers of size $k$ with low clustering cost. This procedure is based on a similar algorithm of Ahmadian et al.~\cite{ahmadian2017better}, but with some important changes in how we update our graphs and independent sets. In Subsection \ref{subsec:more_prelims}, we describe and state a few additional preliminary results. In Subsection \ref{subsec:k_means_analysis}, we analyze the algorithm and show how we can use $I$ and $I'$ to generate our final set of centers, to obtain a $6.013$-approximation algorithm. Finally, in Subsection \ref{subsec:k_means_improved}, we show that our analysis in Subsection \ref{subsec:k_means_analysis} can be further improved, to obtain a $(\kmeansratio+\eps)$-approximation guarantee.

We remark that our approximation guarantee of $(\kmeansratio+\eps)$ is only \emph{in expectation}. However, this can be made to be with exponential failure probability, as with probability $\eps$ the approximation ratio will be $(\kmeansratio+O(\eps))$. So, by running the algorithm $\eps^{-1} \cdot \text{poly}(n)$ times in parallel, and outputting the best solution of these, we obtain a $(\kmeansratio+\eps)$-approximation ratio with probability at least $1-(1-\eps)^{\eps^{-1} \cdot \text{poly}(n)} \ge 1-e^{-\text{poly}(n)}$.

\subsection{The algorithm and setup} \label{subsec:k_means_alg_prelim}

First, we make some assumptions on the clients and facilities. We first assume that the number of facilities, $m = |\mathcal{F}|$, is at most polynomial in the number of clients $n = |\mathcal{D}|$. In addition, we also assume that the distances between clients and facilities are all in the range $[1, n^6]$. Indeed, both of these assumptions can be made via standard discretization techniques, and we only lose an $1+o(1)$-approximation factor by removing these assumptions~\cite{ahmadian2017better}. Note that this means the optimal clustering cost, which we call $\text{OPT}_k$, is at least $n$ for both $k$-means and $k$-medians.
Finally, we assume that $k \le n-1$ (else this problem is trivial in polynomial time).

Next, we describe the setup relating to dual solutions.
Consider the tuple $(\alpha, z, \mathcal{F}_S, \mathcal{D}_S)$, where $\alpha \in \BR^{\mathcal{D}}$, $z \in \BR^{\mathcal{F}}$, $\mathcal{F}_S \subset \mathcal{F}$, and $\mathcal{D}_S: \mathcal{F}_S \to \{0, 1\}^{\mathcal{D}}$. Here, $\alpha$ represents the set $\{\alpha_j\}_{j \in \mathcal{D}}$ which will be a solution to the dual linear program, $z$ represents $\{z_i\}_{i \in \mathcal{F}}$, where each $z_i \in \{\lambda, \lambda+\frac{1}{n}\}$ will be a modified value representing the threshold for tightness of facility $i$, $\mathcal{F}_S$ represents a subset of facilities that we deem ``special'', and $\mathcal{D}_S$ is a function that maps each special facility to a subset of the clients that we deem special clients for that facility.

When talking about a single solution $(\alpha, z, \mathcal{F}_S, \mathcal{D}_S)$, we define $\beta_{ij} = \max(0, \alpha_j-c(j, i))$ for any $i \in \mathcal{F}, j \in \mathcal{D}$, and define $N(i) = \{j \in \mathcal{D}: \beta_{ij} > 0\}$. We say that a facility $i$ is tight if $\sum_{j \in \mathcal{D}} \beta_{ij} = z_i$. Now, we define $\tau_i$ for each $i$ that is either tight or special (i.e., in $\mathcal{F}_S$). For each tight facility, we define $\tau_i = \max_{j \in N(i)} \alpha_j$, and for each special facility, we define $\tau_i = \max_{j \in N(i) \cap \mathcal{D}_S(i)} \alpha_j$.  We default the maximum of an empty set to be $0$. We also consider a modified conflict graph $H := H(\delta)$ on the set of tight or special facilities, with an edge between $i$ and $i'$ if $c(i, i') \le \delta \cdot \min(\tau_i, \tau_{i'})$.

\medskip
We can now define the notion of \emph{roundable} solutions: our definition is slightly modified from \cite[Definition 5.1]{ahmadian2017better}.

\begin{defn} \label{def:roundable}
    Let $\alpha \in \BR^\mathcal{D},$ $z \in \BR^\mathcal{F},$ $\mathcal{F}_S \subset \mathcal{F}$ be the set of special facilities, and $\mathcal{D}_S: \mathcal{F}_S \to \{0, 1\}^{\mathcal{D}}$ be the function assigning each special facility $i \in \mathcal{F}_S$ to a subset of special clients $\mathcal{D}_S(i)$. Then, the tuple $(\alpha, z, \mathcal{F}_S, \mathcal{D}_S)$ is \emph{$(\lambda, k')$-roundable} if 
\begin{enumerate}
    \item $\alpha$ is a feasible solution of $\text{DUAL}(\lambda + \frac{1}{n})$ and $\alpha_j \ge 1$ for all $j$.
    \item For all $i \in \mathcal{F},$ $\lambda \le z_i \le \lambda + \frac{1}{n}.$
    \item There exists a subset $\mathcal{D}_B$ of ``bad'' clients so that for all $j \in \mathcal{D}$, there is a facility $w(j)$ that is either tight or in $\mathcal{F}_S$, such that:
    \begin{enumerate}[label=(\alph*)]
        \item For all $j \in \mathcal{D} \backslash \mathcal{D}_B$, $(1+\eps) \cdot \alpha_j \ge c(j, w(j))$
        \item For all $j \in \mathcal{D} \backslash \mathcal{D}_B$, $(1+\eps) \cdot \alpha_j \ge \tau_{w(j)}$
        \item $\gamma \cdot \text{OPT}_{k'} \ge \sum_{j \in \mathcal{D}_B} \left(c(j, w(j)) + \tau_{w(j)}\right)$
    \end{enumerate}
    \item $\sum_{i \in \mathcal{F}_S} \sum_{j \in \mathcal{D}_S(i)} \max(0, \alpha_{j}-c(j, i)) \ge \lambda \cdot |\mathcal{F}_S| - \gamma \cdot \text{OPT}_{k'}$, and $|\mathcal{F}_S| \le n.$ \label{cond:4}
\end{enumerate}
Here, $\gamma \ll \eps \ll 1$ are arbitrarily small constants, which are implicit parameters in the definition. Finally, we say that $(\alpha, z, \mathcal{F}_S, \mathcal{D}_S)$ is $k'$-roundable if it is $(\lambda, k')$-roundable for some choice of $\lambda \ge 0$.
\end{defn}

\begin{figure}
\centering
\begin{algorithm}[H]
    \caption{Generate Sequence of Nested Quasi-Independent Sets}\label{alg:main}
    \begin{algorithmic}[1] 
        \State Initialize $\mathcal{S}^{(0)} = (\alpha^{(0)}, z^{(0)}, \mathcal{F}_S^{(0)}, \mathcal{D}_S^{(0)})$, and set $\lambda \leftarrow 0$, $I_1^{(0)} \leftarrow \mathcal{F}$, $I_2^{(0)} \leftarrow \emptyset$, $I_3^{(0)} \leftarrow \emptyset$
        \State Set $I^{(0)} = (I_1^{(0)}, I_2^{(0)}, I_3^{(0)})$, $\eps_z \leftarrow n^{-\text{poly}(\gamma^{-1})}$, $L \leftarrow 4n^7 \cdot \eps_z^{-1}$, $k' \leftarrow \min(k, |I_1^{(0)}|),$ and $p_1 = 0.402$.
        \For{$\lambda = 0, \eps_z, \dots, L \cdot \eps_z$}
            \For{$i \in \mathcal{F}$}
                \State Call \Call{RaisePrice}{$\alpha^{(0)}, z^{(0)}, I_1^{(0)}, i$} to generate a polynomial-size sequence $\mathcal{S}^{(1)}, \dots, \mathcal{S}^{(q)}$ of close, $k'$-roundable solutions
                \For{$\ell = 0$ to $q-1$}
                    \State Call \Call{GraphUpdate}{$\mathcal{S}^{(\ell)}, \mathcal{S}^{(\ell+1)}, I^{(\ell)}$} to produce a sequence $\{I^{(\ell, r)}\}_{r = 0}^{p_\ell}$
                    \For{$r = 1$ to $p_\ell$}
                        \If{$|I_1^{(\ell, r)}|+p_1|I_2^{(\ell, r)} \cup I_3^{(\ell, r)}| < k$}
                            \State Let $I = (I_1^{(\ell, r-1)}, I_2^{(\ell, r-1)}, I_3^{(\ell, r-1)})$, $I' = (I_1^{(\ell, r)}, I_2^{(\ell, r)}, I_3^{(\ell, r)})$, and \textbf{return} $(I, I')$
                        \Else
                            \State $I^{(\ell+1)} \leftarrow I^{(\ell, p_{\ell})}$
                        \EndIf
                    \EndFor
                \EndFor
                \State $\mathcal{S}^{(0)} \leftarrow \mathcal{S}^{(q)}$, $I^{(0)} \leftarrow I^{(q)}$
                \State $k' \leftarrow \min(k', |I_1^{(0)}|)$
            \EndFor
        \EndFor
    \end{algorithmic}
\end{algorithm}
\end{figure}

Our main algorithm is described in Figure~\ref{alg:main}. This algorithm outputs two nested quasi-independent sets $I$ and $I'$. The final set $S$ will be obtained either from one of these two sets, or from some hybridization of them. We will defer the actual construction of $S$ to Theorem \ref{thm:main}.

The method of \Call{RaisePrice}{} comes from \cite{ahmadian2017better}, and will not be of importance, apart from the results that they give us for the overall algorithm (see Theorem \ref{thm:ahmadian_roundable}). We will make some important definitions and set up the overall algorithm, and then describe the method of \Call{GraphUpdate}{}, which is slightly modified from \cite{ahmadian2017better}.

First, we describe the initialization phase of Algorithm \ref{alg:main} to generate $\mathcal{S}^{(0)}$, which is almost identical to that of Ahmadian et al.~\cite[P. 22-23 of journal version]{ahmadian2017better}. The main difference is that we parameterized the procedure by $1+\kappa, 1/\kappa$ (instead of 2 and 6 in  Ahmadian et al.) Start by setting $\lambda = 0$, $z_i^{(0)} = 0$ for all $i \in \mathcal{F}$, and $\mathcal{F}_S^{(0)} = \emptyset$ (so $\mathcal{D}_S^{(0)}$ has empty domain). We then set $\alpha_j^{(0)} = 0$ for all $j \in \mathcal{D}$. 

Now, we increase all of the $\alpha_j^{(0)}$ values simultaneously at a uniform rate, and for each $j$, we stop increasing $\alpha_j^{(0)}$ once one of the following $2$ events occur:
\begin{enumerate}
    \item $\alpha_j^{(0)} = c(j, i)$ for some $i$.
    \item $(1+\kappa) \cdot \sqrt{\alpha_j} \ge d(j, j')+\frac{1}{\kappa} \cdot \sqrt{\alpha_{j'}}$ for some $j' \neq j$ (or $(1+\kappa) \cdot \alpha_j \ge d(j, j')+\frac{1}{\kappa} \cdot \alpha_{j'}$ in the $k$-median case). Here, $\kappa$ will be a fixed small constant (see Appendix \ref{app:why_am_i_doing_this} for details on how to set $\kappa$).
\end{enumerate}

In the initial solution, $\alpha_j^{(0)} \le \min_{i \in \mathcal{F}} c(j, i)$ for all $i \in \mathcal{F}$, which means that $N(i)$ is empty for all $i \in \mathcal{F}$, so $\tau_i = 0$. In addition, since every $z_i = 0$, every facility is tight. This means that the conflict graph $H(\delta)$ on the set of tight facilities for the initial solution we construct is just an empty graph on the full set of facilities, since $c(i, i') > 0 = \delta \cdot \min(\tau_i, \tau_{i'}).$ So, this means that if we apply Algorithm \ref{alg:lmp} to $V_1 = \mathcal{F}$, we will obtain that $I_1^{(0)} = \mathcal{F}$ and $I_2^{(0)} = I_3^{(0)} = \emptyset$.
\medskip

We now set up some definitions that will be important for the remainder of the algorithm and analysis.
Define two dual solutions $\alpha = \{\alpha_j\}$ and $\alpha' = \{\alpha_j'\}$ to be \emph{close} if $\max_{j \in \mathcal{D}} |\alpha_j-\alpha_j'| \le \frac{1}{n^2}$.
Consider two solutions $\mathcal{S}^{(\ell)} = (\alpha^{(\ell)}, z^{(\ell)}, \mathcal{F}_S^{(\ell)}, \mathcal{D}_S^{(\ell)})$ and $\mathcal{S}^{(\ell+1)} = (\alpha^{(\ell+1)}, z^{(\ell+1)}, \mathcal{F}_S^{(\ell+1)}, \mathcal{D}_S^{(\ell+1)})$ that are each 
$(\lambda, k')$-roundable for some choice of $\lambda$, such that $\alpha^{(\ell)}$ and $\alpha^{(\ell+1)}$ are close.
This means that $|\alpha_j^{(\ell)}-\alpha_j^{(\ell+1)}| \le \frac{1}{n^2}$ for all $j \in \mathcal{D},$ and that $\lambda \le z_i^{(\ell)}, z_i^{(\ell+1)} \le \lambda+\frac{1}{n}$ for all $i \in \mathcal{F}$, even if $i$ is not tight. Let $\mathcal{V}^{(\ell)}$ represent the set of tight or special facilities in $\mathcal{S}^{(\ell)}$ and define $\mathcal{V}^{(\ell+1)}$ likewise. 

Let $\sqcup$ denote the disjoint union, i.e., $S \sqcup T$ is a set consisting of a copy of each element in $S$ and a distinct copy of each element in $T$.
For each point $i \in \mathcal{V}^{(\ell)} \sqcup \mathcal{V}^{(\ell+1)}$, we define $\mathcal{D}_S(i)$ (if $i$ were a special facility), $\tau_i$, and $z_i$ based on whether $i$ came from $\mathcal{V}^{(\ell)}$ or from $\mathcal{V}^{(\ell+1)}$.
This means that for $i \in \mathcal{V}^{(\ell)}$, $\mathcal{D}_S(i) = \mathcal{D}^{(\ell)}_S(i)$, $z_i = z_i^{(\ell)}$, and $\tau_i = \tau_i^{(\ell)} = \max_{j: \alpha_j^{(\ell)} > c(j, i)} \alpha_j^{(\ell)}$ if $i$ is tight and $\tau_i = \tau_i^{(\ell)} = \max_{j: \alpha_j^{(\ell)} > c(j, i), j \in \mathcal{D}_S^{(\ell)}(i)} \alpha_j^{(\ell)}$ if $i$ is special (and likewise for $i \in \mathcal{V}^{(\ell+1)}$).
In addition, for each client $j \in \mathcal{D}$, we define $\alpha_j = \min(\alpha_j^{(\ell)}, \alpha_j^{(\ell+1)})$; for each $j \in \mathcal{D}$ and $i \in \mathcal{V}^{(\ell)} \sqcup \mathcal{V}^{(\ell+1)}$, we define $\beta_{ij} = \max(0, \alpha_j-c(j, i))$; and for each $i \in \mathcal{F}$, we let $N(i) = \{j \in \mathcal{D}: \alpha_j-c(j, i) > 0\}$. Since $\alpha_j = \min(\alpha_j^{(\ell)}, \alpha_j^{(\ell+1)}),$ this means $N(i) \subset N^{(\ell)}(i)$ if $i \in \mathcal{V}^{(\ell)}$ and $N(i) \subset N^{(\ell+1)}(i)$ if $i \in \mathcal{V}^{(\ell+1)}.$

We create a \emph{hybrid conflict graph} on a subset of the disjoint union $\mathcal{V}^{(\ell)} \sqcup \mathcal{V}^{(\ell+1)}$. First, we let $H^{(\ell, 0)}(\delta)$ represent the conflict graph on $\mathcal{V}^{(\ell, 0)} := \mathcal{V}^{(\ell)}$. The conflict graph on a set of facilities means there is an edge between two vertices $(i, i')$ if $c(i, i') \le \delta \cdot \min(\tau_i, \tau_{i'})$.
Next, we choose some ordering of the vertices in $\mathcal{V}^{(\ell)}$, and for each $1 \le r \le p_\ell := |\mathcal{V}^{(\ell)}|+1$, we let $V^{(\ell, r)}$ represent the vertices of the so-called \emph{merged vertex set} defined as $\mathcal{V}^{(\ell)} \sqcup \mathcal{V}^{(\ell+1)}$ after we removed the first $r-1$ vertices in $\mathcal{V}^{(\ell)}$, and let $H^{(\ell, r)}(\delta)$ represent the conflict graph on $\mathcal{V}^{(\ell, r)}$, where again the conflict graph means that $i, i' \in \mathcal{V}^{(\ell, r)}$ share an edge if $c(i, i') \le \delta \cdot \min(\tau_i, \tau_{i'})$. 
Note that $\mathcal{V}^{(\ell, 1)} = \mathcal{V}^{(\ell)} \sqcup \mathcal{V}^{(\ell+1)}$.
For simplicity of notation, we may abbreviate $\mathcal{V}^{(\ell, r)}$ as $\mathcal{V}$ and $H^{(\ell, r)}(\delta)$ as $H(\delta)$ if the context is clear.

In the context of a hybrid conflict graph $\mathcal{V}^{(\ell, r)}$, for any client $j$, let $\bar{N}(j)$ be the subset of $\mathcal{V}^{(\ell, r)}$ consisting of all tight facilities $i$ such that $j \in N(i)$ and all special facilities $i$ such that $j \in N(i) \cap \mathcal{D}_S(i)$.
We also define $w(j)$ as the witness for $j$ in the solution $\mathcal{S}^{(\ell+1)}$, and $\mathcal{D}_B$ as the set of bad clients from the solution $\mathcal{S}^{(\ell+1)}$.

Finally, to describe the actual \Call{GraphUpdate}{} procedure, it works as follows. First, we note that $I^{(0, 0)} = I^{(0)}$ which has already been decided, either by the first initialized solution or from the previous solution $I^{(q)}$ before we reset $I^{(0)} = I^{(q)}$. Otherwise, $I^{(\ell+1, 0)} = I^{(\ell, p_\ell)}$ since the set of vertices $\mathcal{V}^{(\ell+1, 0)} = \mathcal{V}^{(\ell+1)} = \mathcal{V}^{(\ell, p_\ell)}$. Finally, we note that $|\mathcal{V}^{(\ell, r)} \backslash \mathcal{V}^{(\ell, r+1)}| \le 1,$ since for $r = 0$ $\mathcal{V}^{(\ell, r)} \subset \mathcal{V}^{(\ell, r+1)}$ and otherwise, $\mathcal{V}^{(\ell, r+1)}$ is created by simply removing one element from $\mathcal{V}^{(\ell, r)}$. So, the maximal independent set $I_1^{(\ell, r)}$ of $H^{(\ell, r)}(\delta_1)$ can easily be extended to a maximal independent set $I_1^{(\ell, r+1)}$ of $H^{(\ell, r+1)}(\delta_1)$ by deleting at most $1$ element and then possibly extending the independent set. We then extend $I_1^{(\ell, r+1)}$ arbitrarily based on Steps \ref{step:V2} to \ref{step:I3} to create $I_2^{(\ell, r+1)}$ and $I_3^{(\ell, r+1)}$, where $I_2^{(\ell, r+1)}$ and $I_3^{(\ell, r+1)}$ may have no relation to $I_2^{(\ell, r)}$ and $I_3^{(\ell, r)}$. So, we inductively create $I^{(\ell, r+1)} = (I_1^{(\ell, r+1)}, I_2^{(\ell, r+1)}, I_3^{(\ell, r+1)})$ from $I^{(\ell, r)} = (I_1^{(\ell, r)}, I_2^{(\ell, r)}, I_3^{(\ell, r)})$, where importantly $|I_1^{(\ell, r)} \backslash I_1^{(\ell, r+1)}| \le 1$.

\subsection{Additional preliminaries} \label{subsec:more_prelims}

For our approximation guarantees, we require two additional preliminaries.
The first is to show a rough equivalence between solving $k$-means (resp., $k$-median) clustering and solving $k$-means (resp., $k$-median) clustering if allowed $O(1)$ additional clusters.
The second is to define the notion of negative-submodularity and its application for $k$-means and $k$-median.

First, we show that for any constant $C$ and
parameter $\eps$, if there exists a polynomial-time $\alpha$-approximation algorithm
for $k$-means or $k$-median in any metric space that
uses $k+C$ centers, then for any constant $\eps>0$, 
there exists a polynomial-time $\alpha(1+\eps)$-approximation algorithm that opens exactly $k$ centers.

More formally, the statement we prove is the following. Note that a similar statement was
proven in~\cite{LiS16}.
\begin{lemma}
\label{lem:center-reduction}
Let $C,\alpha$ be some absolute constants. 
Let $\mathcal{A}$ 
be an $\alpha$-approximation algorithm with running time $T$ for $k$-median (resp. $k$-means) that open $k+C$ centers. Then, for any $1/3 > \eps>0$, there exists an
$\alpha(1+\eps)$-approximation for $k$-median (resp. $k$-means) with running time 
$O(T+n^{\text{poly}(C/\eps)})$.
\end{lemma}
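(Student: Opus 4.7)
My plan is to first invoke $\mathcal{A}$ to obtain a set $S'\subseteq\mathcal{F}$ with $|S'|=k+C$ and $\text{cost}(\mathcal{D},S')\le\alpha\cdot\text{OPT}_k$, and then post-process $S'$ down to exactly $k$ centers while losing at most a $(1+\eps)$ multiplicative factor. I would split into two regimes according to the size of $k$ relative to $C/\eps$.

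\emph{Small-$k$ regime.} Fix a threshold $k_0=\Theta(C/\eps)$ (the constant to be set by the analysis below). If $k\le k_0$, I simply enumerate every size-$k$ subset of $\mathcal{F}$ and return the minimum-cost one. Since $|\mathcal{F}|=n^{O(1)}$, this runs in time $n^{O(k_0)}=n^{\text{poly}(C/\eps)}$ and trivially returns $\text{OPT}_k$.

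\emph{Large-$k$ regime.} I iteratively strip centers out of $S'$, one at a time, for $C$ rounds. At round $\ell$, given the current solution $\tilde S$ with $|\tilde S|=k+C-\ell$, I compute for every $i\in\tilde S$ the quantity $\Delta(i):=\text{cost}(\mathcal{D},\tilde S\setminus\{i\})-\text{cost}(\mathcal{D},\tilde S)$ and remove the $i$ attaining $\min_i\Delta(i)$. The key averaging identity is
\[
\sum_{i\in\tilde S}\Delta(i)\;=\;\sum_{j\in\mathcal{D}}\bigl[c(j,\tilde S\setminus\{i_j\})-c(j,\tilde S)\bigr],
\]
where $i_j$ denotes the closest center in $\tilde S$ to $j$. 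I plan to show that this right-hand side is $O(\text{OPT}_k)$ (with a constant depending on $\alpha$), so by averaging some $i^\star\in\tilde S$ has $\Delta(i^\star)=O(\text{OPT}_k/k)$. Telescoping over the $C$ rounds yields a total cost blow-up of $O(C\cdot\text{OPT}_k/k)$, which is at most $\eps\alpha\cdot\text{OPT}_k$ once $k\ge k_0=\Omega(C/(\eps\alpha))$. The final solution has cost at most $\alpha(1+\eps)\,\text{OPT}_k$ and is produced in time $T+O(Cnk)$, which fits inside the allowed $O(T+n^{\text{poly}(C/\eps)})$ bound.

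The main obstacle is establishing the ``second-closest'' bound $\sum_j c(j,\tilde S\setminus\{i_j\})\le O(\text{OPT}_k+\text{cost}(\mathcal{D},\tilde S))=O(\text{OPT}_k)$. I would prove it by mapping each OPT center $o$ to its nearest $\phi(o)\in\tilde S$ and, for each client $j$ with OPT center $o_j$, exhibiting an \emph{alternative} center in $\tilde S\setminus\{i_j\}$: use $\phi(o_j)$ whenever $\phi(o_j)\ne i_j$, and otherwise pick $\phi(o')$ for some other OPT center $o'$ with $\phi(o')\ne i_j$ (which exists as long as $|\mathrm{image}(\phi)|\ge 2$; the remaining degenerate case collapses into the small-$k$ regime). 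The triangle inequality for $k$-median, or the elementary inequality $(a+b)^2\le 2a^2+2b^2$ for $k$-means, then bounds the cost of serving $j$ at the alternative center by a constant multiple of $c(j,o_j)+c(j,\tilde S)$; summing over $j\in\mathcal{D}$ gives the claim, and handling the case $\phi(o_j)=i_j$ uniformly across all clients requires the carefullest bookkeeping.
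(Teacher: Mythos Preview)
Your greedy-removal argument contains a genuine gap: the key inequality $\sum_{j\in\mathcal{D}} c(j,\tilde S\setminus\{i_j\})\le O\bigl(\text{OPT}_k+\text{cost}(\mathcal{D},\tilde S)\bigr)$ is simply false. Take $\mathcal{D}=\tilde S=\{1,\ldots,n\}$ on the real line (so $\text{cost}(\mathcal{D},\tilde S)=0$) with $k=n-C$. Then $\text{OPT}_k=C$ (drop $C$ non-adjacent centers), yet for every client $j$ one has $i_j=j$ and $c(j,\tilde S\setminus\{j\})=1$, so the left-hand side equals $n$, not $O(C)$. Consequently $\min_i\Delta(i)=1$ while $\text{OPT}_k/k=C/(n-C)\to 0$, so the per-step bound $\Delta(i^\star)=O(\text{OPT}_k/k)$ you rely on cannot hold. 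The failure is exactly in the case $\phi(o_j)=i_j$: routing $j$ to ``$\phi(o')$ for some other $o'$'' cannot be charged to $c(j,o_j)+c(j,\tilde S)$, since in the same example client $j=n$ has $c(j,o_j)=c(j,\tilde S)=0$ but must travel distance $1$ in $\tilde S\setminus\{n\}$; no amount of bookkeeping repairs this. The strongest averaging one can actually extract from the map $\phi$ is over the (at least $|\tilde S|-k$) centers \emph{outside} $\mathrm{image}(\phi)$, which only yields $\min_i\Delta(i)\le O\bigl(\text{OPT}_k/(|\tilde S|-k)\bigr)$; telescoping gives an additive $O(H_C)\cdot\text{OPT}_k$ blow-up that is a fixed constant independent of $\eps$ and hence cannot be made $\le \eps\alpha\cdot\text{OPT}_k$ by taking $k$ large.

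The paper's proof takes an entirely different route that sidesteps this obstacle. It observes that either all ratios $\text{OPT}_{k-i}/\text{OPT}_{k-i+1}$ for $i=1,\ldots,C$ are at most $(1+\eps/C)^{\eps/10}$, in which case $\text{OPT}_{k-C}\le(1+\eps)\text{OPT}_k$ and one simply runs $\mathcal{A}$ on the $(k-C)$-instance (obtaining $k$ centers with cost at most $\alpha\,\text{OPT}_{k-C}\le\alpha(1+\eps)\text{OPT}_k$); or some ratio is large, making the corresponding $(k-i^*)$-instance ORSS-separable, for which a PTAS of Awasthi--Blum--Sheffet / Cohen-Addad--Schwiegelshohn returns a $(1+\eps/3)$-approximation in time $n^{\text{poly}(C/\eps)}$. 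Returning the best of these $C+1$ candidate solutions proves the lemma.
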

\begin{proof}
We give the proof of the $k$-median problem, the proof for
the $k$-means problem is identical, up to adjustment of the
constants.

To proceed, we need the following notion (due to~\cite{ORSS12}):
A $k$-median instance is said to be
\emph{$(1+\alpha)$-ORSS-separable} if the ratio of the 
cost of the optimum solution with $k-1$ centers to the cost of the optimum solution with
$k$ centers is at least $1+\alpha$.

We can now present our algorithm; For any $k$, the algorithm is as follows. 
    Compute an $\alpha$-approximate solution $S_C$ (with $k+C-C=k$ centers)
    to the $(k-C)$-median instance using $\mathcal{A}$.
    Then, for any $i=1,\ldots,C$, compute a solution $S_{i-1}$ with $k-i$ centers
    using the algorithms for $(1+\nf{\eps}{C})^{\eps/10}$-ORSS-separable
    instances of~\cite{awasthi2010stability,Cohen-AddadS17} 
    to obtain a $(1+\eps/3)$-approximate solution in time $n^{\text{poly}(C/\eps)}$.
    Output the solution $S^*$ of $S_0,\ldots, S_C$ with minimum $k$-median cost. 

We now turn to the analysis of the above algorithm. The running time follows immediately 
from its definition and the results of~\cite{awasthi2010stability,Cohen-AddadS17}. 
Let's then consider the approximation guarantee of the solution produced.
    For $0 \le i \le C$, let $\opt_i$ be the solution 
    to $(k-i)$-median, i.e.: the $k$-median problem with 
    $k-i$ centers.  Our goal is thus to show that the solution $S^*$ output by the above algorithm is an $\alpha(1+\eps)$-approximate
    solution to the $\opt_0$.

    If the cost of $\opt_C$ is within a $(1+\eps)$-factor
    of the cost of $\opt_0$, then the cost of the solution output by the algorithm is no 
    larger than the cost of solution $S_C$ whose total cost is thus
    at most $\alpha \text{cost}(\opt_C) \le \alpha(1+\eps)\text{cost}(\opt_0)$, as desired.

    Otherwise, since $\eps < 1/3$ we have that there exists a $i>0$ such that 
    $\text{cost}(\opt_i) \ge (1+\nf{\eps}{C})^{\eps/10} \text{cost}(\opt_{i-1})$.
    Let $i^*$ be the smallest $i$ such that the above holds. In which case, we have 
    both that $\text{cost}(\opt_{i^*-1}) \le (1+\eps/3)\text{cost}(\opt_{0})$
    and that the $(k-(i^* - 1))$-median instance is 
    $(1+\nf{\eps}{C})^{\eps/10}$-ORSS-separable. Therefore, by the results
    of~\cite{awasthi2010stability,Cohen-AddadS17}, the cost of the solution output by our 
    algorithm is no larger than $(1+\eps/3) \text{cost}(\opt_{i^* - 1})$ and so at most 
    $(1+\eps/3)^2  \text{cost}(\opt_{0}) \le (1+\eps)\text{cost}(\opt_0)$ by our choice of $\eps$, hence the lemma.
\end{proof}

Next, we describe the definition of submodular and negative-submodular set functions.

\begin{defn}
    Let $\Omega$ be a finite set, and let $f$ be a function from the set of subsets of $\Omega$, $\mathcal{P}(\Omega)$, to the real numbers $\BR$. Then, $f$ is \emph{submodular} over $\Omega$ if for any $X \subset Y \subsetneq \Omega$ and any $x \in \Omega \backslash Y$, we have that $f(X \cup \{x\}) - f(X) \ge f(Y \cup \{x\}) - f(Y)$. Likewise, $f$ is \emph{negative-submodular} over $\Omega$ if $-f$ is a submodular function: equivalently, if for any $X \subset Y \subsetneq \Omega$ and any $x \in \Omega \backslash Y$, we have that $f(X \cup \{x\}) - f(X) \le f(Y \cup \{x\}) - f(Y)$.
\end{defn}

The following claim, proven in \cite{cohenaddad2019fpt}, proves that the $k$-means and $k$-median objective functions are both negative-submodular functions.

\begin{proposition} \cite[Claim 10]{cohenaddad2019fpt}  \label{prop:submodular_1}
    Fix $\mathcal{F}$ and $\mathcal{D}$, and let $f: \mathcal{P}(\mathcal{F}) \to \BR$ be the function sending each subset $S \subset \mathcal{F}$ to $\text{cost}(\mathcal{D}, S)$. Then, $f$ is a negative-submodular function over $\mathcal{F}$, either if the cost is the $k$-median cost or if the cost is the $k$-means cost.
\end{proposition}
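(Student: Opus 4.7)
The plan is to reduce the statement to a client-by-client inequality, since the total cost decomposes as $f(S) = \sum_{j \in \mathcal{D}} c(j, S)$ and negative-submodularity is preserved under nonnegative summation. So it suffices to show that for every single client $j \in \mathcal{D}$, the function $S \mapsto c(j, S) = \min_{s \in S} c(j, s)$ is negative-submodular on $\mathcal{F}$. This reduction is the main simplification because $c(j, \cdot)$ depends on $S$ only through its minimum entry, so we can analyze it by elementary real-number casework independent of whether $c$ denotes Euclidean distance or squared Euclidean distance.

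Next, I would fix $X \subset Y \subsetneq \mathcal{F}$ and $x \in \mathcal{F} \setminus Y$, and set $a := c(j, X)$, $b := c(j, Y)$, $t := c(j, x)$. Since $X \subset Y$, we have $b \le a$. The marginals reduce to
\[
c(j, X \cup \{x\}) - c(j, X) = \min(a, t) - a, \qquad c(j, Y \cup \{x\}) - c(j, Y) = \min(b, t) - b,
\]
and the goal becomes $\min(a, t) - a \le \min(b, t) - b$. I would then split into three cases according to the position of $t$ relative to $b \le a$: (i) $t \ge a \ge b$, where both marginals equal $0$; (ii) $a \ge t \ge b$, where the left side equals $t - a \le 0$ and the right side equals $0$; and (iii) $a \ge b \ge t$, where the left side equals $t - a$ and the right side equals $t - b$, and $t - a \le t - b$ since $a \ge b$. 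In every case, the desired inequality holds.

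Finally, summing the pointwise inequality over all $j \in \mathcal{D}$ yields
\[
f(X \cup \{x\}) - f(X) \le f(Y \cup \{x\}) - f(Y),
\]
which is exactly the negative-submodularity of $f$. I would note that the argument used nothing specific about $c$ besides being nonnegative and well-defined on pairs $(j, s)$, so the same proof handles both $c(j, s) = d(j, s)$ ($k$-median) and $c(j, s) = d(j, s)^2$ ($k$-means) uniformly.

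I do not expect any real obstacle: the only subtlety is orienting the inequality correctly (adding a facility to a smaller set should cause at least as large a decrease as adding it to a superset, since the superset already has the ``benefit'' of extra candidates), and the three-case check on $\{a, b, t\}$ makes this transparent.
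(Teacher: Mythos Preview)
Your proof is correct. The paper itself does not supply a proof of this proposition; it simply cites \cite[Claim~10]{cohenaddad2019fpt}. Your argument is the standard one: decompose $f$ as a sum over clients, observe that negative-submodularity is preserved under summation, and then verify the single-client inequality $\min(a,t)-a \le \min(b,t)-b$ by the three-way case split on $t$ relative to $b \le a$. This is exactly the elementary proof one would expect, and it works uniformly for any nonnegative cost $c(j,s)$, hence covers both $k$-median and $k$-means at once.

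One small technicality you may wish to note: the statement takes $f:\mathcal{P}(\mathcal{F})\to\BR$, which formally includes $S=\emptyset$, where $c(j,\emptyset)$ is undefined (or $+\infty$). Your case analysis implicitly assumes $a,b$ are finite. This is harmless in context---every application in the paper (Propositions~\ref{prop:submodular_2} and~\ref{prop:submodular_3}) works with sets containing a fixed nonempty $S_0$---but if you want the literal statement to hold you should either restrict to nonempty $X$ or adopt the convention $c(j,\emptyset)=+\infty$ and check that the inequality still holds (it does: if $a=+\infty$ then the left side is $t-\infty=-\infty$).
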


Given Proposition \ref{prop:submodular_1}, we can use standard properties of submodular functions to infer the following claims.

\begin{proposition}

\label{prop:submodular_2}
    Let $S_0 \subset S_1$ be sets of facilities, where $S_0$ has size $k_0$ and $S_1$ has size $k_0+k_1$. Then, for any $0 \le p \le 1$, if $S: S_0 \subset S \subset S_1$ is a set created including all of $S_0$ and then independently including each element in $S_1 \backslash S_0$ with probability $p$, then $\BE[\text{cost}(\mathcal{D}, S)] \le p \cdot \text{cost}(\mathcal{D}, S_1) + (1-p) \cdot \text{cost}(\mathcal{D}, S_0)$. 
\end{proposition}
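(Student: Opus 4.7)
The plan is to prove this by induction on $k_1 = |S_1 \setminus S_0|$, reducing the probabilistic statement to a single invocation of the negative-submodularity supplied by Proposition \ref{prop:submodular_1}. The base case $k_1 = 0$ is immediate, since then $S = S_0 = S_1$ deterministically and both sides of the desired inequality equal $\text{cost}(\mathcal{D}, S_0)$.

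For the inductive step, I would pick any $x \in S_1 \setminus S_0$ and condition on whether $x \in S$. With probability $p$ we have $x \in S$, and in that case $S$ is distributed as the random set obtained by starting with $\hat{S}_0 := S_0 \cup \{x\}$ and including each element of $S_1 \setminus \hat{S}_0$ independently with probability $p$; since $|S_1 \setminus \hat{S}_0| = k_1 - 1$, the inductive hypothesis yields $\BE[\text{cost}(\mathcal{D}, S) \mid x \in S] \le p \cdot \text{cost}(\mathcal{D}, S_1) + (1-p) \cdot \text{cost}(\mathcal{D}, \hat{S}_0)$. Symmetrically, conditional on $x \notin S$ (probability $1-p$), setting $\tilde{S} := S_1 \setminus \{x\}$, the set $S$ is distributed as the random subset of $\tilde{S}$ that contains $S_0$ and each element of $\tilde{S} \setminus S_0$ with probability $p$, so induction gives $\BE[\text{cost}(\mathcal{D}, S) \mid x \notin S] \le p \cdot \text{cost}(\mathcal{D}, \tilde{S}) + (1-p) \cdot \text{cost}(\mathcal{D}, S_0)$.

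Combining the two conditional estimates with weights $p$ and $1-p$ and collecting like terms, the claim reduces to the deterministic inequality
\[\text{cost}(\mathcal{D}, S_0 \cup \{x\}) + \text{cost}(\mathcal{D}, S_1 \setminus \{x\}) \le \text{cost}(\mathcal{D}, S_0) + \text{cost}(\mathcal{D}, S_1),\]
which is exactly the diminishing-returns characterization of negative-submodularity applied with $X = S_0 \subset Y = S_1 \setminus \{x\}$ and the element $x \notin Y$, and is therefore furnished by Proposition \ref{prop:submodular_1}. There is no real obstacle in the argument; the only point requiring care is verifying that the conditional distribution of $S$ given $x \in S$ (respectively $x \notin S$) genuinely matches the hypothesis of the claim with a ground set of size one smaller, so that the inductive appeal is legitimate.
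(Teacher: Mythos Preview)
Your proof is correct. The paper itself does not supply a proof of this proposition; it simply states that the claim follows from ``standard properties of submodular functions'' once Proposition~\ref{prop:submodular_1} is in hand, so your inductive argument via the diminishing-returns inequality is exactly the kind of standard verification the paper is implicitly deferring to.
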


\begin{proposition} \label{prop:submodular_3}
    Let $S_0 \subset S_1$ be sets of facilities, where $S_0$ has size $k_0$ and $S_1$ has size $k_0+k_1$. Then, if $S: S_0 \subset S \subset S_1$ is a set created by randomly adding exactly $k_2$ items from $S_1 \backslash S_0$ to $S_0$, for some fixed $0 \le k_2 \le k_1$, then we have $\BE[\text{cost}(\mathcal{D}, S)] \le \frac{k_2}{k_1} \cdot \text{cost}(\mathcal{D}, S_1) + (1-\frac{k_2}{k_1}) \cdot \text{cost}(\mathcal{D}, S_0)$.
\end{proposition}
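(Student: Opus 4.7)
The plan is to prove Proposition~\ref{prop:submodular_3} by induction on $k_1 := |S_1| - |S_0|$, so that the inductive step collapses to a single application of the negative-submodularity provided by Proposition~\ref{prop:submodular_1}. Let $T := S_1 \setminus S_0$ and define $g : 2^T \to \BR$ by $g(X) := \text{cost}(\mathcal{D}, S_0 \cup X)$; this $g$ is negative-submodular on $2^T$, since for $X \subset Y \subset T$ and $x \in T \setminus Y$ the corresponding inequality follows from Proposition~\ref{prop:submodular_1} applied to the chain $S_0 \cup X \subset S_0 \cup Y$ of subsets of $\mathcal{F}$. The goal becomes to show that for $A$ a uniformly random $k_2$-subset of $T$,
\[
\BE[g(A)] \;\le\; \frac{k_2}{k_1}\, g(T) + \Bigl(1 - \frac{k_2}{k_1}\Bigr) g(\emptyset).
\]

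The cases $k_2 \in \{0, k_1\}$ are immediate. For $k_1 \ge 2$ and $0 < k_2 < k_1$, I would fix an arbitrary $t \in T$ and condition on whether $t \in A$. The event $\{t \in A\}$ has probability $k_2/k_1$, and conditionally $A \setminus \{t\}$ is uniform over $(k_2 - 1)$-subsets of $T \setminus \{t\}$; applying the inductive hypothesis to the negative-submodular function $X \mapsto g(X \cup \{t\})$ on $2^{T \setminus \{t\}}$ gives
\[
\BE[g(A) \mid t \in A] \;\le\; \tfrac{k_2-1}{k_1-1}\, g(T) + \tfrac{k_1-k_2}{k_1-1}\, g(\{t\}).
\]
Symmetrically, conditional on $t \notin A$ (which has probability $(k_1-k_2)/k_1$), $A$ is itself uniform over $k_2$-subsets of $T \setminus \{t\}$, and the inductive hypothesis applied to $g$ on $2^{T \setminus \{t\}}$ yields
\[
\BE[g(A) \mid t \notin A] \;\le\; \tfrac{k_2}{k_1-1}\, g(T \setminus \{t\}) + \tfrac{k_1-1-k_2}{k_1-1}\, g(\emptyset).
\]

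Combining these two bounds via the law of total expectation and matching coefficients against the desired right-hand side, the target inequality reduces after routine simplification to
\[
\tfrac{k_2(k_1-k_2)}{k_1(k_1-1)}\,\bigl(g(T) + g(\emptyset) - g(\{t\}) - g(T \setminus \{t\})\bigr) \;\ge\; 0,
\]
which is exactly the defining negative-submodularity inequality for $g$ applied with $X = \emptyset$, $Y = T \setminus \{t\}$, and $x = t$. The prefactor $\tfrac{k_2(k_1-k_2)}{k_1(k_1-1)}$ is non-negative in the relevant range, so the induction closes. The main obstacle is purely mechanical: keeping track of the coefficients during the reduction to the single supermodularity inequality. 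No additional geometric or clustering-specific input is required beyond Proposition~\ref{prop:submodular_1}, and an essentially identical argument handles the $k$-median case since Proposition~\ref{prop:submodular_1} covers both.
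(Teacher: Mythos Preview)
Your argument is correct. The paper itself does not supply a proof of this proposition: it merely says that Propositions~\ref{prop:submodular_2} and~\ref{prop:submodular_3} ``follow from Proposition~\ref{prop:submodular_1} via standard properties of submodular functions,'' so there is no paper proof to compare against. Your induction on $k_1$, reducing the inductive step to the single supermodularity inequality $g(T)+g(\emptyset)\ge g(\{t\})+g(T\setminus\{t\})$, is a clean and self-contained way to furnish those details.

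One small point worth making explicit for the reader: the inductive step applies the hypothesis not to a cost function but to the auxiliary negative-submodular function $X\mapsto g(X\cup\{t\})$ on $2^{T\setminus\{t\}}$. Thus you are really inducting on the stronger statement ``for every negative-submodular $g$ on a ground set of size $k_1$ and every $0\le k_2\le k_1$, the expectation of $g$ over a uniform $k_2$-subset is at most $\tfrac{k_2}{k_1}g(T)+(1-\tfrac{k_2}{k_1})g(\emptyset)$,'' rather than on the proposition as literally stated. You effectively already do this when you recast the goal in terms of $g$, but it would help to say so outright.
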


\subsection{Analysis} \label{subsec:k_means_analysis}

The following theorem relating to the above algorithm was (essentially) proven by Ahmadian et al.~\cite{ahmadian2017better}, and will be very important in our analysis.

\begin{theorem} \label{thm:ahmadian_roundable}
    Algorithm \ref{alg:main} runs in $n^{O(1)}$ time (where the $O(1)$ may depend on $\eps$ and $\gamma$), and the following conditions hold.
\begin{enumerate}
    \item Let $k'$ be the minimum of $k$ and the sizes of all sets that become $I_1^{(0)}$ (i.e., the first part of each nested quasi-independent set $I^{(q)}$ that becomes $I^{(0)}$, as done in line 13 of the pseudocode).
    Then, every solution $\mathcal{S}^{(\ell)}$ that is generated when $\lambda$ is a certain value is $(\lambda, k')$-roundable. \label{cond:roundable}
    \item For any solution $\mathcal{S}$ that becomes $\mathcal{S}^{(0)}$, $\mathcal{F}_S^{(0)} = \emptyset$ (and so $\mathcal{D}_S^{(0)}$ is an empty function). In addition, $\mathcal{S} = \mathcal{S}^{(0)}$ has no corresponding bad clients, i.e., $\mathcal{D}_B = \emptyset$. \label{cond:no_special_or_bad}
    \item For any two consecutive solutions $\mathcal{S}^{(\ell)}$ and $\mathcal{S}^{(\ell+1)}$, we have that $\alpha^{(\ell)}$ and $\alpha^{(\ell+1)}$ are close. \label{cond:close}
    \item Every $I^{(\ell, r)}=(I_1^{(\ell, r)}, I_2^{(\ell, r)}, I_3^{(\ell, r)})$ is a nested quasi-independent set for the set of facilities $\mathcal{V}^{(\ell, r)}$. In addition, for every $1 \le r \le p_\ell$, $|I_1^{(\ell, r-1)} \backslash I_1^{(\ell, r)}| \le 1.$
\end{enumerate}
\end{theorem}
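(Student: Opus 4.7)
The plan is to verify each of the four claims in turn, deferring most of the heavy lifting to \cite{ahmadian2017better} for the behavior of the \textsc{RaisePrice} subroutine and the initialization, and arguing directly about \textsc{GraphUpdate} for the nested quasi-independent set structure, which is the only part genuinely new to our algorithm.

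For the polynomial running time, I would note that the outer $\lambda$-loop runs for $L + 1 = O(n^7 \eps_z^{-1})$ values, the middle loop iterates over $|\mathcal{F}| = \text{poly}(n)$ facilities, and by the analysis of \cite{ahmadian2017better} each call to \textsc{RaisePrice} returns a polynomial-length sequence $\mathcal{S}^{(1)}, \dots, \mathcal{S}^{(q)}$. Each iteration of \textsc{GraphUpdate} invokes a constant number of greedy maximal independent set computations together with the steps of Algorithm \ref{alg:lmp} on a graph of polynomial size, hence is polynomial-time. For claims \ref{cond:roundable}, \ref{cond:no_special_or_bad}, and \ref{cond:close}, the proofs are essentially identical to those in \cite{ahmadian2017better}: \textsc{RaisePrice} and the initialization are inherited with only cosmetic changes (notably, replacing the constants $2$ and $6$ by $1 + \kappa$ and $1/\kappa$ in the second stopping condition, which does not affect any structural arguments). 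The initialization grows $\alpha_j^{(0)}$ uniformly and by direct inspection creates no special facilities and no bad clients, giving claim \ref{cond:no_special_or_bad} for the first $\mathcal{S}^{(0)}$; the corresponding guarantee for the later starting solutions $\mathcal{S}^{(q)}$ that become the next $\mathcal{S}^{(0)}$, together with closeness (claim \ref{cond:close}) and $(\lambda, k')$-roundability at each step (claim \ref{cond:roundable}), are baked into how \textsc{RaisePrice} is designed. The only bookkeeping difference is that the role of the single maximal independent set of \cite{ahmadian2017better} is now played by $I_1$, which is itself a maximal independent set of $H(\delta_1)$, so none of their arguments are affected.

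For claim (4), which is the main new content, I would prove by induction on $r$ that $I^{(\ell, r)} = (I_1^{(\ell, r)}, I_2^{(\ell, r)}, I_3^{(\ell, r)})$ is a nested quasi-independent set of $\mathcal{V}^{(\ell, r)}$ with $|I_1^{(\ell, r-1)} \setminus I_1^{(\ell, r)}| \le 1$. The base case is by inheritance from the previous iteration (or from the initialization, where $I_1^{(0)} = \mathcal{F}$ and $I_2^{(0)} = I_3^{(0)} = \emptyset$ form a trivial nested quasi-independent set since the initial conflict graph is empty). For the inductive step, the key observation is that $\mathcal{V}^{(\ell, r+1)}$ is obtained from $\mathcal{V}^{(\ell, r)}$ by adding or removing at most one batch of vertices: for $r = 0$ the transition only adds the vertices of $\mathcal{V}^{(\ell+1)}$, so $I_1^{(\ell, 0)}$ remains independent in $H^{(\ell, 1)}(\delta_1)$ and nothing is removed from it; while for $r \ge 1$ exactly one vertex of $\mathcal{V}^{(\ell)}$ is deleted according to the fixed ordering chosen at the start of \textsc{GraphUpdate}. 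In the second case, $I_1^{(\ell, r)}$ remains independent after deleting at most one element (the removed vertex, if it was in $I_1^{(\ell, r)}$); I would then extend greedily to a maximal independent set $I_1^{(\ell, r+1)}$ of $H^{(\ell, r+1)}(\delta_1)$, yielding $|I_1^{(\ell, r)} \setminus I_1^{(\ell, r+1)}| \le 1$. Finally, $I_2^{(\ell, r+1)}$ and $I_3^{(\ell, r+1)}$ are freshly constructed via Steps \ref{step:V2}--\ref{step:I3} of Algorithm \ref{alg:lmp} on $H^{(\ell, r+1)}$, which by definition produces a valid nested quasi-independent set; the theorem does not constrain how $I_2, I_3$ evolve, only $I_1$.

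The main subtlety to watch will be confirming that the tight and special statuses inherited from each side of the disjoint union $\mathcal{V}^{(\ell)} \sqcup \mathcal{V}^{(\ell+1)}$ make $H^{(\ell, r)}(\delta)$ a well-defined graph with sensible $\tau_i$ values, and that the closeness of $\alpha^{(\ell)}$ and $\alpha^{(\ell+1)}$ together with $z_i \in [\lambda, \lambda + 1/n]$ is enough to preserve the structural properties relied on by Algorithm \ref{alg:lmp} across hybrid graphs. However, since the construction of nested quasi-independent sets depends only on the $\tau$ values and pairwise distances $c(i, i')$, both of which are determined by the side-of-origin convention, no additional hypotheses beyond $(\lambda, k')$-roundability are required; the argument goes through essentially mechanically.
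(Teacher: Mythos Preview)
Your proposal is correct and follows essentially the same approach as the paper's proof (given in Appendix~\ref{app:why_am_i_doing_this}): defer claims \ref{cond:roundable}--\ref{cond:close} to the analysis of \cite{ahmadian2017better} with the $(1+\kappa,1/\kappa)$ parameter change, and verify claim~(4) directly from the mechanics of \textsc{GraphUpdate}. The paper is more explicit about one technical device you gloss over: it first establishes a \emph{modified} roundability condition of the form $(1+A+10\eps/\kappa)^2\alpha_j \ge (d(j,w(j))+A\sqrt{\tau_{w(j)}})^2$ for all $A\in[2\kappa,1/(2\kappa)]$, tracks this through each of Lemmas~8.7--8.17 and Propositions~8.10--8.18 of \cite{ahmadian2017better}, and only at the end specializes $A=2\kappa$, $A=1/(2\kappa)$, and $A=1$ to recover Conditions~3(a)--(c) of Definition~\ref{def:roundable} (after rescaling $\eps'\!=\!O(\sqrt{\eps})$ and $\gamma'\!=\!\kappa^{-2}\gamma$). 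This is the mechanism that converts the single witness inequality of \cite{ahmadian2017better} into the two separate bounds on $c(j,w(j))$ and $\tau_{w(j)}$ required by Definition~\ref{def:roundable}, and is a bit more than ``cosmetic''; your sketch would benefit from at least flagging that this bridging step exists.
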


This theorem is technically stronger than what was proven in \cite{ahmadian2017better}, but follows from a nearly identical analysis to their paper, with a very minor tweak to the algorithm. We explain why Theorem \ref{thm:ahmadian_roundable} follows from their analysis in Appendix \ref{app:why_am_i_doing_this}.

\medskip

For the following lemmas (Lemmas \ref{lem:tau_bigger_than_alpha} until Proposition \ref{prop:duality_bound}), we consider a fixed family of conflict graphs $\{H(\delta)\}_{\delta > 0} = \{H^{(\ell, r)}(\delta)\}_{\delta > 0}$ on a hybrid set $\mathcal{V}=\mathcal{V}^{(\ell, r)}$, for some $r \ge 1$, where both $\mathcal{S}^{(\ell)}$ and $\mathcal{S}^{(\ell+1)}$ are $(\lambda, k')$-roundable. For some fixed $\delta_1 \ge \delta_2 \ge 2 \ge \delta_3,$ we let $(I_1, I_2, I_3)$ be a nested quasi-independent set of $\{H(\delta)\}$, i.e., the output of running all but the final step of Algorithm \ref{alg:lmp} with $V_1 = \mathcal{V}$, and treat it as fixed. 
However, we will let $p$ (required in the final step \ref{step:S}) be variable, though we may consider $p$ as initialized to some fixed $p_1$. 

Many of these results apply for both $k$-means and $k$-median. While we focus on $k$-means, we later explain how to make simple modifications to apply our results to the $k$-median problem. In addition, we will treat $\delta_1, \delta_2, \delta_3$ as fixed but $p$ as potentially variable. We let $\rho(p)$ represent the approximation constant from the LMP algorithm (i.e., in Lemma \ref{lem:main_lmp}) with probability $p$ (either for $k$-means or $k$-median, depending on constant).

We first show some crucial preliminary claims relating to the hybrid graph $\mathcal{V}^{(\ell, r)}$, where $r \ge 1.$

\begin{lemma} \label{lem:tau_bigger_than_alpha}
    For any client $j \in \mathcal{D}$ and any facility $i \in \bar{N}(j)$, $\tau_i \ge \alpha_j > c(j, i)$.
\end{lemma}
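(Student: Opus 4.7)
The plan is to unpack the definitions carefully and check both inequalities in the claim separately. The second inequality $\alpha_j > c(j,i)$ is essentially immediate: by definition, $i \in \bar{N}(j)$ means that either $i$ is tight with $j \in N(i)$, or $i$ is special with $j \in N(i) \cap \mathcal{D}_S(i)$. In either case $j \in N(i)$, and $N(i) = \{j' \in \mathcal{D} : \alpha_{j'} - c(j', i) > 0\}$ in the hybrid setting, so $\alpha_j > c(j,i)$ follows directly.

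For the first inequality $\tau_i \geq \alpha_j$, I would do a case split on which side of the disjoint union $\mathcal{V}^{(\ell)} \sqcup \mathcal{V}^{(\ell+1)}$ contains $i$, and whether $i$ is tight or special. Say $i \in \mathcal{V}^{(\ell)}$ (the other case being symmetric). The key point is that, by the excerpt's definitions, $\tau_i = \tau_i^{(\ell)}$ is computed with respect to the dual solution $\alpha^{(\ell)}$, i.e., $\tau_i^{(\ell)} = \max\{\alpha_{j'}^{(\ell)} : \alpha_{j'}^{(\ell)} > c(j', i)\}$ in the tight case, or the same max restricted to $j' \in \mathcal{D}_S^{(\ell)}(i)$ in the special case. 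Meanwhile, $\alpha_j = \min(\alpha_j^{(\ell)}, \alpha_j^{(\ell+1)}) \leq \alpha_j^{(\ell)}$.

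From $\alpha_j > c(j,i)$ and $\alpha_j \leq \alpha_j^{(\ell)}$, I immediately obtain $\alpha_j^{(\ell)} > c(j,i)$, so $j$ is a valid index in the max defining $\tau_i^{(\ell)}$ in the tight case. In the special case, $i \in \bar{N}(j)$ additionally guarantees $j \in \mathcal{D}_S(i) = \mathcal{D}_S^{(\ell)}(i)$, so $j$ again appears in the constrained max. In both subcases, $\tau_i = \tau_i^{(\ell)} \geq \alpha_j^{(\ell)} \geq \alpha_j$, which is exactly what we need. The argument for $i \in \mathcal{V}^{(\ell+1)}$ is identical after swapping $\ell$ and $\ell+1$.

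There is no real obstacle here; the proof is a straightforward bookkeeping exercise. The only step that requires any care is recognizing that $\alpha_j$ in the hybrid graph is the pointwise minimum of $\alpha_j^{(\ell)}$ and $\alpha_j^{(\ell+1)}$, so membership $j \in N(i)$ in the hybrid setting is a \emph{stronger} condition than membership in $N^{(\ell)}(i)$ or $N^{(\ell+1)}(i)$ individually, which lets the inequality push through in either direction of the disjoint union. This lemma is essentially the hybrid analogue of part 3 of Proposition~\ref{prop:t_bounds}, and the proof is a direct translation of that fact through the definitions.
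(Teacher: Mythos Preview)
Your proposal is correct and follows essentially the same approach as the paper's proof: both do a case split on which side of the disjoint union $\mathcal{V}^{(\ell)} \sqcup \mathcal{V}^{(\ell+1)}$ contains $i$, use $\alpha_j = \min(\alpha_j^{(\ell)}, \alpha_j^{(\ell+1)}) \le \alpha_j^{(\ell)}$ (resp.\ $\alpha_j^{(\ell+1)}$) to show that $j$ appears in the set over which the maximum defining $\tau_i$ is taken, and then read off $\tau_i \ge \alpha_j^{(\ell)} \ge \alpha_j$. The second inequality $\alpha_j > c(j,i)$ is handled identically, directly from the definition of $\bar N(j)$.
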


\begin{proof}
    Note that if $i \in \mathcal{V}^{(\ell)}$, then $\tau_i$ is the maximum $\alpha_{j'}^{(\ell)}$ over $j'$ such that $\alpha_{j'}^{(\ell)} > c(j', i)$ and $j' \in \mathcal{D}_S^{(\ell)}(i)$ if $i$ is special.
    Since $\alpha_{j'}^{(\ell)} > \alpha_{j'}$, this is at least the maximum $\alpha_{j'}$ over $j'$ such that $\alpha_{j'} > c(j', i)$ and $j' \in \mathcal{D}_S^{(\ell)}(i)$ if $i$ is special. But if $i \in \bar{N}(j)$, then indeed $\alpha_j > c(j, i)$ and $j \in \mathcal{D}_S^{(\ell)}(i)$ if $i$ is special (recall that $\mathcal{D}_S$ was defined based on whether $i$ is in $\mathcal{V}^{(\ell)}$ or $\mathcal{V}^{(\ell+1)}$). So, $\tau_i \ge \alpha_j$. By an identical argument, the same holds if $i \in \mathcal{V}^{(\ell+1)}$.
    
    Finally, note that we defined $\bar{N}(j)$ to precisely be the set of tight facilities $i$ in $\mathcal{V}$ with $\alpha_j > c(j, i)$, or special facilities $i$ in $\mathcal{V}$ with $\alpha_j > c(j, i)$ and $j \in \mathcal{D}_S(i)$. So, we always have $\alpha_j > c(j, i)$.
\end{proof}

\begin{lemma} \label{lem:RHS_positive_1/2} 
    Suppose that
    $S \subset I_1 \cup I_2 \cup I_3$ contains every point in $I_1$, and each point in $I_2$ and each point in $I_3$ with probability $p \le 0.5$ (not necessarily independently). Then, for any point $j$,
\[\BE\left[\alpha_j - \sum_{i \in \bar{N}(j) \cap S} (\alpha_j-c(j, i))\right] \ge 0.\]
\end{lemma}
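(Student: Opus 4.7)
The plan is to use linearity of expectation and a convex decomposition into three independent sets, each of which can be controlled by the standard Euclidean averaging identity already invoked in the proof of Lemma~\ref{lem:main_lmp}.

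First, by linearity of expectation, since $I_1 \subset S$ deterministically and each point of $I_2 \cup I_3$ is in $S$ with marginal probability $p$ (not necessarily independently, but this is irrelevant for linearity), we have
\[
\BE\left[\sum_{i \in \bar{N}(j) \cap S}(\alpha_j - c(j,i))\right] = \sum_{i \in \bar{N}(j) \cap I_1}(\alpha_j - c(j,i)) + p\sum_{i \in \bar{N}(j) \cap (I_2 \cup I_3)}(\alpha_j - c(j,i)).
\]
It therefore suffices to show this quantity is at most $\alpha_j$.

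The key trick exploits $p \le 1/2$: since $I_1, I_2, I_3$ are disjoint, we can rewrite the coefficient of each indicator as $\mathbf{1}_{I_1} + p\,\mathbf{1}_{I_2} + p\,\mathbf{1}_{I_3} = (1-2p)\,\mathbf{1}_{I_1} + p\,\mathbf{1}_{I_1\cup I_2} + p\,\mathbf{1}_{I_1 \cup I_3}$, a convex combination with nonnegative weights summing to $1$. The right-hand side of the previous display then becomes a convex combination of $\sum_{i \in \bar{N}(j) \cap J}(\alpha_j - c(j,i))$ for $J \in \{I_1,\, I_1 \cup I_2,\, I_1 \cup I_3\}$. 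I will verify that each of these three sets is an independent set of $H(\delta_2)$ (and hence of $H(2)$, since $\delta_2 \ge 2$): $I_1$ is independent in $H(\delta_1) \supseteq H(\delta_2)$; $I_1 \cup I_2$ is independent in $H(\delta_2)$ by Steps~\ref{step:V2}--\ref{step:I2} of Algorithm~\ref{alg:lmp}; and $I_1 \cup I_3$ is independent in $H(\delta_2)$ because $I_3 \subseteq V_3 \subseteq V_2$, so no point of $I_3$ is adjacent to $I_1$ in $H(\delta_2)$ (by definition of $V_2$), while $I_3$ itself is independent in $H(\delta_2)[V_3]$.

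Finally, for any independent set $J$ of $H(2)$, I claim $\sum_{i \in \bar{N}(j) \cap J}(\alpha_j - c(j,i)) \le \alpha_j$. Indeed, by Lemma~\ref{lem:tau_bigger_than_alpha} every $i \in \bar{N}(j)$ satisfies $\tau_i \ge \alpha_j$, so distinct $i, i' \in \bar{N}(j) \cap J$ satisfy $c(i,i') > 2\min(\tau_i, \tau_{i'}) \ge 2\alpha_j$ by the non-edge condition in $H(2)$. Applying the Euclidean identity $\frac{1}{2h}\sum_{i,i' \in I}\|i-i'\|_2^2 \le \sum_{i \in I}\|i-j\|_2^2$ (with $h = |\bar{N}(j) \cap J|$) used earlier in the proof of Lemma~\ref{lem:main_lmp} yields $\sum_{i \in \bar{N}(j) \cap J} c(j,i) \ge (h-1)\alpha_j$, hence $\sum_{i \in \bar{N}(j) \cap J}(\alpha_j - c(j,i)) \le \alpha_j$. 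Combining the three cases with weights $(1-2p), p, p$ gives the bound $\alpha_j$ and completes the proof. There is no substantive obstacle here; the only mildly non-obvious point is recognizing that $I_1 \cup I_3$ (rather than the non-independent $I_2 \cup I_3$) is the right ``third independent set'' that makes the decomposition exactly a convex combination when $p \le 1/2$.
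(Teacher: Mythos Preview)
Your proof is correct and follows essentially the same approach as the paper: both reduce, via linearity of expectation, to showing $\sum_{i \in \bar N(j)\cap (I_1\cup I_2)}(\alpha_j-c(j,i))\le \alpha_j$ and the analogous bound for $I_1\cup I_3$, using that these are independent sets in $H(\delta_2)$ together with the Euclidean averaging identity. The only cosmetic difference is that the paper first uses positivity of the summands to reduce to $p=\tfrac12$ and then averages the two inequalities, whereas you keep $p$ general and phrase the same averaging as the convex decomposition $(1-2p)\mathbf{1}_{I_1}+p\,\mathbf{1}_{I_1\cup I_2}+p\,\mathbf{1}_{I_1\cup I_3}$.
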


\begin{remark}
    We note that this lemma holds even for bad clients $j \in \mathcal{D}_B$. In addition, we remark that we will be applying this lemma on $S$ as a nested quasi-independent set or something similar.
    
    Finally, we note that this lemma (and the following lemma, Lemma \ref{lem:cost_j_upper_bound}) are the only results where we directly use the fact that we are studying the $k$-means as opposed to the $k$-median problem.
\end{remark}

\begin{proof}[Proof of Lemma \ref{lem:RHS_positive_1/2}]
    Note that every point $i \in \bar{N}(j) \cap S$ satisfies $\tau_i \ge \alpha_j$ and $\alpha_j-c(j, i) > 0$, by Lemma \ref{lem:tau_bigger_than_alpha}.
    So, by linearity of expectation, it suffices to show that
\begin{equation} \label{eq:I2I3_bound}
    \alpha_j \ge \sum_{i \in \bar{N}(j) \cap I_1} (\alpha_j-c(j, i)) + \frac{1}{2} \cdot \sum_{i \in \bar{N}(j) \cap I_2} (\alpha_j-c(j, i)) + \frac{1}{2} \cdot \sum_{i \in \bar{N}(j) \cap I_3} (\alpha_j-c(j, i)).
\end{equation}

    Now, note that by the definition of $I_2$, every pair of points $(i, i')$ in $(I_1 \cup I_2) \cap \bar{N}(j)$ are separated by at least $\sqrt{\delta_2 \cdot \min(\tau_i, \tau_{i'})}$ distance. But since $i, i' \in \bar{N}(j)$, $\min(\tau_i, \tau_{i'}) \ge \alpha_j$ by Lemma \ref{lem:tau_bigger_than_alpha}. So, $d(i, i') \ge \sqrt{2 \cdot \alpha_j}$. Therefore,
\begin{align}
    \sum_{i \in \bar{N}(j) \cap (I_1 \cup I_2)} (\alpha_j-c(j, i)) \nonumber
    &\le |I_1 \cup I_2| \cdot \alpha_j - \frac{1}{2 \cdot |I_1 \cup I_2|} \cdot \sum_{i, i' \in \bar{N}(j) \cap (I_1 \cup I_2)} d(i, i')^2 \nonumber \\
    &\le |I_1 \cup I_2| \cdot \alpha_j - \frac{1}{2 \cdot |I_1 \cup I_2|} \cdot (|I_1 \cup I_2|) \cdot (|I_1 \cup I_2|-1) \cdot 2 \cdot \alpha_j \nonumber \\
    &= \alpha_j. \label{eq:I2_bound}
\end{align}
    Likewise, every pair of points $(i, i')$ in $(I_1 \cup I_3) \cap \bar{N}(j)$ are also separated by at least $\sqrt{\delta_2 \cdot \min(\tau_i, \tau_{i'})} \ge \sqrt{2 \cdot \alpha_j}$ distance. Therefore, the same calculations as in \eqref{eq:I2_bound} give us
\begin{equation} \label{eq:I3_bound}
    \sum_{i \in \bar{N}(j) \cap (I_1 \cup I_3)} (\alpha_j-c(j, i)) \le \alpha_j.
\end{equation}

    Averaging Equations \eqref{eq:I2_bound} and \eqref{eq:I3_bound} gives us Equation \eqref{eq:I2I3_bound}, which finishes the lemma.
\end{proof}

We next have the following lemma, which bounds the cost for clients that are not bad.

\begin{lemma} \label{lem:cost_j_upper_bound}
    Let $p < 0.5$
    and $S$ be generated by applying Step \ref{step:S} to $(I_1, I_2, I_3)$. Then, for every client $j \not\in \mathcal{D}_B$,
\[\BE[c(j, S)] \le \rho(p) \cdot (1+O(\eps)) \cdot \BE\left[\alpha_j - \sum_{i \in \bar{N}(j) \cap S} (\alpha_j-c(j, i))\right],\]
    where $\rho(p)$ represents the constant from Lemma \ref{lem:main_lmp}.
\end{lemma}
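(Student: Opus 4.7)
The plan is to essentially re-run the casework in the proof of Lemma \ref{lem:main_lmp}, but carefully tracking how the extra $(1+\eps)$ slack in the witness conditions propagates through the bounds. Concretely, in the pure LMP setting of Lemma \ref{lem:main_lmp} the two ingredients used were: (i) the witness inequalities $\alpha_j \ge t_{w(j)}$ and $\alpha_j \ge c(j,w(j))$; and (ii) for every $i \in N(j)$, $t_i \ge \alpha_j > c(j,i)$. In the present hybrid setting we have the following substitutes for non-bad $j \notin \mathcal{D}_B$: by Definition \ref{def:roundable}(3a,3b) applied to $\mathcal{S}^{(\ell)}$ and $\mathcal{S}^{(\ell+1)}$ (and using that $\alpha_j = \min(\alpha_j^{(\ell)},\alpha_j^{(\ell+1)})$), the witness $w(j)$ in either solution satisfies $(1+\eps)\alpha_j \ge c(j,w(j))$ and $(1+\eps)\alpha_j \ge \tau_{w(j)}$, while by Lemma \ref{lem:tau_bigger_than_alpha} we have $\tau_i \ge \alpha_j > c(j,i)$ for every $i \in \bar N(j)$ exactly, with no $\eps$ loss.

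The natural approach is therefore to rescale so that $\alpha_j = 1$ and carry out the same case analysis as in Lemma \ref{lem:main_lmp} on the nested quasi-independent set $(I_1,I_2,I_3)$ for the hybrid graph $\{H(\delta)\}$, using $\bar N(j)$ in place of $N(j)$. The denominator is unchanged, because the key identity $\tau_i \ge \alpha_j > c(j,i)$ for $i \in \bar N(j)$ still holds exactly, so Proposition \ref{prop:anticor_1}, Proposition \ref{prop:anticor_2}, the pairwise distance lower bounds among points of $\bar N(j) \cap (I_1 \cup I_2 \cup I_3)$, and the ``sum-of-slacks $\le 1$'' type inequalities (e.g.\ those leading to Equation \eqref{eq:I2I3_bound}) all go through verbatim. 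What changes is in the numerator: every bound on $d(j,w(j))$ that was previously $\le 1$ now becomes $\le \sqrt{1+\eps}$, and every bound on $\tau_{w(j)}$ that was previously $\le 1$ now becomes $\le 1+\eps$. Consequently each invocation of Proposition \ref{prop:crude_distance} or Proposition \ref{prop:triangle} gives a bound multiplied by at most $(1+\eps)$, and the various expressions of the form $(1+\sqrt{\delta})^2$, $1+p\delta_2+(1-p)\delta_1+2\sqrt{\cdot}$, $(\sqrt{0.75}+\sqrt{\delta_1})^2$, etc., all scale by the same $(1+\eps)$ factor.

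Carrying this through each of Cases 1 through 5 of the proof of Lemma \ref{lem:main_lmp} yields, in each subcase,
\[
\BE[c(j,S)] \le \rho(p)\,(1+\eps)\,\BE\!\left[\alpha_j - \sum_{i \in \bar N(j)\cap S}(\alpha_j - c(j,i))\right],
\]
which is the desired bound with $O(\eps)$ taken to be $\eps$ itself (or any larger absolute constant multiple). The only place that requires a small extra check is Case 5 (the $a \ge 1$ case), where the argument does not even use the witness: it only uses the pairwise distance lower bounds among points of $\bar N(j) \cap (I_1 \cup \bar I)$ guaranteed by Lemma \ref{lem:tau_bigger_than_alpha}, and hence goes through without any $\eps$ loss at all.

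The main obstacle is therefore bookkeeping rather than conceptual: one must verify subcase by subcase that the $(1+\eps)$ relaxation on $c(j,w(j))$ and $\tau_{w(j)}$ only affects the numerator of the cost-to-dual ratio, and always as a simple multiplicative $(1+\eps)$ factor (rather than, say, interacting badly with a difference that could blow up). Since the denominator in every case is controlled entirely by the exact inequality $\tau_i \ge \alpha_j > c(j,i)$ for $i \in \bar N(j)$, and the numerator is always a monotone function of $d(j,w(j))$ and $\sqrt{\tau_{w(j)}}$, this check is routine and yields the stated $(1+O(\eps))$ loss, completing the proof.
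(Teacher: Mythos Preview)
Your proposal is correct and follows essentially the same approach as the paper: verify that the two hypotheses of Lemma~\ref{lem:main_lmp} hold up to a $(1+O(\eps))$ factor in the hybrid setting, and then observe that the entire casework of Lemma~\ref{lem:main_lmp} goes through with only this multiplicative loss in the numerator. The paper's proof is in fact much terser than yours, simply asserting that ``these pieces of information are sufficient for all of our calculations from Lemma~\ref{lem:main_lmp} to go through.''

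Two small points you glossed over that the paper makes explicit: (i) the witness $w(j)$ is taken from $\mathcal{S}^{(\ell+1)}$, so $w(j)\in\mathcal{V}^{(\ell+1)}\subset\mathcal{V}^{(\ell,r)}$ for $r\ge 1$, ensuring the witness is still a vertex of the hybrid conflict graph; and (ii) Definition~\ref{def:roundable} gives $(1+\eps)\alpha_j^{(\ell+1)}\ge c(j,w(j))$ and $(1+\eps)\alpha_j^{(\ell+1)}\ge\tau_{w(j)}$, whereas $\alpha_j=\min(\alpha_j^{(\ell)},\alpha_j^{(\ell+1)})\le\alpha_j^{(\ell+1)}$, so the inequality goes the wrong way for direct substitution. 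The paper closes this gap using closeness ($|\alpha_j-\alpha_j^{(\ell+1)}|\le 1/n^2$) together with $\alpha_j^{(\ell+1)}\ge 1$, which yields $\alpha_j^{(\ell+1)}\le(1+1/n^2)\alpha_j$ and hence $(1+O(\eps))\alpha_j\ge c(j,w(j)),\tau_{w(j)}$.
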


\begin{proof}
    By Lemma \ref{lem:tau_bigger_than_alpha}, we have that $\tau_i \ge \alpha_j$ for all $i \in \bar{N}(j)$.
    In addition, for every $j \not\in \mathcal{D}_B$, there exists a point $w(j)$ in $\mathcal{V}^{(\ell+1)}$ (so it has not been removed, i.e., it is still in $\mathcal{V} = \mathcal{V}^{(\ell, r)}$) such that $(1+\eps) \cdot \alpha_j^{(\ell+1)} \ge c(j, w(j))$ and $(1+\eps) \cdot \alpha_j^{(\ell+1)} \ge \tau_{w(j)}$. Since $\alpha_j^{(\ell+1)} \ge 1$ for all $j$ and $|\alpha_j-\alpha_j^{(\ell+1)}| \le \frac{1}{n^2},$ this means that $(1+O(\eps)) \cdot \alpha_j \ge c(j, w(j)), \tau_{w(j)}$. These pieces of information are sufficient for all of our calculations from Lemma \ref{lem:main_lmp} to go through. 
\end{proof}

We note that the expression $\alpha_j - \sum_{i \in \bar{N}(j) \cap S} (\alpha_j-c(j, i))$ may be somewhat unwieldy. Therefore, we provide an upper bound on its sum over $j \in \mathcal{D}.$

\begin{lemma} \label{lem:ugly_upper_bound}
    Let $S$ be any subset of $\mathcal{V} = \mathcal{V}^{(\ell, r)}$. Then,
\[\sum_{j \in \mathcal{D}} \left(\alpha_j-\sum_{i \in \bar{N}(j) \cap S} (\alpha_j-c(j, i))\right) \le \sum_{j \in \mathcal{D}} \alpha_j - \left(\lambda-\frac{1}{n}\right) \cdot |S| + 4 \gamma \cdot \text{OPT}_{k'}.\]
\end{lemma}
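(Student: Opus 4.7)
The plan is to switch the order of summation and then bound the per-facility contribution for each $i \in S$. After rearrangement, the claim reduces to showing
\[
\sum_{i \in S} \sum_{j \colon i \in \bar{N}(j)} (\alpha_j - c(j,i)) \;\ge\; \left(\lambda - \tfrac{1}{n}\right)\,|S| - 4\gamma \cdot \opt_{k'},
\]
where the inner sum equals $\sum_{j \in \mathcal{D}} \max(0,\alpha_j-c(j,i))$ when $i$ is tight, and $\sum_{j \in \mathcal{D}_S(i)} \max(0,\alpha_j-c(j,i))$ when $i$ is special (this uses that for tight $i$, $j \in \bar{N}(j)$ iff $\alpha_j > c(j,i)$, and for special $i$, additionally $j \in \mathcal{D}_S(i)$). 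Since $S \subset \mathcal{V}^{(\ell,r)} \subseteq \mathcal{V}^{(\ell)} \sqcup \mathcal{V}^{(\ell+1)}$, I will analyze $S^{(\ell)} := S \cap \mathcal{V}^{(\ell)}$ and $S^{(\ell+1)} := S \cap \mathcal{V}^{(\ell+1)}$ separately, using the roundability properties of the respective source solutions, and incur a $2\gamma \opt_{k'}$ error on each side.

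The first step uses closeness to replace the hybrid $\alpha_j = \min(\alpha_j^{(\ell)},\alpha_j^{(\ell+1)})$ by the source $\alpha_j^{(\ell)}$ (for $i$ on the $\ell$-side). Since $\alpha_j \ge \alpha_j^{(\ell)} - 1/n^2$ and $\max(0, x - 1/n^2) \ge \max(0,x) - 1/n^2$ pointwise, summing over the $n$ clients yields, for each $i \in S^{(\ell)}$, a loss of at most $1/n$ when passing from $\alpha$ to $\alpha^{(\ell)}$. For tight $i \in S^{(\ell)}$, the $\alpha^{(\ell)}$-contribution equals $z_i^{(\ell)} \ge \lambda$, so the actual per-facility contribution is at least $\lambda - 1/n$. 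For special $i \in S^{(\ell)}$, the $\alpha^{(\ell)}$-contribution equals $Q_i^{(\ell)} := \sum_{j \in \mathcal{D}_S(i)} \max(0, \alpha_j^{(\ell)} - c(j,i))$; individually this can be anything, so it must be aggregated.

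The second step aggregates the special-facility contributions using Condition~\ref{cond:4} of Definition~\ref{def:roundable}, which gives $\sum_{i \in \mathcal{F}_S^{(\ell)}} Q_i^{(\ell)} \ge \lambda\,|\mathcal{F}_S^{(\ell)}| - \gamma \opt_{k'}$, together with the per-facility cap $Q_i^{(\ell)} \le \lambda + 1/n$ coming from feasibility of $\alpha^{(\ell)}$ for $\text{DUAL}(\lambda + 1/n)$. Subtracting the contribution of special facilities outside $S^{(\ell)}$ (which is at most $(\lambda + 1/n)(|\mathcal{F}_S^{(\ell)}| - |S^{(\ell)} \cap \mathcal{F}_S^{(\ell)}|)$) gives
\[
\sum_{i \in S^{(\ell)} \cap \mathcal{F}_S^{(\ell)}} Q_i^{(\ell)} \;\ge\; \lambda\,|S^{(\ell)} \cap \mathcal{F}_S^{(\ell)}| - \gamma \opt_{k'} - |\mathcal{F}_S^{(\ell)}|/n.
\]
Combining with the tight-facility bound from step one and the per-facility $1/n$ closeness loss, the total contribution from $S^{(\ell)}$ is at least $(\lambda - 1/n)\,|S^{(\ell)}| - \gamma \opt_{k'} - |\mathcal{F}_S^{(\ell)}|/n$. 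Since $|\mathcal{F}_S^{(\ell)}| \le n$ and $\opt_{k'} \ge n$ by the discretization assumption, the stray $|\mathcal{F}_S^{(\ell)}|/n \le 1$ term is absorbed into a second $\gamma \opt_{k'}$ (for any reasonable $\gamma \ge 1/n$), so the $S^{(\ell)}$-side contribution is at least $(\lambda - 1/n)\,|S^{(\ell)}| - 2\gamma \opt_{k'}$.

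Summing the symmetric bound on $S^{(\ell+1)}$ gives $(\lambda - 1/n)\,|S| - 4\gamma \opt_{k'}$, which is exactly the required inequality. The main technical nuance is handling the special-facility aggregate: the hypothesis only lower-bounds the sum over all of $\mathcal{F}_S^{(\ell)}$, so the feasibility cap $\lambda + 1/n$ is essential to restrict to the subset $S^{(\ell)} \cap \mathcal{F}_S^{(\ell)}$ at an additive cost of only $|\mathcal{F}_S^{(\ell)} \setminus S^{(\ell)}|/n \le 1 \le \gamma \opt_{k'}$. Everything else is a bookkeeping exercise in swapping summations and chaining the $1/n^2$ closeness error through the $\max(0,\cdot)$ nonlinearity.
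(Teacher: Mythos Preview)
Your proof is correct and follows essentially the same approach as the paper's: swap the order of summation, use closeness to pass from $\alpha$ to $\alpha^{(\ell')}$ at a per-facility cost of $1/n$, bound tight facilities individually by $z_i^{(\ell')}\ge\lambda$, and aggregate special facilities via Condition~\ref{cond:4} together with the feasibility cap $\lambda+1/n$ to restrict from $\mathcal{F}_S^{(\ell')}$ to $S\cap\mathcal{F}_S^{(\ell')}$, absorbing the stray $|\mathcal{F}_S^{(\ell')}|/n\le 1$ into $\gamma\cdot\opt_{k'}$. The only organizational difference is that the paper partitions $S$ by tight versus special first (introducing the deficits $e_i=(\lambda+1/n)-\sum_{j\in N(i)\cap\mathcal{D}_S(i)}(\alpha_j-c(j,i))$ and bounding $\sum_{i\in S,\text{ special}}e_i\le 4\gamma\,\opt_{k'}$ directly), whereas you partition by source $\ell$ versus $\ell+1$ first; the two are equivalent reformulations of the same subtraction argument.
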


\begin{remark}
    We note that this lemma holds even when $\lambda < \frac{1}{n}$, i.e., $\lambda-\frac{1}{n} < 0$.
\end{remark}

\begin{proof}
    First, by splitting the sum based on tight and special facilities, we have that
\begin{align}
    &\hspace{0.5cm} \sum_{j \in \mathcal{D}}\left(\alpha_j - \sum_{i \in \bar{N}(j) \cap S} (\alpha_j-c(j, i))\right) \nonumber \\
    &= \sum_{j \in \mathcal{D}} \alpha_j - \sum_{\substack{i \in S \\ i \text{ tight}}} \sum_{j \in N(i)} (\alpha_j-c(j, i)) - \sum_{\substack{i \in S \\ i \text{ special}}} \sum_{j \in N(i) \cap \mathcal{D}_S(i)} (\alpha_j-c(j, i)). \label{eq:alpha_j_1}
\end{align}
    Now, we note that for any tight facility $i$, either $\sum_{j \in \mathcal{D}} \max(0, \alpha_j^{(\ell)}-c(j, i)) = z_i^{(\ell)} \in [\lambda, \lambda+\frac{1}{n}]$ or $\sum_{j \in \mathcal{D}} \max(0, \alpha_j^{(\ell+1)}-c(j, i)) = z_i^{(\ell+1)} \in [\lambda, \lambda+\frac{1}{n}]$. Since $\alpha^{(\ell)}$ and $\alpha^{(\ell+1)}$ are close, this means $0 \le \alpha_j^{(\ell)}-\alpha_j, \alpha_j^{(\ell+1)}-\alpha_j \le \frac{1}{n^2}$, and since there are $n$ clients in $\mathcal{D}$, this means that 
\begin{equation}
    \sum_{j \in N(i)} (\alpha_j-c(j, i)) = \sum_{j \in \mathcal{D}} \max(\alpha_j-c(j, i), 0) \ge \sum_{j \in \mathcal{D}} \max\left(\alpha_j^{(\ell')}-c(j, i)-\frac{1}{n^2}, 0\right) \ge \lambda-\frac{1}{n}, \label{eq:alpha_j_2}
\end{equation}
    for some choice of $\ell'$ in $\{\ell, \ell+1\}$.
    
    In addition, we know that both $\alpha^{(\ell)}$ and $\alpha^{(\ell+1)}$ are feasible solutions of $\text{DUAL}(\lambda+\frac{1}{n})$, and that $\alpha_j \le \alpha_j^{(\ell)}, \alpha_j^{(\ell+1)}$. Therefore, for any special facility $i \in \mathcal{F}_S^{(\ell)} \sqcup \mathcal{F}_S^{(\ell+1)}$, $\sum_{j \in N(i) \cap \mathcal{D}_S(i)} (\alpha_j-c(j, i)) \le \sum_{j \in \mathcal{D}} \max(0, \alpha_j-c(j, i)) \le \lambda+\frac{1}{n}$. But, we have that 
\begin{align*}
    \sum_{i \in \mathcal{F}_S^{(\ell')}} \sum_{j \in N(i) \cap \mathcal{D}_S^{(\ell')}(i)} (\alpha_j-c(j, i)) &\ge \sum_{i \in \mathcal{F}_S^{(\ell')}} \sum_{j \in \mathcal{D}_S^{(\ell')}(i)} \max\left(0, \alpha_j^{(\ell')}-\frac{1}{n^2}-c(j, i)\right) \\
    &\ge \lambda \cdot |\mathcal{F}_S^{(\ell')}| - \gamma \cdot \text{OPT}_{k'} - \frac{|\mathcal{F}_S^{(\ell')}|}{n},
\end{align*}
    for both $\ell' = \ell$ and $\ell' = \ell+1$ (the last inequality follows by Condition \ref{cond:4} of Definition \ref{def:roundable}). So, if we let $e_i$ represent $\lambda+\frac{1}{n} - \sum_{j \in N(i) \cap \mathcal{D}_S(i)} ( \alpha_j-c(j, i))$ for each special facility $i$, we have that $e_i \ge 0$ but $\sum_{i \in \mathcal{F}_S^{(\ell)}} e_i \le \frac{2}{n} \cdot |\mathcal{F}_S^{(\ell)}| + \gamma \cdot \text{OPT}_{k'} \le 2 \gamma \cdot \text{OPT}_{k'}$, since $|\mathcal{F}_S^{(\ell)}| \le n$ and $\text{OPT}_{k'} \ge n \ge \frac{2}{\gamma}$. 
    Similarly, $\sum_{i \in \mathcal{F}_S^{(\ell+1)}} e_i \le 2 \gamma \cdot \text{OPT}_{k'}.$ So, this means that
\[\sum_{\substack{i \in S \\ i \text{ special}}} e_i \le 4 \gamma \cdot \text{OPT}_{k'},\]
    which means that 
\begin{align}
    \sum_{\substack{i \in S \\ i \text{ special}}} \sum_{j \in N(i) \cap \mathcal{D}_S(i)} (\alpha_j-c(j, i)) &\ge \sum_{\substack{i \in S \\ i \text{ special}}} \left(\lambda + \frac{1}{n} - e_i\right) \nonumber \\
    &\ge \left(\lambda+\frac{1}{n}\right) \cdot |\{i \in S: i \text{ special}\}|-4 \gamma \cdot \text{OPT}_{k'}. \label{eq:alpha_j_3}
\end{align}

    Thus, by combining Equations \eqref{eq:alpha_j_1}, \eqref{eq:alpha_j_2}, and \eqref{eq:alpha_j_3}, we get
\begin{align*}
    \sum_{j \in \mathcal{D}} \left(\alpha_j - \sum_{i \in \bar{N}(j) \cap S} (\alpha_j-c(j, i))\right) 
    &\le \sum_{j \in \mathcal{D}} \alpha_j - \sum_{\substack{i \in S \\ i \text{ tight}}} \left(\lambda-\frac{1}{n}\right) - \left(\lambda+\frac{1}{n}\right) \cdot |\{i \in S: i \text{ special}\}|+4 \gamma \cdot \text{OPT}_{k'} \\
    &\le \sum_{j \in \mathcal{D}} \alpha_j - \left(\lambda-\frac{1}{n}\right) \cdot |S| + 4 \gamma \cdot \text{OPT}_{k'}. \qedhere
\end{align*}
\end{proof}

Next, we show that the bad clients do not contribute much to the total cost.

\begin{lemma} \label{lem:cost_bad_upper_bound}
    Let $S$ be a subset of $\mathcal{V}$ containing $I_1$. Then, we have that
\[\sum_{j \in \mathcal{D}_B} c(j, S) \le O(\gamma) \cdot \text{OPT}_{k'}.\]
\end{lemma}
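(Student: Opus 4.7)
The plan is to bound $c(j, S)$ for each bad client $j \in \mathcal{D}_B$ by relating $j$ to its witness $w(j)$ and then from $w(j)$ to a nearby point in $I_1 \subseteq S$, and finally to invoke condition 3(c) of Definition \ref{def:roundable}. Recall that bad clients come from the solution $\mathcal{S}^{(\ell+1)}$, with witnesses $w(j) \in \mathcal{V}^{(\ell+1)}$. Since the \Call{GraphUpdate}{} procedure only removes vertices from $\mathcal{V}^{(\ell)}$ while traversing $r = 0, 1, \dots, p_\ell$, every witness $w(j)$ for $j \in \mathcal{D}_B$ is still contained in $\mathcal{V} = \mathcal{V}^{(\ell, r)}$, and its value $\tau_{w(j)}$ is well-defined.

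Next, because $I_1$ is a maximal independent set of $H(\delta_1) = H^{(\ell, r)}(\delta_1)$, either $w(j) \in I_1 \subseteq S$, or there is some $i_1 \in I_1$ adjacent to $w(j)$ in $H(\delta_1)$, meaning $c(i_1, w(j)) \le \delta_1 \cdot \min(\tau_{i_1}, \tau_{w(j)}) \le \delta_1 \cdot \tau_{w(j)}$. In the first case, $c(j, S) \le c(j, w(j))$. In the second case, using the triangle inequality on Euclidean distances,
\[
d(j, S) \;\le\; d(j, i_1) \;\le\; d(j, w(j)) + d(w(j), i_1) \;\le\; \sqrt{c(j, w(j))} + \sqrt{\delta_1 \cdot \tau_{w(j)}},
\]
so squaring and applying $(a+b)^2 \le 2a^2 + 2b^2$ yields
\[
c(j, S) \;\le\; 2\, c(j, w(j)) + 2\delta_1 \cdot \tau_{w(j)}.
\]
Both cases are jointly dominated by the right-hand side above.

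Finally, I would sum over $j \in \mathcal{D}_B$ and apply condition 3(c) of Definition \ref{def:roundable}:
\[
\sum_{j \in \mathcal{D}_B} c(j, S) \;\le\; 2\delta_1 \sum_{j \in \mathcal{D}_B} \bigl(c(j, w(j)) + \tau_{w(j)}\bigr) \;\le\; 2\delta_1 \cdot \gamma \cdot \text{OPT}_{k'} \;=\; O(\gamma) \cdot \text{OPT}_{k'},
\]
since $\delta_1 = (4+8\sqrt{2})/7$ is an absolute constant. The argument is largely mechanical: the only real subtlety is confirming that $w(j) \in \mathcal{V}^{(\ell, r)}$ (which follows from the one-sided deletion rule of \Call{GraphUpdate}{}) and that the witness guarantees from Definition \ref{def:roundable} for the solution $\mathcal{S}^{(\ell+1)}$ remain available in the hybrid setting; once these are in hand, two applications of the triangle inequality and the bad-client budget close the proof.
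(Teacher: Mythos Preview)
Your proof is correct and follows essentially the same approach as the paper: route each bad client through its witness $w(j) \in \mathcal{V}^{(\ell+1)} \subset \mathcal{V}$, use maximality of $I_1$ in $H(\delta_1)$ to bound $c(w(j), I_1) \le \delta_1 \tau_{w(j)}$, square the triangle inequality, and sum using condition~3(c) of Definition~\ref{def:roundable}. Your explicit justification that $w(j)$ survives in $\mathcal{V}^{(\ell,r)}$ via the one-sided deletion rule is a helpful detail the paper leaves implicit.
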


\begin{proof}
    Note that for every point $j \in \mathcal{D}_B$, there exists a facility $w(j) \in \mathcal{V}^{(\ell+1)}$ such that
\[\sum_{j \in \mathcal{D}_B} \left(c(j, w(j)) + \tau_{w(j)}\right) \le O(\gamma) \cdot \text{OPT}_{k'},\]
    because $\mathcal{S}^{(\ell+1)}$ is $(\lambda, k')$-roundable. Now, note that $d(j, S) \le d(j, I_1) \le d(j, w(j)) + d(w(j), I_1)$, so $c(j, S) \le 2 [c(j, w(j)) + c(w(j), I_1)]$. But since $w(j) \in \mathcal{V}^{(\ell+1)} \subset \mathcal{V}$, $c(w(j), I_1) \le \delta_1 \cdot \tau_{w(j)}$, and therefore,
\[\sum_{j \in \mathcal{D}_B} c(j, S) \le 2 \cdot \sum_{j \in \mathcal{D}_B} \left(c(j, w(j)) + \delta_1 \cdot \tau_{w(j)}\right) \le O(\gamma) \cdot \text{OPT}_{k'},\]
    where the final inequality follows by Condition 3c) of Definition \ref{def:roundable}.
\end{proof}

We now combine our previous lemmas to obtain the following bound on the expected cost of $S$, giving a result that bounds the overall expected cost in terms of the dual solution.

\begin{lemma} \label{lem:expected_cost_S}
    Suppose that $S$ is generated by applying Step \ref{step:S} to $(I_1, I_2, I_3)$ with the probability set to $p < 0.5$.
    Then, the expected cost $\BE[\text{cost}(\mathcal{D}, S)]$ is at most
\[\rho(p) \cdot (1+O(\eps)) \cdot \left[\sum_{j \in \mathcal{D}} \alpha_j - \left(\lambda-\frac{1}{n}\right) \cdot \BE[|S|]\right] + O(\gamma) \cdot \text{OPT}_{k'},\]
    where $\rho(p)$ represents the constant from Lemma \ref{lem:main_lmp}.
\end{lemma}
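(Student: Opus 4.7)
The plan is to decompose the total cost $\BE[\text{cost}(\mathcal{D}, S)] = \sum_{j \in \mathcal{D}} \BE[c(j,S)]$ according to whether a client is ``good'' or ``bad'', and then to combine the four preceding lemmas in a straightforward way. First I would write
\[\sum_{j \in \mathcal{D}} \BE[c(j,S)] \;=\; \sum_{j \in \mathcal{D} \setminus \mathcal{D}_B} \BE[c(j,S)] \;+\; \sum_{j \in \mathcal{D}_B} \BE[c(j,S)].\]
For the second sum, since $S \supset I_1$ by construction of Step~\ref{step:S}, Lemma~\ref{lem:cost_bad_upper_bound} immediately gives an upper bound of $O(\gamma)\cdot\text{OPT}_{k'}$, which is absorbed into the additive error term.

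For the first sum, I would invoke Lemma~\ref{lem:cost_j_upper_bound} client-by-client to obtain
\[\sum_{j \in \mathcal{D}\setminus \mathcal{D}_B} \BE[c(j,S)] \;\le\; \rho(p)\,(1+O(\eps))\, \sum_{j \in \mathcal{D}\setminus \mathcal{D}_B} \BE\!\left[\alpha_j - \sum_{i \in \bar N(j)\cap S}(\alpha_j - c(j,i))\right].\]
Now I would like to extend this sum over all of $\mathcal{D}$ (not just $\mathcal{D}\setminus\mathcal{D}_B$) so that I can apply Lemma~\ref{lem:ugly_upper_bound}. The key observation here is that $S$ is exactly of the form assumed by Lemma~\ref{lem:RHS_positive_1/2} (it contains $I_1$ and contains each element of $I_2$ and each element of $I_3$ with probability $p < 0.5$, with the anti-correlation pattern being a valid ``not necessarily independent'' coupling). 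Hence for every $j \in \mathcal{D}_B$ the expected dual contribution is nonnegative, so adding those terms only increases the right-hand side.

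Once the sum is extended over all of $\mathcal{D}$, I would apply Lemma~\ref{lem:ugly_upper_bound} to the expectation (pulling $\BE$ outside by linearity and noting that $|S|$ is the only random quantity inside):
\[\BE\!\left[\sum_{j \in \mathcal{D}}\!\Big(\alpha_j - \sum_{i \in \bar N(j)\cap S}(\alpha_j - c(j,i))\Big)\right] \le \sum_{j \in \mathcal{D}}\alpha_j - \left(\lambda - \tfrac{1}{n}\right)\BE[|S|] + 4\gamma\cdot\text{OPT}_{k'}.\]
Combining the two pieces yields the claimed bound, with the $4\gamma\rho(p)(1+O(\eps))\cdot\text{OPT}_{k'}$ and $O(\gamma)\cdot\text{OPT}_{k'}$ contributions together absorbed into a single $O(\gamma)\cdot\text{OPT}_{k'}$ term (using that $\rho(p)$ is an absolute constant).

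There is no real obstacle: the only subtle point is the sign juggling when passing from $\mathcal{D}\setminus\mathcal{D}_B$ to $\mathcal{D}$, which is precisely the reason Lemma~\ref{lem:RHS_positive_1/2} was stated for \emph{every} client (including bad ones), and verifying that Step~\ref{step:S} produces an $S$ satisfying its hypothesis (each element of $I_2\cup I_3$ is included with marginal probability exactly $p$, while the joint distribution can have arbitrary dependencies, matching the ``not necessarily independently'' clause).
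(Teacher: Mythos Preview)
Your proposal is correct and follows essentially the same approach as the paper's proof: split over good and bad clients, apply Lemmas~\ref{lem:cost_j_upper_bound} and~\ref{lem:cost_bad_upper_bound} respectively, extend the good-client sum to all of $\mathcal{D}$ via Lemma~\ref{lem:RHS_positive_1/2}, and then apply Lemma~\ref{lem:ugly_upper_bound}. You also correctly identify the one subtle step (the nonnegativity for bad clients) and why Step~\ref{step:S} meets the hypotheses of Lemma~\ref{lem:RHS_positive_1/2}.
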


\begin{proof}
    We will abbreviate $\rho := \rho(p)$.
    We can split up the cost based on good (i.e., not in $\mathcal{D}_B$) points $j$ and bad points $j$. Indeed, doing this, we get
\begin{align*}
    \BE[\text{cost}(\mathcal{D}, S)] &= \sum_{j \not\in \mathcal{D}_B} \BE[c(j, S)] + \sum_{j \in \mathcal{D}_B} \BE[c(j, S)] \\
    &\le \rho \cdot (1+O(\eps)) \cdot \sum_{j \not\in \mathcal{D}_B} \BE\left[\alpha_j - \sum_{i \in \bar{N}(j) \cap S} (\alpha_j-c(j, i))\right] + O(\gamma) \cdot \text{OPT}_{k'} \\
    &\le \rho \cdot (1+O(\eps)) \cdot \sum_{j \in \mathcal{D}} \BE\left[\alpha_j - \sum_{i \in \bar{N}(j) \cap S} (\alpha_j-c(j, i))\right] + O(\gamma) \cdot \text{OPT}_{k'} \\
    &\le \rho \cdot (1+O(\eps)) \cdot \left[\sum_{j \in \mathcal{D}} \alpha_j - \left(\lambda-\frac{1}{n}\right) \cdot \BE[|S|]\right] + O(\gamma) \cdot \text{OPT}_{k'}.
\end{align*}
    In the above equation, the second line follows from Lemmas \ref{lem:cost_j_upper_bound} and \ref{lem:cost_bad_upper_bound}. The third line is true since $\BE\left[\alpha_j-\sum_{i \in \bar{N}(j) \cap S} (\alpha_j-c(j, i))\right] \ge 0$ for all $j$, even in $\mathcal{D}_B$, by Lemma \ref{lem:RHS_positive_1/2}. Finally, the fourth line is true because of Lemma \ref{lem:ugly_upper_bound}.
\end{proof}

Next, we show that under certain circumstances, we can find a solution $S$ of size at most $k$ satisfying a similar condition to Lemma \ref{lem:expected_cost_S}, with high probability.

\begin{lemma} \label{lem:cost_bound}
    Suppose that $|I_1|+p \cdot |I_2 \cup I_3| = k$ for some integer $k$ (which may be larger than $n$) and some $p \in [0.01, 0.49]$. Then, for any sufficiently large integer $C$, if $|I_2 \cup I_3| \ge 100 \cdot C^{4}$,
    then there exists a polynomial-time randomized algorithm that outputs a set $S$ such that $I_1 \subset S \subset I_1 \cup I_2 \cup I_3$, and with probability at least $9/10$, $|S| \le k$ and $$\text{cost}(\mathcal{D}, S) \le \rho\left(p-\frac{2}{C}\right) \cdot \left(1+\frac{300}{C}\right) \cdot (1+O(\eps)) \cdot \left[\sum_{j \in \mathcal{D}} \alpha_j - \left(\lambda-\frac{1}{n}\right) \cdot k\right] + O(\gamma) \cdot \text{OPT}_{k'}.$$
\end{lemma}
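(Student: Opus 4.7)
The plan is to apply Step~\ref{step:S} of Algorithm~\ref{alg:lmp} with a reduced probability $p' := p - 2/C$ in place of $p$, producing a random subset $S$ with $I_1 \subset S \subset I_1 \cup I_2 \cup I_3$. By construction, $\BE[|S|] = |I_1| + p'|I_2\cup I_3| = k - (2/C)|I_2\cup I_3|$, which is at most $k - 200 C^3$ by the assumption $|I_2 \cup I_3| \ge 100 C^4$. Lemma~\ref{lem:expected_cost_S} then gives $\BE[\text{cost}(\mathcal{D}, S)] \le \rho(p')(1+O(\eps))[\sum_j \alpha_j - (\lambda - 1/n) \BE[|S|]] + O(\gamma) \text{OPT}_{k'}$. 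Plugging in $\BE[|S|]$ and absorbing the extra additive $(2/C)|I_2 \cup I_3|(\lambda - 1/n)\rho(p')$ term into a $(1+O(1/C))$ multiplicative slack (which fits within the $(1+300/C)$ slack allowed by the statement) expresses this upper bound in the form of the target inequality, modulo converting from expectation to a high-probability bound.

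The key technical obstacle is that the variance of $|S|$ may be too large for a clean concentration argument: for any $i_2 \in I_2$ with $c_{i_2} := |q^{-1}(i_2)|$ large, the cluster $\{i_2\} \cup q^{-1}(i_2)$ contributes variance $\Theta((p' c_{i_2})^2)$ to $|S|$, and a single huge cluster can dominate. The plan for handling this, inspired by the discussion at the end of Section~\ref{sec:overview}, is to modify Step~\ref{step:S} as follows: for each $i_2 \in I_2$ with $c_{i_2}$ exceeding a threshold $T = \Theta(1)$, replace the coin-flip rule within that cluster with a deterministic rule that always includes $i_2$ and then includes a uniformly random subset of $q^{-1}(i_2)$ of size $\lfloor p'(c_{i_2}+1) - 1\rfloor$. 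A coupling argument leveraging the negative submodularity of $\text{cost}(\mathcal{D},\cdot)$ (Proposition~\ref{prop:submodular_1}) together with Propositions~\ref{prop:submodular_2} and~\ref{prop:submodular_3} will show that this modification does not increase $\BE[\text{cost}(\mathcal{D}, S)]$; moreover, the cluster's expected contribution to $|S|$ changes by at most $1$, easily absorbed into the slack.

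Once modified, $|S| - |I_1|$ is a sum of independent bounded random variables, one per cluster, each of range $O(1)$. Hoeffding's inequality then gives
\begin{equation*}
\Pr\!\left[|S| - \BE[|S|] > (2/C)|I_2 \cup I_3|\right] \;\le\; 2\exp\!\left(-\Omega(|I_2 \cup I_3|/C^2)\right) \;\le\; 2\exp(-\Omega(C^2)),
\end{equation*}
using $|I_2 \cup I_3| \ge 100 C^4$ and $|I_2| \le |I_2 \cup I_3|$, so $|S| \le k$ holds with probability at least $19/20$ once $C$ is large enough. For the cost bound, an analogous bounded-differences (McDiarmid) argument on $\text{cost}(\mathcal{D}, S)$---whose per-cluster discrepancy is dominated by the number of clients times the squared-distance diameter contributed by a single cluster---shows that $\text{cost}(\mathcal{D}, S)$ concentrates within a $(1+O(1/C))$ factor of its expectation with probability at least $19/20$. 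A union bound over these two events yields the claimed $9/10$-probability guarantee.

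The hardest part will be the cost-preserving reduction in the second paragraph: the coupling must carefully ensure that forcing $i_2$ deterministically into $S$ (rather than selecting it with probability $p'$) and replacing independent inclusions within $q^{-1}(i_2)$ by a fixed-size random subset does not break the per-client bound in Lemma~\ref{lem:main_lmp}. Since that proof does delicate casework on $(a,b,c) = (|N(j) \cap I_1|, |N(j) \cap I_2|, |N(j) \cap I_3|)$ and specifically on how many of these lie in $q^{-1}(i_2)$ (via Propositions~\ref{prop:anticor_1}, \ref{prop:anticor_2}), we will re-examine each case under the modified rule and verify that the modified probabilities can only improve (or tie) the bound obtained there, so the $\rho(p')$ approximation factor is preserved.
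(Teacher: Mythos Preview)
Your overall architecture matches the paper's: reduce $p$ to $p' = p - 2/C$, treat large clusters specially so that $|S|$ concentrates, and use negative submodularity to argue the modification does not increase $\BE[\text{cost}(\mathcal{D},S)]$. But there is one genuine gap and one place where you are working much harder than necessary.

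\textbf{The gap: McDiarmid on cost does not work.} The bounded-difference constant for a cluster $T_\ell$ is the maximum change in $\text{cost}(\mathcal{D},S)$ from altering that cluster's randomness, which can be as large as $\sum_{j}\big[c(j,I_1)-c(j,I_1\cup T_\ell)\big]$. There is no reason for $\sum_\ell c_\ell^2$ to be $O\big((\BE[\text{cost}])^2/C^2\big)$; a single cluster can already have $c_\ell$ comparable to $\text{cost}(\mathcal{D},I_1)$, which may be a constant factor larger than $\BE[\text{cost}(\mathcal{D},S)]$. The paper avoids concentration for cost entirely: it only proves the \emph{expected} cost bound, then applies Markov's inequality to get the cost guarantee with probability $\Omega(1/C)$, combines with the $1-O(1/C)$ probability that $|S|\le k$ (Chebyshev suffices here), and repeats the whole sampling $O(C)$ times. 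This is how the $9/10$ probability is achieved.

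\textbf{The unnecessary step: re-verifying Lemma~\ref{lem:main_lmp} case by case.} You do not need to revisit the casework. The paper compares the modified sampler $S$ directly to the \emph{unmodified} Step~\ref{step:S} sampler $S_0$ (run with probability $p'$), for which Lemma~\ref{lem:expected_cost_S} already gives the bound. A coupling is built cluster by cluster: for a large cluster one first adds $x_\ell$ deterministically (this can only decrease cost), and then replaces ``all of $\tilde T_\ell$ with probability $1/2$, else none'' by ``each point of $\tilde T_\ell$ independently with probability $1/2$''; Proposition~\ref{prop:submodular_2} shows this second replacement does not increase expected cost. Chaining over the large clusters gives $\BE[\text{cost}(\mathcal{D},S)]\le\BE[\text{cost}(\mathcal{D},S_0)]$, with no per-client analysis needed.

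\textbf{One more missing ingredient.} Your ``absorbing the extra $(2/C)|I_2\cup I_3|(\lambda-1/n)$ term into a $(1+O(1/C))$ slack'' requires showing that $(\lambda-1/n)|I_2\cup I_3|$ is $O\big(\sum_j\alpha_j-(\lambda-1/n)k\big)$. This is not automatic; the paper obtains it by applying Lemmas~\ref{lem:RHS_positive_1/2} and~\ref{lem:ugly_upper_bound} at $p=1/2$ to get $\sum_j\alpha_j-(\lambda-1/n)(|I_1|+\tfrac12|I_2\cup I_3|)+O(\gamma)\text{OPT}_{k'}\ge 0$, which together with $|I_1|+p|I_2\cup I_3|=k$ yields $(\lambda-1/n)|I_2\cup I_3|\le \tfrac{1}{1/2-p}\big[\sum_j\alpha_j-(\lambda-1/n)k\big]+O(\gamma)\text{OPT}_{k'}$.
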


\begin{remark}
    By repeating this randomized algorithm polynomially many times and outputting the lowest-cost solution $S$ with $|S| \le k$, we can make the failure probability exponentially small.
\end{remark}

\begin{proof}
Let $r = |I_2|$, and partition $I_2 \cup I_3$ into $T_1, \dots, T_r$, where each $T_\ell$ consists of a point $i_2 \in I_2$ and $q^{-1}(i_2)$, i.e., $i_2$'s preimage under the map $q$. We assume WLOG that the $T_\ell$'s are sorted in non-increasing order of size, and write $x_\ell$ as the unique point in $T_\ell \cap I_2$. Define $s$ such that $|T_1| \ge \dots \ge |T_s| \ge C > |T_{s+1}| \ge \dots \ge |T_r|$ (note that $s$ may be $0$ or $r$). Note that $s \le \frac{|I_2 \cup I_3|}{C}$ so $\frac{s}{|I_2 \cup I_3|} \le \frac{1}{C}$. Now, set $p' = p-\frac{2}{C}$, and consider creating the following set $S$:
\begin{itemize}
    \item For each $i \in I_1,$ include $i \in S$.
    \item For each $\ell \le s,$ include $x_\ell \in S$, and for each $i$ in $T_\ell \backslash \{x_\ell\}$, include $i \in S$ independently with probability $p'$. 
    \item For each $\ell > s,$ flip a fair coin. If it lands heads, include $x_\ell$ with probability $2p'$, and if it lands tails, include each point in $T_\ell \backslash \{x_\ell\}$ independently with probability $2p'$.
\end{itemize}

    The expected size of $S$ is $|I_1| + s + (|T_1| + \cdots + |T_s| - s) \cdot p' + (|T_{s+1}|+\cdots+|T_r|) \cdot p' = |I_1| + p' \cdot |I_2 \cup I_3| + (1-p') \cdot s \le |I_1|+p' \cdot |I_2 \cup I_3| + s.$ Therefore, since $p' = p-\frac{2}{C},$ the expected size of $S$ as at most $|I_1|+(p-\frac{2}{C}) \cdot |I_2 \cup I_3| + \frac{|I_2 \cup I_3|}{C} = |I_1| + (p-\frac{1}{C}) \cdot |I_2 \cup I_3|.$ To bound the variance of $S$, we note that each point in $I_1$ and each $x_\ell$ for $\ell \le s$ is deterministically in $S$, each point in $T_\ell \backslash \{x_\ell\}$ for $\ell \le s$ is independently selected with probability $p'$, and the number of points from each $T_\ell$ for $\ell > s$ is some independent random variable bounded by $|T_\ell| \le C$. So, the variance can be bounded by $(|T_1|+\cdots+|T_s|) + \sum_{\ell = s+1}^{r} |T_\ell|^2 \le \max_{s+1 \le \ell \le r} |T_\ell| \cdot \left(|T_1|+\cdots+|T_r|\right) \le C \cdot |I_2 \cup I_3|$. So, by Chebyshev's inequality, with probability at least $1-\frac{1}{10C}$, 
\begin{align*}
    |S| &\le |I_1|+(p-\frac{1}{C}) \cdot |I_2 \cup I_3| + \sqrt{10 C \cdot C \cdot |I_2 \cup I_3|} \\
    &\le |I_1|+p|I_2 \cup I_3| - \frac{1}{C} |I_2 \cup I_3| + \sqrt{10} C \cdot \sqrt{|I_2 \cup I_3|} \\
    &\le |I_1|+p|I_2 \cup I_3| = k,
\end{align*}
    where the final inequality is true since $|I_2 \cup I_3| \ge 100 C^4$.
    
    
    Next, we bound the expected cost of $S$. First, consider running the final step \ref{step:S} of the LMP algorithm on $(I_1, I_2, I_3)$ using probability $p'$. This would produce a set $S_0$ such that 
\begin{align}
    \BE\left[\text{cost}(\mathcal{D}, S_0)\right] &\le \rho(p') \cdot (1+O(\eps)) \cdot \BE\left[\sum_{j \in \mathcal{D}} \alpha_j - \left(\lambda-\frac{1}{n}\right) \cdot \BE[|S_0|]\right] + O(\gamma) \cdot \text{OPT}_{k'} \nonumber\\
    &= \rho\left(p-\frac{2}{C}\right) \cdot (1+O(\eps)) \cdot \left[\sum_{j \in \mathcal{D}} \alpha_j - \left(\lambda-\frac{1}{n}\right) \cdot \left(|I_1|+\left(p-\frac{2}{C}\right) \cdot |I_2 \cup I_3|\right)\right] + O(\gamma) \cdot \text{OPT}_{k'} \nonumber\\
    &= \rho\left(p-\frac{2}{C}\right) \cdot (1+O(\eps)) \cdot \left[\sum_{j \in \mathcal{D}} \alpha_j - \left(\lambda-\frac{1}{n}\right) \cdot k + \frac{2}{C} \cdot \left(\lambda-\frac{1}{n}\right) \cdot |I_2 \cup I_3|\right] + O(\gamma) \cdot \text{OPT}_{k'}. \label{eq:cost_D_S0}
\end{align}
    Above, the first line follows by Lemma \ref{lem:expected_cost_S}, the second line follows by definition of $p'$ and $S_0$, and the third line follows from the fact that $|I_1| + p|I_2 \cup I_3| = k$.
    
Now, note that if we had performed the final step of the LMP algorithm on $(I_1, I_2, I_3)$ using probability $1/2$ instead of $p'$, the set (call it $S_1$) would have satisfied 
\[\sum_{j \in \mathcal{D}} \alpha_j - \left(\lambda-\frac{1}{n}\right) \cdot \left(|I_1|+\frac{1}{2}(|I_2|+|I_3|)\right) + 4 \gamma \cdot \text{OPT}_{k'} \ge \BE\left[\sum_{j \in \mathcal{D}} \left(\alpha_j - \sum_{i \in \bar{N}(j) \cap S} (\alpha_j-c(j, i))\right)\right] \ge 0\]
by Lemmas \ref{lem:ugly_upper_bound} and \ref{lem:RHS_positive_1/2}. This means that $\sum_{j \in \mathcal{D}} \alpha_j \ge (\lambda-\frac{1}{n}) \cdot (|I_1|+1/2 \cdot |I_2 \cup I_3|) - 4 \gamma \cdot \text{OPT}_{k'}.$
Therefore, again using the fact that $|I_1|+p \cdot |I_2 \cup I_3| = k$, we have that 
\[\sum_{j \in \mathcal{D}} \alpha_j - \left(\lambda-\frac{1}{n}\right) \cdot k \ge \left(\lambda-\frac{1}{n}\right) \cdot \left(\frac{1}{2}-p\right) \cdot |I_2 \cup I_3| - 4\gamma \cdot \text{OPT}_{k'}.\]
We can rewrite this to obtain
\begin{equation} \label{eq:remainder_bound}
    \left(\lambda-\frac{1}{n}\right) \cdot |I_2 \cup I_3| \le \frac{1}{1/2-p} \cdot \left(\sum_{j \in \mathcal{D}} \alpha_j - \left(\lambda-\frac{1}{n}\right) \cdot k\right) + O(\gamma) \cdot \text{OPT}_{k'},
\end{equation}
since $p \in [0.01, 0.49]$.

In this and the next paragraph, we prove that $\BE[\text{cost}(\mathcal{D}, S)] \le \BE[\text{cost}(\mathcal{D}, S_0)]$.
To see why, we consider a coupling of the randomness to generate a sequence of sets $S_0, S_1, \dots, S_s = S$. To do so, for each point $i \in I_1$, we automatically include $i \in S = S_s$ and $i \in S_h$ for all $0 \le h < s$. Now, for each $1 \le \ell \le r$, we first create a temporary set $\tilde{T}_{\ell} \subset T_\ell \backslash \{x_{\ell}\}$ by including each point $i \in T_\ell \backslash \{x_{\ell}\}$ to be in $\tilde{T}_\ell$ independently with probability $2p$. Then, we create two sets $T_\ell^{(0)} \subset T_\ell$ and $T_{\ell}^{(1)} \subset T_\ell$ as follows. For $T_\ell^{(0)}$, we include each point in $\tilde{T}_\ell$ independently, with probability $1/2$, and always include $x_\ell \in T_\ell^{(0)}$. For $T_\ell^{(1)}$, we flip a fair coin: if the coin lands heads, we only include $x_\ell$, but if the coin lands tails, we do not include $x_\ell$ but include all of $\tilde{T}_\ell$. We remark that overall, $T_\ell^{(0)}$ includes each point in $T_\ell \backslash \{x_\ell\}$ independently with probability $p$.

Now, for each $0 \le h \le s$, we define $S_h := \left(\bigcup_{1 \le \ell \le h} T_\ell^{(0)}\right) \cup \left(\bigcup_{h < \ell \le r} T_\ell^{(1)}\right)$.
One can verify that $S$ and $S_0$ have the desired distribution, since $S_0 = \bigcup_{1 \le \ell \le r} T_\ell^{(1)}$ is precisely the distribution obtained after applying step \ref{step:S} of the LMP algorithm on $(I_1, I_2, I_3)$, but $S$ takes $T_\ell^{(0)}$ instead of $T_\ell^{(1)}$ for each $\ell \le s$, which is precisely the desired distribution for $S$ (as we defined $S$ at the beginning of this lemma's proof).
To show that $\BE[\text{cost}(\mathcal{D}, S)] \le \BE[\text{cost}(\mathcal{D}, S_0)]$, it suffices to show that $\BE[\text{cost}(\mathcal{D}, S_h)] \le \BE[\text{cost}(\mathcal{D}, S_{h-1})]$ for all $1 \le h \le s$.
However, note that because of our coupling, the only difference between $S_h$ and $S_{h-1}$ relates to points in $T_h$.
If we let $S_{h-1}' = S_{h-1} \cup \{x_\ell\}$ be the set that always includes $x_\ell$ but includes the entirety of $\tilde{T}_h$ with probability $1/2$, then clearly $\BE[\text{cost}(\mathcal{D}, S_{h-1}')] \le \BE[\text{cost}(\mathcal{D}, S_{h-1})]$.
In addition, if we condition on the set $\tilde{T}_h$, the only difference between $S_{h-1}'$ and $S_h$ is that $S_h$ includes each point in $\tilde{T}_h$ with $1/2$ probability, whereas $S_{h-1}$ either includes the entirety of $\tilde{T}_h$ with $1/2$ probability or includes none of $\tilde{T}_h$.
Therefore, by Proposition \ref{prop:submodular_2}, using the negative-submodularity of $k$-means~\cite{cohenaddad2019fpt}, we have that $\BE[\text{cost}(\mathcal{D}, S_h)] \le \BE[\text{cost}(\mathcal{D}, S_{h-1}')]$. So, we have that $\BE[\text{cost}(\mathcal{D}, S_h)] \le \BE[\text{cost}(\mathcal{D}, S_{h-1}')] \le \BE[\text{cost}(\mathcal{D}, S_{h-1})]$, which means that
\begin{equation} \label{eq:cost_S0_S_inequality}
    \BE[\text{cost}(\mathcal{D}, S)] = \BE[\text{cost}(\mathcal{D}, S_s)] \le \BE[\text{cost}(\mathcal{D}, S_{s-1})] \le \cdots \le \BE[\text{cost}(\mathcal{D}, S_1)] \le \BE[\text{cost}(\mathcal{D}, S_0)].
\end{equation}

In summary,
\begin{align*}
    \BE[\text{cost}(\mathcal{D}, S)] &\le \BE[\text{cost}(\mathcal{D}, S_0)] \\
    &\le \rho\left(p-\frac{2}{C}\right) \cdot (1+O(\eps)) \cdot \left[\sum_{j \in \mathcal{D}} \alpha_j - \left(\lambda-\frac{1}{n}\right) \cdot k + \frac{2}{C} \cdot \left(\lambda-\frac{1}{n}\right) \cdot |I_2 \cup I_3|\right] + O(\gamma) \cdot \text{OPT}_{k'} \\
    &\le \rho\left(p-\frac{2}{C}\right) \cdot (1+O(\eps)) \cdot \left(1 + \frac{2/C}{1/2 - p}\right) \cdot \left[\sum_{j \in \mathcal{D}} \alpha_j - \left(\lambda-\frac{1}{n}\right) \cdot k\right] + O(\gamma) \cdot \text{OPT}_{k'} \\
    &\le \rho\left(p-\frac{2}{C}\right) \cdot (1+O(\eps)) \cdot \left(1 + \frac{200}{C}\right) \cdot \left[\sum_{j \in \mathcal{D}} \alpha_j - \left(\lambda-\frac{1}{n}\right) \cdot k\right] + O(\gamma) \cdot \text{OPT}_{k'}.
\end{align*}
Above, the first line follows from Equation \eqref{eq:cost_S0_S_inequality}, the second line follows from Equation \eqref{eq:cost_D_S0}, the third line follows from Equation \eqref{eq:remainder_bound}, and the fourth line follows since $1/2-p \ge 0.01.$
So, with probability at least $1-\frac{1}{10C}$, $|S| \le k$, and by Markov's inequality,
\[\text{cost}(\mathcal{D}, S) \le \rho\left(p-\frac{2}{C}\right) \cdot (1+O(\eps)) \cdot \left(1 + \frac{300}{C}\right) \cdot \left[\sum_{j \in \mathcal{D}} \alpha_j - \left(\lambda-\frac{1}{n}\right) \cdot k\right] + O(\gamma) \cdot \text{OPT}_{k'}\]
with probability at least $\frac{10}{C}$.
So, both of these hold simultaneously with probability at least $\frac{9}{C},$ and by repeating the procedure $O(C)$ times, we will find our desired set $S$ with probability $9/10$.
\end{proof}

Our upper bound on the cost has so far been based on terms of the form $\sum_{j \in \mathcal{D}} \alpha_j - \left(\lambda-\frac{1}{n}\right) \cdot k.$ We note that this value is at most roughly $\text{OPT}_k$. Specifically, we note the following:

\begin{proposition} \label{prop:duality_bound}
    If $\alpha^{(\ell+1)}$ is a feasible solution to $\text{DUAL}(\lambda),$ then $\sum_{j \in \mathcal{D}} \alpha_j - \left(\lambda+\frac{1}{n}\right) \cdot k \le \text{OPT}_k.$
\end{proposition}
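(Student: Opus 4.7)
The plan is to exploit the fact that $\alpha_j \le \alpha_j^{(\ell+1)}$ by construction and then apply LP duality. Recall from Subsection~\ref{subsec:lagrangian_lp} that for any $\lambda \ge 0$ and any $\alpha$ feasible for $\text{DUAL}(\lambda)$, weak duality against the Lagrangian relaxation $\text{LP}(\lambda)$ yields
\[
    \sum_{j \in \mathcal{D}} \alpha_j - \lambda \cdot k \;\le\; L(\lambda) \;\le\; L \;\le\; \text{OPT}_k,
\]
where the middle inequality holds because $\text{LP}(\lambda)$ is obtained from the standard LP by moving the constraint $\sum_i y_i \le k$ into the objective with nonnegative multiplier, so its optimum can only decrease. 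Applying this to the hypothesized feasible point $\alpha^{(\ell+1)}$ gives $\sum_{j \in \mathcal{D}} \alpha_j^{(\ell+1)} - \lambda \cdot k \le \text{OPT}_k$.

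Now I would conclude by monotonicity: because we defined $\alpha_j := \min\bigl(\alpha_j^{(\ell)}, \alpha_j^{(\ell+1)}\bigr)$ for each client $j$, we have $\alpha_j \le \alpha_j^{(\ell+1)}$ pointwise, hence $\sum_{j} \alpha_j \le \sum_{j} \alpha_j^{(\ell+1)}$. Combining this with the bound of the previous paragraph,
\[
    \sum_{j \in \mathcal{D}} \alpha_j - \Bigl(\lambda + \tfrac{1}{n}\Bigr)\cdot k \;\le\; \sum_{j \in \mathcal{D}} \alpha_j^{(\ell+1)} - \lambda \cdot k - \tfrac{k}{n} \;\le\; \text{OPT}_k - \tfrac{k}{n} \;\le\; \text{OPT}_k,
\]
using $k \ge 0$ in the last step. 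There is no real obstacle here; the proposition is essentially a one-line consequence of weak duality combined with the pointwise inequality $\alpha_j \le \alpha_j^{(\ell+1)}$ built into the definition of $\alpha$ for the hybrid graph.
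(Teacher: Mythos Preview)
Your proof is correct and follows essentially the same approach as the paper: use the pointwise inequality $\alpha_j \le \alpha_j^{(\ell+1)}$ (from the definition $\alpha_j = \min(\alpha_j^{(\ell)}, \alpha_j^{(\ell+1)})$) together with weak LP duality for $\text{DUAL}(\lambda)$. The paper's own proof is the same one-liner, except it invokes feasibility of $\alpha^{(\ell+1)}$ for $\text{DUAL}(\lambda+\tfrac{1}{n})$ directly from the roundability definition rather than from the proposition's hypothesis, which makes the extra $-\tfrac{k}{n}$ slack unnecessary; your version, which takes the hypothesis at face value and absorbs the slack via $k \ge 0$, is equally valid.
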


\begin{proof}
    Recall that $\alpha_j = \min(\alpha_j^{(\ell)}, \alpha_j^{(\ell+1)}),$ and that $\{\alpha_j^{(\ell+1)}\}$ is a feasible solution to $\text{DUAL}(\lambda+\frac{1}{n})$. Therefore, by duality, we have that 
\[\sum_{j \in \mathcal{D}} \alpha_j - \left(\lambda+\frac{1}{n}\right) \cdot k \le \sum_{j \in \mathcal{D}} \alpha_j^{(\ell+1)} - \left(\lambda+\frac{1}{n}\right) \cdot k \le \text{OPT}_k. \qedhere\]
\end{proof}

One potential issue is that if our goal is to obtain a good approximation to optimal $k$-means, the $\gamma \cdot \text{OPT}_{k'}$ error, which should be negligible, may appear too large if $k'$ is smaller than $k$. To fix this, we show that $\text{OPT}_{k'} = O(\text{OPT}_k)$ in certain cases, which we later show we will satisfy. For the following lemma, we return to considering a single roundable solution $\mathcal{S} = (\alpha, z, \mathcal{F}_S, \mathcal{D}_S)$, and let $\mathcal{V}$ represent the corresponding set of tight or special facilities corresponding to $\mathcal{S}.$

\begin{lemma} \label{lem:opt_k'_opt_k}
    Let $(\alpha, z, \mathcal{F}_S, \mathcal{D}_S)$ be $(\lambda', k')$-roundable for some $\lambda' \ge 0$, where $\mathcal{F}_S = \emptyset$ and the set of corresponding bad clients is $\mathcal{D}_B = \emptyset$. Define $\{H(\delta)\}$ as the corresponding family of conflict graphs, with some fixed nested quasi-independent set $(I_1, I_2, I_3)$. Suppose that $k \le \min\left(n-1, |I_1|+p \cdot |I_2 \cup I_3|\right)$ for some $p \le 0.49$, and that $k' = \min(k, |I_1|)$. Then, $\text{OPT}_{k'} = O\left(\text{OPT}_k\right)$.
\end{lemma}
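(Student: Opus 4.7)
The case $k' = k$ is trivial since then $\text{OPT}_{k'} = \text{OPT}_k$, so I focus on $k' = |I_1| < k$. Here the hypothesis $k \le |I_1| + p |I_2 \cup I_3|$ with $p \le 0.49$ gives $|I_2 \cup I_3| \ge 2(k - k')$, and I will exhibit the $k'$-sized set $I_1$ itself as a feasible solution satisfying $\text{cost}(\mathcal{D}, I_1) = O(\text{OPT}_k)$, from which $\text{OPT}_{k'} \le \text{cost}(\mathcal{D}, I_1) = O(\text{OPT}_k)$ is immediate. The two ingredients I will use repeatedly are (a) the LMP cost bound of Lemma \ref{lem:expected_cost_S}, specialized to the present single-solution setting where $\mathcal{D}_B = \mathcal{F}_S = \emptyset$ zeroes out every $O(\gamma) \text{OPT}_{k'}$ error term, namely that for any $p \in [0, 1/2)$ the random set $S_p$ produced by Step \ref{step:S} satisfies
\[0 \le \BE[\text{cost}(\mathcal{D}, S_p)] \le \rho(p)(1+O(\eps)) \bigl[\textstyle\sum_j \alpha_j - (\lambda' - \tfrac{1}{n}) \BE[|S_p|]\bigr], \quad \BE[|S_p|] = |I_1| + p |I_2 \cup I_3|;\]
and (b) LP duality, $\sum_j \alpha_j \le \text{OPT}_k + (\lambda' + \tfrac{1}{n}) k$ (Proposition \ref{prop:duality_bound}). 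Throughout, the paper's standing assumption $\text{OPT}_k \ge n \ge k$ absorbs any $O(k/n)$ slack into $O(\text{OPT}_k)$.

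The crux is to show $\lambda'(k - k') = O(\text{OPT}_k)$, and the trick is to run the LMP at a \emph{slightly supercritical} probability. Let $p^* := (k - |I_1|)/|I_2 \cup I_3| \in (0, 0.49]$ be the probability that would make $\BE[|S_{p^*}|]$ equal $k$ exactly, and pick a fixed constant $\delta > 0$ (say $\delta = 1/200$) so that $p^\dagger := p^* + \delta < 1/2$. Then $\BE[|S_{p^\dagger}|] = k + \delta |I_2 \cup I_3|$, and combining the nonnegativity of $\BE[\text{cost}(\mathcal{D}, S_{p^\dagger})]$ with the duality upper bound yields
\[(\lambda' - \tfrac{1}{n})(k + \delta |I_2 \cup I_3|) \le \textstyle\sum_j \alpha_j \le \text{OPT}_k + (\lambda' + \tfrac{1}{n}) k,\]
which simplifies to $\lambda' \delta |I_2 \cup I_3| \le \text{OPT}_k + O(1)$. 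Since $\delta$ is a fixed constant and $|I_2 \cup I_3| \ge 2(k - k')$, this gives $\lambda'(k - k') = O(\text{OPT}_k)$.

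Next I apply the LMP cost bound a second time with $p = 0$: the set $S_0 = I_1$ is deterministic with $|S_0| = k'$, and inspection of the cases in Lemma \ref{lem:main_lmp} shows $\rho(0)$ is an absolute constant (at most $(1+\sqrt{\delta_1})^2$, the same bound one gets from the direct witness argument $c(j, I_1) \le (1+\sqrt{\delta_1})^2 (1+\eps) \alpha_j$). Rewriting the bracket in the LMP bound as $(\sum_j \alpha_j - (\lambda' + \tfrac{1}{n}) k) + \lambda'(k - k') + O(1)$, the first group is $\le \text{OPT}_k$ by duality and the second is $O(\text{OPT}_k)$ by the previous paragraph, so $\text{cost}(\mathcal{D}, I_1) = O(\text{OPT}_k)$, which is what we want.

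The main obstacle is the oversizing step. A naive choice of $p = p^*$ making $\BE[|S_{p^*}|] = k$ exactly produces only $\sum_j \alpha_j \ge \lambda' k - O(1)$, which against the duality upper bound $\sum_j \alpha_j \le \text{OPT}_k + \lambda' k + O(1)$ collapses to the tautology $0 \le \text{OPT}_k + O(1)$ and gives no information about $\lambda' k$ or $\lambda'(k - k')$ individually. The move that breaks the deadlock is to push $\BE[|S|]$ past $k$ by an additive $\delta |I_2 \cup I_3|$ (which the hypothesis $p \le 0.49$ leaves room for, since $p^\dagger$ stays below $1/2$); the resulting duality slack now directly pins down $\lambda' |I_2 \cup I_3|$, and the structural inequality $|I_2 \cup I_3| = \Omega(k-k')$ transfers this to the bound on $\lambda'(k-k')$ needed to close the proof.
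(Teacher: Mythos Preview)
Your argument is correct and rests on the same two ingredients as the paper: the LMP bound at $p=0$ to control $\text{cost}(\mathcal{D},I_1)$, and nonnegativity of the dual expression at a probability above the critical $p^*$, combined with LP duality. The paper packages this slightly differently: rather than your supercritical $p^\dagger=p^*+\delta<\tfrac12$ obtained via $\text{cost}\ge 0$ and the LMP inequality, it invokes Lemma~\ref{lem:RHS_positive_1/2} directly at $p=\tfrac12$ to get $\sum_j\alpha_j\ge\lambda'\bigl(|I_1|+\tfrac12|I_2\cup I_3|\bigr)$, and then takes a $(1-2p):2p$ weighted average of this with the $p=0$ bound to reach $(1-2p)\,\theta\,\text{OPT}_{k'}\le\sum_j\alpha_j-k\lambda'$ in one stroke. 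Your two-step route (first isolate $\lambda'(k-k')=O(\text{OPT}_k)$, then feed it into the $p=0$ bound) is equivalent; the paper's is a line shorter because it avoids introducing $p^\dagger$ and $\rho(p^\dagger)$.

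One small cleanup: in this single-solution setting with $\mathcal{F}_S=\emptyset$, each tight $i$ satisfies $\sum_{j\in N(i)}(\alpha_j-c(j,i))=z_i\ge\lambda'$ with no $\tfrac1n$ loss, so the bracket in your LMP display should carry $\lambda'$ rather than $\lambda'-\tfrac1n$. With $\lambda'$, your ``simplifies to $\lambda'\delta|I_2\cup I_3|\le\text{OPT}_k+O(1)$'' is exact; as written with $\lambda'-\tfrac1n$ you would leave behind a $\delta|I_2\cup I_3|/n$ term that is not a priori $O(1)$ (though it is easily handled by splitting on whether $\lambda'\ge 2/n$).
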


\begin{proof}
    If $k' = k$, the result is trivial. So, we assume that $k' = |I_1|$.

    Since $\mathcal{D}_B$ is empty, we have that every client $j$ has a tight witness $w(j)$ (since there are no special facilities) such that $(1+\eps) \cdot \alpha_j \ge c(j, w(j))$ and $(1+\eps) \cdot \alpha_j \ge \tau_{w(j)}$. In addition, we have that for any $i \in I_1 \cup I_2 \cup I_3$, $i$ is tight which means $\sum_{j \in N(i)} (\alpha_j-c(j, i)) = z_i \in [\lambda', \lambda'+\frac{1}{n}]$. Therefore,
\[\sum_{j \in \mathcal{D}} \left[\alpha_j-\sum_{i \in N(j) \cap I_1} (\alpha_j-c(j, i))\right] = \sum_{j \in \mathcal{D}} \alpha_j - \sum_{i \in I_1} \sum_{j \in N(i)} (\alpha_j-c(j, i)) \le \sum_{j \in \mathcal{D}} \alpha_j - \lambda' \cdot |I_1|.\]
    Then, we can use the LMP approximation to get that
\[\sum_{j \in \mathcal{D}} c(j, I_1) \le \rho_1 \cdot (1+O(\eps)) \cdot \sum_{j \in \mathcal{D}} \left(\alpha_j - \sum_{i \in N(j) \cap I_1} (\alpha_j-c(j, i))\right) \le \rho_1 \cdot (1+O(\eps)) \cdot \left(\sum_{j \in \mathcal{D}} \alpha_j - \lambda' \cdot |I_1|\right),\]
    where $\rho_1 = \rho(0)$, i.e., assuming that no point in $I_2$ or $I_3$ is included as part of the set.
    In addition, we know that if $S$ is created by including all of $I_1$ and each point in $I_2 \cup I_3$ with probability $\frac{1}{2}$, then
\begin{align*}
    \sum_{j \in \mathcal{D}} \alpha_j - \lambda' \cdot \left(|I_1| + \frac{1}{2} |I_2 \cup I_3|\right) &\ge \BE\left[\sum_{j \in \mathcal{D}} \alpha_j - \sum_{i \in S} \sum_{j \in N(i)} (\alpha_j-c(j, i))\right] \\
    &= \sum_{j \in \mathcal{D}} \BE\left[\alpha_j-\sum_{i \in N(j) \cap S} (\alpha_j-c(j, i))\right] \\
    &\ge 0.
\end{align*}
    Above, the first inequality follows since $\sum_{j \in N(i)} (\alpha_j-c(j, i)) = z_i \ge \lambda'$ for any tight $i$, and the final inequality follows because of Lemma \ref{lem:RHS_positive_1/2}.
    
    To summarize, we have that there exists a constant $\theta = \frac{1}{\rho_1 \cdot (1+O(\eps))}$ such that
\begin{align}
    \theta \cdot \sum_{j \in \mathcal{D}} c(j, I_1) &\le \sum_{j \in \mathcal{D}} \alpha_j - |I_1| \cdot \lambda' \label{eq:IS_1}
\intertext{and}
    0 &\le \sum_{j \in \mathcal{D}} \alpha_j-\left(|I_1|+\frac{1}{2} \cdot |I_2 \cup I_3|\right) \cdot \lambda'. \label{eq:IS_2}
\end{align}
    Therefore, by taking a weighted average of Equations \eqref{eq:IS_1} and \eqref{eq:IS_2}, we get
\[(1-2p) \cdot \theta \cdot \text{OPT}_{k'} \le (1-2p) \cdot \theta \cdot \sum_{j \in \mathcal{D}} c(j, I_1) \le \sum_{j \in \mathcal{D}} \alpha_j - \left(|I_1| + p \cdot |I_2 \cup I_3|\right) \cdot \lambda' \le \sum_{j \in \mathcal{D}} \alpha_j - k \cdot \lambda',\]
    where the first inequality is true since $k' = |I_1|$ and the last inequality is true since $|I_1|+p \cdot |I_2 \cup I_3| \ge k.$
    Thus, since $p \le 0.49$ and since $\rho_1 = O(1)$, we have that $\text{OPT}_{k'} = O\left(\sum_{j \in \mathcal{D}} \alpha_j - k \cdot \lambda'\right)$.
    
    Finally, since $\{\alpha_j\}$ is a feasible solution to $\text{DUAL}(\lambda'+\frac{1}{n})$, this means that $\sum_{j \in \mathcal{D}} \alpha_j - k \cdot \lambda' = \sum_{j \in \mathcal{D}} \alpha_j - k \cdot (\lambda'+\frac{1}{n}) + \frac{k}{n} \le \text{OPT}_{k} + \frac{k}{n}.$ However, if $k \le n-1$, then $\frac{k}{n} \le 1 \le \text{OPT}_k$. So, $\text{OPT}_{k'} = O(\text{OPT}_k)$.
\end{proof}

Recall that $H(\delta)$ represents the conflict graph $H^{(\ell, r)}(\delta)$. We will also let $H'(\delta)$ represent the conflict graph $H^{(\ell, r+1)}(\delta)$. In that case, $H'(\delta)$ is the same as $H(\delta)$ except with one vertex removed. Recall $(I_1, I_2, I_3)$ was a nested quasi-independent set of $\{H(\delta)\}$, and let $(I_1', I_2', I_3')$ be a nested quasi-independent set of $\{H'(\delta)\}$, such that $|I_1 \backslash I_1'| \le 1$ and $|I_1|+p_1|I_2 \cup I_3| \ge k > |I_1'|+p_1|I_2' \cup I_3'|$.


\begin{theorem} \label{thm:main}
    Let $C > 0$ be an arbitrarily large constant, and $\eps < 0$ be an arbitrarily small constant. Given the sets $(I, I')$ obtained in Algorithm \ref{alg:main}, in polynomial time we can obtain a solution for Euclidean $k$-means with approximation factor at most
\begin{equation} \label{eq:approximation}
    \left(1+\eps\right) \cdot \max_{r \ge 1} \min\left(\rho\left(\frac{p_1}{r}\right), \rho(p_1) \cdot \left(1+\frac{1}{4r \cdot \left(\frac{r}{2p_1}-1\right)}\right)\right).
\end{equation}
\end{theorem}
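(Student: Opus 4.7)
The plan is to let the instance determine a single parameter $r \ge 1$ and then run the better of two complementary rounding strategies; the claimed $\max_{r \ge 1}\min(\cdot,\cdot)$ form arises because the algorithm always achieves the $\min$ for whatever $r$ the instance realizes. Set
\[
p_\star \;:=\; \frac{k-|I_1|}{|I_2 \cup I_3|}\in (0,p_1], \qquad r \;:=\; \frac{p_1}{p_\star}\ge 1,
\]
using the hypothesis $|I_1|+p_1|I_2 \cup I_3|\ge k$. (If $|I_1|\ge k$ the claim is trivial after dropping surplus centers from $I_1$.) We run both strategies below and output the cheaper solution, padding with arbitrary facilities if $|S|<k$.

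\textbf{Strategy A (shrink the probability on $I$).} Invoke Lemma \ref{lem:cost_bound} on $(I_1,I_2,I_3)$ with probability exactly $p_\star = p_1/r$, so that $|I_1|+p_\star|I_2\cup I_3|=k$. This produces $S\subseteq I_1\cup I_2\cup I_3$ with $|S|\le k$ in polynomial time whose cost is at most
\[
\bigl(1+O(1/C)\bigr)\,\rho\!\left(\tfrac{p_1}{r}-\tfrac{2}{C}\right)\left[\textstyle\sum_{j}\alpha_j-(\lambda-\tfrac1n)k\right]+O(\gamma)\text{OPT}_{k'}.
\]
Proposition \ref{prop:duality_bound} bounds the bracketed term by $\text{OPT}_k+O(k/n)=(1+o(1))\text{OPT}_k$, and Lemma \ref{lem:opt_k'_opt_k} gives $\text{OPT}_{k'}=O(\text{OPT}_k)$. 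Taking $C$ large and $\gamma$ small absorbs both overheads into the $(1+\eps)$ factor, yielding $\text{cost}(\mathcal{D},S)\le (1+\eps)\rho(p_1/r)\,\text{OPT}_k$. (When $|I_2\cup I_3|$ is below the constant threshold of Lemma \ref{lem:cost_bound}, we enumerate the $2^{O(1)}$ subsets of $I_2\cup I_3$ directly.)

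\textbf{Strategy B (interpolate between $I'$ and $I\cup I'$).} Here we fix the probability at $p_1$ and compensate for the undersized $I'$-sample by adding excess mass drawn from $I$'s sample. Let $S_I$ and $S_{I'}$ be independent samples produced by step \ref{step:S} of Algorithm \ref{alg:lmp} at probability $p_1$, so $\BE|S_I|=M\ge k$ and $\BE|S_{I'}|=m<k$. Conditional on $S_I$ and $S_{I'}$, form $\tilde S$ by adding a uniformly random subset of $S_I\setminus S_{I'}$ of size $\min(k-|S_{I'}|,|S_I\setminus S_{I'}|)$ to $S_{I'}$; this guarantees $|\tilde S|\le k$ and, by Proposition \ref{prop:submodular_3} applied conditionally and averaged,
\[
\BE[\text{cost}(\mathcal{D},\tilde S)]\le \tfrac{k-m}{M-m}\,\BE[\text{cost}(\mathcal{D},S_I\cup S_{I'})]+\left(1-\tfrac{k-m}{M-m}\right)\BE[\text{cost}(\mathcal{D},S_{I'})].
\]
Each right-hand cost is bounded by Lemma \ref{lem:expected_cost_S} at probability $p_1$, and the monotonicity $\text{cost}(\mathcal{D},S_I\cup S_{I'})\le \text{cost}(\mathcal{D},S_I)$ together with the slack inequality $\sum_j\alpha_j\ge(\lambda-\tfrac1n)(|I_1|+\tfrac12|I_2\cup I_3|)-O(\gamma)\text{OPT}_{k'}$ (extracted from the proof of Lemma \ref{lem:cost_bound} via Lemma \ref{lem:RHS_positive_1/2}) turns the convex combination into the overhead $1+\tfrac{1}{4r(r/(2p_1)-1)}$ on top of $\rho(p_1)$, after substituting the identities $M-k=(r-1)(k-|I_1|)$ and $p_1|I_2\cup I_3|=r(k-|I_1|)$.

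\textbf{Main obstacle.} The delicate step is Strategy B. Proposition \ref{prop:submodular_3} requires deterministic containment $S_0\subseteq S_1$, whereas $S_{I'}$ is not a subset of $S_I$ in general; the remedy (flagged in the technical overview) is to interpolate between $S_{I'}$ and $S_I\cup S_{I'}$ so that containment holds trivially, then take expectations over the randomness of both samples. A second subtlety is that $\BE[\text{cost}(\mathcal{D},S_I\cup S_{I'})]$ is not directly controlled by the LMP bound since $S_I\cup S_{I'}$ is larger than a single LMP sample; we bound it by $\BE[\text{cost}(\mathcal{D},S_I)]$ and then apply Lemma \ref{lem:expected_cost_S}, accounting for the excess mass via the dual-slack inequality. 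Matching the exact factor $1+\tfrac{1}{4r(r/(2p_1)-1)}$ is a mechanical computation. Finally, passing from $\BE|\tilde S|\le k$ to $|\tilde S|\le k$ with essentially the same cost bound is handled by Chebyshev concentration as in Lemma \ref{lem:cost_bound}, boosted to high probability by polynomially many independent repetitions.
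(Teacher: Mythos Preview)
Your Strategy A is essentially the paper's first rounding option and is fine. The genuine gap is in Strategy B.

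You interpolate between $S_{I'}$ and $S_I\cup S_{I'}$ and claim the submodularity bound with weight $\tfrac{k-m}{M-m}$, where $M=\BE|S_I|$ and $m=\BE|S_{I'}|$. This is incorrect on two counts. First, Proposition~\ref{prop:submodular_3} gives a weight $\tfrac{k_2}{k_1}$ with $k_1=|S_I\setminus S_{I'}|=|S_I\cup S_{I'}|-|S_{I'}|$, not $|S_I|-|S_{I'}|$; conditionally this weight is random, and averaging the product of a random weight with a random cost does not yield your deterministic formula. Second, and more fundamentally, even in expectation the denominator is governed by $|S_I\cup S_{I'}|$, which exceeds $|S_I|$ by an amount controlled by the overlap $|I_1\cap I_1'|$. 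In the paper's parametrization, $|I_1|+p_1|I_2\cup I_3|=k+ct$, $|I_1'|+p_1|I_2'\cup I_3'|=k-t$, and $|I_1\cap I_1'|=k-(1+d)t$; then $|S\cup S'|\le k+(c+d)t$, so the correct weight on the large set is $\tfrac{1}{1+c+d}$, not your $\tfrac{1}{1+c}$. With your weight the convex combination of $(k+ct)$ and $(k-t)$ collapses to exactly $k$ and the overhead vanishes, which would give a flat $\rho(p_1)$-approximation---too good to be true, and indeed your procedure would then output a set of size $k+\tfrac{d}{1+c}t>k$. The actual overhead term is $\tfrac{d}{1+c+d}\cdot(\lambda-\tfrac1n)t$, and only after bounding $\tfrac{d}{(1+d)^2}\le \tfrac14$ does one arrive at the factor $1+\tfrac{1}{4r(r/(2p_1)-1)}$. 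None of your identities involve $d$, so the ``mechanical computation'' cannot produce this factor.

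The paper avoids the averaging issue entirely: it first applies Lemma~\ref{lem:cost_bound} to obtain \emph{deterministic} sets $S$ of size at most $k+ct$ and $S'$ of size at most $k-t$ (at probability $p_1$ on $(I_1,I_2,I_3)$ and $(I_1',I_2',I_3')$ respectively), uses $I_1\cap I_1'\subset S\cap S'$ to bound $|S\cup S'|\le k+(c+d)t$, and then applies Proposition~\ref{prop:submodular_3} once with fixed sets. A secondary omission in your write-up is the reduction via Lemma~\ref{lem:center-reduction}: when $|I_2\cup I_3|$ is below the threshold, enumerating subsets yields a solution of size up to $k+O(C^4)$, not $k$, and you need that lemma to close the gap.
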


\begin{proof}
First, we remark that it suffices to obtain a set of facilities of size at most $k+c_0 \cdot C^4$ with cost at most $K \cdot \text{OPT}_{k-c_0 \cdot C^4}$ for any fixed constant $c_0$ (for all $1 \le k \le n-1$), where $K$ is the value in Equation \eqref{eq:approximation}. Indeed, we can apply Lemma \ref{lem:center-reduction} to obtain a solution of size $k-c_0 \cdot C^4$ and cost $(1+\eps) \cdot K \cdot \text{OPT}_{k-c_0 \cdot C^4}$ in polynomial time, hence obtaining a $(1+\eps) \cdot K$-approximate solution for $(k-c_0 \cdot C^4)$ means clustering for all $1 \le k \le n-1$. Thus, we get the desired approximation ratio for all $k \le n-c_0 C^4-1$, but for $k \ge n-c_0 C^4-1$, we can enumerate
    all the $(2n)^{c_0 C^{4}-1}$ different clusterings
    of the input that have at most $c_0 C^{4}-1$ non-singleton parts and solve $k$-clustering exactly in $n^{O(C^4)}$ time.

    Algorithm \ref{alg:main} stops once we have found the first hybrid conflict graph $\mathcal{V}^{(\ell, r)}$ for some $r \ge 1$ where the corresponding nested quasi-independent set $(I_1^{(\ell, r)}, I_2^{(\ell, r)}, I_3^{(\ell, r)})$ satisfies $|I_1^{(\ell, r)}|+p_1 \cdot |I_2^{(\ell, r)} \cup I_3^{(\ell, r)}| < k$. Let $\mathcal{V}' := \mathcal{V}^{(\ell, r)}$ and let $(I_1', I_2', I_3') := (I_1^{(\ell, r)}, I_2^{(\ell, r)}, I_3^{(\ell, r)})$. In addition, let $\mathcal{V} := \mathcal{V}^{(\ell, r-1)}$ and $(I_1, I_2, I_3) := (I_1^{(\ell, r-1)}, I_2^{(\ell, r-1)}, I_3^{(\ell, r-1)})$. If $r = 1$ then $r-1 = 0$, which may be problematic since our previous lemmas can only be used for $r \ge 1$. However, we note that $I^{(\ell, 0)} = I^{(\ell)} = I^{(\ell-1, p_{\ell-1})}$ if $\ell \ge 1$, and that $I^{(0, 0)} = I^{(0)}$ was previously labeled as $I^{(q)} = I^{(q-1, p_{q-1})}.$ The only exception to this is the case when $\ell = 0, r = 1$, and $I^{(0, 0)}$ is the initialized solution created in the first line of the algorithm. In this case, however, recall from our initialization that $I_1^{(0, 0)} = \mathcal{F}$ is the full set of facilities, and $I_1^{(0, 1)}$ will just be an extension of this set, so $|I_1^{(0, 1)}| \ge |\mathcal{F}| \ge k$. Therefore, $I = (I_1, I_2, I_3)$ and $I' = (I_1', I_2', I_3')$ are both expressible as nested quasi-independent sets of merged conflict graphs. However, if $I = I^{(0, 1)}$ and $I' = I^{(0, 0)}$, then we may need to express $I' = I^{(q-1, p_{q-1}+1)}$ based on a previous labeling, so it is possible that $I$ comes from a $(\lambda+\frac{1}{n}, k')$-roundable solution and $I'$ comes from a $(\lambda, k')$-roundable solution, rather than both nested quasi-independent sets coming from $(\lambda, k')$-roundable solutions.

    First, we show that for the value of $k'$ at the end of the algorithm (which means all solutions found are $(\lambda', k')$-roundable for some $\lambda'$), that $\text{OPT}_{k'} = O(\text{OPT}_k)$. To see why this is the case, note that either $k' = k$, so the claim is trivial, or $k' = |I_1^{(0)}|$ for some $I_1^{(0)}$ that corresponds to a solution that was at some point labeled as $\mathcal{S}^{(0)}$. Note that the corresponding nested quasi-independent set $(I_1^{(0)}, I_2^{(0)}, I_3^{(0)})$ is not the final set $(I_1', I_2', I_3')$, because if so we would have stopped the algorithm before we decided to label $\mathcal{S}^{(0)}$ as such. Therefore, $k \le |I_1^{(0)}|+p_1 \cdot |I_2^{(0)} \cup I_3^{(0)}|$ and we are assuming that $k \le n-1.$ Finally, since $(I_1^{(0)}, I_2^{(0)}, I_3^{(0)})$ arises from a $(\lambda', k')$-roundable solution with no special facilities or bad clients (by Condition \ref{cond:no_special_or_bad}), we may apply Lemma \ref{lem:opt_k'_opt_k} to obtain that $\text{OPT}_{k'} = O(\text{OPT}_k)$.

    Note that $|I_1'| \ge |I_1|-1$, and that $|I_1'| \le |I_1'| + p_1|I_2' \cup I_3'| < k$, which means that $|I_1| < k+1$ so $|I_1| \le k$.
    First, suppose that $|I_1| \ge k - 100C^4$, where we recall that $C$ is an arbitrarily large but fixed constant. In this case, this means that $|I_1'| \ge k-100 C^4 - 1$ and $p_1 \cdot |I_2' \cup I_3'| \le 100C^4+1$ so $|I_2' \cup I_3'| = O(C^4).$ In this case, we can apply Lemma \ref{lem:expected_cost_S} to find a randomized set $I_1' \subset S \subset I_1' \cup I_2' \cup I_3'$ such that
\[\BE[\text{cost}(\mathcal{D}, S)] \le \rho(p_1) \cdot (1+O(\eps)) \cdot \left[\sum_{j \in \mathcal{D}} \alpha_j' - \left(\lambda-\frac{1}{n}\right) \cdot \BE[|S|]\right] + O(\gamma) \cdot \text{OPT}_{k'},\]
    where $\{\alpha_j'\}_{j \in \mathcal{D}}$ corresponds to the merged solution that produces $(I_1', I_2', I_3')$.
    Since $|I_2' \cup I_3'| = O(C^4)$, we can try every possible $I_1' \subset S \subset I_1' \cup I_2' \cup I_3'$ to get a deterministic set $S$ of size at most $|I_1'|+|I_2' \cup I_3'| \le k + O(C^4)$ and size at least $|I_1'| \ge k-O(C^4)$ such that
\begin{align*}
    \text{cost}(\mathcal{D}, S) &\le \rho(p_1) \cdot (1+O(\eps)) \cdot \left[\sum_{j \in \mathcal{D}} \alpha_j' - \left(\lambda-\frac{1}{n}\right) \cdot |S|\right] + O(\gamma) \cdot \text{OPT}_{k'} \\
    &\le \rho(p_1) \cdot (1+O(\eps)) \cdot \left[\sum_{j \in \mathcal{D}} \alpha_j' - \left(\lambda+\frac{2}{n}\right) \cdot |S|\right] + O(1/n) \cdot |S| + O(\gamma) \cdot \text{OPT}_{k'} \\
    &\le \rho(p_1) \cdot (1+O(\eps)) \cdot \text{OPT}_{|S|} + O(\gamma) \cdot \text{OPT}_{k}.
\end{align*}
    The final line follows by Proposition \ref{prop:duality_bound}, since $\text{OPT}_{k'} = O(\text{OPT}_k)$, and since $|S| \le k + O(C^4) \le O(k)$ so $O(1/n) \cdot |S| = O(1) \le O(\gamma) \cdot \text{OPT}_{k}$. Therefore, there exists an absolute constant $c_0$ such that we have a set of size at most $|I_1' \cup I_2' \cup I_3'| \le k+c_0 \cdot C^4$ with cost at most $\rho(p_1) \cdot (1+O(\eps)) \cdot \text{OPT}_{|S|} + O(\gamma) \cdot \text{OPT}_{k} \le \rho(p_1) \cdot (1+O(\eps)) \cdot \text{OPT}_{k-c_0 \cdot C^4}$. 
    As argued in the first paragraph of this proof, 
    this is sufficient since we can apply Lemma~\ref{lem:center-reduction}.
    

    Otherwise, namely when $|I_1| \le k-100 C^4$, we have $|I_2 \cup I_3| \ge 100 C^4$ since $|I_1|+p_1|I_2 \cup I_3| \ge k$.
    Then, recall that $|I_1 \backslash I_1'| \le 1,$ so let $\kappa = |I_1 \backslash I_1'| \in \{0, 1\}$.
    Set $t > 0$ and $c \ge 0$ such that $|I_1|+p_1|I_2 \cup I_3| = k + c \cdot t$ and $|I_1'|+p_1|I_2' \cup I_3'| = k-t.$ Then, $|I_1 \cap I_1'| = k-(1+d) t$ for some $d \ge 0$, so $|I_1| = k-(1+d)t + \kappa$. In this case, $p_1|I_2 \cup I_3| = (1+c+d) t-\kappa$, so if we set $p = p_1 \cdot \frac{(1+d)t-\kappa}{(1+c+d)t-\kappa}$, then $|I_1|+p |I_2 \cup I_3| = k$. Also, since $|I_2 \cup I_3| \ge 100 C^4$, we have that $(1+c+d) t \ge p_1 \cdot 100 C^4 \ge C$, so $p = p_1 \cdot \frac{1+d}{1+c+d} - \eta$ for some $\eta \le 1/C$.
    In this case, assuming that $p > 0.01,$ we can use Lemma \ref{lem:cost_bound} to obtain a solution of size at most $k$ with cost at most
\[\rho\left(p_1 \cdot \frac{1+d}{1+c+d}-\eta-\frac{2}{C}\right) \cdot \left(1+\frac{300}{C}\right) \cdot (1+O(\eps)) \cdot \left[\sum_{j \in \mathcal{D}} \alpha_j - \left(\lambda-\frac{1}{n}\right) \cdot k\right] + O(\gamma) \cdot \text{OPT}_{k'}.\]
    Now, since $p_1 \cdot \frac{1+d}{1+c+d} \le p_1 =.402$, it is straightforward to verify that $\rho$ has bounded derivative. (Indeed, each case produces a function that is continuously differentiable on $[0, 0.5)$, so has bounded derivative on $[0, 0.402]$.) Therefore, since $\eta \le \frac{1}{C}$, the solution in fact has cost at most
\[\rho\left(p_1 \cdot \frac{1+d}{1+c+d}\right) \cdot \left(1+O\left(\frac{1}{C}+\eps\right)\right) \cdot \left[\sum_{j \in \mathcal{D}} \alpha_j - \left(\lambda-\frac{1}{n}\right) \cdot k\right] + O(\gamma) \cdot \text{OPT}_{k'}.\]

    But since $\text{OPT}_{k'} = O(\text{OPT}_k)$, and since $$\sum_{j \in \mathcal{D}} \alpha_j - \left(\lambda-\frac{1}{n}\right) \cdot k = \sum_{j \in \mathcal{D}} \alpha_j - \left(\lambda+\frac{2}{n}\right) \cdot k + \frac{3k}{n} \le \text{OPT}_k + 3 \le (1+O(\gamma)) \cdot \text{OPT}_k,$$ we obtain a solution of cost at most
\begin{equation} \label{eq:main_bound_1}
    \rho\left(p_1 \cdot \frac{1+d}{1+c+d}\right) \cdot \left(1+O\left(\frac{1}{C}+\eps+\gamma\right)\right) \cdot \text{OPT}_k.
\end{equation}

    In addition, note that we can use Lemma \ref{lem:cost_bound} to obtain a solution $S$ for $(k+ct)$-means and $S'$ for $(k-t)$-means. Also, $|S \cup S'| = |S|+|S'|-|S \cap S'| \le |S|+|S'|-|I_1 \cap I_1'| = k+(c+d)t.$ So, if we define $\rho' := \max_{0 \le \eta \le 1/C} \rho(p_1-\eta-\frac{2}{C}) \cdot (1+\frac{300}{C}) \cdot (1+O(\eps)) = \rho(p_1) \cdot (1+O(\frac{1}{C}+\eps))$, then
\begin{equation} \label{eq:Sprime_bound}
    \text{cost}(\mathcal{D},S') \le \rho' \cdot \left(\sum_{j \in D} \alpha_j - \left(\lambda - \frac{1}{n}\right) \cdot (k-t)\right) + O(\gamma) \cdot \text{OPT}_{k'}
\end{equation}
    and
\begin{equation} \label{eq:S_Sprime_bound}
    \text{cost}(\mathcal{D}, S \cup S') \le \text{cost}(\mathcal{D}, S) \le \rho' \cdot \left(\sum_{j \in D} \alpha_j - \left(\lambda - \frac{1}{n}\right) \cdot (k+ct)\right) + O(\gamma) \cdot \text{OPT}_{k'}.
\end{equation}

    Therefore, by Proposition \ref{prop:submodular_3},
    if we randomly add $t$ of the items in $S \backslash S'$, we will get a set $S''$ of size $k$ with expected cost
\begin{align}
    \BE[\text{cost}(\mathcal{D}, S'')] &\le \rho' \cdot \Bigg(\frac{c+d}{1+c+d} \cdot \Bigg(\sum_{j \in D} \alpha_j - \left(\lambda - \frac{1}{n}\right) \cdot (k-t)\Bigg) \nonumber \\
    &\hspace{3.4cm}+ \frac{1}{1+c+d} \cdot \Bigg(\sum_{j \in D} \alpha_j - \left(\lambda - \frac{1}{n}\right) \cdot (k+ct)\Bigg)\Bigg) + O(\gamma) \cdot \text{OPT}_{k'} \nonumber \\
    &= \rho' \cdot \Bigg(\sum_{j \in D} \alpha_j - \left(\lambda - \frac{1}{n}\right) \cdot k + \frac{d}{1+c+d} \cdot \left(\lambda-\frac{1}{n}\right) \cdot t \Bigg) + O(\gamma) \cdot \text{OPT}_k. \label{eq:i_ran_out_of_names}
\end{align}

    Note that if $\lambda-\frac{1}{n} \le 0$, then this means that $\BE[\text{cost}(\mathcal{D}, S'')] \le \rho' \cdot \left(\sum_{j \in \mathcal{D}} \alpha_j\right) + O(\gamma) \cdot \text{OPT}_k,$ and Proposition \ref{prop:duality_bound} tells us that $\sum_{j \in \mathcal{D}} \alpha_j \le \text{OPT}_k + O(k/n) \le (1+O(\gamma)) \cdot \text{OPT}_k$, so the expected cost is at most $\rho(p_1) \cdot (1+O(\frac{1}{C}+\eps+\gamma)) \cdot \text{OPT}_k.$ Alternatively, we may assume that $\lambda-\frac{1}{n} > 0$.

    In this case, note that by Lemmas \ref{lem:RHS_positive_1/2} and \ref{lem:ugly_upper_bound}, we have that
\[\sum_{j \in \mathcal{D}} \alpha_j - \left(\lambda-\frac{1}{n}\right) \cdot \left(|I_1|+\frac{1}{2}|I_2 \cup I_3|\right) + 4\gamma \cdot \text{OPT}_{k'} \ge 0.\]
    We can rewrite $|I_1|+\frac{1}{2}|I_2 \cup I_3|$ as $(k-(1+d)t+\kappa)+\frac{1}{2p_1} \cdot ((1+c+d)t-\kappa) = k +t \cdot \left(\frac{1}{2p_1}(1+c+d) - (1+d)\right) - O(1) \ge k + t \cdot \left(\frac{1}{2p_1}(1+c+d)-(1+d)\right) \cdot \left(1-\frac{O(1)}{C^4}\right)$, where the last inequality is true since $t \cdot \left[\frac{1}{2p_1}(1+c+d)-(1+d)\right] \ge \Omega(t \cdot (1+c+d)) \ge \Omega(|I_2 \cup I_3|) \ge 100 C^4$. Thus, we have that
\[\sum_{j \in \mathcal{D}} \alpha_j - \left(\lambda-\frac{1}{n}\right) \cdot k + 4 \gamma \cdot \text{OPT}_{k'} \ge \left(\lambda-\frac{1}{n}\right) \cdot t \cdot \left(\frac{1}{2p_1} (1+c+d)-(1+d)\right) \cdot \left(1-\frac{O(1)}{C^4}\right).\]
    Therefore, combining the above equation with \eqref{eq:i_ran_out_of_names}, we have that
\begin{align}
    \BE[\text{cost}(\mathcal{D}, S'')] &\le \rho' \cdot \left(\sum_{j \in \mathcal{D}} \alpha_j - \left(\lambda-\frac{1}{n}\right) \cdot k + O(\gamma) \cdot \text{OPT}_{k'}\right) \cdot \left(1 + \frac{\frac{d}{1+c+d}}{\frac{1}{2p_1}(1+c+d)-(1+d)} \cdot \left(1+\frac{O(1)}{C^4}\right)\right) \nonumber \\
    &\le \rho' \cdot \left(\sum_{j \in \mathcal{D}} \alpha_j - \left(\lambda+\frac{2}{n}\right) \cdot k + O(\gamma) \cdot \text{OPT}_k\right) \cdot \left(1 + \frac{\frac{d}{1+c+d}}{\frac{1}{2p_1}(1+c+d)-(1+d)} \cdot \left(1+\frac{O(1)}{C^4}\right)\right) \nonumber \\
    &\le \rho' \cdot \left(1+O(\frac{1}{C}+\gamma)\right) \cdot \left(1 + \frac{\frac{d}{1+c+d}}{\frac{1}{2p_1}(1+c+d)-(1+d)}\right) \cdot \text{OPT}_k. \label{eq:main_bound_2}
\end{align}
    
    If we set $r \ge 1$ such that $c = (r-1)(1+d)$, then $\frac{1+d}{1+c+d} = \frac{1}{r}$ and $$\frac{\frac{d}{1+c+d}}{\frac{1}{2p_1}(1+c+d)-(1+d)} = \frac{d}{r(1+d)} \cdot \frac{1}{(\frac{r}{2p_1}-1) \cdot (1+d)} = \frac{d}{(1+d)^2} \cdot \frac{1}{r \cdot (\frac{r}{2p_1}-1)} \le \frac{1}{4r \cdot (\frac{r}{2p_1}-1)}.$$ So, 
    by combining Equations \eqref{eq:main_bound_1} and \eqref{eq:main_bound_2},
    we can always guarantee an approximation factor of at most
\begin{equation} \label{eq:main_bound}
    \left(1+O(\frac{1}{C}+\eps+\gamma)\right) \cdot \max_{r \ge 1} \min\left(\rho\left(\frac{p_1}{r}\right), \rho(p_1) \cdot \left(1+\frac{1}{4r \cdot \left(\frac{r}{2p_1}-1\right)}\right)\right).
\end{equation}
    This approximation factor also holds in the case when $\lambda-\frac{1}{n} \le 0$, by setting $r = 1$. So, by letting $C$ be an arbitrarily large constant and $\gamma \ll \eps \ll 1$ be arbitrarily small constants, the result follows.
\end{proof}

    Since we have set $p_1 = 0.402$ and $\delta_1 = \frac{4+8\sqrt{2}}{7}, \delta_2 = 2, \delta_3 = 0.265$, by Proposition \ref{prop:more_bash} we have $\rho(p_1) = 3+2\sqrt{2}$. If $r \ge 3.221,$ one can verify that $(3+2\sqrt{2}) \cdot \left(1+\frac{1}{4r \cdot (\frac{r}{2p_1} - 1)}\right) \le 5.979$. Alternatively, if $1 \le r \le 3.221,$ then $\frac{p_1}{r} \ge \frac{.402}{3.221} \ge .1248,$ and it is straightfoward to verify that $\rho(p) \le 5.979$ for all $p \in [.1248, .402]$ using Proposition \ref{prop:more_bash}.\footnote{We remark that while Proposition \ref{prop:more_bash} follows from Lemma \ref{lem:more_bash}, Lemma \ref{lem:more_bash} only depends on Lemma \ref{lem:main_lmp}, so there is no circular reasoning.} Hence, we have a $\boxed{5.979}$-approximation to $k$-means.

\subsection{Improving the approximation further} \label{subsec:k_means_improved}

First, we define some important quantities. For any client $j$, we define $A_j := \alpha_j - \sum_{i \in \bar{N}(j) \cap I_1} (\alpha_j-c(j, i))$, and we define $B_j := \sum_{i \in \bar{N}(j) \cap (I_2 \cup I_3)} (\alpha_j-c(j, i))$. Note that $B_j \ge 0$ always.

We split the clients into $5$ groups.
Let $\mathcal{D}_1$ be the set of all clients $j \not\in \mathcal{D}_B$ corresponding to subcases \ref{eq:1.a}, \ref{eq:1.c}, \ref{eq:1.d}, \ref{eq:1.g.ii}, 1.h, \ref{eq:2.a}, \ref{eq:3.a}, \ref{eq:4.a.i}, \ref{eq:4.b.i}, and \ref{eq:4.c}, as well as the clients in \ref{eq:5.a} where there do not exist $i_2 \in \bar{N}(j) \cap I_2$ and $i_3 \in \bar{N}(j) \cap I_3$ such that $q(i_3) = i_2$. (In the casework, our choice of $a$ is $|\bar{N}(j) \cap S|$ rather than $N(j) \cap S$, similar for $b$ and $c$.) Let $Q_1$ be the sum of $A_j$ for these clients, and $R_1$ be the sum of $B_j$ for these clients. Next, let $\mathcal{D}_2$ be the set of all clients $j \not\in \mathcal{D}_B$ corresponding to subcases \ref{eq:1.b}, \ref{eq:1.e}, \ref{eq:4.a.ii}, and \ref{eq:4.b.ii}. Let $\mathcal{D}_3$ be the set of all clients $j \not\in \mathcal{D}_B$ corresponding to subcase \ref{eq:1.g.i}. Let $\mathcal{D}_4$ be the set of all clients $j \not\in \mathcal{D}_B$ corresponding to subcase \ref{eq:2.d}, further restricted to $c(j, i_2) + c(j, i_3) \ge 0.25 \cdot \alpha_j$ (or equivalently in the language of Case \ref{eq:2.d} in Lemma \ref{lem:main_lmp}, $\beta^2+\gamma^2 \ge 0.25$). Finally, let $\mathcal{D}_5$ be the set of all bad clients $j \in \mathcal{D}_B$, as well as all remaining subcases (\ref{eq:2.b}, \ref{eq:2.c}, \ref{eq:2.d} when $\beta^2+\gamma^2 < 0.25$, \ref{eq:3.b}, \ref{eq:3.c}, and the clients in \ref{eq:5.a} not covered by $\mathcal{D}_1$). Note these cover all cases (recall that 1.f is a non-existent case). Finally, we define $Q_2, Q_3, Q_4, Q_5, R_2, R_3, R_4, R_5$ similarly to how we defined $Q_1$ and $R_1.$


Now, we have the following result, which improves over Lemma \ref{lem:RHS_positive_1/2}.

\begin{lemma} \label{lem:RHS_positive_1/2_improved}
    For any client $j$, $A_j-\frac{1}{2} B_j \ge 0$. In addition, if the client $j$ corresponds to any of the subcases in case $1$ or case $4$, or to subcases $2.a$ or $3.a$, then $A_j-B_j \ge 0$. Also, if the client $j$ corresponds to subcase $2.d$ where $\beta^2+\gamma^2 \ge 0.25$, then $A_j-\frac{4}{7}B_j \ge 0$.
\end{lemma}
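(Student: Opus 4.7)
The plan is to read all three inequalities off the pairwise separation structure of the set $\bar{N}(j) \cap (I_1 \cup I_2 \cup I_3)$. By Lemma~\ref{lem:tau_bigger_than_alpha}, every $i \in \bar{N}(j)$ satisfies $\tau_i \ge \alpha_j$, so an edge in $H(\delta)$ among these vertices means squared distance at most $\delta \alpha_j$. Combined with Algorithm~\ref{alg:lmp} (which makes $I_1$ independent in $H(\delta_1)$, $I_1 \cup I_2$ independent in $H(\delta_2)$, $I_3$ independent in $H(\delta_2)$, and non-adjacent to $I_1 \cup (I_2 \setminus \{q(i)\})$ in $H(\delta_2)$), this forces any two distinct points of $\bar{N}(j) \cap (I_1 \cup I_2 \cup I_3)$ to be at distance at least $\sqrt{2 \alpha_j}$, with the single exception of a pair $(i_3, q(i_3))$ with $i_3 \in I_3$, where only the weaker bound $\sqrt{\delta_3 \alpha_j}$ is guaranteed.

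For the first claim, I would invoke exactly the deterministic inequality established in the proof of Lemma~\ref{lem:RHS_positive_1/2}, namely
\[
\sum_{i \in \bar{N}(j) \cap I_1} (\alpha_j - c(j,i)) + \tfrac{1}{2} \sum_{i \in \bar{N}(j) \cap (I_2 \cup I_3)} (\alpha_j - c(j,i)) \le \alpha_j,
\]
obtained by averaging the $\sqrt{2\alpha_j}$-separation bounds over $\bar{N}(j) \cap (I_1 \cup I_2)$ and $\bar{N}(j) \cap (I_1 \cup I_3)$ via the standard $\tfrac{1}{2h}\sum \|i-i'\|^2 \le \sum c(j,i)$ identity. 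This inequality rearranges to exactly $A_j - \tfrac{1}{2}B_j \ge 0$.

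For the second claim, in each of the listed cases we have $a = 0$, so $A_j = \alpha_j$, and it suffices to show $B_j \le \alpha_j$. I would handle the four families in turn. In case~1 ($b = 1, c = 0$) the sum collapses to the single nonnegative term $\alpha_j - c(j, i_2) \le \alpha_j$. In case~4 ($b = 0$) only points of $\bar{N}(j) \cap I_3$ appear, and since $I_3$ is independent in $H(\delta_2)$ these are pairwise $\sqrt{2\alpha_j}$-separated, so the averaging identity yields $B_j \le \alpha_j$. In subcases~2.a and~3.a, the defining condition $c_1 = 0$ guarantees that no $i_3 \in \bar{N}(j) \cap I_3$ has $q(i_3) \in \bar{N}(j) \cap I_2$; hence every pair in $\bar{N}(j) \cap (I_2 \cup I_3)$ retains the full $\sqrt{2\alpha_j}$-separation and the averaging identity again produces $B_j \le \alpha_j$.

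For the third claim, in subcase~2.d we have $a = 0$, $b = 1$, $c = 1$, and the unique $i_3 \in \bar{N}(j) \cap I_3$ satisfies $q(i_3) = i_2$. Normalising $\alpha_j = 1$ and reusing the notation $\beta = d(j, i_2)$, $\gamma = d(j, i_3)$ from the case analysis of Lemma~\ref{lem:main_lmp}, we get $A_j = 1$ and $B_j = (1 - \beta^2) + (1 - \gamma^2) = 2 - \beta^2 - \gamma^2$, whence
\[
A_j - \tfrac{4}{7} B_j \;=\; 1 - \tfrac{4}{7}(2 - \beta^2 - \gamma^2) \;=\; \tfrac{1}{7}\bigl(4(\beta^2 + \gamma^2) - 1\bigr),
\]
which is nonnegative precisely when $\beta^2 + \gamma^2 \ge \tfrac{1}{4}$, i.e.\ the hypothesis of this part. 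There is no real obstacle in the proof: it is entirely a bookkeeping consequence of the independent-set geometry plus the one-line numerical identity for case~2.d, with the constant $\tfrac{4}{7}$ tuned exactly to match the threshold $\beta^2 + \gamma^2 \ge 0.25$ used to carve out $\mathcal{D}_4$.
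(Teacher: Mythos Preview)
Your proof is correct and follows essentially the same approach as the paper's: reduce the first claim to Lemma~\ref{lem:RHS_positive_1/2}, use $a=0$ together with the $\sqrt{2\alpha_j}$ pairwise separation of $\bar{N}(j)\cap(I_2\cup I_3)$ (available in cases~1, 4, 2.a, 3.a) to get $B_j\le\alpha_j=A_j$, and compute $B_j=2-\beta^2-\gamma^2\le 1.75$ directly in case~2.d. One small slip in your opening paragraph: an edge in $H(\delta)$ among vertices of $\bar{N}(j)$ does \emph{not} imply squared distance at most $\delta\alpha_j$ (since $\min(\tau_i,\tau_{i'})\ge\alpha_j$, the bound goes the wrong way); what you actually use, and what is true, is the contrapositive that a \emph{non}-edge forces $c(i,i')>\delta\min(\tau_i,\tau_{i'})\ge\delta\alpha_j$.
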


\begin{remark}
    As in Lemma \ref{lem:RHS_positive_1/2}, this lemma holds even for bad clients $j \in \mathcal{D}_B$.
\end{remark}

\begin{proof}
    The proof that $A_j-\frac{1}{2} B_j \ge 0$ for any client $j$ is identical to that of Lemma \ref{lem:RHS_positive_1/2}. So, we focus on the next two claims.
    For the subcases in Case 1, note that $\bar{N}(j) \cap I_3 = \emptyset$, so we just need to show that $\alpha_j - \sum_{i \in \bar{N}(j) \cap I_1} (\alpha_j-c(j, i)) \ge \sum_{i \in \bar{N}(j) \cap I_2} (\alpha_j-c(j, i))$, which is implied by Equation \eqref{eq:I2_bound}.
    Likewise, for the subcases in Case 4, note that $\bar{N}(j) \cap I_2 = \emptyset$, so we just need to show that $\alpha_j - \sum_{i \in \bar{N}(j) \cap I_1} (\alpha_j-c(j, i)) \ge \sum_{i \in \bar{N}(j) \cap I_3} (\alpha_j-c(j, i))$, which is implied by Equation \eqref{eq:I3_bound}.
    
    We recall that in subcases 2.a and 3.a, we noted in both cases that the points in $\bar{N}(j) \cap I_2$ and $\bar{N}(j) \cap I_3$ were all separated in $H(\delta_2)$, and that $\bar{N}(j) \cap I_1 = \emptyset$ since $a = 0$.
    So, we have that $\alpha_j-\sum_{i \in \bar{N}(j) \cap I_1} (\alpha_j-c(j, i)) = \alpha_j \ge \sum_{i \in \bar{N}(j) \cap (I_2 \cup I_3)} (\alpha_j-c(j, i))$ in both cases.
    
    Finally, we consider subcase 2.d when $\beta^2+\gamma^2 \ge 0.25.$ In this case, we have (when $\alpha_j = 1$) that $1-\sum_{i \in \bar{N}(j) \cap I_1} = 1$, and $\sum_{i \in \bar{N}(j) \cap (I_2 \cup I_3)} (1-c(j, i)) = (1-c(j, i_2))+(1-c(j, i_3)) = 2-\beta^2-\gamma^2$, which is at most $1.75$ if $\beta^2+\gamma^2 \ge 0.25.$ So, for general $\alpha_j$, we have that $\alpha_j-\sum_{i \in \bar{N}(j) \cap I_1} (\alpha_j-c(j, i)) = \alpha_j \ge \frac{4}{7} \cdot \sum_{i \in \bar{N}(j) \cap (I_2 \cup I_3)} (\alpha_j-c(j, i)).$
\end{proof}

Therefore, we have that
\begin{equation} \label{eq:S_i_bound}
    R_1 \le Q_1, \hspace{1cm} R_2 \le Q_2, \hspace{1cm} R_3 \le Q_3, \hspace{1cm} R_4 \le 1.75 \cdot Q_4, \hspace{0.5cm} \text{and} \hspace{0.5cm} R_5 \le 2 Q_5.
\end{equation}

Next, we define $\rho^{(i)}(p)$ to be the maximum fraction $\rho(p)$ obtained from the casework corresponding to the (not bad) clients in $\mathcal{D}_i$. (Note that $\rho(p) = \max\left(\rho^{(1)}(p), \rho^{(2)}(p), \rho^{(3)}(p), \rho^{(4)}(p), \rho^{(5)}(p)\right)$.) We have the following result:

\begin{lemma} \label{lem:more_bash}
    Let $\delta_1 = \frac{4+8\sqrt{2}}{7},$ $\delta_2 = 2$, and $\delta_3 = 0.265$. Then, for all $p \in [0.096, 0.402],$ we have that $\rho^{(1)}(p) \le 3+2\sqrt{2}$, $\rho^{(2)}(p) \le 1+2p+(1-p)\cdot \delta_1+2\sqrt{2p^2 +(1-p)\cdot \delta_1}$, and $\rho^{(5)}(p) \le 5.68$.
\end{lemma}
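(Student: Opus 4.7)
The proof reduces to individually verifying each of the three stated inequalities by substituting $\delta_1 = (4+8\sqrt{2})/7$, $\delta_2 = 2$, $\delta_3 = 0.265$ into the subcase bounds derived in the proof of Lemma~\ref{lem:main_lmp}, and checking they hold uniformly on $p \in [0.096, 0.402]$. By the definition of $\rho^{(i)}(p)$ as the maximum ratio attained over non-bad clients in $\mathcal{D}_i$, it suffices to control each relevant subcase expression separately.

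The bound on $\rho^{(2)}(p)$ is immediate: subcases \ref{eq:1.b}, \ref{eq:1.e}, \ref{eq:4.a.ii}, and \ref{eq:4.b.ii} that comprise $\mathcal{D}_2$ all produce the identical expression $1 + p\delta_2 + (1-p)\delta_1 + 2\sqrt{p^2\delta_2 + (1-p)\delta_1}$, which under $\delta_2 = 2$ is exactly the claimed bound, verbatim and without any $p$-dependent case analysis.

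For $\rho^{(1)}(p) \le 3 + 2\sqrt{2}$, I will enumerate the subcases comprising $\mathcal{D}_1$. Subcases \ref{eq:1.a}, 1.h (which reduces to \ref{eq:1.a}), \ref{eq:4.a.i}, and \ref{eq:4.b.i} all yield the constant bound $(1+\sqrt{\delta_2})^2 = (1+\sqrt{2})^2 = 3+2\sqrt{2}$, matching the target exactly and independently of $p$. For \ref{eq:1.c}, \ref{eq:1.d}, \ref{eq:1.g.ii}, \ref{eq:2.a}, \ref{eq:3.a}, \ref{eq:4.c}, and the $\mathcal{D}_1$-portion of \ref{eq:5.a}, the bound is a rational function of $p$ together with discrete parameters ($c_2$ in \ref{eq:2.a}, $b$ in \ref{eq:3.a}, $c$ in \ref{eq:4.c}, and $a,h$ in \ref{eq:5.a}); in each of these I will argue monotonicity or extremality in the auxiliary parameters, pinning the worst case to a closed form rational in $p$ that can then be checked on the interval by clearing denominators. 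A key observation is that several bounds (notably \ref{eq:1.d} and \ref{eq:1.g.ii}) exceed $3+2\sqrt{2}$ at $p=0$ but cross below it precisely once $p \gtrsim 0.09$, which explains the lower endpoint $0.096$ of the allowed interval; this crossing gives the binding constraint on the lower endpoint.

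For $\rho^{(5)}(p) \le 5.68$, the subcases to check are \ref{eq:2.b}, \ref{eq:2.c}, the portion of \ref{eq:2.d} with $\beta^2+\gamma^2 < 0.25$, \ref{eq:3.b}, \ref{eq:3.c}, and the $\mathcal{D}_5$-portion of \ref{eq:5.a} (bad clients contribute no ratio term by definition of $\rho^{(5)}$). These subcases genuinely produce a strictly better bound than $3+2\sqrt{2}$, which is what enables the improved analysis of Subsection~\ref{subsec:k_means_improved}. The principal obstacle is the parametric dependence on $b, c, c_1, c_2$ in Cases 2 and 3: my strategy is to show that for each such expression the worst case is attained either at small bounded parameter values (verified by finite enumeration) or in the limit $c, b \to \infty$ (where expressions like $(1-2p)^c$ vanish and the bound collapses to a univariate function of $p$), then confirm $\le 5.68$ by a direct polynomial inequality on $[0.096, 0.402]$. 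The accompanying Desmos files referenced in Appendix~\ref{app:files} serve as both guide and sanity check, flagging which subcase is binding at each $p$ so that the analytic verification can be focused on the tight regimes and handled by monotonicity elsewhere.
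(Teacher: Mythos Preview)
Your outline is correct and mirrors the paper's proof almost exactly: the paper handles $\rho^{(2)}$ by the identical observation you give, handles $\rho^{(1)}$ by checking the finitely many univariate subcase bounds (with monotonicity in the discrete parameters $c,c_2,b$ reducing \ref{eq:2.a}, \ref{eq:4.c} to a single value to check, and an explicit algebraic simplification of the \ref{eq:5.a} fraction via $T_1-T_1^2+T_3 = (\tfrac{\delta_1}{2}-1)a(a-1)+2p(1-2p)h$), and handles $\rho^{(5)}$ by the same finite-enumeration-plus-tail-bound strategy you describe for \ref{eq:2.b}, \ref{eq:2.c}, \ref{eq:3.b}, \ref{eq:3.c}. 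Two technical points your outline leaves implicit and that the paper makes explicit: for the $\mathcal{D}_5$-portion of \ref{eq:5.a}, the paper observes that the defining condition ``$\exists\, i_2\in\bar N(j)\cap I_2,\ i_3\in\bar N(j)\cap I_3$ with $q(i_3)=i_2$'' forces $h\ge 1$ deterministically (exactly one of $i_2,i_3$ survives each coin flip), which is what drops the bound from $3+2\sqrt{2}$ to $5.68$; and for the $\beta^2+\gamma^2<0.25$ portion of \ref{eq:2.d}, the discrete-parameter strategy does not apply, and the paper instead uses Cauchy--Schwarz on $(\beta(1+\sqrt{\delta_1/\delta_3})+\gamma\sqrt{\delta_1/\delta_3})^2 \le \zeta\cdot((1+\sqrt{\delta_1/\delta_3})^2+\delta_1/\delta_3)$ with $\zeta=\beta^2+\gamma^2$, then sets $\zeta=0.25$.
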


\begin{proof}
    We start by considering $\rho^{(1)}(p)$, covered by subcases \ref{eq:1.a}, \ref{eq:1.c}, \ref{eq:1.d}, \ref{eq:1.g.ii}, 1.h, \ref{eq:2.a}, \ref{eq:3.a}, \ref{eq:4.a.i}, \ref{eq:4.b.i}, and \ref{eq:4.c}, and certain subcases of \ref{eq:5.a}. All subcases except \ref{eq:2.a}, \ref{eq:4.c}, and \ref{eq:5.a} can easily be verified (see our Desmos file for K-means Case 1, the link is in Appendix \ref{app:files}). For subcase \ref{eq:2.a}, we have to verify it for all choices of $c \ge 1$. However, it is simple to see that the numerator of the fraction decreases as $c_2$ increases whenever $p \in [0, 0.5]$, so in fact we just have to verify it for $c = c_2 = 1$, which is straightforward. For subcase \ref{eq:4.c}, we have to verify it for all choices of $c \ge 2$. For $c = 2$ it is straightforward to verify. For $c \ge 3,$ since $2.5+\sqrt{2} \le 3+2\sqrt{2},$ it suffices to show
\[\frac{\left(\frac{1}{2}(1-2p)+\frac{1}{2}(1-2p)^c\right) \cdot (1+\sqrt{\delta_1})^2 - \left(\frac{1}{2} + \frac{1}{2}(1-2p)^c\right) +p(1+\sqrt{2})^2+1}{1-p} \le 3+2\sqrt{2},\]
    where we have taken the fraction from \ref{eq:4.c} and added back a $\frac{p}{c}$ term to the numerator. Now, this fraction is decreasing as $c$ increases, so it suffices to verify it for $c = 3$, which is straightforward.
    
    The last case for $\rho^{(1)}(p)$ is Case \ref{eq:5.a}. We show that in all cases the fraction is bounded by $3+2\sqrt{2}$ for $p \in [0.096, 0.402]$, and if $h \ge 1$ then the fraction can further be bounded by $5.68$. This is clearly sufficient for bounding $\rho^{(1)}(p)$. It will also be important in bounding $\rho^{(5)}(p)$ - indeed, if there exist $i_2 \in \bar{N}(j) \cap I_2$ and $i_3 \in \bar{N}(j) \cap I_3$ such that $q(i_3) = i_2,$ then regardless of the outcomes of the initial fair coins, $h \ge 1$ since exactly one of $i_3$ or $q(i_3) = i_2$ will contribute to the value of $h$.
    
    First, we note that $T_1-T_1^2+T_3$ can be rewritten as
\[a+2ph-(a^2+4aph+4p^2h^2) + \frac{\delta_1}{2} \cdot (a^2-a) + 4pah + 4p^2 \cdot h(h-1) = \left(\frac{\delta_1}{2}-1\right) \cdot a(a-1) + 2p(1-2p) \cdot h.\]
    In the case where $a=1$ and $h \ge 1$, we can therefore simplify the fraction in \eqref{eq:5.a} to $\frac{1}{a+2ph} + \frac{2ph}{2p(1-2p) \cdot h} = \frac{1}{a+2ph} + \frac{1}{1-2p} \le \frac{1}{1+2p}+\frac{1}{1-2p} = \frac{2}{1-4p^2}.$ This is at most $5.68$ for any $p \le 0.402$.
    When $a \ge 2$, we can write the fraction as 
\begin{equation} \label{eq:5a_simplified}
    \frac{1}{a+2ph} + \frac{(a-1)+(2p)h}{\left(\frac{\delta_1}{2}-1\right) \cdot a(a-1) + (2p)h \cdot (1-2p)}.
\end{equation}
    When $a \ge 2$ and $h = 0$, \eqref{eq:5a_simplified} can be simplified as
\[\frac{1}{a} + \frac{1}{a \cdot \left(\frac{\delta_1}{2}-1\right)} \le  \frac{1}{2} \left(1 + \frac{1}{\frac{\delta_1}{2}-1}\right) = 3+2\sqrt{2}.\]
\begin{align*}
    \frac{1}{a+2ph} + \frac{(a-1)+(2p)h}{\left(\frac{\delta_1}{2}-1\right) \cdot a(a-1) + (2p)h \cdot (1-2p)} &= \frac{1}{2+2ph} + \frac{[1+2p]+2p(h-1)}{\left[\left(\frac{\delta_1}{2}-1\right) \cdot 2+2p\right] + 2p(h-1) \cdot (1-2p)}\\
    &\le \frac{1}{a+2p} + \max\left(\frac{1}{\left(\frac{\delta_1}{2}-1\right) \cdot a}, \frac{1}{1-2p}\right).
\end{align*}
    When $a = 2$ and $h \ge 1$, we can rewrite \eqref{eq:5a_simplified} as
\begin{align*}
    \frac{1}{2+2ph} + \frac{1+2ph}{\delta_1-2 + 2ph(1-2p)} &= \frac{1}{2+2ph} + \frac{1+2p + 2p(h-1)}{\delta_1-2 + 2p(1-2p) + 2p(1-2p)(h-1)} \\
    &\le \frac{1}{2+2p} + \max\left(\frac{1+2p}{\delta_1-2+2p(1-2p)}, \frac{1}{1-2p}\right),
\end{align*}
    which is easily verifiable to be at most $5.68$ for $p\in [0.096, 0.402]$.
    When $a \ge 3$ and $h \ge 1$, \eqref{eq:5a_simplified} is at most
\[\frac{1}{3} + \max\left(\frac{1}{3 \cdot \left(\frac{\delta_1}{2}-1\right)}, \frac{1}{1-2p}\right),\]
    which is easily verifiable to be at most $5.68$ for $p\in [0.096, 0.402]$.
    The final case is when $a = 1, h = 0$, but here we saw in our analysis of \ref{eq:5.a} that the fraction is at most $1$, or that the numerator and denominator are both $0$.
    
    Next, consider $\rho^{(2)}(p)$, which is covered by subcases \ref{eq:1.b}, \ref{eq:1.e}, \ref{eq:4.a.ii}, and \ref{eq:4.b.ii}. Indeed, since $\delta_2 = 2$, these all have the exact same bound of $1+2p+(1-p)\delta_1 + 2\sqrt{2p^2+(1-p)\delta_1}$.
    
    Finally, we deal with $\rho^{(5)}(p)$, which deals with subcases \ref{eq:2.b}, \ref{eq:2.c}, \ref{eq:3.b}, \ref{eq:3.c}, and \ref{eq:5.a}, along with \ref{eq:2.d} when $\beta^2+\gamma^2 < 0.25$. 
    
    Subcase \ref{eq:2.b} can be easily verified to be at most $5.664$ in the range $p \in [0.096, 0.402]$ when $c_2 = 0$ and $c = c_1$ is between $2$ and $5$. Beyond this, we assume that $c_1 \ge 6$, so we can apply the crude bound that the fraction is at most
\[\frac{\frac{1}{2}+\left(1-p-\frac{1}{2}\cdot\left(1-(1-2p)^{6}\right)\right)\cdot\left(1+\sqrt{\delta_1}\right)^{2}}{1-2p},\]
    which is at most $5.68$ for $p \in [0.096, 0.402]$. It is easy to verify that the fraction in Subcase \ref{eq:2.c} is at most $5.68$ for $p \in [0.096, 0.402]$. 
    
    Subcase \ref{eq:3.b} is easy to verify for $2 \le b \le 5.$ For $b \ge 6$, we can apply the crude bound that the fraction is at most
\[\frac{(1-p)^{5}\cdot(1-2p)\cdot\left(1+\sqrt{\delta_1}\right)^2+1}{1-\left(1+\frac{2-\delta_3}{7}\right)p},\]
    which trivially satisfies the desired bounds. Finally, we note that in Subcase \ref{eq:3.c}, the fraction decreases as $c_1$ and $c_2$ increase, so we may assume that either $c_1=c_2 = 1$ or $c_1 = 2$ and $c_2 = 0$. These are easy to verify for $2 \le b \le 5$, and for $b \ge 6$, we may apply a crude bound to say the fraction is at most
\[\frac{1+(1-p)^5\cdot (1-2p) \cdot \left((1+\sqrt{\delta_1})^2-1\right)}{1-2p}\]
    as long as $c_1 \ge 1$ and $c_2 \ge 0.$ This is at most $5.68$ in the range $[0.096, 0.402]$.
    
    Subcase \ref{eq:5.a} was dealt with previously (as we only have to consider when $h \ge 1$), so the final case is \ref{eq:2.d} when $\beta^2+\gamma^2 < 0.25$. In this case, we recall the fraction is
\[\frac{(1-2p) \cdot \min(1+\sqrt{\delta_1}, \max(\beta, \gamma)+\sqrt{\delta_1 \cdot t})^2 + p \cdot (\beta^2+\gamma^2)}{1-(2-\beta^2-\gamma^2) \cdot p},\]
    where $t \ge 1$, $\beta+\gamma \ge \sqrt{\delta_3 \cdot t}$, and also $\beta^2+\gamma^2 \le 0.25$. By the symmetry of $\beta$ and $\gamma$, we may replace $\max(\beta, \gamma)$ with $\beta$. So, by defining $\zeta = \beta^2+\gamma^2,$ we can upperbound the above expression by
\[\frac{(1-2p) \cdot (\beta+\sqrt{\delta_1 \cdot t})^2 + p \cdot \zeta}{1-2p + p \cdot \zeta} \le \frac{(1-2p) \cdot (\beta+\sqrt{\delta_1/\delta_3} \cdot(\beta+\gamma))^2 + p \cdot \zeta}{1-2p + p \cdot \zeta},\]
    since $\zeta \le 0.25$ and since $\beta+\gamma \ge \sqrt{\delta_3 \cdot t}$. By Cauchy-Schwarz, $\left(\beta \cdot x + \gamma \cdot y\right)^2 \le (\beta^2+\gamma^2) \cdot(x^2+y^2) \le \zeta \cdot (x^2+y^2)$. So, we can bound the above expression by
\[\frac{(1-2p) \cdot \zeta \cdot \left((1+\sqrt{\delta_1/\delta_3})^2 + \delta_1/\delta_3\right) + p \cdot \zeta}{1-2p + p \cdot \zeta} = \frac{(1-2p) \cdot \left((1+\sqrt{\delta_1/\delta_3})^2 + \delta_1/\delta_3\right) + p}{\frac{1-2p}{\zeta} + p}.\]
    For $p \le 0.5,$ this fraction clearly increases with $\zeta$, so we maximize this when $\zeta = 0.25$. When setting $\zeta = 0.25$, this can easily be verified to be at most $5.68$ for all $p \in [0.096, 0.5]$.
    
    This concludes all cases, thus proving the proposition.
\end{proof}


Next, we recall Lemma \ref{lem:ugly_upper_bound}. First, by setting $S$ to be $I_1$ in Lemma \ref{lem:ugly_upper_bound}, we obtain that 
\begin{equation} \label{eq:Qi_sum_upper_bound}
    \sum_{i = 1}^{5} Q_i = \sum_{j \in \mathcal{D}} \left(\alpha_j - \sum_{i \in \bar{N}(j) \cap I_1} (\alpha_j-c(j, i))\right) \le \sum_{j \in \mathcal{D}} \alpha_j - \left(\lambda-\frac{1}{n}\right) \cdot |I_1| + 4\gamma \cdot \text{OPT}_{k'}.
\end{equation}
Next, by setting $S$ to be $I_2 \cup I_3$ in Lemma \ref{lem:ugly_upper_bound}, we obtain that
\begin{equation} \label{eq:Ri_sum_lower_bound}
    \sum_{i = 1}^{5} R_i = \sum_{j \in \mathcal{D}} \sum_{i \in \bar{N}(j) \cap (I_2 \cup I_3)} (\alpha_j-c(j, i)) \ge \left(\lambda-\frac{1}{n}\right) \cdot |I_2 \cup I_3| - 4\gamma \cdot \text{OPT}_{k'}.
\end{equation}

Next, we recall Lemma \ref{lem:expected_cost_S}. By splitting $\BE[\text{cost}(\mathcal{D}, S)]$ based on whether $j$ is in $\mathcal{D}_1$, $\mathcal{D}_2$, $\mathcal{D}_3$, $\mathcal{D}_4$, $\mathcal{D}_5 \backslash \mathcal{D}_B$, or $\mathcal{D}_B$, we obtain that
\[\BE[\text{cost}(\mathcal{D}, S)] \le (1+O(\eps)) \cdot \left[\sum_{i = 1}^{5} \rho^{(i)}(p) \cdot (Q_i - p \cdot R_i)\right] + O(\gamma) \cdot \text{OPT}_{k'}.\]
Therefore, the argument of Lemma \ref{lem:cost_bound}
implies that if $|I_1|+p \cdot |I_2 \cup I_3| = k$, if $p \in [0.01, 0.49]$, and if $|I_2 \cup I_3| \ge 100 C^4$, then we can choose a set $I_1 \subset S \subset I_1 \cup I_2 \cup I_3$ such that $|S| \le k$ and 
\begin{align}
    \text{cost}(\mathcal{D}, S) &\le \left(1+O(\eps+\frac{1}{C})\right) \cdot \left[\sum_{i = 1}^{5} \rho^{(i)}\left(p-\frac{2}{C}\right) \cdot \left(Q_i-\left(p-\frac{2}{C}\right) \cdot R_i\right)\right] + O(\gamma) \cdot \text{OPT}_{k'} \nonumber \\
    &\le \left(1+O(\eps+\frac{1}{C})\right) \cdot \left[\sum_{i = 1}^{5} \rho^{(i)}\left(p\right) \cdot(Q_i-p \cdot R_i)\right] + O(\gamma) \cdot \text{OPT}_{k}. \label{eq:cost_bound_improved}
\end{align}
    To explain the second line, note that $\rho^{(i)}$ has bounded derivative on $[0.01, 0.49]$ and that $Q_i \ge 0.5 \cdot R_i$. Therefore, since $p \in [0.01, 0.49]$, $\rho^{(i)}\left(p-\frac{2}{C}\right) = \rho^{(i)}(p) \cdot \left(1+O(1/C)\right)$, and $Q_i-p \cdot R_i = \Omega(R_i)$ which means $Q_i-\left(p-\frac{2}{C}\right) \cdot R_i = (Q_i-p \cdot R_i) \cdot \left(1+O(1/C)\right)$. In addition, we still have that $\text{OPT}_{k'} = O(\text{OPT}_k)$, as in our proof of Theorem \ref{thm:main}.

We now return to the setup of Theorem \ref{thm:main}, where we have $(I_1, I_2, I_3)$ and $(I_1', I_2', I_3')$. Suppose that $|I_1|+p_1|I_2 \cup I_3| = k+ct$, $|I_1'|+p_1|I_2' \cup I_3'| = k-t$, $|I_1 \cap I_1'| = k-(1+d) t$, and $|I_1 \backslash I_1'| = \kappa \in \{0, 1\}$. In addition, suppose that $|I_1| \ge k-100 C^4,$ which means that $|I_1'| \ge k-100 C^4-1$. In this case, we may follow the same approach as in our Theorem \ref{thm:main} to obtain a $\rho(p_1) \cdot (1+O(\eps))$-approximation to $k$-means.

Alternatively, we may suppose that $|I_1| \le k-100 C^4$, which implies that $|I_2 \cup I_3| \ge 100 C^4$. Then, defining $r \ge 1$ such that $c=(r-1) \cdot (1+d)$, we can use Equation \eqref{eq:cost_bound_improved} to find a solution of size at most $k$ with cost at most
\begin{equation} \label{eq:cost1_improved}
    \left(1+O(\eps+\frac{1}{C})\right) \cdot \left[\sum_{i = 1}^{5} \rho^{(i)}\left(\frac{p_1}{r}\right) \cdot \left(Q_i - \frac{p_1}{r} \cdot R_i\right)\right] + O(\gamma) \cdot \text{OPT}_{k},
\end{equation}
in the same manner as \eqref{eq:cost_bound_improved}, by setting $p = p_1 \cdot \frac{(1+d)t-\kappa}{(1+c+d)t-\kappa} = \frac{p_1}{r} - O(1/C)$.
Alternatively, we can obtain two separate solutions $I_1 \subset S \subset I_1 \cup I_2 \cup I_3$ of size $k+ct$, and a solution $I_1' \subset S' \subset I_1' \cup I_2' \cup I_3'$ of size $k-t$, such that $|S \cup S'| = k+(c+d) t$. We have that
\[\text{cost}(\mathcal{D}, S \cup S') \le \text{cost}(\mathcal{D}, S) \le \left(1+O(\eps+\frac{1}{C})\right) \cdot \left[\sum_{i = 1}^{5} \rho^{(i)}\left(p_1\right) \cdot (Q_i-p_1 \cdot R_i)\right]+O(\gamma) \cdot \text{OPT}_{k}.\]
Finally, using the bound \eqref{eq:Sprime_bound} for the cost of $S'$, we have
\[\text{cost}(\mathcal{D}, S') \le \left(1+O(\eps+\frac{1}{C})\right) \cdot \rho(p_1) \cdot \left(\sum_{j \in \mathcal{D}} \alpha_j - \left(\lambda-\frac{1}{n}\right) \cdot (k-t)\right) + O(\gamma) \cdot \text{OPT}_{k}.\]
Note that we are not able to use a more sophisticated bound for $\text{cost}(\mathcal{D}, S')$, because our values of $\{Q_i\}$ and $\{R_i\}$ only apply to $(I_1, I_2, I_3)$ and not to $(I_1', I_2', I_3')$. By combining the solutions $S \cup S'$ and $S'$, by adding $t$ random points from $S \backslash S'$ to $S'$, and using Proposition \ref{prop:submodular_3}, we obtain a solution $S''$ with expected cost
\begin{multline}
    \BE[\text{cost}(\mathcal{D}, S'')] \le \left(1+O(\eps+\frac{1}{C})\right) \cdot \Biggr[\frac{1}{r(1+d)} \cdot \sum_{i = 1}^5 \rho^{(i)}(p_1) \cdot (Q_i-p_1 \cdot R_i) \\
    + \left(1-\frac{1}{r(1+d)}\right) \cdot \rho(p_1) \cdot \Biggr(\sum_{j \in \mathcal{D}} \alpha_j - \left(\lambda-\frac{1}{n}\right) \cdot (k-t)\Biggr)\Biggr] + O(\gamma) \cdot \text{OPT}_k. \label{eq:cost2_improved}
\end{multline}
    This is because we combine the solution $S \cup S'$, which has size $k+(c+d)t$, with the solution $S'$, which has size $k-t,$ so we assign the first solution relative weight $\frac{1}{1+c+d} = \frac{1}{r(1+d)}$ and the second solution relative weight $\frac{c+d}{1+c+d} = 1-\frac{1}{r(1+d)}$.

Now, let $\mathfrak{D}$ equal $\sum_{j \in \mathcal{D}} \alpha_j - \left(\lambda-\frac{1}{n}\right) \cdot k$. Then, since $|I_1|+\frac{p_1}{r} |I_2 \cup I_3| \ge k$, we can combine Equations \eqref{eq:Qi_sum_upper_bound} and \eqref{eq:Ri_sum_lower_bound} to get that
\begin{align}
    \sum_{i = 1}^{5} \left(Q_i-\frac{p_1}{r} R_i\right) &\le \sum_{j \in \mathcal{D}} \alpha_j - \left(\lambda-\frac{1}{n}\right) \cdot \left(|I_1|+\frac{p_1}{r} |I_2 \cup I_3|\right) + O(\gamma) \cdot \text{OPT}_k \nonumber \\
    &\le \mathfrak{D} + O(\gamma) \cdot \text{OPT}_k. \label{eq:bound_1}
\end{align}
Next, recall (by Equation \eqref{eq:cost1_improved}) that we have a solution of size at most $k$ with cost at most
\begin{equation} \label{eq:bound_1.5}
    \left(1+O(\eps+\frac{1}{C})\right) \cdot \left[\sum_{i = 1}^{5} \rho^{(i)}\left(\frac{p_1}{r}\right) \cdot \left(Q_i - \frac{p_1}{r} \cdot R_i\right)\right] + O(\gamma) \cdot \text{OPT}_k.
\end{equation}
Finally, we note that since $|I_2 \cup I_3| = \frac{r(1+d)t-\kappa}{p_1} \ge (1-O(1/C)) \cdot \frac{r(1+d) t}{p_1},$ we have that 
\[\sum_{i = 1}^{5} R_i + O(\gamma) \cdot \text{OPT}_k \ge \left(\lambda-\frac{1}{n}\right) \cdot |I_2 \cup I_3| \ge \left(1-O(\frac{1}{C})\right) \cdot \frac{r(1+d)}{p_1} \cdot \left(\lambda-\frac{1}{n}\right) \cdot t.\]
Therefore, we can bound the expected cost of $S''$ by
\begin{multline}
    \left(1+O(\eps+\frac{1}{C})\right) \cdot \Biggr[\frac{1}{r(1+d)} \cdot \sum_{i = 1}^{5} \rho^{(i)}(p_1) \cdot (Q_i-p_1 \cdot R_i) \\
    + \left(1-\frac{1}{r(1+d)}\right) \cdot \rho(p_1) \cdot \left(\mathfrak{D}+\frac{p_1}{r(1+d)} \cdot \sum_{i = 1}^{5} R_i\right)\Biggr] + O(\gamma) \cdot \text{OPT}_k. \label{eq:bound_2}
\end{multline}


Now, we have that $r \ge 1$, and if we let $\theta = \frac{1}{1+d}$, we have that $\theta \in [0, 1]$. Hence, to show that we obtain an approximation $\rho + O(\eps + \gamma + 1/C)$, it suffices to show that for all choices of $\theta \in [0, 1]$ and $r \ge 1,$ that if we let $\mathfrak{D}' = \mathfrak{D}+O(\gamma) \cdot \text{OPT}_k$, one cannot simultaneously satisfy
\begin{align}
    \mathfrak{D}' &\ge \sum_{i = 1}^{5} \left(Q_i-\frac{p_1}{r} R_i\right) \label{eq:bound_3}\\
    \rho \cdot \mathfrak{D}' &< \frac{\theta}{r} \sum_{i = 1}^{5} \rho^{(i)}(p_1) \cdot (Q_i-p_1 \cdot R_i) + \left(1-\frac{\theta}{r}\right) \cdot \rho(p_1) \cdot \left(\mathfrak{D}' + p_1 \cdot \frac{\theta}{r} \sum_{i = 1}^{5} R_i\right) \label{eq:bound_4} \\
    \rho \cdot \mathfrak{D}' &< \sum_{i = 1}^{5} \rho^{(i)}\left(\frac{p_1}{r}\right) \cdot \left(Q_i-\frac{p_1}{r} \cdot R_i\right) \label{eq:bound_5}
\end{align}
and
\begin{equation}
    R_1 \le Q_1, \hspace{1cm} R_2 \le Q_2, \hspace{1cm} R_3 \le Q_3, \hspace{1cm} R_4 \le 1.75 Q_4, \hspace{1cm} R_5 \le 2 Q_5. \label{eq:bound_6}
\end{equation}
    Indeed, we already know that \eqref{eq:bound_3} is true (same as \eqref{eq:bound_1}) and that \eqref{eq:bound_6} is true (same as \eqref{eq:S_i_bound}). So if we can show we can't simultaneously satisfy all of \eqref{eq:bound_3}, \eqref{eq:bound_4}, \eqref{eq:bound_5}, and \eqref{eq:bound_6}, then either \eqref{eq:bound_4} or \eqref{eq:bound_5} is false. But we have a clustering with at most $k$ centers and cost at most the right hand sides of each of \eqref{eq:bound_4} and \eqref{eq:bound_5} up to a $1+O(1/C + \eps + \gamma)$ multiplicative factor, due to \eqref{eq:bound_2} and \eqref{eq:bound_1.5}, respectively. Therefore, we successfully obtain a solution of cost at most $\rho \cdot \left(1 + O(1/C + \eps + \gamma)\right) \cdot \mathfrak{D}'$. Moreover, $\mathfrak{D}' \le \sum_{j \in \mathcal{D}} \alpha_j - \left(\lambda-\frac{1}{n}\right) \cdot k + O(\gamma) \cdot \text{OPT}_k \le (1+O(\gamma)) \cdot \text{OPT}_k$, since $\sum_{j \in \mathcal{D}} \alpha_j - \left(\lambda+\frac{2}{n}\right) \cdot k \le \text{OPT}_k$ by Proposition \ref{prop:duality_bound} as both $\alpha^{(\ell)}, \alpha^{(\ell+1)}$ are solutions to $\text{DUAL}(\lambda+\frac{1}{n})$, and since $\frac{3k}{n} \le 3 \le O(\gamma) \cdot \text{OPT}_k$. Therefore, $\mathfrak{D}' \le (1+O(\gamma)) \cdot \text{OPT}_k$, which means that we have found a $\rho \cdot (1+O(1/C+\eps+\gamma))$ approximation to $k$-means clustering.

    Indeed, by numerical analysis of these linear constraints and based on the functions $\rho^{(i)}$, we obtain a $\boxed{\kmeansratio}$-approximation algorithm for Euclidean $k$-means clustering. We defer the details to Appendix \ref{app:k_means_numerical_analysis}.
    
\section{Improved Approximation Algorithm for $k$-median} \label{sec:k_median}

\subsection{Improvement to $1+\sqrt{2}$-approximation} \label{subsec:lmp_k_median_easy}

In this subsection, we show that a $1+\sqrt{2}+\eps$-approximation can be obtained by a simple modification of the Ahmadian et al. \cite{ahmadian2017better} analysis. Because we use the same algorithm as \cite{ahmadian2017better}, the reduction from an LMP algorithm to a full polynomial-time algorithm is identical, so it suffices to improve the analysis of their LMP algorithm to a $1+\sqrt{2}$-approximation. The main difficulty in this subsection will be obtaining a tight bound on the norms (as opposed to squared norms) of points that are pairwise separated, which we prove in Lemma \ref{lem:geometric_median}. In the next subsection, we show how to break the $1+\sqrt{2}$ barrier that this algorithm runs into, which will follow a similar approach to our improved $k$-means algorithm.

We first recall the setup of the LMP approximation of \cite{ahmadian2017better}.
Let $c(j, i) = d(j, i)$ be the distance between a client $j \in \mathcal{D}$ and a facility $i \in \mathcal{F}$. Suppose we have a solution $\alpha$ to $\text{DUAL}(\lambda)$, such that every client $j$ has a tight witness $w(j) \in \mathcal{F}$ with $\alpha_j \ge t_{w(j)}$ and $\alpha_j \ge c(j, w(j)).$ Recall that $t_i = \max_{j \in N(i)} \alpha_j$, where $N(i) = \{j \in \mathcal{D}: \alpha_j > c(j, i)\}$, and likewise, $N(j) = \{i \in \mathcal{F}: \alpha_j > c(j, i)\}$.
Now, we let the conflict graph $H(\delta)$ on tight facilities (i.e., facilities $i$ with $\sum_{j \in N(i)} (\alpha_j-c(j, i)) = \lambda$) have an edge $(i, i')$ if $c(i, i') \le \delta \cdot \min(t_i, t_{i'})$.

We let $\delta = \sqrt{2}$ and return a maximal independent set $I$ of $H(\delta)$ as our LMP-approximation. It suffices to show that for each client $j \in \mathcal{D},$ that $c(j, I) \le (1+\sqrt{2}) \cdot \left(\alpha_j-\sum_{i \in N(j) \cap I} (\alpha_j-c(j, i))\right).$ To see why, by adding over all clients $j$, we obtain that
\[\text{cost}(\mathcal{D}, I) \le (1+\sqrt{2}) \cdot \left(\sum_{j \in \mathcal{D}} \alpha_j - \sum_{i \in I} \sum_{j \in N(i)} (\alpha_j-c(j, i))\right) = (1+\sqrt{2}) \cdot \left(\sum_{j \in \mathcal{D}} \alpha_j - \lambda \cdot |I|\right).\]
Finally, since $\alpha$ is a feasible solution to $\text{DUAL}(\lambda),$ this implies that $\text{cost}(\mathcal{D}, I) \le (1+\sqrt{2}) \cdot \text{OPT}_{|I|}.$

Before we verify the LMP approximation, we need the following lemma about points in Euclidean space.

\begin{lemma} \label{lem:geometric_median}
    Let $h \ge 2$ and suppose that $x_1, \dots, x_h$ are points in Euclidean space $\BR^d$ (for some $d$) such that $\|x_i-x_j\|_2^2 \ge 2$ for all $i \neq j$. Then, $\sum_{i=1}^{h} \|x_i\|_2 \ge \sqrt{h \cdot (h-1)}$.
\end{lemma}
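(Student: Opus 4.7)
The plan is to prove this bound by first reducing to a canonical position via translation invariance, then leveraging first-order optimality and the pairwise distance constraints. The key observation is that $\sum_i \|x_i\|$ is lower bounded by the one-median cost $\min_{p} \sum_i \|x_i - p\|$, which is a translation-invariant function of $\{x_1, \ldots, x_h\}$. Hence it suffices to prove the inequality under the assumption that the origin is the geometric median of the points. I will treat the degenerate case separately: if some $x_i = 0$, then the constraint forces $\|x_j\|^2 \ge 2$ for all $j \ne i$, giving $\sum_j \|x_j\| \ge (h-1)\sqrt{2} \ge \sqrt{h(h-1)}$ for $h \ge 2$; so we may assume $x_i \ne 0$ for every $i$.

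First I would apply the first-order optimality condition at the geometric median. Letting $u_i = x_i/\|x_i\|$ and $\rho_i = \|x_i\|$, the KKT condition gives $\sum_i u_i = 0$. Expanding $\|\sum_i u_i\|^2 = 0$ yields the scalar identity $\sum_{i<j} u_i \cdot u_j = -h/2$. Next, I would rewrite the constraint $\|x_i - x_j\|^2 \ge 2$ as $u_i \cdot u_j \le (\rho_i^2 + \rho_j^2 - 2)/(2\rho_i\rho_j)$. Summing this pairwise inequality and comparing with the value of $\sum_{i<j} u_i \cdot u_j$ gives, after standard algebraic manipulation, the scalar inequality $R\Sigma \ge \Sigma^2 - Q$, where $R = \sum_i \rho_i$, $\Sigma = \sum_i 1/\rho_i$, and $Q = \sum_i 1/\rho_i^2$. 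Alongside this I would use the two Cauchy--Schwarz bounds $R\Sigma \ge h^2$ (AM--HM applied to $\{\rho_i\}$) and $\Sigma^2 \le hQ$.

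The final step is to combine these scalar inequalities to conclude $R \ge \sqrt{h(h-1)}$. The extremal configuration is the regular simplex centered at the origin, for which $\rho_i = \sqrt{(h-1)/h}$ for every $i$ and every one of the three inequalities above is tight simultaneously. I would show that at equality in the bound, each pair of points must satisfy $\|x_i - x_j\|^2 = 2$ exactly, the $u_i$'s form a balanced symmetric configuration, and the $\rho_i$'s are all equal; any deviation forces strict inequality in at least one of the three relations and hence $R^2 > h(h-1)$.

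The hardest part will be the combining step. The three scalar inequalities individually are not enough, since there exist vectors of $(\rho_i)$ satisfying all three but violating $R \ge \sqrt{h(h-1)}$; what rescues the argument is that the inequality $u_i \cdot u_j \le (\rho_i^2+\rho_j^2-2)/(2\rho_i\rho_j)$ holds \emph{for each pair} and the Gram matrix of $(u_i)$ must be positive semidefinite with kernel containing $\mathbf{1}$. Encoding this full positive-semidefinite structure (rather than only its summed consequence) is what ties the $\rho_i$'s together and forces the regular simplex to be extremal. I would approach this either via an SDP/Lagrangian-duality argument using a symmetric dual certificate (which is straightforward to verify at the regular-simplex optimum), or via a symmetrization/perturbation argument showing that averaging over the symmetric group $S_h$ weakly decreases $R$ while preserving feasibility, thereby reducing to the symmetric case where the computation is immediate.
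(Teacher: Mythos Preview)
Your scalar inequality $R\Sigma \ge \Sigma^2 - Q$ is correct and is in fact exactly the paper's key inequality. The paper obtains it more directly, though: the identity $\sum_i t_i\|x_i\|_2^2 - \bigl\|\sum_i t_i x_i\bigr\|_2^2 = \sum_{i<j}t_it_j\|x_i-x_j\|_2^2$ (with $\sum_i t_i=1$), applied with $t_i\propto 1/\|x_i\|_2$, yields it at \emph{any} origin. Your geometric-median reduction and the first-order condition $\sum_i u_i=0$ are a valid route to the same inequality, but an unnecessary detour.

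The real problem is your combining step, and it rests on a misconception. Since your (B) and (C) are just Cauchy--Schwarz and hold for every positive tuple $(\rho_i)$, your claim amounts to saying that (A) alone, as a constraint on positive reals $\rho_1,\dots,\rho_h$, does not force $R\ge\sqrt{h(h-1)}$. This is false. The paper closes precisely this point with a single Muirhead step: from
\[
\sum_{\mathrm{sym}}\rho_1^2\rho_2\cdots\rho_{h-2}\;\ge\; \tfrac{(h-1)(h-2)}{2}\sum_{\mathrm{sym}}\rho_1\cdots\rho_{h-1}
\]
one deduces $R\cdot\sum_{i<j}\tfrac{1}{\rho_i\rho_j}\ge \tfrac{h(h-1)}{2}\,\Sigma$, which combined with (A) in the form $R\Sigma\ge 2\sum_{i<j}\tfrac{1}{\rho_i\rho_j}$ gives $R^2\ge h(h-1)$ in one line. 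So the obstacle you flag does not exist, and the SDP or symmetrization machinery you gesture at is neither needed nor, as described, workable: encoding the full PSD structure of the Gram matrix of the $u_i$ simply restates the original geometric problem, and ``averaging over $S_h$'' does not preserve the pairwise separation constraints on the points.
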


\begin{proof}
    Note that for any positive real numbers $t_1, t_2, \dots, t_h$ that add to $1$, we have that
\[\sum_{i = 1}^{h} t_i \cdot \|x_i\|_2^2 \ge \sum_{i=1}^{h} t_i \cdot \|x_i\|_2^2 - \left\|\sum t_i x_i\right\|_2^2 = \sum_{i < j} t_i t_j \|x_i-x_j\|_2^2 \ge 2 \cdot \sum_{i < j} t_i t_j.\]
    Then, by setting $a_i = \|x_i\|_2$ for each $i$ and scaling by $t_1+\cdots+t_h$ accordingly to remove the assumption that $t_1+\cdots+t_h = 1$, we have that
\[\left(\sum_{i = 1}^{h} t_i \cdot a_i^2\right) \cdot \left(\sum_{i=1}^{h} t_i\right) \ge 2 \cdot \sum_{i < j} t_it_j\]
    for all $t_1, \dots, t_h \ge 0$. Now, if some $a_i = 0$, then $\|x_j\|_2 = \|x_i-x_j\|_2 \ge \sqrt{2}$, which means that $\sum_{i=1}^{h} \|x_j\|_2 \ge (h-1) \cdot \sqrt{2} \ge \sqrt{h(h-1)}$ for all $h \ge 2$. Alternatively, $a_i \neq 0$ for any $i$, so we can set $t_i = \frac{1}{a_i}$, to obtain that
\begin{equation} \label{eq:sym_ineq_1}
    \left(\sum_{i=1}^{h} a_i\right) \cdot \left(\sum_{i=1}^{h} \frac{1}{a_i}\right) \ge 2 \cdot \sum_{i < j} \frac{1}{a_ia_j}.
\end{equation}

    From now on, for any polynomial $P(a_1, \dots, a_h)$, we denote $\sum_{\text{sym}} P(a_1, \dots, a_h)$ to be the sum of all distinct terms of the form $P(a_{\pi(1)}, \dots, a_{\pi(h)})$ over all permutations of $[h]$. For instance, $\sum_{\text{sym}} a_1a_2a_3 = \sum_{1 \le i < j < k \le h} a_ia_ja_k$ and $\sum_{\text{sym}} a_1^2 a_2 = \sum_{1 \le i < j \le h} a_i^2 a_j + \sum_{1 \le j < i \le h} a_i^2 a_j$.

    In the case when $h = 2$, this means that $(a_1+a_2) \cdot \frac{a_1+a_2}{a_1a_2} \ge \frac{2}{a_1a_2},$ so $a_1+a_2 \ge \sqrt{2} = \sqrt{h(h-1)}.$ Alternatively, we assume that $h \ge 3$. When $h \ge 3$, note that
\begin{align}
\left(\sum_{\text{sym}} a_1 \cdots a_{h-2}\right) \cdot \left(\sum a_i\right) &= (h-1) \cdot \sum_{\text{sym}} a_1 \cdots a_{h-1} + \sum_{\text{sym}} a_1^2 a_2 \cdots a_{h-2} \nonumber \\
&\ge \left[(h-1) + \frac{h(h-1)(h-2)/2}{h}\right] \cdot \sum_{\text{sym}} a_1 \cdots a_{h-1} \nonumber \\
&= \frac{h(h-1)}{2} \cdot \sum_{\text{sym}} a_1 \cdots a_{h-1}, \label{eq:sym_ineq_2}
\end{align}
    where the second line above follows by Muirhead's inequality. Therefore, we have that
\begin{align*}
    \left(\sum_{i=1}^{h} a_i\right)^2 &\ge \left(\sum_{i=1}^{h} a_i\right) \cdot \frac{h(h-1)}{2} \cdot \frac{\sum_{\text{sym}} a_1 \cdots a_{h-1}}{\sum_{\text{sym}} a_1 \cdots a_{h-2}} \\
    &= \left(\sum_{i=1}^{h} a_i\right) \cdot \frac{h(h-1)}{2} \cdot \frac{\sum_{\text{sym}} \frac{1}{a_i}}{\sum_{\text{sym}} \frac{1}{a_ia_j}} \\
    &\ge 2 \cdot \frac{h(h-1)}{2} \\
    &= h(h-1),
\end{align*}
    where the first line follows from \eqref{eq:sym_ineq_2} and the third line follows from \eqref{eq:sym_ineq_1}. Therefore, we indeed have that $\sum_{i=1}^{h} \|x_i\|_2 \ge \sqrt{h(h-1)}$.
\end{proof}

To verify the LMP approximation, it suffices to show that for every $j$, $c(j, I) \le (1+\sqrt{2}) \cdot \left(\alpha_j-\sum_{i \in N(j) \cap I} (\alpha_j-c(j, i))\right).$ We split this up into $3$ cases.

\paragraph{Case 1: $\boldsymbol{|I \cap N(j)| = 0}$.} In this case, $d(j, I) \le d(j, w(j))+d(w(j), I)$ by the Triangle Inequality. But we know that $d(j, w(j)) \le \alpha_j$, and that $d(w(j), I) \le \sqrt{2} \cdot t_{w(j)} \le \sqrt{2} \cdot \alpha_j$, using the fact that $I$ is a maximal independent set so $w(j)$ has some neighbor of $I$ in the conflict graph. Thus, $d(j, I) \le (1+\sqrt{2}) \cdot \alpha_j$. However, since $N(j) \cap I = \emptyset$, this means that $\left(\alpha_j-\sum_{i \in N(j) \cap I} (\alpha_j-c(j, i))\right) = \alpha_j$. So, the desired inequality holds.

\paragraph{Case 2: $\boldsymbol{|I \cap N(j)| = 1}$.} In this case, let $i_1$ be the unique point in $N(j) \cap I$. Then, $d(j, I) \le d(j, i_1)$. In addition, $\left(\alpha_j-\sum_{i \in N(j) \cap I} (\alpha_j-c(j, i))\right) = \alpha_j-(\alpha_j-c(j, i_1)) = c(j, i_1) = d(j, i_1)$. Since $d(j, i_1) \ge 0$, the desired inequality holds (even with a ratio of $1 < 1+\sqrt{2}$).

\paragraph{Case 3: $\boldsymbol{|I \cap N(j)| = s \ge 2}$.} In this case, let $i_1, \dots, i_s$ be the set of points in $N(j) \cap I$. Then, we know that $d(i_r, i_{r'}) \ge \delta \cdot \min(t_{i_r}, t_{i_{r'}})$ for any $r \neq r'$. But $t_{i_r}, t_{i_{r'}} \ge \alpha_j$ by definition of $t_i$ (since $i_r, i_{r'} \in N(j)$), so this means that $d(i_r, i_{r'}) \ge \sqrt{2} \cdot \alpha_j$ for every $r \neq r'$.

Now, by applying Lemma \ref{lem:geometric_median}, we have that $\sum_{r=1}^{s} d(j, i_r) \ge \sqrt{s \cdot (s-1)} \cdot \alpha_j$. Now, let $t = \frac{1}{\alpha_j} \cdot \sum_{r=1}^{s} d(j, i_r)$, so $t \ge \sqrt{s \cdot (s-1)}$. Then, $d(j, I) \le \min_{1 \le r \le s} d(j, i_r) \le \frac{1}{s} \cdot \sum_{r=1}^{s} d(j, i_r) = \frac{T}{s} \cdot \alpha_j$. Ina ddition, we have that $\alpha_j-\sum_{i \in N(j) \cap I} (\alpha_j-c(j, i)) = \alpha_j-s \cdot \alpha_j + T \cdot \alpha_j = (T-(s-1)) \cdot \alpha_j$. So, the ratio is
\[\frac{T/s}{T-(s-1)} \le \frac{\sqrt{s \cdot s-1}/s}{\sqrt{s(s-1)}-(s-1)} = \frac{1}{s-\sqrt{s(s-1)}} \le 2.\]
Above, the first inequality follows because as $T$ increases, the numerator increases at a slower rate than the denominator, so assuming that the fraction is at least $1$, we wish for $T$ to be as small as possible to maximize the fraction. The final inequality holds because $s-\sqrt{s(s-1)} \ge \frac{1}{2}$ for all $s \ge 2$. Therefore, the desired inequality holds (even with a ratio of $2 < 1+\sqrt{2}$).

\medskip

So in fact, there is a simple improvement from the $1+\sqrt{8/3} \approx 2.633$ approximation algorithm to a $1+\sqrt{2} \approx 2.414$ algorithm. A natural question is whether this can be improved further without any significant changes to the algorithm or analysis. Indeed, there only seems to be one bottleneck, when $|I \cap N(j)| = 0$, so naturally one may assume that by slightly reducing $\delta=\sqrt{2}$, the approximation from Case 1 should improve below $1+\sqrt{2}$ and the approximation from Case 3 should become worse than $2$, but can still be below $1+\sqrt{2}$.

Unfortunately, such a hope cannot be realized. Indeed, if we replace $\delta=\sqrt{2}$ with some $\delta < \sqrt{2}$, we may have that $d(j, i_1) = d(j, i_2) = \cdots = d(j, i_s) = \delta \cdot \sqrt{\frac{s-1}{2s}} \cdot \alpha_j$ and the pairwise distances are all exactly $\delta \cdot \alpha_j$ between each $i_r, i_{r'}$. However, in this case, $\alpha_j-\sum_{i \in N(j) \cap I} (\alpha_j-c(j,i)) = \alpha_j \cdot \left(1-s+\delta \cdot \sqrt{s(s-1)/2}\right),$ which for $\delta < \sqrt{2}$ is in fact negative for sufficiently large $s$. Hence, even for $\delta = \sqrt{2}-\eps$ for a very small choice of $\eps > 0$, we cannot even guarantee a constant factor approximation with this analysis approach. So, this approach gets stuck at a $1+\sqrt{2}$ approximation.

In the following subsection, we show how an improved LMP approximation algorithm for Euclidean $k$-median, breaking the $1+\sqrt{2}$ approximation barrier. We will then show that we can also break this barrier for a polynomial-time $k$-median algorithm as well.

\subsection{An improved LMP algorithm for Euclidean $k$-median}

Recall the conflict graph $H := H(\delta)$, where we define two tight facilities $(i, i')$ to be connected if $c(i, i') \le \delta \cdot \min(t_i, t_{i'}).$ We set parameters $\delta_1 \ge \delta_2 \ge \delta_3$ and $0 < p < 1$, and define $V_1$ to be the set of all tight facilities. Given the set of tight facilities $V_1$ and conflict graphs $H(\delta)$ for all $\delta > 0$, our algorithm works by applying the procedure described in Algorithm \ref{alg:lmp_k_median} to $V_1$.

\begin{figure}
\centering
\begin{algorithm}[H]
    \caption{
    Generate a Nested Quasi-Independent Set of $V_1$, as well as a set of centers $S$ providing an LMP approximation for Euclidean $k$-median
    }\label{alg:lmp_k_median}
\Call{LMPMedian}{$V_1, \{H(\delta)\}, \delta_1, \delta_2, \delta_3, p$}:
    \begin{algorithmic}[1] 
        \item Create a maximal independent set $I_1$ of $H(\delta_1)$. \label{step:I1_median}
        \item Let $V_2$ be the set of points in $V_1 \backslash I_1$ that are not adjacent to $I_1$ in $H(\delta_2)$. \label{step:V2_median}
        \item Create a maximal independent set $I_2$ of the induced subgraph $H(\delta_1)[V_2]$. \label{step:I2_median}
        \item Let $V_3$ be the set of points $i$ in $V_2 \backslash I_2$ such that there is exactly one point in $I_2$ that is a neighbor of $i$ in $H(\delta_1)$, there are no points in $I_1$ that are neighbors of $i$ in $H(\delta_2)$, and there are no points in $I_2$ that are neighbors of $i$ in $H(\delta_3)$. \label{step:V3_median}
        \item Create a maximal independent set $I_3$ of the induced subgraph $H(\delta_1)[V_3]$. \label{step:I3_median}
        \item Note that every point $i \in I_3$ has a unique adjacent neighbor $q(i) \in I_2$ in $H(\delta_1)$. We create the final set $S$ as follows: \label{step:S_median}
        \begin{itemize}
            \item Include every point $i \in I_1$.
            \item For each point $i \in I_2$, flip a fair coin. If the coin lands heads, include $i$ with probability $2p$. Otherwise, include each point in $q^{-1}(i)$ independently with probability $2p$.
        \end{itemize}
    \end{algorithmic}
\end{algorithm}
\end{figure}

As in the $k$-means case, we consider a more general setup, so that we can convert the LMP approximation to a full polynomial-time algorithm. Instead of $V_1,$ let $\mathcal{V} \subset \mathcal{F}$ be a subset of facilities and let $\mathcal{D}$ be the full set of clients. For each $j \in \mathcal{D},$ let $\alpha_j \ge 0$ be some real number, and for each $i \in \mathcal{V}$, let $t_i \ge 0$ be some real number. In addition, for each client $j \in \mathcal{D}$, we associate with it a set $N(j) \subset \mathcal{V}$ and a ``witness'' facility $w(j) \in \mathcal{V}$.
Finally, suppose that we have the following assumptions:

\begin{enumerate}
    \item For any client $j \in \mathcal{D}$, the witness $w(j) \in \mathcal{V}$ satisfies $\alpha_j \ge t_{w(j)}$ and $\alpha_j \ge c(j, w(j))$.
    \item For any client $j \in \mathcal{D}$ and any facility $i \in N(j)$, $t_i \ge \alpha_{j} > c(j, i)$.
\end{enumerate}

Then, for the graph $H(\delta)$ on $\mathcal{V}$ where $i, i' \in \mathcal{V}$ are connected if and only if $c(i, i') \le \delta \cdot \min(t_i, t_{i'})$ (recall that now, $c(i, i') = d(i, i')$ instead of $d(i, i')^2$), we have the following main lemma.

\begin{lemma} \label{lem:main_lmp_median}
    Fix $\delta_1 = \sqrt{2}$, $\delta_2 = 1.395$, and $\delta_3 = 2-\sqrt{2} \approx 0.5858,$ and let $p < 0.337$ be variable.  Now, let $S$ be the randomized set created by applying Algorithm \ref{alg:lmp_k_median} on $V_1 = \mathcal{V}$.
    Then, for any $j \in \mathcal{D}$,
\[\BE[c(j, S)] \le \rho(p) \cdot \BE\left[\alpha_j - \sum_{i \in N(j) \cap S} (\alpha_j-c(j, i))\right],\]
    where $\rho(p)$ is some constant that only depends on $p$ (since $\delta_1, \delta_2, \delta_3$ are fixed).
\end{lemma}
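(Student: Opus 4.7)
The plan is to mirror the case analysis of Lemma \ref{lem:main_lmp} but replace every squared-distance estimate with an estimate adapted to the $k$-median cost $c(j,i)=d(j,i)$. As there, I fix a client $j$, scale so that $\alpha_j=1$ (so $d(j,w(j))\le 1$ and $t_{w(j)}\le 1$), and split into the five cases determined by the triple $(a,b,c)=(|N(j)\cap I_1|,|N(j)\cap I_2|,|N(j)\cap I_3|)$. In each case I will bound the numerator $\BE[d(j,S)]$ and the denominator $\BE[1-\sum_{i\in N(j)\cap S}(1-d(j,i))]$ separately, obtaining a ratio that depends only on $p$ and the three $\delta$'s.

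The main technical input that replaces the ``squared distances sum via the centroid'' trick is Lemma \ref{lem:geometric_median}, which says that $h$ points that are pairwise separated by at least $\sqrt{2}$ have sum of distances to any common point at least $\sqrt{h(h-1)}$. With $\delta_3=2-\sqrt{2}$ one checks that this is still the right separation to apply it: for any $i,i'\in N(j)\cap(I_1\cup I_2)$ one has $t_i,t_{i'}\ge\alpha_j=1$ and $d(i,i')\ge\sqrt{\delta_2\cdot 1}$ (we set $\delta_2=1.395$ so that $\sqrt{\delta_2}$ plays the role of $\sqrt{2}$ did before, calibrated together with $\delta_1=\sqrt2$ and $\delta_3$), and the analogous statement holds within $I_1\cup I_3$. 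This is how I control Case $5$ ($a\ge 1$): writing $r_1,\dots,r_a$ for the distances to $N(j)\cap I_1$ and $s_1,\dots,s_h$ for distances to surviving points in $\bar I\cap N(j)$, Lemma \ref{lem:geometric_median} gives $r_1+\cdots+r_a\ge\sqrt{a(a-1)\delta_1}$ and a similar bound on the mixed pairwise separations, which replaces the quadratic-form estimate $(r_1+\cdots+r_a)+\bar p(s_1+\cdots+s_h)\ge\Omega(\delta_1 a^2+\cdots)$ from the $k$-means analysis.

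For Cases $1$--$4$ the calculations are now mostly algebraic once the geometric configuration is pinned down. In the bottleneck Case 1 with witness $i^*$, the same subcase split used for $k$-means applies: either $i^*\not\in V_2$ (in which case $d(j,I_1)\le 1+\sqrt{\delta_2}$), or $i^*\in V_3$ (in which case the existence of an $i_3\in I_3$ with $d(i^*,i_3)\le\sqrt{\delta_2}$ together with an $i_1\in I_1$ with $d(i^*,i_1)\le\sqrt{\delta_1}$ and $d(i_1,i_3)\ge\sqrt{\delta_2}$ lets me bound the probabilistic cost by a planar optimization), or one of the other three configurations that reuses the same bounds. For $k$-median the analog of Proposition \ref{prop:triangle} is simpler: by triangle inequality alone $d(j,X)\le 1+d(i^*,X)$, so I never need the quadratic $\sqrt{p\nu_1+(1-p)\nu_2-p(1-p)\nu_3}$ term — I can directly write $\BE[d(j,S)\mid i_2\not\in S]\le p\,d(j,i_3)+(1-p)\,d(j,i_1)\le 1+p\sqrt{\delta_2}+(1-p)\sqrt{\delta_1}$, using that $i_1$ and $i_3$ lie on opposite sides of $i^*$ in the worst case is no longer needed. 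Cases 2--4 similarly reduce to one-dimensional optimizations in a parameter $t\in[0,1]$ (the distance from $j$ to the relevant facility) under the constraint from the $\delta_3$-edge in the conflict graph; each optimum is then a closed-form rational expression in $p,\delta_1,\delta_2,\delta_3$.

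I expect the main obstacle to be the analogue of the $k$-means anticorrelation bookkeeping in Case 2/3 combined with the looser linear-in-distance denominator. In the $k$-means proof we heavily used $\sum(1-c(j,i))\le 1$ whenever the points are pairwise separated by $\sqrt{2}$, which gave a tight denominator of $1-p$. For $k$-median the corresponding inequality is $\sum(1-d(j,i))\le 1-(\sqrt{h(h-1)}-(h-1))$, which is weaker and shrinks with $h$; so to keep $\rho(p)<1+\sqrt2$ I will need to be more careful in balancing the contribution of the $c$ many $I_3$-neighbors to the denominator against the probability-$p'$ that none of them appears in $S$ (Propositions \ref{prop:anticor_1} and \ref{prop:anticor_2} are reusable verbatim). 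Once this balancing is done for each $(a,b,c)$ case, the overall $\rho(p)$ is the max of finitely many smooth functions of $p$, and checking $p<0.337$ gives a strict improvement over $1+\sqrt2$ for an explicit range of $p$.
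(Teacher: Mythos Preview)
Your plan has a genuine gap in the bottleneck subcase (your Case~1.b, $i^*\in V_3$). First, note that in the $k$-median conflict graph the edge condition is $d(i,i')\le\delta\cdot\min(t_i,t_{i'})$ with no square root, so $d(i^*,i_1)\le\delta_1=\sqrt2$ and $d(i^*,i_3)\le\delta_1$, not $\sqrt{\delta_1}$ and $\sqrt{\delta_2}$ as you wrote. With that correction, your ``triangle inequality alone'' bound becomes
\[
\BE[d(j,S)\mid i_2\notin S]\ \le\ 1 + p\,\delta_1 + (1-p)\,\delta_1\ =\ 1+\sqrt2,
\]
which is exactly the barrier you are trying to break. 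The whole point of the nested quasi-independent set is to exploit the \emph{lower} bound $d(i_1,i_3)\ge\delta_2\cdot\min(t_{i_1},t_{i_3})$, and the plain triangle inequality throws that information away.

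The paper's fix (Proposition~\ref{prop:triangle_2}) is to go \emph{back} to squared distances: for any $T>0$, Cauchy--Schwarz gives
\[
\bigl((1-p)\,d(j,i_1)+p\,d(j,i_3)\bigr)^2\ \le\ X\,\|j-i_3\|_2^2 + Y\,\|j-i_1\|_2^2,
\]
with $X=p^2+p(1-p)T$ and $Y=(1-p)^2+p(1-p)/T$, and then apply the $k$-means Proposition~\ref{prop:triangle} with $\nu_1=\nu_2=\delta_1^2=2$ and $\nu_3=\delta_2^2$ to this squared expression. The separation $d(i_1,i_3)\ge\delta_2$ now enters through the $-\delta_2^2 XY$ term, yielding a bound strictly below $1+\sqrt2$. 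So rather than the $k$-median analog of Proposition~\ref{prop:triangle} being ``simpler,'' it actually routes through the squared-distance version; you need this (or an equivalent exploitation of the $i_1$--$i_3$ separation) for subcases 1.b, 1.e, 2.a, and 2.b, which are precisely the ones that beat $1+\sqrt2$.
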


\begin{proof}
As in the $k$-means case, we fix $j \in \mathcal{D}$, and we do casework based on the sizes of $a=|I_1 \cap N(j)|$, $b=|I_2 \cap N(j)|$, and $c=|I_3 \cap N(j)|$.

\paragraph{Case 1: $\boldsymbol{a=0, b=1, c=0}$.}
Let $i_2$ be the unique point in $I_2 \cap N(j),$ and let $i^* = w(j)$ be the witness of $j$. We have the following subcases:

\begin{enumerate}[label=\alph*)]
    \item $\boldsymbol{i^* \not\in V_2}$. In this case, either $i^* \in I_1$ so $d(i^*, I_1) = 0$, or there exists $i_1 \in I_1$ such that $d(i^*, i_1) \le \delta_2 \cdot \min(t_{i^*}, t_{i_1}) \le \delta_2$. So, $d(j, I_1) \le 1+\delta_2$. In addition, we have that $i_2 \in S$ with probability $p$. So, if we let $t := d(j, i_2)$, we can bound the ratio by
\begin{equation} \tag{1.a'}\label{eq:1.a'}
    \frac{p \cdot t + (1-p) \cdot (1+\delta_2)}{1-p(1-t)} = \frac{p \cdot t + (1-p) \cdot (1+\delta_2)}{p \cdot t + (1-p)} \le 1+\delta_2,
\end{equation}
    since $t \ge 0$.

    \item $\boldsymbol{i^* \in V_3}.$ In this case, there exists $i_3 \in I_3$ (possibly $i_3 = i^*$) such that $d(i^*, i_3) \le \delta_1 \cdot \min(t_{i^*}, t_{i_3}).$ In addition, there exists $i_1 \in I_1$ such that $d(i^*, i_1) \le \delta_1 \cdot \min(t_{i^*}, t_{i_1})$. In addition, we have that $t_{i^*} \le \alpha_j = 1$. Finally, since $I_3 \subset V_2$, we must have that $d(i_1, i_3) \ge \delta_2 \cdot \min(t_{i_1}, t_{i_3})$. If we condition on $i_2 \in S$, then the numerator and denominator both equal $c(j, i_2)$, so the fraction is $1$ (or $0/0$). Else, if we condition on $i_2 \not\in S$, then the denominator is $1$, and $i_3 \in S$ with probability either $p$ or $\frac{p}{1-p} > p$. Therefore, $\BE[d(j, S)|i_2 \not\in S] \le p \cdot \|i_3-j\|_2 + (1-p) \cdot \|i_1-j\|_2$. We can bound this (we defer the details to Appendix \ref{app:bash}) by 
\begin{equation} \tag{1.b'}\label{eq:1.b'}
    \inf_{T > 0} \sqrt{3\left(X+Y\right)+2\sqrt{2\left(X+Y\right)^{2}-\delta_2^2 \cdot XY}},
\end{equation}
where $X = p^2+p(1-p) \cdot T$ and $Y = (1-p)^2 + \frac{p(1-p)}{T}.$
\end{enumerate}

    In the remaining cases, we may assume that $i^*\in V_2 \backslash V_3$. Then, one of the following must occur:
\begin{enumerate}[label=\alph*)] \setcounter{enumi}{2}
    \item $\boldsymbol{i^* = i_2}$. In this case, define $t = d(j, i^*) \in [0, 1]$, and note that $d(j, I_1) \le d(j, i^*)+d(i^*, I_1) \le t + \delta_1$. So, with probability $p$, we have that $d(j, S) \le d(j, i^*) = t$, and otherwise, we have that $d(j, S) \le d(j, I_1) = t + \delta_1$. So, we can bound the ratio by
\[\max_{0 \le t \le 1} \frac{p \cdot t + (1-p) \cdot (t + \delta_1)}{1-p \cdot (1-t)} = \max_{0 \le t \le 1} \frac{t + (1-p)\delta_1}{p \cdot t + (1-p)}.\]
    For $p$ such that $1/p > \delta_1,$ it is clear that this function increases as $t$ increases, so it is maximized when $t = 1$, which means we can bound the ratio by
\begin{equation} \tag{1.c'}\label{eq:1.c'}
    1 + (1-p) \cdot \delta_1.
\end{equation}

    \item $\boldsymbol{i^* \in I_2}$ \textbf{but} $\boldsymbol{i^* \neq i_2}$. First, we recall that $d(j, i^*) \le 1$. Now, let $t = d(j, i_2)$. In this case, with probability $p$, $d(j, S) = t$ (if we select $i_2$ to be in $S$), with probability $p(1-p)$, $d(j, S) \le 1$ (if we select $i^*$ but not $i_2$ to be in $S$), and in the remaining event of $(1-p)^2$ probability, we still have that $d(j, S) \le d(j, I_1) \le d(j, i^*) + d(i^*, I_1) \le 1 + \delta_1.$ So, we can bound the ratio by
\[\max_{0 \le t \le 1} \frac{p \cdot t + p(1-p) \cdot 1 + (1-p)^2 \cdot (1+\delta_1)}{1-p \cdot (1-t)}.\]
    Note that this is maximized when $t = 0$ (since the numerator and denominator increase at the same rate when $t$ increases), so we can bound the ratio by
\begin{equation} \tag{1.d'} \label{eq:1.d'}
    \frac{p(1-p) + (1-p)^2 \cdot (1+\delta_1)}{1-p} = 1 + (1-p) \cdot \delta_1.
\end{equation}
    
    \item \textbf{There is more than one neighbor of} $\boldsymbol{i^*}$ \textbf{in} $\boldsymbol{H(\delta_1)}$ \textbf{that is in} $\boldsymbol{I_2}$. In this case, there is some other point $i_2' \in I_2$ not in $N(j)$ such that $d(i^*, i_2') \le \delta_1 \cdot \min(t_{i^*}, t_{i_2'}).$ So, we have four points $j, i^*, i_1 \in I_1, i_2' \in I_2$ such that $d(j, i^*) \le 1,$ $d(i^*, i_2') \le \delta_1 \cdot \min(t_{i^*}, t_{i_2'}),$ $d(i^*, i_1) \le \delta_1 \cdot \min(t_{i^*}, t_{i_1}),$ and $d(i_1, i_2') \ge \delta_2 \cdot \min(t_{i_1}, t_{i_2'}).$
    
    If we condition on $i_2 \in S$, then the denominator equals $c(j, i_2)$ and the numerator is at most $c(j, i_2)$, so the fraction is $1$ (or $0/0$). Else, if we condition on $i_2 \not\in S$, then the denominator is $1$, and the numerator is at most $p \cdot \|i_2'-j\|_2^2 + (1-p) \cdot \|i_1-j\|_2^2$. Note that $d(j, i^*) \le 1$, that $t_{i^*} \le 1$, and that $\delta_2, \delta_1 \ge \delta_2$. So, as in case b), the overall fraction is at most
\begin{equation} \tag{1.e'} \label{eq:1.e'}
    \inf_{T > 0} \sqrt{3\left(X+Y\right)+2\sqrt{2\left(X+Y\right)^{2}-\delta_2^2 \cdot XY}},
\end{equation}
where $X = p^2+p(1-p) \cdot T$ and $Y = (1-p)^2 + \frac{p(1-p)}{T}.$    
    
    \item \textbf{There are no neighbors of} $\boldsymbol{i^*}$ \textbf{in} $\boldsymbol{H(\delta_1)}$ \textbf{that are in} $\boldsymbol{I_2}$. In this case, $d(i^*, i_2) \ge \delta_1 \cdot \min(t_{i^*}, t_{i_2}).$ Define $t = \min(t_{i^*}, t_{i_2})$. Since $d(j, i^*) \le 1,$ by the triangle inequality we have that $d(j, i_2) \ge \max\left(0, \delta_1 \cdot t-1\right)$. In addition, we still have that $d(j, I_1) \le d(j, i^*) + d(i^*, I_1) \le 1 + \delta_1 \cdot t_{i^*},$ and $d(j, I_1) \le d(j, i_2) + d(i_2, I_1) \le 1 + \delta_1 \cdot t_{i_2}$, so together we have that $d(j, I_1) \le 1+\delta_1 \cdot t$.
    Since $i_2 \in S$ with probability $p$, the ratio is at most
\begin{align*}
    &\hspace{0.5cm}\max_{0 \le t \le 1} \max_{d(j, i_2) \ge \delta_1 \cdot t - 1} \frac{p \cdot d(j, i_2) + (1-p) \cdot (1+\delta_1 \cdot t)}{1-p(1-d(j, i_2))} \\
    &= \max_{0 \le t \le 1} \max_{d(j, i_2) \ge \delta_1 \cdot t - 1} \frac{p \cdot d(j, i_2) + (1-p) \cdot (1+\delta_1 \cdot t)}{p \cdot d(j, i_2) + (1-p)}.
\end{align*}
    It is clear that this function is decreasing as $d(j, i_2)$ is increasing (and nonnegative). So, we may assume WLOG that $d(j, i_2) = \max(0, \sqrt{\delta_2 \cdot t}-1)$ to bound this ratio by 
\begin{equation*}
    \max_{0 \le t \le 1} \frac{p \cdot \max(0, \delta_1 \cdot t-1) + (1-p) \cdot (1+\delta_1 \cdot t)}{p \cdot \max(0, \delta_1 \cdot t-1) + (1-p)}    
\end{equation*}
    If $\delta_1 \cdot t - 1 \le 0$, then $\delta_1 \cdot t + 1 \le 2,$ so we can bound the above equation by $2$. Otherwise, the above fraction can be rewritten as $\frac{\delta_1 \cdot t + (1-2p)}{p \cdot \delta_1 \cdot t + (1-2p)}$. For $p < 0.5,$ this is maximized when $t = 1$ over the range $t \in [0, 1]$, so we can bound the ratio by
\begin{equation} \tag{1.f'} \label{eq:1.f'}
    \frac{1-2p + \delta_1}{1-2p + p \cdot \delta_1}.
\end{equation}

    \item \textbf{There is a neighbor of} $\boldsymbol{i^*}$ \textbf{in} $\boldsymbol{H(\delta_3)}$ \textbf{that is also in} $\boldsymbol{I_2}$. In this case, either $d(i^*, i_2) \le \delta_3 \cdot t_{i^*}$ so $d(i_2, j) \ge \max(0, d(j, i^*)-\delta_3 \cdot t_{i^*})$, or there is some other point $i_2' \in I_2$ not in $N(j)$ such that $d(i^*, i_2') \le \delta_3 \cdot \min(t_{i^*}, t_{i_2'}).$ If $d(i^*, i_2) \le \delta_3 \cdot t_{i^*}$, then define $t = t_{i^*}$ and $u = d(j, i^*)$. In this case, $d(j, I_1) \le u + \delta_1 \cdot t,$ and $d(j, i_2) \ge \max(0, u-\delta_3 \cdot t).$ So, the fraction is at most
\[\frac{(1-p) \cdot (u+\delta_1 \cdot t)+p \cdot d(j, i_2)}{1-p+p \cdot d(j, i_2)} \le \frac{(1-p) \cdot (u+\delta_1 \cdot t)+p \cdot \max(0, u-t \cdot \delta_3)}{1-p+p \cdot \max(0, u-t \cdot \delta_3)}.\]
    Since $t = t_{i^*} \le 1$ and $d(j, i^*) \le 1$, we can bound the overall fraction as at most
\begin{align} 
    &\hspace{0.5cm}\max_{0 \le t \le 1} \max_{0 \le u \le 1} \frac{(1-p) \cdot (u+\delta_1 \cdot t)+p \cdot \max(0, u-t \cdot \delta_3)}{1-p+p \cdot \max(0, u-t \cdot \delta_3)} \nonumber \\
    &\le \max\left(\delta_1+\delta_3, \frac{1 + \delta_1 - p(\delta_1+\delta_3)}{1-p \cdot \delta_3}\right) \tag{1.g.i'} \label{eq:1.g.i'}
\end{align}
    We derive the final equality in Appendix \ref{app:bash}.
    
    Alternatively, if $d(i^*, i_2') \le \delta_3 \cdot \min(t_{i^*}, t_{i_2'}),$ then if we condition on $i_2 \in S,$ the fraction is $1$ (or $0/0$), and if we condition on $i_2 \not\in S$, the denominator is $1$ and the numerator is at most $p \cdot d(j, i_2') + (1-p) \cdot d(j, i_1) \le p \cdot (1+\delta_3)+(1-p) \cdot (1+\delta_1).$ (Note that $i_2 \in S$ and $i_2' \in S$ are independent.) Therefore, we can also bound the overall fraction by 
\begin{equation} \tag{1.g.ii'} \label{eq:1.g.ii'}
    p \cdot (1+\delta_3)+(1-p) \cdot (1+\delta_1).
\end{equation}

    \item \textbf{There is a neighbor of} $\boldsymbol{i^*}$ \textbf{in} $\boldsymbol{H(\delta_2)}$ \textbf{that is also in} $\boldsymbol{I_1}$. In this case, $i^*$ would not be in $V_2$, so we are back to sub-case 1.a'.
\end{enumerate}

\paragraph{Case 2: $\boldsymbol{a = 0, b = 0, c \le 1}$.}
We again let $i^*$ be the witness of $j$. In this case, if $i \not\in V_2$, then there exists $i_1 \in I_1$ such that $d(i^*, i_1) \le \delta_2 \cdot \min(t_{i^*}, t_{i_1}) \le \delta_2$, in which case $d(j, I_1) \le 1+\delta_2.$ Otherwise, there exists $i_1 \in I_1$ such that $d(i^*, i_1) \le \delta_1 \cdot \min(t_{i^*}, t_{i_1})$, and there exists $i_2 \in I_2$ such that $d(i^*, i_2) \le \delta_1 \cdot \min(t_{i^*}, t_{i_2})$. Finally, in this case we also have that $d(i_1, i_2) \ge \delta_2 \cdot \min(t_{i_1}, t_{i_2})$. Now, we consider two subcases, either $c = 0$ or $c = 1$.

\begin{enumerate}[label=\alph*)]
    \item $\boldsymbol{c=0.}$ In this case, we have that the denominator is $1$, and the numerator is either at most $1+\delta_2$, or is at most $p \cdot \|j-i_2\|_2 + (1-p) \cdot \|j-i_1\|_2$, where $d(j, i^*) \le 1$, $d(i^*, i_1) \le \delta_1 \cdot \min(t_{i^*}, t_{i_1})$, $d(i^*, i_2) \le \delta_1 \cdot \min(t_{i^*}, t_{i_2})$, and $d(i_1, i_2) \ge \delta_2 \cdot \min(t_{i_1}, t_{i_2})$. Hence, we can bound the overall fraction, by the same computation as in the $k$-median subcase 1.b), as
\begin{equation} \tag{2.a'} \label{eq:2.a'}
    \max\left(1+\delta_2, 
    \inf_{T > 0} \sqrt{3\left(X+Y\right)+2\sqrt{2\left(X+Y\right)^{2}-\delta_2^2 \cdot XY}}\right),
\end{equation}
where $X = p^2+p(1-p) \cdot T$ and $Y = (1-p)^2 + \frac{p(1-p)}{T}.$

    \item $\boldsymbol{c=1.}$ In this case, let $i_3$ be the unique point in $N(j) \cap I_3.$ Then, conditioned on $i_3$ being in $S$, the numerator and denominator both equal $d(j, i_3)$. Otherwise, the denominator is $1$ and we can bound the numerator the same way as in subcase 2a), since the probability of $i_2 \in S$ is either $p$ (if $q(i_3) \neq i_2$) or $\frac{p}{1-p} \ge p$ (if $q(i_3) = i_2$). So, we can bound the overall fraction again as 
\begin{equation} \tag{2.b'} \label{eq:2.b'}
    \max\left(1+\delta_2, 
    \inf_{T > 0} \sqrt{3\left(X+Y\right)+2\sqrt{2\left(X+Y\right)^{2}-\delta_2^2 \cdot XY}}\right),
\end{equation}
where $X = p^2+p(1-p) \cdot T$ and $Y = (1-p)^2 + \frac{p(1-p)}{T}.$    

\end{enumerate}

\paragraph{Case 3: $\boldsymbol{a = 0}$, all other cases.}
Note that in this case, we may assume $b+c = |N(j) \cap (I_2 \cup I_3)| \ge 2$, since we already took care of all cases when $a = 0$ and $b+c \le 1$. We split into two main subcases.

\begin{enumerate}[label=\alph*)]
    \item \textbf{Every point} $\boldsymbol{i}$ \textbf{in} $\boldsymbol{N(j) \cap (I_2 \cup I_3)}$ \textbf{satisfies} $\boldsymbol{d(j, i) \ge \delta_1-1.}$
    In this case, let $\bar{I} \subset I_2 \cup I_3$ represent the set of points selected to be in $S$. Note that $\bar{I}$ is a random set.
    
    Note that with probability at least $2p-2p^2$, $|N(j) \cap \bar{I}| \ge 1$. (Since $N(j) \cap (I_2 \cup I_3)$ has size at least $2$, the probability of $\bar{I} \cap N(j)$ being nonempty is minimized when $b = 0, c = 2$, and the two points in $N(j) \cap I_3$ map to the same point under $q$.) In this event, let $h = |N(j) \cap \bar{I}|$, and let $r_1, \dots, r_h$ represent the distances from $j$ to each of the points in $N(j) \cap \bar{I}$. Then, by Lemma \ref{lem:geometric_median}, $\frac{r_1+\cdots+r_h}{h} \ge \sqrt{\frac{h-1}{h}}.$ So, if we set $r= \frac{r_1+\cdots+r_h}{h},$ then $r \le 2\left[r \cdot h - (h-1)\right]$ for any $h \ge 1$ and $r \ge \sqrt{\frac{h-1}{h}},$ which means that $\min r_i \le  \frac{r_1+\cdots+r_h}{h} \le 2 \cdot \left(1 - \sum_{i=1}^{h} (1-r_i)\right)$.
    
    In addition, if $|\bar{I}| = 1$, then $1-\sum (1-r_i) = r_1 \ge \delta_1-1 = \sqrt{2}-1$, and otherwise, because every point in $\bar{I}$ is separated by at least $\delta_1 = \sqrt{2}$ distance, $1 - \sum (1-r_i) \ge \sqrt{h(h-1)}-(h-1) \ge \sqrt{2}-1$ by Lemma \ref{lem:geometric_median}. Overall, this means that whenever $h = |N(j) \cap \bar{I}| \ge 1,$ $\min r_i \le 2 \cdot \left(1 - \sum (1-r_i)\right)$ and $1 - \sum (1-r_i) \ge \sqrt{2}-1$.
    
    In addition, if $\bar{I} = \emptyset$, then the denominator is $1$ and the numerator is at most $d(j, w(j)) + d(w(j), I_1) \le 1+\sqrt{2}$. Therefore, if we let $q$ be the probability that $|\bar{I}| \ge 1$ and $t$ be the expectation of $1-\sum(1-r_i)$ conditioned on $|\bar{I}| \ge 1$, the overall fraction is at most
\begin{align}
    \frac{(1+\sqrt{2}) \cdot (1-q) + 2 \cdot t \cdot q}{(1-q) + t \cdot q} &\le \frac{(1+\sqrt{2}) \cdot (1-q) + 2 \cdot (\sqrt{2}-1) \cdot q}{(1-q) + (\sqrt{2}-1) q} \nonumber \\
    &\le \frac{(1+\sqrt{2})-(3-\sqrt{2}) \cdot (2p-2p^2)}{1-(2-\sqrt{2}) \cdot (2p-2p^2)}. \tag{3.a'} \label{eq:3.a'}
\end{align}
    
    \item \textbf{There exists a point} $\boldsymbol{i \in N(j) \cap (I_2 \cup I_3)}$ \textbf{such that} $\boldsymbol{d(j, i) < \delta_1-1.}$ 
    In this case, note that $d(i, i') < \delta_1$ for all points $i' \in N(j) \cap (I_2 \cup I_3)$. Assuming $b+c \ge 2$, this is only possible if either: 
    \begin{enumerate}[i)]
        \item $b=1,c=1$ and the unique points $i_2 \in N(j) \cap I_2$ and $i_3 \in N(j) \cap I_3$ satisfy $q(i_3) = i_2$, or \label{case:i}
        \item $b=1, c\ge 2$, the unique point $i_2 \in N(j) \cap I_2$ is the only point with $d(j, i) < \delta_1-1$, and every point in $N(j) \cap I_3$ maps to $i_2$ under $q$. \label{case:ii}
    \end{enumerate}
    
    First, assume Case \ref{case:i}.
    Let $r=d(j, i_2)$ and $s=d(j, i_3)$. Then, $\BE \left[1-\sum_{i \in N(j) \cap S} (1-d(j, i))\right]$ $= (1-2p) \cdot 1 + p \cdot r + p \cdot s$, and the expected distance $d(j, S)$ is at most $p \cdot r + p \cdot s + (1-2p) \cdot (1+\sqrt{2})$. Since $d(r, s) \ge \delta_3$, this means that $r+s \ge \delta_3,$ so the overall fraction is at most
\begin{equation} \label{eq:3.b.i'} \tag{3.b.i'}
    \frac{(1+\sqrt{2}) \cdot (1-2p) + \delta_3 \cdot p}{(1-2p) + \delta_3 \cdot p}.
\end{equation}

    Next, assume Case \ref{case:ii}.
    Let $r=d(j, i_2)$, and let $s_1, \dots, s_c$ be the distances from $j$ to each of the $c$ points in $N(j) \cap I_3$. Let $s = \frac{s_ 1+\cdots+s_c}{c}$ Then, $\BE\left[1-\sum_{i \in N(j) \cap S} (1-d(j,i))\right]=1-p(1-r)-\sum_{i=1}^{c}p(1-s_i) \ge 1-p(1+c)+p \cdot (r+s \cdot c).$ In addition, $\BE[c(j, S)]$ is at most $(1+\sqrt{2}) \cdot (\frac{1}{2}-p) + (1+\sqrt{2}) \cdot \frac{1}{2}(1-2p)^c + p \cdot r + \frac{1}{2}\left(1-(1-2p)^c\right) \cdot s$. Since the numerator and denominator grow at the same rate with respect to $r$, and the numerator grows slower with respect to $s$ than the denominator, we wish to minimize $r$ and $s$ to maximize the fraction. So, we set $r=0$, and $s=\sqrt{\frac{c-1}{c}}$ by Lemma \ref{lem:geometric_median}. Therefore, the fraction is at most
\begin{equation} \label{eq:3.b.ii'} \tag{3.b.ii'}
    \frac{(1+\sqrt{2}) \cdot \left(\frac{1}{2}(1-2p)+\frac{1}{2}(1-2p)^c\right) + \frac{1}{2}\left(1-(1-2p)^c\right) \cdot \sqrt{\frac{c-1}{c}}}{1-p(1+c-\sqrt{c(c-1)})}
\end{equation}
\end{enumerate}

\paragraph{Case 4: $\boldsymbol{a \ge 1}$.} First, we will condition on the fair coin flips, and let $\bar{I} \subset N(j) \cap (I_2 \cup I_3)$ be the set of ``surviving'' points, i.e., the points that will be included in $S$ with probability $2p$. Note all points in $\bar{I}$ have pairwise distance at least $\delta_1 = \sqrt{2}$ from each other, and all points in $N(j) \cap I_1$ have pairwise distance at least $\delta_1$ from each other also. However, the points in $N(j) \cap I_1$ and $\bar{I}$ are only guaranteed to have pairwise distance at least $\delta_2$ from each other. Let $h$ represent the size $|\bar{I}|$.

We consider several subcases.

\begin{enumerate}[label=\alph*)]
    \item $\boldsymbol{h = 0.}$
    In this case, we can use the same bounds as Cases 2 and 3 of the simpler $1+\sqrt{2}$-approximation, since we only have to worry about points in $I_1 \cap N(j).$ Indeed, the same bounds on the numerator and denominator still hold, so the ratio is at most
\begin{equation} \label{eq:4.a'} \tag{4.a'}
    2.
\end{equation}
    
    \item $\boldsymbol{a=1,h=1.}$
    In this case, let $i_1$ be the unique point in $N(j) \cap I_1$, and let $i_2$ be the unique point in $\bar{I}$. Then, $d(i_1, i_2) \ge \delta_2,$ so if $t = d(j, i_1)$ and $u = d(j, i_2)$, then the denominator in expectation is $1-(1-t)-2p(1-u) = t-2p(1-u) \ge t\cdot (1-2p)$, since $1-u \le t$. But, the numerator $\BE[c(j, S)]$ is at most $t$, so the overall fraction is at most
\begin{equation} \label{eq:4.b'} \tag{4.b'}
    \frac{1}{1-2p}.
\end{equation}
    
    \item $\boldsymbol{a = 1, h \ge 2.}$
    Let $i_1$ be the unique point in $N(j) \cap I_1.$ Then, we must have that $d(j, i_1) \ge \delta_2-1.$ Letting $t = d(j, i_1)$, we have that $d(j, S) \le t$, but $\BE\left[\alpha_j-\sum_{i \in N(j) \cap S} (\alpha_j-c(j, i))\right] \ge t - 2p \cdot \sum_{i \in \bar{I}} (1-c(j, i)).$ However, we know that $\sum_{i \in \bar{I}} (1-c(j, i)) \ge h-\sqrt{h(h-1)} \ge 2-\sqrt{2}$ by Lemma \ref{lem:geometric_median}, so the denominator is at least $t-(2-\sqrt{2}) \cdot 2p$. So, the ratio is at most $\frac{t}{t-2(2-\sqrt{2})p},$ which is maximized when $t$ is as small as possible, namely $t = \delta_2-1$. So, the ratio is at most
\begin{equation} \label{eq:4.c'} \tag{4.c'}
    \frac{\delta_2-1}{(\delta_2-1)-2(2-\sqrt{2})p}.
\end{equation}

    \item $\boldsymbol{a \ge 2, h = 1.}$
    In this case, let $i_2$ be the unique point in $\bar{I}$, and let $t=d(j, i_2)$. Note that $d(j, i_2) \ge \delta_2-1$, so $1-d(j, i_2) \le 2-\delta_2$. In addition, if the distances from $j$ to the points in $N(j) \cap I_1$ are $r_1, \dots, r_a$, then $d(j, S) \le \frac{r_1+\cdots+r_a}{a}.$ If we let $r = \frac{r_1+\cdots+r_a}{a}$, then $\BE\left[\alpha_j-\sum_{i \in N(j) \cap S} (\alpha_j-c(j, i))\right] = 1-a(1-r)-2p(1-t) \ge 1-a(1-r)-(2-\delta_2) \cdot 2p.$ So, the overall fraction is at most $\frac{r}{1-a(1-r)-(2-\delta_2) \cdot 2p}.$ It is clear that this function decreases as $r$ increases, so we want to set $r$ as small as possible. However, we know that $r \ge \sqrt{\frac{a-1}{a}}$ by Lemma \ref{lem:geometric_median}, so the overall fraction is at most
$$\frac{\sqrt{\frac{a-1}{a}}}{\sqrt{a(a-1)}-(a-1)-(2-\delta_2) \cdot 2p} = \frac{1}{(a-\sqrt{a(a-1)})-(2-\delta_2) \cdot 2p \cdot \sqrt{\frac{a-1}{a}}}.$$
    The denominator clearly decreases as $a \to \infty$, so the overall fraction is at most the limit of the above as $a \to \infty$, which is
\begin{equation} \label{eq:4.d'} \tag{4.d'}
    \frac{1}{\frac{1}{2}-(2-\delta_2) \cdot 2p}.
\end{equation}

    \item $\boldsymbol{a, h \ge 2.}$
    In this case, let the distances from $j$ to the points in $N(j) \cap I_1$ be $r_1, \dots, r_a$, and let the distances from the points from $j$ to the points in $\bar{I}$ be $s_1, \dots, s_h$. Also, let $r = \frac{r_1+\cdots+r_a}{a},$ and let $s = \frac{s_1+\cdots+s_h}{h}.$ Then, we have that the numerator is at most $r$, and the denominator is at least $1-a \cdot (1-r) - 2p \cdot h \cdot (1-s)$. Next, note that by Lemma \ref{lem:geometric_median}, $s \ge \sqrt{\frac{h-1}{h}}$, so $h \cdot (1-s) \ge h \cdot \left(1-\sqrt{\frac{h-1}{h}}\right) \ge h-\sqrt{h-1} \ge 2-\sqrt{2}$. So, the fraction is at most $\frac{r}{1-a(1-r)-2p \cdot (2-\sqrt{2})}$. This is exactly the same as in subcase 4d), except there the denominator was $1-a(1-r)-(2-\delta_2) \cdot 2p,$ i.e., we just replaced $\delta_2$ with $\sqrt{2}$. So, the same calculations give us that we can bound the overall fraction by at most
\begin{equation} \label{eq:4.e'} \tag{4.e'}
    \frac{1}{\frac{1}{2}-(2-\sqrt{2}) \cdot 2p}.
\end{equation}
\end{enumerate}
\end{proof}

Finally, we bound the actual LMP approximation constant, similar to Proposition \ref{prop:more_bash} for the $k$-means case. We have the following proposition,
which will immediately follow from analyzing all subases carefully (see Lemma \ref{lem:more_bash_median}).

\begin{proposition} \label{prop:more_bash_2}
    For $p = 0.068,$ $\rho(p) \le 2.395$. Hence, we can obtain a $2.395$-LMP approximation.
\end{proposition}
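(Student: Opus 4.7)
\textbf{Proof proposal for Proposition \ref{prop:more_bash_2}.} The plan is to verify, for each subcase in the casework of Lemma \ref{lem:main_lmp_median}, that plugging in $p = 0.068$, $\delta_1 = \sqrt{2}$, $\delta_2 = 1.395$, $\delta_3 = 2-\sqrt{2}$ yields a bound at most $2.395$; the overall $\rho(p)$ is the max of these bounds. Since $\delta_1+1 \approx 2.414$ is slightly above $2.395$, no bound has much slack, so we must be careful in every subcase. I would organize the verification by grouping the subcases according to their analytic form.

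First, for the closed-form cases that contain no free integer or continuous parameters---namely \eqref{eq:1.a'}, \eqref{eq:1.c'}, \eqref{eq:1.d'}, \eqref{eq:1.f'}, \eqref{eq:1.g.i'}, \eqref{eq:1.g.ii'}, \eqref{eq:3.a'}, \eqref{eq:3.b.i'}, \eqref{eq:4.a'}, \eqref{eq:4.b'}, \eqref{eq:4.c'}, \eqref{eq:4.d'}, \eqref{eq:4.e'}---I would simply substitute the chosen constants and verify numerically. For example, \eqref{eq:4.a'} gives the trivial bound $2 \le 2.395$; \eqref{eq:1.a'} gives $1+\delta_2 = 2.395$, which is why $\delta_2$ was chosen at exactly $1.395$; \eqref{eq:3.a'} evaluated at $p=0.068$ gives a number close to $2.39$; \eqref{eq:4.b'}, \eqref{eq:4.c'}, \eqref{eq:4.d'}, \eqref{eq:4.e'} all become comfortably below $2.395$ since the $p$ is small and the denominators stay close to $1/2$ or $1$.

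Next, for the subcases \eqref{eq:1.b'}, \eqref{eq:1.e'}, \eqref{eq:2.a'}, \eqref{eq:2.b'} containing an $\inf_{T > 0}$ of an expression involving $X = p^2 + p(1-p)T$ and $Y = (1-p)^2 + p(1-p)/T$, I would pick the ``natural'' symmetrizing choice $T = (1-p)/p$, which makes $X = Y = p(1-p) + p^2 = p$ or a similar clean value, and check the resulting explicit expression is $\le 2.395$ at $p = 0.068$. If this specific $T$ is not quite optimal, one can instead take $T$ to be the stationary point of the inner expression; either way this reduces to numerical verification of a single closed-form bound.

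The main obstacle will be the two subcases that depend on the unbounded integer parameter $c \ge 2$, namely \eqref{eq:3.b.ii'}. Here I would argue by monotonicity in $c$: the numerator's growth from $\frac{1}{2}(1-2p)^c$ decays geometrically while the denominator is $1 - p(1+c-\sqrt{c(c-1)})$, and $1 + c - \sqrt{c(c-1)}$ is a decreasing function of $c$ tending to $1/2$, so the denominator is monotone in $c$. I would then verify the bound for the small values $c = 2, 3, \ldots, c_0$ directly (for some small $c_0$, say $c_0 = 8$), and for $c > c_0$ use a crude uniform upper bound replacing $(1-2p)^c$ by $(1-2p)^{c_0}$ in the numerator and $1+c-\sqrt{c(c-1)}$ by its limit $1/2$ in the denominator, checking that the resulting explicit expression remains below $2.395$. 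A similar monotonicity reduction in $a$ or $h$ was already carried out in the bodies of subcases \eqref{eq:4.d'} and \eqref{eq:4.e'} inside Lemma \ref{lem:main_lmp_median}, so no extra work is needed there. Combining all verifications, the maximum over all subcases at $p = 0.068$ is attained essentially by \eqref{eq:1.a'} at $1 + \delta_2 = 2.395$, yielding $\rho(0.068) \le 2.395$ and hence a $2.395$-LMP approximation as claimed.
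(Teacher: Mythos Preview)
Your overall strategy---substitute $p=0.068$ and the chosen $\delta_i$ into every subcase of Lemma~\ref{lem:main_lmp_median} and take the max---is exactly what the paper does (the paper packages this as Lemma~\ref{lem:more_bash_median}, grouping the subcases into three families). Your treatment of the closed-form subcases and of \eqref{eq:3.b.ii'} via a finite check plus a crude tail bound is fine and essentially matches the paper's argument.

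There is, however, a concrete error in your handling of the $\inf_{T>0}$ subcases \eqref{eq:1.b'}, \eqref{eq:1.e'}, \eqref{eq:2.a'}, \eqref{eq:2.b'}. With $T=(1-p)/p$ one gets
\[
X \;=\; p^2+p(1-p)\cdot\tfrac{1-p}{p}\;=\;p^2+(1-p)^2,\qquad Y\;=\;(1-p)^2+p(1-p)\cdot\tfrac{p}{1-p}\;=\;p^2+(1-p)^2,
\]
so $X=Y=p^2+(1-p)^2\approx 0.873$ at $p=0.068$, not $p$. Plugging this in yields a bound of roughly $3.09$, which badly exceeds $2.395$. Intuitively, this $T$ is the \emph{worst} symmetrizing choice: it maximizes $X+Y$ (since $T+1/T$ is minimized at $T=1$, not at $(1-p)/p\approx 13.7$). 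Even the more natural $T=1$ gives $X+Y=1$, $XY=p(1-p)$, and the bound evaluates to about $2.396$, still just above $2.395$.

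The paper instead fixes $T=1.1$, which pushes the bound down to about $2.395$; the parameters $\delta_2=1.395$ and $p=0.068$ were chosen so that \eqref{eq:1.a'} (giving $1+\delta_2=2.395$) and this $\inf_T$ case are \emph{simultaneously} tight. Your fallback of taking $T$ at the stationary point would recover this, but you should be aware that there is no slack here: any suboptimal $T$ fails, so this step is not a routine substitution but requires actually locating (or guessing, as the paper does with $T=1.1$) a near-optimal $T$.
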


\subsection{Improved $k$-median approximation}

In this section, we explain how our LMP approximation for $k$-median implies an improved polynomial time $k$-median approximation for any fixed $k$. We set $p_1 = 0.068$ and $\delta_1 = \sqrt{2}, \delta_2 = 1.395$, and $\delta_3 = 2-\sqrt{2}$. In this case, we have that $\rho(\delta_1) \le 2.395$ by Proposition \ref{prop:more_bash_2}.

Next, we have that all of the results in Subsections \ref{subsec:k_means_alg_prelim} and \ref{subsec:k_means_analysis} hold in the $k$-median context, with two changes. The first, more obvious, change is that Lemma \ref{lem:cost_j_upper_bound} (and all subsequent results in Section \ref{subsec:k_means_analysis}) needs to use the function $\rho$ associated with $k$-median as opposed to the function associated with $k$-means. 

The second change is that Lemma \ref{lem:RHS_positive_1/2} no longer holds for $p \le 0.5$, but still holds for $p \le p_0$ for some fixed choice $p_0$. Indeed, for Cases 1, 2, and 3 (i.e., when $a = 0$), we have that $\BE\left[\alpha_j-\sum_{i \in N(j) \cap S} (\alpha_j-c(j, i))\right] \ge 0$ for any $p \le 0.5$, since $I_1 = \emptyset$, and $\sum_{i \in N(j) \cap I_2} (\alpha_j - c(j, i)) \le (b-\sqrt{b(b-1)}) \cdot \alpha_j \le \alpha_j$ if $|N(j) \cap I_2| = b$, and likewise $\sum_{i \in N(j) \cap I_3} (\alpha_j - c(j, i)) \le (c-\sqrt{c(c-1)}) \cdot \alpha_j \le \alpha_j$ if $|N(j) \cap I_3| = c$. So, $\BE\left[\alpha_j-\sum_{i \in N(j) \cap S} (\alpha_j-c(j, i))\right] \ge (1-2p) \cdot \alpha_j \ge 0$ for $p \le 0.5$. 
For case $4$ of $k$-median, we verify that $\BE\left[\alpha_j-\sum_{i \in N(j) \cap S} (\alpha_j-c(j, i))\right] \ge 0$ after conditioning on $\bar{I}$. Indeed, if $\bar{I} = 0$ (i.e., subcase 4.a), then this just equals $\alpha_j-\sum_{i \in N(j) \cap I_1} (\alpha_j-c(j, i)) \ge \alpha_j \cdot (\sqrt{a(a-1)}-(a-1)) \ge 0$. In the remaining subcases, the value $\BE\left[\alpha_j-\sum_{i \in N(j) \cap S} (\alpha_j-c(j, i))\right] \ge 0$ is always nonnegative as long as the denominator of our final fractions are also nonnegative. So, we just need that $1-2p \ge 0$, $(\delta_2-1)-2(2-\sqrt{2})p \ge 0$, $\frac{1}{2}-(2-\delta_2) \cdot 2p \ge 0$, and $\frac{1}{2}-(2-\sqrt{2}) \cdot 2p \ge 0$. These are all true as long as $p \le \frac{\delta_2-1}{2(2-\sqrt{2})} \le 0.337$. Thus, we replace $0.5$ with $p_0 = 0.337$.

Overall, the rest of Subsection \ref{subsec:k_means_analysis} goes through, except that our final bound will be
\begin{equation} \label{eq:main_bound_median}
    \left(1+O(\frac{1}{C}+\eps+\gamma)\right) \cdot \max_{r \ge 1} \min\left(\rho\left(\frac{p_1}{r}\right), \rho(p_1) \cdot \left(1+\frac{1}{4r \cdot \left(\frac{p_0 \cdot r}{p_1}-1\right)}\right)\right),
\end{equation}
    where $p_1 = 0.068$ and $p_0 = 0.337$. The main replacement here is that we replaced $\frac{r}{2p_1} = \frac{0.5 \cdot r}{p_1}$ with $\frac{0.337 \cdot r}{p_1}.$ We can use this to obtain a $2.408$-approximation, improving over $1+\sqrt{2}$. We will not elaborate on this, however, as we will see that using the method in Subsection \ref{subsec:k_means_improved}, we can further improve this to $\kmedianratio$.


\medskip

    We split the clients this time into $3$ groups. We let $\mathcal{D}_1$ be the set of clients $j \not\in \mathcal{D}_B$ corresponding to all subcases in Cases 1 and 2, $\mathcal{D}_2$ be the set of clients $j \not\in \mathcal{D}_B$ corresponding to all subcases in Case 3, and $\mathcal{D}_3$ be the set of clients corresponding to all subcases in Case 4, and all bad clients $j \in \mathcal{D}_B.$ For any client $j$, as in Subsection \ref{subsec:k_means_improved}, we define $A_j := \alpha_j - \sum_{i \in \bar{N}(j) \cap I_1} (\alpha_j-c(j, i))$ and $B_j := \sum_{i \in \bar{N}(j) \cap (I_2 \cup I_3)} (\alpha_j-c(j, i))$. We also define $Q_1, Q_2, Q_3, R_1, R_2, R_3$ similar to how we did for the $k$-means case.
    
    Similar to Lemma \ref{lem:RHS_positive_1/2_improved} in the $k$-means case, we have the following result for the $k$-median case. 
    
\begin{lemma}
    Let $\delta_1 = \sqrt{2}$, $\delta_2 = 1.395$, and $\delta_3 = 2-\sqrt{2}$. For any client $j \in \mathcal{D}_1$, $A_j \ge B_j$. For any client $j \in \mathcal{D}_2$, $A_j \ge \frac{1}{2} B_j$. Finally, for any client $j \in \mathcal{D}_3,$ $A_j \ge \frac{\delta_2-1}{2(2-\sqrt{2})} \cdot B_j.$
\end{lemma}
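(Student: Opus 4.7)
The plan is to handle the three classes separately, normalizing $\alpha_j = 1$ throughout and writing $S_\ell := \sum_{i \in \bar{N}(j) \cap I_\ell}(\alpha_j - c(j,i))$ so that $A_j = \alpha_j - S_1$ and $B_j = S_2 + S_3$. The workhorse in every case is Lemma \ref{lem:geometric_median}: if $h$ points in Euclidean space are pairwise at distance at least $\sqrt{2}\,\alpha_j$, then $\sum_{i} d(j,i) \ge \sqrt{h(h-1)}\,\alpha_j$, which translates to $\sum_{i}(\alpha_j - d(j,i)) \le (h - \sqrt{h(h-1)})\,\alpha_j \le \alpha_j$.

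For $\mathcal{D}_1$ (Cases 1 and 2), inspection shows $a = 0$ and $|\bar{N}(j) \cap (I_2 \cup I_3)| \le 1$, so $A_j = \alpha_j$ and $B_j \le \alpha_j$, giving $A_j \ge B_j$ at once. For $\mathcal{D}_2$ (Case 3), we still have $a = 0$ and so $A_j = \alpha_j$, but now $b + c \ge 2$. Since $I_2$ and $I_3$ are independent sets in $H(\delta_1) = H(\sqrt{2})$ and every $i \in \bar{N}(j)$ satisfies $\tau_i \ge \alpha_j$, applying the workhorse bound separately to $\bar{N}(j) \cap I_2$ and $\bar{N}(j) \cap I_3$ yields $S_2, S_3 \le \alpha_j$, hence $B_j \le 2\alpha_j = 2A_j$.

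For $\mathcal{D}_3$ I will split on whether $a = 0$ or $a \ge 1$. If $a = 0$ (which can only occur for a bad client), the bounds from $\mathcal{D}_1/\mathcal{D}_2$ still apply and give $A_j - p_0 B_j \ge (1 - 2p_0)\alpha_j > 0$, where $p_0 := \frac{\delta_2-1}{2(2-\sqrt{2})} < \tfrac{1}{2}$. When $a \ge 1$, I will reuse the subcase analysis of Case 4 of Lemma \ref{lem:main_lmp_median}. The key additional geometric input is that each $i \in \bar{N}(j) \cap (I_2 \cup I_3)$ lies in $V_2$ and is therefore not $H(\delta_2)$-adjacent to any element of $I_1$, so $d(i, i') \ge \delta_2\,\alpha_j$ for every $i' \in \bar{N}(j) \cap I_1$. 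Conditioning on the fair coin flips of Algorithm \ref{alg:lmp_k_median}, each surviving $i \in \bar{I}$ is in $S$ with probability $2p$, so $\BE[\text{dual}(j) \mid \bar{I}] = A_j - 2p \sum_{i \in \bar{I}\cap \bar{N}(j)}(\alpha_j - c(j,i))$; averaging over $\bar{I}$ gives $A_j - p B_j$. It therefore suffices to show the conditional quantity is non-negative for every realization of $\bar{I}$, and the binding case is subcase 4.c ($a = 1$, $|\bar{I} \cap \bar{N}(j)| \ge 2$): the triangle inequality applied to the $\delta_2$-separation yields $A_j = d(j, i_1) \ge (\delta_2 - 1)\alpha_j$, while the workhorse bound applied to the $\sqrt{2}\alpha_j$-separated set $\bar{I} \cap \bar{N}(j)$ with at least two elements gives the corresponding sum at most $(2 - \sqrt{2})\alpha_j$; the resulting $(\delta_2-1) - 2p(2-\sqrt{2}) \ge 0$ is precisely $p \le p_0$.

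The main obstacle is to verify that subcase 4.c is indeed the binding one among 4.a--4.e. The remaining subcases each admit a denominator that is non-negative on a strictly wider range than $[0, p_0]$: in 4.a one has $\bar{I} = \emptyset$ so $B_j$ does not enter at all; in 4.b ($a = h = 1$) the denominator factor is $1 - 2p$; and in 4.d and 4.e ($a \ge 2$) Lemma \ref{lem:geometric_median} applied to $\bar{N}(j) \cap I_1$ improves the lower bound on $A_j$ to at least $(\sqrt{2}-1)\alpha_j$, which in turn loosens the $p$-constraint beyond $p_0$. These checks are already collected in the surrounding text (``$(\delta_2 - 1) - 2(2-\sqrt{2})p \ge 0$ is the binding constraint''), so once one has noted them the inequality $A_j \ge p_0 B_j$ follows immediately by averaging, completing the $\mathcal{D}_3$ case.
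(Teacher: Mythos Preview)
Your proposal is correct and follows essentially the same route as the paper. For $\mathcal{D}_1$ and $\mathcal{D}_2$ the arguments are identical; for $\mathcal{D}_3$ both you and the paper reduce to the non-negativity of the Case~4 denominators (subcases 4.a$'$--4.e$'$) at $p = p_0 = \frac{\delta_2-1}{2(2-\sqrt{2})}$, with 4.c$'$ as the binding constraint. The only cosmetic difference is in the averaging step: the paper applies the subcase analysis twice with $\bar{I}$ set deterministically to $\bar{N}(j)\cap I_2$ and to $\bar{N}(j)\cap I_3$ (giving $A_j \ge 2p_0 S_2$ and $A_j \ge 2p_0 S_3$, then averaging), whereas you average over the random surviving set $\bar{I}$; since each element of $I_2\cup I_3$ survives with probability $\tfrac12$, both yield $A_j \ge p_0 B_j$.
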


\begin{proof}
    Recall that $a = |\bar{N}(j) \cap I_1|$, $b = |\bar{N}(j) \cap I_2|$, and $c = |\bar{N}(j) \cap I_3|$. In case $1$ or $2$, we have that $a = 0$ and $b+c \le 1$, so $\alpha_j - \sum_{i \in \bar{N}(j) \cap I_1} = \alpha_j$, and the sum of $\sum_{i \in \bar{N}(j) \cap (I_1 \cup I_2)} (\alpha_j - c(j, i))$ is merely over a single point, so is at most $\alpha_j$. Thus, if $j \in \mathcal{D}_1,$ $A_j \ge B_j$. (Note that this even holds for bad clients $j \in \mathcal{D}_B$.)
    
    In case $3$, we have that $a = 0$, so $\alpha_j - \sum_{i \in \bar{N}(j) \cap I_1} = \alpha_j$. In addition, all points $i, i'$ in $\bar{N}(j) \cap I_2$ are separated by at least $\sqrt{2} \cdot \min(\tau_i, \tau_{i'}) \ge \sqrt{2} \cdot \alpha_j$. Hence, by Lemma \ref{lem:geometric_median}, $\sum_{i \in \bar{N}(j) \cap I_2} (\alpha_j - c(j, i)) \le \alpha_j \cdot (b-\sqrt{b(b-1)}) \le \alpha_j$. Likewise, $\sum_{i \in \bar{N}(j) \cap I_3} (\alpha_j - c(j, i)) \le \alpha_j \cdot (c-\sqrt{c(c-1)}) \le \alpha_j$. So,  $\alpha_j - \sum_{i \in \bar{N}(j) \cap I_1} = \alpha_j \ge \frac{1}{2} \sum_{i \in \bar{N}(j) \cap (I_2 \cup I_3)} (\alpha_j - c(j, i)).$ Thus, if $j \in \mathcal{D}_2,$ $A_j \ge B_j$.
    
    In case $4$, we have that $a \ge 1$. We claim that in this case, $\alpha_j - \sum_{i \in \bar{N}(j) \cap I_1} (\alpha_j - c(j, i)) \ge \frac{\delta_2-1}{2-\sqrt{2}} \cdot \sum_{i \in \bar{N}(j) \cap I_2} (\alpha_j - c(j, i))$. This will follow from the fact that all points in $\bar{N}(j) \cap I_1$ are separated by at least $\sqrt{2} \cdot \alpha_j$, all points in $\bar{N}(j) \cap I_2$ are also separated by at least $\sqrt{2} \cdot \alpha_j$, and all points in $\bar{N}(j) \cap I_1$ are separated by at least $\delta_2 \cdot \alpha_j$ from all points in $\bar{N}(j) \cap I_2$. In fact, this immediately follows from the bounding of the denominators in subcases \ref{eq:4.a'}, \ref{eq:4.b'}, \ref{eq:4.c'}, \ref{eq:4.d'}, and \ref{eq:4.e'}, where we replace $h$ with $b$. Likewise, $\alpha_j - \sum_{i \in \bar{N}(j) \cap I_1} (\alpha_j - c(j, i)) \ge \frac{\delta_2-1}{2-\sqrt{2}} \cdot \sum_{i \in \bar{N}(j) \cap I_3} (\alpha_j - c(j, i))$. Overall, we have that for all clients in case $4$, $A_j \ge \frac{\delta_2-1}{2(2-\sqrt{2})} \cdot B_j$. Since $A_j \ge \frac{\delta_2-1}{2(2-\sqrt{2})} \cdot B_j$ in all cases, it also holds for the bad clients $j \in \mathcal{D}_B$ as well.
\end{proof}

    As a direct corollary, we have that
\begin{equation} \label{eq:S_i_bound_median}
    R_1 \le Q_1, \hspace{1cm} R_2 \le 2Q_2, \hspace{0.5cm} \text{and} \hspace{0.5cm} R_3 \le \frac{2(2-\sqrt{2})}{\delta_2-1} Q_3.
\end{equation}

Next, similar to Lemma \ref{lem:more_bash} in the $k$-means case, we have the following result.

\begin{lemma} \label{lem:more_bash_median}
    Let $\delta_1 = \sqrt{2},$ $\delta_2 = 1.395$, and $\delta_3 = 2-\sqrt{2}$. Then, for all $p \in [0.01, 0.068],$ we have that $\rho^{(1)}(p) \le \max\left(1+\delta_2, \sqrt{3\left(X+Y\right)+2\sqrt{2\left(X+Y\right)^{2}-\delta_2^2 \cdot XY}}\right),$
    where $X = p^2+p(1-p) \cdot 1.1$ and $Y = (1-p)^2 + \frac{p(1-p)}{1.1}.$ 
    In addition, for all $p \in [0.01, 0.068]$, we have that $\rho^{(2)}(p) \le \frac{(1+\sqrt{2})-(3-\sqrt{2})\cdot(2p-2p^2)}{1-(2-\sqrt{2})\cdot(2p-2p^2)},$ and that $\rho^{(3)}(p) \le \frac{1}{\frac{1}{2}-2\cdot(2-d_2)\cdot p}$.
\end{lemma}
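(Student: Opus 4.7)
The plan is to verify each of the three bounds $\rho^{(1)}(p), \rho^{(2)}(p), \rho^{(3)}(p)$ by going subcase-by-subcase through the proof of Lemma~\ref{lem:main_lmp_median} and showing that the fraction derived there is at most the claimed quantity for all $p \in [0.01, 0.068]$ once the parameters $\delta_1=\sqrt{2},\,\delta_2=1.395,\,\delta_3=2-\sqrt{2}$ are plugged in. Throughout, the verifications are all single-variable (or finite-variable) inequalities in $p$ that can be certified numerically on the small interval; we rely on plots/tables (see the Desmos files in Appendix~\ref{app:files}) to do the checks, and in the write-up we only need to argue that each subcase's expression is dominated by the claimed envelope.

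For $\rho^{(1)}(p)$, first note that the envelope $\max\bigl(1+\delta_2,\,\sqrt{3(X+Y)+2\sqrt{2(X+Y)^{2}-\delta_2^2 XY}}\bigr)$ with $X=p^2+1.1\,p(1-p)$ and $Y=(1-p)^2+p(1-p)/1.1$ is exactly the bound produced by subcases \ref{eq:1.b'}, \ref{eq:1.e'}, \ref{eq:2.a'}, \ref{eq:2.b'} after specializing the infimum in $T$ to $T=1.1$, which is a valid upper bound since those subcases were stated as an infimum over $T>0$. Hence those four subcases are handled by definition. The remaining subcases of Case~1 and Case~2 are \ref{eq:1.a'}, \ref{eq:1.c'}, \ref{eq:1.d'}, \ref{eq:1.f'}, \ref{eq:1.g.i'}, \ref{eq:1.g.ii'}; each is a monotone function of $p$ (or the maximum of two such), and I will show by direct substitution that, for $p\in[0.01,0.068]$, each of these is at most $1+\delta_2=2.395$. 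Since $1+\delta_2$ already appears inside the envelope, this is enough.

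For $\rho^{(2)}(p)$, I handle the three Case~3 subcases. Subcase \ref{eq:3.a'} is literally the claimed bound. Subcase \ref{eq:3.b.i'} is bounded above by \ref{eq:3.a'} because the probability $2p-2p^2$ of selecting exactly one of $i_2,i_3$ is the tightest version of the quantity $q$ used in \ref{eq:3.a'}; more concretely, one checks that the function $\frac{(1+\sqrt{2})(1-2p)+\delta_3 p}{(1-2p)+\delta_3 p}$ is at most $\frac{(1+\sqrt 2)-(3-\sqrt 2)(2p-2p^2)}{1-(2-\sqrt 2)(2p-2p^2)}$ on $[0.01,0.068]$, again a one-variable check. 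Subcase \ref{eq:3.b.ii'} depends on an integer $c\ge 2$; I will show (i)~the fraction is decreasing in $c$ for $p$ in our range, and (ii)~for $c=2$ the fraction coincides with (or is below) the $c=2$ instance of the \ref{eq:3.a'} envelope. Monotonicity in $c$ follows from the fact that the numerator's $c$-dependent prefactor $(1-2p)^c$ shrinks faster than the denominator's $p(1+c-\sqrt{c(c-1)})$ grows, for the stated range of $p$, so only $c=2$ matters.

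For $\rho^{(3)}(p)$, the five subcases \ref{eq:4.a'}–\ref{eq:4.e'} each yield an explicit expression, and I must show all of them are at most $\tfrac{1}{1/2-2(2-\delta_2)p}$. Subcase \ref{eq:4.a'} gives $2$, which is below the envelope for every $p>0$. Subcases \ref{eq:4.b'}, \ref{eq:4.c'}, \ref{eq:4.e'} are comparable monotone functions of $p$, so I reduce to an inequality between two rational functions of $p$ that is checked at the endpoints $p=0.01$ and $p=0.068$ (with derivative bounds handling the interior). Subcase \ref{eq:4.d'} \emph{is} the envelope (up to the substitution $\delta_2\to \delta_2$), and hence matches it exactly. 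The main obstacle will be pinning down the constant $T=1.1$ in $\rho^{(1)}(p)$: the analysis of subcase \ref{eq:1.b'} requires choosing a specific $T$ that is simultaneously good for all $p\in[0.01,0.068]$, and one must verify numerically that $T=1.1$ keeps the expression $\sqrt{3(X+Y)+2\sqrt{2(X+Y)^2-\delta_2^2 XY}}$ below $2.395$ across the whole interval—this is the step where the analysis is tightest, and it is what constrains our final choice $p_1=0.068$ in Proposition~\ref{prop:more_bash_2}.
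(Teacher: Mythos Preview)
Your overall plan matches the paper's: bound each subcase by the claimed envelope, specialize $T=1.1$ in the subcases \ref{eq:1.b'}, \ref{eq:1.e'}, \ref{eq:2.a'}, \ref{eq:2.b'}, and verify the remaining finitely many one-variable inequalities numerically. For $\rho^{(1)}$ and $\rho^{(3)}$ your argument is essentially identical to the paper's (in particular, subcase \ref{eq:4.d'} is exactly the envelope $\tfrac{1}{1/2-2(2-\delta_2)p}$, and the other Case~4 subcases sit below it on $[0.01,0.068]$).

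There is, however, a gap in your treatment of subcase \ref{eq:3.b.ii'}. You assert monotonicity in $c$ on the grounds that ``the numerator's $c$-dependent prefactor $(1-2p)^c$ shrinks faster than the denominator's $p(1+c-\sqrt{c(c-1)})$ grows.'' Both clauses are problematic: (i) the quantity $c-\sqrt{c(c-1)}$ is \emph{decreasing} in $c$, so $p(1+c-\sqrt{c(c-1)})$ shrinks (and the denominator $1-p(1+c-\sqrt{c(c-1)})$ \emph{grows}); and (ii) the numerator is not governed solely by $(1-2p)^c$, since the term $\tfrac{1}{2}\bigl(1-(1-2p)^c\bigr)\sqrt{(c-1)/c}$ is increasing in $c$. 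So your stated reason does not establish the claimed monotonicity, and the fraction's behavior in $c$ is a priori ambiguous.

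The paper avoids this issue by splitting off $c=2$ (checked directly) and, for $c\ge 3$, bounding the numerator crudely via $\sqrt{(c-1)/c}\le 1$, which collapses the numerator to $\tfrac12\bigl[1+(1+\sqrt2)(1-2p)+\sqrt2\,(1-2p)^c\bigr]$, now genuinely decreasing in $c$; the denominator is then lower-bounded by $1-2p$ (since $1+c-\sqrt{c(c-1)}\le 2$). The resulting single-variable bound at $c=3$ is easily seen to lie below the \ref{eq:3.a'} envelope on $[0.01,0.068]$. You should replace your monotonicity claim with this argument (or supply a correct proof of monotonicity).

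A minor side remark: your closing sentence about needing $\sqrt{3(X+Y)+2\sqrt{2(X+Y)^2-\delta_2^2 XY}}\le 2.395$ conflates this lemma with Proposition~\ref{prop:more_bash_2}. In the present lemma that square-root expression \emph{is} (one branch of) the envelope, so it need not be bounded by $1+\delta_2$; the constraint you describe only enters when one later wants $\rho(p_1)\le 2.395$ at $p_1=0.068$.
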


\begin{proof}
    To bound $\rho^{(1)}(p)$, we simply analyze all subcases in Case 1 and Case 2 and set $T = 1.1$. This is straightfoward to verify (see, for instance, our Desmos files on $k$-median in Appendix \ref{app:files}).
    
    To bound $\rho^{(2)}(p)$, we analyze all subcases in Case 3. Subcase \ref{eq:3.a'} and \ref{eq:3.b.i'} are straightfoward to verify. For subcase \ref{eq:3.b.ii'}, we have to verify for all choices of $c \ge 2$. For $c = 2$, we can verify manually. For $c \ge 3$, it is easy to see that the numerator of \ref{eq:3.b.ii'} is at most
\begin{align*}
    \frac{1}{2}\left[(1+\sqrt{2}) \cdot \left((1-2p)+(1-2p)^c\right) + 1-(1-2p)^c\right] &= \frac{1}{2}\left[1 + (1+\sqrt{2}) \cdot (1-2p) + \sqrt{2} \cdot (1-2p)^c\right] \\
    &\le \frac{1}{2}\left[1 + (1+\sqrt{2}) \cdot (1-2p) + \sqrt{2} \cdot (1-2p)^3\right],
\end{align*}
    and the denominator is at least $1-2p$. So, the fraction is at most
\[\frac{\frac{1}{2} \cdot \left[1 + (1+\sqrt{2}) \cdot (1-2p) + \sqrt{2} \cdot (1-2p)^3\right]}{1-2p},\]
    which is at most $\frac{(1+\sqrt{2})-(3-\sqrt{2})\cdot(2p-2p^{2})}{1-(2-\sqrt{2})\cdot(2p-2p^{2})}$ for all $p \in [0.01, 0.068]$.
    
    Finally, it is straightfoward to check that all $5$ subcases in Case $4$ are at most $\frac{1}{\frac{1}{2}-2\cdot(2-d_2)\cdot p}$ for all $p \in [0.1, 0.068]$. 
\end{proof}

One can modify the remainder of the proof analogously to as in the $k$-means case in Section \ref{subsec:k_means_improved}.
Hence, to show that we obtain an approximation $\rho + O(\eps + \gamma + 1/C)$, it suffices to show that for all choices of $\theta \in [0, 1]$ and $r \ge 1,$ that if we let $\mathfrak{D}' = \mathfrak{D}+O(\gamma) \cdot \text{OPT}_k$, one cannot simultaneously satisfy
\begin{align}
    \mathfrak{D}' &\ge \sum_{i = 1}^{3} \left(Q_i-\frac{p_1}{r} R_i\right) \label{eq:bound_3_kmedian}\\
    \rho \cdot \mathfrak{D}' &< \frac{\theta}{r} \sum_{i = 1}^{3} \rho^{(i)}(p_1) \cdot (Q_i-p_1 \cdot R_i) + \left(1-\frac{\theta}{r}\right) \cdot \rho(p_1) \cdot \left(\mathfrak{D}' + p_1 \cdot \frac{\theta}{r} \sum_{i = 1}^{4} R_i\right) \label{eq:bound_4_kmedian} \\
    \rho \cdot \mathfrak{D}' &< \sum_{i = 1}^{3} \rho^{(i)}\left(\frac{p_1}{r}\right) \cdot \left(Q_i-\frac{p_1}{r} \cdot R_i\right) \label{eq:bound_5_kmedian}
\end{align}
and
\begin{equation}
    R_1 \le Q_1, \hspace{1cm} R_2 \le 2 Q_2, \hspace{1cm} R_3 \le \frac{2(2-\sqrt{2})}{\delta_2-1} Q_3. \label{eq:bound_6_kmedian}
\end{equation}
    By numerical analysis of these linear constraints and based on the functions $\rho^{(i)}$, we obtain a $\boxed{\kmedianratio}$-approximation algorithm for Euclidean $k$-median clustering. We defer the details to Appendix \ref{app:k_means_numerical_analysis}.

\section*{Acknowledgments}

The authors thank Ashkan Norouzi-Fard for helpful discussions relating to modifying the previous results on roundable solutions.
The authors also thank Fabrizio Grandoni, Piotr Indyk, Euiwoong Lee, and Chris Schwiegelshohn for helpful conversations.
Finally, we would like to thank an anonymous reviewer for providing a useful suggestion in removing one of the cases for $k$-means.


\begin{thebibliography}{10}

\bibitem{ahmadian2017better}
Sara Ahmadian, Ashkan Norouzi-Fard, Ola Svensson, and Justin Ward.
\newblock Better guarantees for k-means and {E}uclidean k-median by primal-dual
  algorithms.
\newblock {\em {SIAM} Journal on Computing}, 49(4):FOCS17--97--FOCS17--156,
  2019.

\bibitem{ArV07}
David Arthur and Sergei Vassilvitskii.
\newblock k-means++: the advantages of careful seeding.
\newblock In {\em Proceedings of the Eighteenth Annual {ACM-SIAM} Symposium on
  Discrete Algorithms, {SODA} 2007, New Orleans, Louisiana, USA, January 7-9,
  2007}, pages 1027--1035, 2007.

\bibitem{ArV09}
David Arthur and Sergei Vassilvitskii.
\newblock Worst-case and smoothed analysis of the {ICP} algorithm, with an
  application to the k-means method.
\newblock {\em {SIAM} J. Comput.}, 39(2):766--782, 2009.

\bibitem{AryaGKMMP04}
Vijay Arya, Naveen Garg, Rohit Khandekar, Adam Meyerson, Kamesh Munagala, and
  Vinayaka Pandit.
\newblock Local search heuristics for k-median and facility location problems.
\newblock {\em {SIAM} J. Comput.}, 33(3):544--562, 2004.

\bibitem{awasthi2010stability}
Pranjal Awasthi, Avrim Blum, and Or~Sheffet.
\newblock Stability yields a {PTAS} for k-median and k-means clustering.
\newblock In {\em 51th Annual {IEEE} Symposium on Foundations of Computer
  Science, {FOCS} 2010, October 23-26, 2010, Las Vegas, Nevada, {USA}}, pages
  309--318, 2010.

\bibitem{AwasthiCKS15}
Pranjal Awasthi, Moses Charikar, Ravishankar Krishnaswamy, and Ali~Kemal Sinop.
\newblock The hardness of approximation of euclidean k-means.
\newblock In Lars Arge and J{\'{a}}nos Pach, editors, {\em 31st International
  Symposium on Computational Geometry, SoCG 2015, June 22-25, 2015, Eindhoven,
  The Netherlands}, volume~34 of {\em LIPIcs}, pages 754--767. Schloss Dagstuhl
  - Leibniz-Zentrum f{\"{u}}r Informatik, 2015.

\bibitem{BaV15}
Sayan Bandyapadhyay and Kasturi Varadarajan.
\newblock On variants of k-means clustering.
\newblock In {\em 32nd International Symposium on Computational Geometry (SoCG
  2016)}. Schloss Dagstuhl-Leibniz-Zentrum fuer Informatik, 2016.

\bibitem{BecchettiBC0S19}
Luca Becchetti, Marc Bury, Vincent Cohen{-}Addad, Fabrizio Grandoni, and Chris
  Schwiegelshohn.
\newblock Oblivious dimension reduction for \emph{k}-means: beyond subspaces
  and the johnson-lindenstrauss lemma.
\newblock In {\em Proceedings of the 51st Annual {ACM} {SIGACT} Symposium on
  Theory of Computing, {STOC} 2019, Phoenix, AZ, USA, June 23-26, 2019}, pages
  1039--1050, 2019.

\bibitem{ByrkaPRST17}
Jaroslaw Byrka, Thomas~W. Pensyl, Bartosz Rybicki, Aravind Srinivasan, and Khoa
  Trinh.
\newblock An improved approximation for \emph{k}-median and positive
  correlation in budgeted optimization.
\newblock {\em {ACM} Trans. Algorithms}, 13(2):23:1--23:31, 2017.

\bibitem{CharikarG99}
Moses Charikar and Sudipto Guha.
\newblock Improved combinatorial algorithms for the facility location and
  k-median problems.
\newblock In {\em 40th Annual Symposium on Foundations of Computer Science,
  {FOCS} '99, 17-18 October, 1999, New York, NY, {USA}}, pages 378--388, 1999.

\bibitem{CharikarG05}
Moses Charikar and Sudipto Guha.
\newblock Improved combinatorial algorithms for facility location problems.
\newblock {\em {SIAM} J. Comput.}, 34(4):803--824, 2005.

\bibitem{CharikarGTS02}
Moses Charikar, Sudipto Guha, {\'{E}}va Tardos, and David~B. Shmoys.
\newblock A constant-factor approximation algorithm for the {$k$}-median
  problem.
\newblock {\em J. Comput. Syst. Sci.}, 65(1):129--149, 2002.

\bibitem{CharikarL12}
Moses Charikar and Shi Li.
\newblock A dependent {LP}-rounding approach for the k-median problem.
\newblock In {\em Automata, Languages, and Programming - 39th International
  Colloquium, {ICALP} 2012, Warwick, UK, July 9-13, 2012, Proceedings, Part
  {I}}, pages 194--205, 2012.

\bibitem{Cohen-Addad18}
Vincent Cohen{-}Addad.
\newblock A fast approximation scheme for low-dimensional \emph{k}-means.
\newblock In Artur Czumaj, editor, {\em Proceedings of the Twenty-Ninth Annual
  {ACM-SIAM} Symposium on Discrete Algorithms, {SODA} 2018, New Orleans, LA,
  USA, January 7-10, 2018}, pages 430--440. {SIAM}, 2018.

\bibitem{cohen2019near}
Vincent Cohen-Addad, Andreas~Emil Feldmann, and David Saulpic.
\newblock Near-linear time approximations schemes for clustering in doubling
  metrics.
\newblock In {\em 2019 IEEE 60th Annual Symposium on Foundations of Computer
  Science (FOCS)}, pages 540--559. IEEE, 2019.

\bibitem{CohenAddadGHOS22}
Vincent Cohen-Addad, Anupam Gupta, Lunjia Hu, Hoon Oh, and David Saulpic.
\newblock An improved local search algorithm for $k$-median.
\newblock In {\em Proceedings of the Thirty-Third Annual {ACM-SIAM} Symposium
  on Discrete Algorithms, {SODA} 2022}, pages 1556--1612. {SIAM}, 2022.

\bibitem{cohenaddad2019fpt}
Vincent Cohen-Addad, Anupam Gupta, Amit Kumar, Euiwoong Lee, and Jason Li.
\newblock Tight fpt approximations for k-median and k-means.
\newblock In {\em International Colloquium on Automata, Languages, and
  Programming (ICALP)}, pages 42:1--42:14. Ieee, 2019.

\bibitem{Cohen-AddadS19}
Vincent Cohen{-}Addad and {Karthik {C. S.}}
\newblock Inapproximability of clustering in lp metrics.
\newblock In David Zuckerman, editor, {\em 60th {IEEE} Annual Symposium on
  Foundations of Computer Science, {FOCS} 2019, Baltimore, Maryland, USA,
  November 9-12, 2019}, pages 519--539. {IEEE} Computer Society, 2019.

\bibitem{Cohen-AddadLS22}
Vincent Cohen{-}Addad, Euiwoong Lee, and {Karthik {C. S.}}
\newblock Johnson coverage hypothesis: Inapproximability of $k$-means and
  $k$-median in $\ell_p$ metrics.
\newblock In {\em Proceedings of the 2022 {ACM-SIAM} Symposium on Discrete
  Algorithms, {SODA} 2022}. {SIAM}, 2022.

\bibitem{CoM15}
Vincent Cohen{-}Addad and Claire Mathieu.
\newblock Effectiveness of local search for geometric optimization.
\newblock In {\em 31st International Symposium on Computational Geometry, SoCG
  2015, June 22-25, 2015, Eindhoven, The Netherlands}, pages 329--343, 2015.

\bibitem{CKL21}
Vincent Cohen{-}Addad, Karthik~C. S., and Euiwoong Lee.
\newblock On approximability of clustering problems without candidate centers.
\newblock In D{\'{a}}niel Marx, editor, {\em Proceedings of the 2021 {ACM-SIAM}
  Symposium on Discrete Algorithms, {SODA} 2021, Virtual Conference, January 10
  - 13, 2021}, pages 2635--2648. {SIAM}, 2021.

\bibitem{Cohen-AddadS17}
Vincent Cohen{-}Addad and Chris Schwiegelshohn.
\newblock On the local structure of stable clustering instances.
\newblock In Chris Umans, editor, {\em 58th {IEEE} Annual Symposium on
  Foundations of Computer Science, {FOCS} 2017, Berkeley, CA, USA, October
  15-17, 2017}, pages 49--60. {IEEE} Computer Society, 2017.

\bibitem{dasgupta2008hardness}
Sanjoy Dasgupta.
\newblock {\em The hardness of k-means clustering}.
\newblock Department of Computer Science and Engineering, University of
  California~…, 2008.

\bibitem{FL11}
D.~Feldman and M.~Langberg.
\newblock A unified framework for approximating and clustering data.
\newblock In {\em STOC}, pages 569--578, 2011.

\bibitem{Grandoni21}
Fabrizio Grandoni, Rafail Ostrovsky, Yuval Rabani, Leonard~J. Schulman, and
  Rakesh Venkat.
\newblock A refined approximation for euclidean k-means.
\newblock {\em Inf. Process. Lett.}, 176:106251, 2022.

\bibitem{GuK99}
Sudipto Guha and Samir Khuller.
\newblock Greedy strikes back: Improved facility location algorithms.
\newblock {\em J. Algorithms}, 31(1):228--248, 1999.

\bibitem{GI03}
Venkatesan Guruswami and Piotr Indyk.
\newblock Embeddings and non-approximability of geometric problems.
\newblock In {\em Proceedings of the Fourteenth Annual {ACM-SIAM} Symposium on
  Discrete Algorithms, January 12-14, 2003, Baltimore, Maryland, {USA.}}, pages
  537--538, 2003.

\bibitem{hakimi1964optimum}
S~Louis Hakimi.
\newblock Optimum locations of switching centers and the absolute centers and
  medians of a graph.
\newblock {\em Operations research}, 12(3):450--459, 1964.

\bibitem{JainMMSV03}
Kamal Jain, Mohammad Mahdian, Evangelos Markakis, Amin Saberi, and Vijay~V.
  Vazirani.
\newblock Greedy facility location algorithms analyzed using dual fitting with
  factor-revealing {LP}.
\newblock {\em J. {ACM}}, 50(6):795--824, 2003.

\bibitem{jain2001lagrangian}
Kamal Jain and Vijay~V. Vazirani.
\newblock Approximation algorithms for metric facility location and k-median
  problems using the primal-dual schema and lagrangian relaxation.
\newblock {\em Journal of the {ACM}}, 48(2):274--296, 2001.

\bibitem{KanungoMNPSW04}
Tapas Kanungo, David~M. Mount, Nathan~S. Netanyahu, Christine~D. Piatko, Ruth
  Silverman, and Angela~Y. Wu.
\newblock A local search approximation algorithm for k-means clustering.
\newblock {\em Comput. Geom.}, 28(2-3):89--112, 2004.

\bibitem{KPR00}
Madhukar~R. Korupolu, C.~Greg Plaxton, and Rajmohan Rajaraman.
\newblock Analysis of a local search heuristic for facility location problems.
\newblock {\em J. Algorithms}, 37(1):146--188, 2000.

\bibitem{KumarSS10}
Amit Kumar, Yogish Sabharwal, and Sandeep Sen.
\newblock Linear-time approximation schemes for clustering problems in any
  dimensions.
\newblock {\em J. {ACM}}, 57(2):5:1--5:32, 2010.

\bibitem{DBLP:journals/ipl/LeeSW17}
Euiwoong Lee, Melanie Schmidt, and John Wright.
\newblock Improved and simplified inapproximability for k-means.
\newblock {\em Inf. Process. Lett.}, 120:40--43, 2017.

\bibitem{Li13}
Shi Li.
\newblock A 1.488 approximation algorithm for the uncapacitated facility
  location problem.
\newblock {\em Inf. Comput.}, 222:45--58, 2013.

\bibitem{LiS16}
Shi Li and Ola Svensson.
\newblock Approximating k-median via pseudo-approximation.
\newblock {\em {SIAM} J. Comput.}, 45(2):530--547, 2016.

\bibitem{lloyd1957least}
SP~Lloyd.
\newblock Least square quantization in pcm. bell telephone laboratories paper.
  published in journal much later: Lloyd, sp: Least squares quantization in
  pcm.
\newblock {\em IEEE Trans. Inform. Theor.(1957/1982)}, 18, 1957.

\bibitem{MakarychevMR19}
Konstantin Makarychev, Yury Makarychev, and Ilya~P. Razenshteyn.
\newblock Performance of johnson-lindenstrauss transform for \emph{k}-means and
  \emph{k}-medians clustering.
\newblock In Moses Charikar and Edith Cohen, editors, {\em Proceedings of the
  51st Annual {ACM} {SIGACT} Symposium on Theory of Computing, {STOC} 2019,
  Phoenix, AZ, USA, June 23-26, 2019}, pages 1027--1038. {ACM}, 2019.

\bibitem{MakarychevMSW16}
Konstantin Makarychev, Yury Makarychev, Maxim Sviridenko, and Justin Ward.
\newblock A bi-criteria approximation algorithm for k-means.
\newblock In Klaus Jansen, Claire Mathieu, Jos{\'{e}} D.~P. Rolim, and Chris
  Umans, editors, {\em Approximation, Randomization, and Combinatorial
  Optimization. Algorithms and Techniques, {APPROX/RANDOM} 2016, September 7-9,
  2016, Paris, France}, volume~60 of {\em LIPIcs}, pages 14:1--14:20. Schloss
  Dagstuhl - Leibniz-Zentrum f{\"{u}}r Informatik, 2016.

\bibitem{Mat00}
Jir{\'{\i}} Matousek.
\newblock On approximate geometric k-clustering.
\newblock {\em Discrete {\&} Computational Geometry}, 24(1):61--84, 2000.

\bibitem{megiddo1984complexity}
Nimrod Megiddo and Kenneth~J Supowit.
\newblock On the complexity of some common geometric location problems.
\newblock {\em SIAM journal on computing}, 13(1):182--196, 1984.

\bibitem{ORSS12}
Rafail Ostrovsky, Yuval Rabani, Leonard~J. Schulman, and Chaitanya Swamy.
\newblock The effectiveness of {L}loyd-type methods for the k-means problem.
\newblock {\em J. {ACM}}, 59(6):28, 2012.

\bibitem{zbMATH03129892}
Hugo {Steinhaus}.
\newblock {Sur la division des corps mat\'eriels en parties}.
\newblock {\em {Bull. Acad. Pol. Sci., Cl. III}}, 4:801--804, 1957.

\end{thebibliography}

\appendix

\section{Desmos Graphs and Code} \label{app:files}

Here, we provide links for the Desmos files used to visualize the LMP approximations for both $k$-means and $k$-median, and the Python code used to improve the approximation factor for $k$-means.

We provide graphs on Desmos for the LMP Approximation bounds for $k$-means and $k$-medians, as functions of the probability $p$. We remark that in some of these cases, there may be parameters (such as $a, b, c, c_1, c_2, h$) that need to be set properly (which can be done via toggles on the respective Desmos link) to see the actual approximation ratio as a function of $k$.

\medskip

Case $1$ of $k$-means is available here: 
\url{https://www.desmos.com/calculator/jd8ud6h2e9}

Case $2$ of $k$-means is available here: 
\url{https://www.desmos.com/calculator/pgtylk9eui}

Case $3$ of $k$-means is available here: 
\url{https://www.desmos.com/calculator/zjshynypsh}

Case $4$ of $k$-means is available here: 
\url{https://www.desmos.com/calculator/ibwult8qzs}

Case $5$ of $k$-means is available here: 
\url{https://www.desmos.com/calculator/pgtylk9eui}

\medskip

Case $1$ of $k$-median is available here: 
\url{https://www.desmos.com/calculator/9qmscsfvrr}

Case $2$ of $k$-median is available here: 
\url{https://www.desmos.com/calculator/rdidyxhs2o}

Case $3$ of $k$-median is available here: 
\url{https://www.desmos.com/calculator/zoeswetvyz}

Case $4$ of $k$-median is available here: 
\url{https://www.desmos.com/calculator/mpwrmz7mhe}


\medskip
The Python source code for $k$-means is available at
\newline
\url{https://drive.google.com/file/d/1mzKPr4ZbXe7FPDtx8JBkz2ReCK4OriQz/view?usp=sharing}, and the Python source code for $k$-median is available at
\newline
\url{https://drive.google.com/file/d/1SEfUvHaOd78QgxCFaFA8OjuDsKb3Qg0I/view?usp=sharing}. One can also view the code in a more readable PDF format for $k$-means at
\newline
\url{https://drive.google.com/file/d/1Ujcd6znbwxOkG-72zBZGX3otXPPAScud/view?usp=sharing},
and for $k$-median at
\newline
\url{https://drive.google.com/file/d/15HP3wBN20tCanwAc1dAA3rvjI23drdcO/view?usp=sharing}.

\section{Omitted Details for the LMP Approximations} \label{app:bash}

First, we prove Proposition \ref{prop:triangle}.

\begin{proof}[Proof of Proposition \ref{prop:triangle}]
    Let $v_1 = B-A$, $v_2 = C-B$, and $v_3 = D-B.$ Then, 
\begin{align*}
    p \cdot \|C-A\|_2^2 + (1-p) \cdot \|D-A\|_2^2 &= p \cdot \|v_1+v_2\|_2^2 + (1-p) \cdot \|v_1+v_3\|_2^2 \\
    &= \|v_1\|_2^2 + 2 \cdot \langle v_1, p v_2 + (1-p) v_3\rangle + p \cdot \|v_2\|_2^2  +(1-p) \cdot \|v_3\|_2^2 \\
    &\le 1 + 2 \cdot \|p v_2 + (1-p) v_3\|_2 + p \cdot \|v_2\|_2^2  +(1-p) \cdot \|v_3\|_2^2,
\end{align*}
    since $\|v_1\|_2 \le 1$.
Now, we can write 
\[\|p v_2 + (1-p) v_3\|_2 = \sqrt{\|p v_2 + (1-p) v_3\|_2^2} = \sqrt{p \cdot \|v_2\|_2^2 + (1-p) \cdot \|v_3\|_2^2 - p(1-p) \cdot \|v_2 -v_3\|_2^2}.\]
    So, we have that $p \cdot \|c-a\|_2^2 + (1-p) \cdot \|d-a\|_2^2$ is at most
\begin{multline} \label{eq:triangle}
    1 + 2 \cdot \sqrt{p \cdot \nu_1 \cdot \min(\sigma_1, \sigma_2) + (1-p) \cdot \nu_2 \cdot \min(\sigma_1, \sigma_3) - p(1-p) \cdot \nu_3 \cdot \min(\sigma_2, \sigma_3)} \\
    + p \cdot \nu_1 \cdot \min(\sigma_1, \sigma_2) + (1-p) \cdot \nu_2 \cdot \min(\sigma_1, \sigma_3).
\end{multline}
    It is simple to see that \eqref{eq:triangle} is nondecreasing in $\sigma_1$ for a fixed $\sigma_2, \sigma_3$, so \eqref{eq:triangle} is maximized when $\sigma_1 = 1$. Next, when $\sigma_1 = 1$, it is clear that \eqref{eq:triangle} is non-increasing in $\sigma_2$ if $\sigma_2 \ge 1$ and likewise for $\sigma_3$, so \eqref{eq:triangle} is maximized for some $\sigma_2, \sigma_3 \le 1$. In this case, \eqref{eq:triangle} simplifies to
\[1 + 2\sqrt{p \cdot \nu_1 \cdot \sigma_2 + (1-p) \cdot \nu_2 \cdot \sigma_3 - p(1-p) \cdot \nu_3 \cdot \min(\sigma_2, \sigma_3)} + p \cdot \nu_1 \cdot \sigma_2 + (1-p) \cdot \nu_2 \cdot \sigma_3.\]
    Now, using the fact that $\nu_1, \nu_2 \ge \nu_3$ and that $p, (1-p) \ge p(1-p)$, we have that this expression is nondecreasing in both $\sigma_2, \sigma_3$ as long as $\sigma_2, \sigma_3\le 1$. So, we may upper bound \eqref{eq:triangle}, and thus $p \cdot \|C-A\|_2^2 + (1-p) \cdot \|D-A\|_2^2$, by
\[1 + 2 \cdot \sqrt{p \cdot \nu_1 + (1-p) \cdot \nu_2 - p(1-p) \cdot \nu_3} + p \cdot \nu_1 + (1-p) \cdot \nu_2,\]
    by setting $\sigma_2 = \sigma_3 = 1.$
\end{proof}

Next, we complete the details in Lemmas \ref{lem:main_lmp} and \ref{lem:main_lmp_median} that we did not complete in the main body of the paper.

\paragraph{K-means: Case \ref{eq:1.c}:} We wish to maximize
\[\frac{(1-p) \cdot (t+\sqrt{\delta_1})^2 + p \cdot t^2}{1-p(1-t^2)} = \frac{(1-p) \cdot (t+\sqrt{\delta_1})^2 + p \cdot t^2}{(1-p) + p \cdot t^2},\]
over $0 \le t \le 1.$
First, note that if $t \le \sqrt{0.75}$, then we can bound this fraction by $(t+\sqrt{\delta_1})^2 \le (\sqrt{0.75}+\sqrt{\delta_1})^2$. Alternatively, if $t \ge \sqrt{0.75}$, then we can bound this fraction by at most
\[\frac{(1-p) \cdot (1+\sqrt{\delta_1})^2 + p \cdot t^2}{(1-p) + p \cdot t^2} \le \frac{(1-p) \cdot (1+\sqrt{\delta_1})^2 + 3p/4}{1-p/4},\]
where the left-hand side in the above equation has the numerator and denominator increasing at the same rate in terms of $t$, so it is maximized when $t$ is minimized, i.e., $t = \sqrt{0.75}$. Thus, we can bound the overall fraction as at most
\[\max\left((\sqrt{0.75}+\sqrt{\delta_1})^2, \frac{(1-p) \cdot (1+\sqrt{\delta_1})^2 + 3p/4}{1-p/4}\right).\]

\paragraph{K-Means: Case \ref{eq:1.g.i}:} We wish to maximize 
$$\frac{(1-p) \cdot (u+\sqrt{\delta_1 \cdot t})^2+p \cdot d(j, i_2)^2}{1-p+p \cdot d(j, i_2)^2}$$
over $t, u \in [0, 1]$ and $d(j, i_2) \ge u-\sqrt{\delta_3 \cdot t}$. First, note that if we treat $d(j, i_2)$ as a variable, the numerator and denominator increase at the same rate as $d(j, i_2)^2$ increases, so this fraction is maximized when $d(j, i_2) = \max(0, u-\sqrt{\delta_3 \cdot t})$. If $u-\sqrt{\delta_3 \cdot t} \le 0$, then this fraction equals $(u+\sqrt{\delta_1 \cdot t})^2$, but $u \le \sqrt{\delta_3 \cdot t} \le \sqrt{\delta_3}$ since $t \le 1,$ and this means that $(u+\sqrt{\delta_1 \cdot t})^2 \le (\sqrt{\delta_3} + \sqrt{\delta_1})^2$. Alternatively, we are maximizing
$$\frac{(1-p) \cdot (u+\sqrt{\delta_1 \cdot t})^2 + p \cdot (u-\sqrt{\delta_3 \cdot t})^2}{1-p + p \cdot (u-\sqrt{\delta_3 \cdot t})^2}$$
over $t, u \in [0, 1]$.
Next, note that if $u > \sqrt{\delta_3 \cdot t}$ and $t < 1,$ then increasing $t$ will decrease $(u-\sqrt{\delta_3 \cdot t})^2$ and increase $(u+\sqrt{\delta_1 \cdot t})^2$. So, the denominator decreases and the numerator either increases or decreases at a slower rate. Thus, we may assume that either $u-\sqrt{\delta_3 \cdot t} \le 0$ or that $t = 1$. In the case where $t = 1$, we wish to maximize
\[\frac{(1-p) \cdot (u+\sqrt{\delta_1})^2+p \cdot (u-\sqrt{\delta_3})^2}{1-p+p \cdot (u-\sqrt{\delta_3})^2}=\frac{p \cdot \left[(u+\sqrt{\delta_1})^2+(u-\sqrt{\delta_3})^2\right]+(1-2p) \cdot (u+\sqrt{\delta_1})^2}{p \cdot \left[(u-\sqrt{\delta_3})^2+1\right]+(1-2p)}.\]
Writing $A(u)=(u+\sqrt{\delta_1})^2+(u-\sqrt{\delta_3})^2$, $B(u) = (u+\sqrt{\delta_1})^2,$ and $C(u)=(u-\sqrt{\delta_3})^2+1,$ we can verify that $\frac{A(u)}{C(u)}$ and $B(u)$ are both increasing functions in $u$ over $[0, 1]$, which means so is $\frac{p \cdot A(u)+(1-2p) \cdot B(u)}{p \cdot C(u)+(1-2p)}.$ Therefore, the overall maximum is at most
\[\max\left((\sqrt{\delta_1}+\sqrt{\delta_3})^2, \frac{(1-p) \cdot (1+\sqrt{\delta_1})^2+p \cdot (1-\sqrt{\delta_3})^2}{1-p+p \cdot (1-\sqrt{\delta_3})^2}\right)\]

\paragraph{K-Means: Case \ref{eq:2.d}:} 
Our goal is to maximize
\[\frac{(1-2p) \cdot \min(1+\sqrt{\delta_1}, \max(\beta, \gamma)+\sqrt{\delta_1 \cdot t})^2 + p \cdot \beta^2 + p \cdot \gamma^2}{1-p (1-\beta^2) - p (1-\gamma^2)}\]
    over $t \ge 1$ and $\beta+\gamma \ge \sqrt{\delta_3 \cdot t}$ (and where $\beta, \gamma \ge 0$). By symmetry, we may assume WLOG that $\beta \ge \gamma$, and replace $\max(\beta, \gamma)$ with $\beta$. Next, note that increasing $t$ only increases the overall fraction, so we may increase $t$ until we have that $\beta+\gamma = \sqrt{\delta_3 \cdot t}$. So, we now wish to maximize
\[\frac{(1-2p) \cdot \min\left(1+\sqrt{\delta_1}, \beta+\sqrt{\delta_1/\delta_3} \cdot (\beta+\gamma)\right)^2+p \cdot (\beta^2+\gamma^2)}{1-2p+p \cdot (\beta^2+\gamma^2)}\]
    over $\beta, \gamma \ge 0$ subject to $\beta+\gamma \ge \sqrt{\delta_3}$ (since $\beta+\gamma \ge \sqrt{\delta_3 \cdot t}$ and $t \ge 1$). But, note that if $\beta+\sqrt{\delta_1/\delta_3}(\beta+\gamma) > 1+\sqrt{\delta_1}$, then any decrease in either $\beta$ or $\gamma$ until we have that  $\beta+\sqrt{\delta_1/\delta_3}(\beta+\gamma) = 1+\sqrt{\delta_1}$ will decrease both the numerator and the denominator by the same amount, and so will increase the fraction. Thus, we may assume that $\beta+\sqrt{\delta_1/\delta_3}(\beta+\gamma) \le 1+\sqrt{\delta_1}$.
    
    In this case, we may rewrite our goal as maximizing
\begin{equation} \label{eq:f_fraction}
    f(\beta, \gamma) := \frac{(1-2p) \cdot \left(\beta+\sqrt{\delta_1/\delta_3} \cdot (\beta+\gamma)\right)^2+p \cdot (\beta^2+\gamma^2)}{1-2p+p \cdot (\beta^2+\gamma^2)}
\end{equation}
    over $\beta, \gamma \ge 0$ subject to $\beta+\gamma \ge \sqrt{\delta_3}$ and $\beta+\sqrt{\delta_1/\delta_3}(\beta+\gamma) \le 1+\sqrt{\delta_1}$. Now, for any fixed $\beta, \gamma$, we note that $f(\lambda \beta, \lambda \gamma)$ for any $\lambda \ge 1$ multiplies the numerator of the fraction in \eqref{eq:f_fraction} by a $\lambda^2$ factor, but multiplies the denominator of the fraction by less than a $\lambda^2$ factor, since $1-2p \ge 0$. Therefore, the fraction increases overall, which means that to maximize $f(\beta, \gamma)$, we may always assume that $\beta+\sqrt{\delta_1/\delta_3}(\beta+\gamma)=1+\sqrt{\delta_1}$. It is easy to see that this automatically implies that $\beta+\gamma \ge \sqrt{\delta_3}$ when $\beta, \gamma \ge 0$.
    
    Thus, our goal is to maximize
\[\frac{(1-2p) \cdot (1+\sqrt{\delta_1})^2 + p(\beta^2+\gamma^2)}{(1-2p)+p\cdot(\beta^2+\gamma^2)}\]
    subject to $\beta, \gamma \ge 0$ and $\beta+\sqrt{\delta_1/\delta_3}(\beta+\gamma) = 1+\sqrt{\delta_1}$. Maximizing this, however, just entails to minimizing $\beta^2+\gamma^2$, which is easy to solve as $\beta=(\sqrt{\delta_3}+\sqrt{\delta_1}) \cdot \frac{\sqrt{\delta_3}(1+\sqrt{\delta_1})}{\delta_1+(\sqrt{\delta_1}+\sqrt{\delta_3})^2}$ and $\gamma=\sqrt{\delta_1} \cdot \frac{\sqrt{\delta_3}(1+\sqrt{\delta_1})}{\delta_1+(\sqrt{\delta_1}+\sqrt{\delta_3})^2},$ which means that $\beta^2+\gamma^2 = \frac{\delta_3 \cdot (1+\sqrt{\delta_1})^2}{\delta_1+(\sqrt{\delta_1}+\sqrt{\delta_3})^2}.$

\paragraph{K-median: Case \ref{eq:1.b'}:} It suffices to prove the following proposition.

\begin{proposition} \label{prop:triangle_2}
    Let $0 < p \le 1/2$, and suppose that we have $4$ points $i^*, i_1, i_3, j$ in Euclidean space such that $d(j, i^*) \le 1,$ $d(i^*, i_1) \le \sqrt{2} \cdot \min(t_{i^*}, t_{i_1}),$ $d(i^*, i_3) \le \sqrt{2} \cdot \min(t_{i^*}, t_{i_3}),$ $d(i_1, i_3) \ge \delta_2 \cdot \min(t_{i_1}, t_{i_3}),$ and $t_{i^*} \le 1$. Then, for any $T > 0$,
\[(1-p) \cdot d(j, i_1) + p \cdot d(j, i_3) \le \sqrt{3\left(X+Y\right)+2\sqrt{2\left(X+Y\right)^{2}-X \cdot \delta_2^2}}\]
where $X = p^2+p(1-p) \cdot T$ and $Y = (1-p)^2 + \frac{p(1-p)}{T}.$    
\end{proposition}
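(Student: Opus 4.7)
The plan is to reduce the statement to the previously established $k$-means Proposition~\ref{prop:triangle} via a single AM--GM step that converts the weighted sum of distances into a weighted sum of squared distances. The key observation is that the parameters $X$ and $Y$ are precisely the coefficients that arise when one bounds the cross term in $((1-p)\,d(j,i_1) + p\,d(j,i_3))^2$ using a weighted AM--GM with free parameter $T$.

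Concretely, first I would expand
\[
\bigl((1-p)\,d(j,i_1) + p\,d(j,i_3)\bigr)^2 = (1-p)^2 d(j,i_1)^2 + 2p(1-p)\,d(j,i_1)\,d(j,i_3) + p^2 d(j,i_3)^2,
\]
and apply the inequality $2\,d(j,i_1)\,d(j,i_3) \le d(j,i_1)^2/T + T\,d(j,i_3)^2$ to the cross term. Collecting coefficients yields exactly
\[
\bigl((1-p)\,d(j,i_1) + p\,d(j,i_3)\bigr)^2 \le Y\cdot d(j,i_1)^2 + X\cdot d(j,i_3)^2.
\]
Next I would set $q := X/(X+Y)$, so $1-q = Y/(X+Y)$, and factor to obtain $(X+Y)\bigl[(1-q)\,d(j,i_1)^2 + q\,d(j,i_3)^2\bigr]$ on the right-hand side. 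This is now a convex combination of squared distances, which is precisely the shape bounded by Proposition~\ref{prop:triangle}.

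The second step is to apply Proposition~\ref{prop:triangle} with $A = j$, $B = i^*$, $C = i_3$, $D = i_1$, taking $p := q$, and casting the $k$-median constraints into their squared form. Set $\sigma_1 = t_{i^*}^2$, $\sigma_2 = t_{i_3}^2$, $\sigma_3 = t_{i_1}^2$, $\nu_1 = \nu_2 = 2$, and $\nu_3 = \delta_2^2$. Squaring the given bounds, $d(i^*,i_r)^2 \le 2\min(t_{i^*},t_{i_r})^2 = \nu_r\min(\sigma_1,\sigma_r)$ for $r \in \{1,3\}$, and $d(i_1,i_3)^2 \ge \delta_2^2\min(t_{i_1},t_{i_3})^2 = \nu_3\min(\sigma_2,\sigma_3)$. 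The hypothesis $\sigma_1 \le 1$ is exactly $t_{i^*} \le 1$, and the hypothesis $\nu_1,\nu_2 \ge \nu_3$ is $2 \ge \delta_2^2$, which is valid since $\delta_2 = 1.395 < \sqrt{2}$. Applying the proposition then yields
\[
(1-q)\,d(j,i_1)^2 + q\,d(j,i_3)^2 \le 3 + 2\sqrt{2 - q(1-q)\,\delta_2^2}.
\]

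Finally, multiplying through by $X+Y$ and using the algebraic identity $(X+Y)^2\,q(1-q) = XY$ gives
\[
Y\,d(j,i_1)^2 + X\,d(j,i_3)^2 \le 3(X+Y) + 2\sqrt{2(X+Y)^2 - XY\,\delta_2^2},
\]
and taking square roots (together with the AM--GM step) produces the stated bound. The only genuine obstacle is checking that the hypotheses of Proposition~\ref{prop:triangle} transfer cleanly after squaring the $k$-median conflict-graph inequalities into $k$-means style squared constraints; in particular the numeric inequality $\delta_2 \le \sqrt 2$ must hold, which is exactly why the threshold $\delta_2 = 1.395$ was chosen. Everything else is routine algebraic manipulation, and the parameter $T > 0$ remains free throughout, as in the proposition's statement.
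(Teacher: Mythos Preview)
Your proposal is correct and follows essentially the same approach as the paper: expand the square of the weighted sum, bound the cross term via AM--GM with the free parameter $T$ to obtain coefficients $X$ and $Y$, then invoke Proposition~\ref{prop:triangle} with the parameter $p$ replaced by $X/(X+Y)$ and the squared $k$-median constraints. Your handling of the $\sigma_i$ parameters (setting $\sigma_i = t_i^2$ so that the squared inequalities match the hypotheses of Proposition~\ref{prop:triangle}) is in fact slightly more careful than the paper's own write-up, though the conclusion is identical since it does not depend on the $\sigma_i$.
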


\begin{proof}
    We write $d(j, i_1) = \|j-i_1\|_2$ and $d(j, i_3) = \|j-i_3\|_2$. First, we have that
\begin{align*}
    &\hspace{0.5cm}\left((1-p) \cdot \|j-i_1\|_2 + p \cdot \|j-i_3\|_2\right)^2 \\
    &= (1-p)^2 \cdot \|j-i_1\|_2^2 + p^2 \cdot \|j-i_3\|_2^2 + 2p(1-p) \cdot \|j-i_1\|_2 \cdot \|j-i_3\|_2 \\
    &\le (1-p)^2 \cdot \|j-i_1\|_2^2 + p^2 \cdot \|j-i_3\|_2^2 + p(1-p) \cdot \left(\frac{1}{T} \cdot \|j-i_1\|_2^2 + T \cdot \|j-i_3\|_2^2\right)
\end{align*}
    for any $T > 0$. Writing $X = p^2+p(1-p) \cdot T$ and $Y = (1-p)^2 + \frac{p(1-p)}{T},$ we have that
\[\left((1-p) \cdot \|j-i_1\|_2 + p \cdot \|j-i_3\|_2\right)^2 \le X \cdot \|j-i_3\|_2^2 + Y \cdot \|j-i_1\|_2^2.\]
    We can now apply Proposition \ref{prop:triangle} on the points $A=j, B=i^*, C=i_3, D=i_1$, with $\nu_1=\nu_2=2, \nu_3=\delta_2^2$, and $\sigma_1=t_{i^*}, \sigma_2=t_{i_3},\sigma_3=t_{i_1}$ and where we replace the parameter $p$ in Proposition \ref{prop:triangle} with $\frac{X}{X+Y}$, to say that
\begin{align*}
    &\hspace{0.5cm} X \cdot \|j-i_3\|_2^2 + Y \cdot \|j-i_1\|_2^2 \\
    &\le (X+Y) \cdot \left(1 + \frac{X}{X+Y} \cdot 2 + \frac{Y}{X+Y} \cdot 2 + 2 \sqrt{\frac{X}{X+Y} \cdot 2 + \frac{Y}{X+Y} \cdot 2 - \frac{X}{X+Y} \cdot \frac{Y}{X+Y} \cdot \delta_2^2}\right) \\
    &= 3(X+Y) + 2\sqrt{2(X+Y)^2-\delta_2^2 \cdot XY}.
\end{align*}
    In summary, we have that for any choice of $T > 0$,
\begin{align*}
    \left((1-p) \cdot \|j-i_1\|_2 + p \cdot \|j-i_3\|_2\right)^2 &\le X \cdot \|j-i_3\|_2^2 + Y \cdot \|j-i_1\|_2^2 \\
    &\le 3(X+Y) + 2\sqrt{2(X+Y)^2-\delta_2^2 \cdot XY}. \qedhere
\end{align*}
\end{proof}

\paragraph{K-median: Case \ref{eq:1.g.i'}:} Our goal is to maximize
\[\frac{(1-p) \cdot (u+\delta_1 \cdot t)+p \cdot \max(0, u-t \cdot \delta_3)}{1-p+p \cdot \max(0, u-t \cdot \delta_3)}\]
over $0 \le t, u \le 1.$ First, note that if $u-t \cdot \delta_3 \le 0,$ then since $t \le 1,$ this means that $u \le \delta_3$. In this case, the fraction equals $u + \delta_1 \cdot t \le \delta_1+\delta_3 = 2$, since $\delta_1 = \sqrt{2}$ and $\delta_3 = 2-\sqrt{2}$. 

Alternatively, we have that $\max(0, u-t \cdot \delta_3) = u-t \cdot \delta_3.$ Let $u' = u-t \cdot \delta_3 \ge 0$, so $u = u' + t \cdot \delta_3$. In this case, we wish to maximize
\[\frac{(1-p) \cdot (u+\delta_1 \cdot t) + p \cdot (u-t \cdot \delta_3)}{1-p + p \cdot (u-t \cdot \delta_3)} = \frac{u' + (1-p) \cdot (\delta_1+\delta_3) t}{p \cdot u' + (1-p)} = \frac{u' + (1-p) \cdot 2t}{p \cdot u' + (1-p)}.\]
    over $0 \le t \le 1$ and $0 \le u' \le 1-t \cdot \delta_3$. Since $\frac{1}{p} \ge 2 \ge 2t$, we have that increasing $u'$ increases the fraction overall. So, we may assume that $u' = 1-t \cdot \delta_3.$ In this case, we are trying to maximize the fraction
\[\frac{1-t \cdot \delta_3 + (1-p) \cdot 2t}{p \cdot (1-t \cdot \delta_3) + (1-p)} =\frac{1 + (2-\delta_3) t - 2pt}{1 - p \cdot \delta_3 \cdot t}.\]
    Since $p \le \frac{1}{2}$, the numerator increases and the denominator decreases as $t$ increases, so the fraction increases overall. Thus, this fraction is maximized when $t = 1$, and equals
\[\frac{1+(2-\delta_3) - 2p}{1-p \cdot \delta_3} = \frac{1 + \delta_1 - (\delta_1+\delta_3)p}{1-p \cdot \delta_3}.\]

\section{Numerical Analysis for Euclidean $k$-means and $k$-median} \label{app:k_means_numerical_analysis}

\subsection{The $k$-means case}

We recall that our goal is to show, for an appropriate choice of $\rho$, that for any $0 \le \theta \le 1$ and any $r \ge 1$, we cannot simultaneously satisfy
\begin{align}
    \mathfrak{D}' &\ge \sum_{i = 1}^{5} \left(Q_i-\frac{p_1}{r} R_i\right), \label{eq:copy_1}\\
    \rho \cdot \mathfrak{D}' &< \frac{\theta}{r} \sum_{i = 1}^{5} \rho^{(i)}(p_1) \cdot (Q_i-p_1 \cdot R_i) + \left(1-\frac{\theta}{r}\right) \cdot \rho(p_1) \cdot \left(\mathfrak{D}' + p_1 \cdot \frac{\theta}{r} \sum_{i = 1}^{5} R_i\right), \label{eq:copy_2} \\
    \rho \cdot \mathfrak{D}' &< \sum_{i = 1}^{5} \rho^{(i)}\left(\frac{p_1}{r}\right) \cdot \left(Q_i-\frac{p_1}{r} \cdot R_i\right), \label{eq:copy_3}
\end{align}
and
\begin{equation}
    R_1 \le Q_1, \hspace{1cm} R_2 \le Q_2, \hspace{1cm} R_3 \le Q_3, \hspace{1cm} R_4 \le 1.75 Q_4, \hspace{1cm} R_5 \le 2 Q_5, \label{eq:copy_4}
\end{equation}
    where we will let $Q_1, Q_2, Q_3, Q_4, Q_5, R_1, R_2, R_3, R_4, R_5$ and $\mathfrak{D}'$ be arbitrary nonnegative reals. For $p_1 = 0.402$, we recall that $\rho(p_1) = 3+2\sqrt{2}$. 
    Now, note that if we increase $\mathfrak{D}'$, Equations \eqref{eq:copy_2} and \eqref{eq:copy_3} become harder to satisfy, since in both equations, the left hand side has a greater slope as a function of $\mathfrak{D}'$ than the right hand side. As a result, we may assume that $\mathfrak{D}' = \sum_{i=1}^{5} \left(Q_i-\frac{p_1}{r} R_i\right)$, which we know is nonnegative since $p_1 < 0.5$ and $r \ge 1$, so $Q_i \ge \frac{p_1}{r} \cdot R_i$ for all $1 \le i \le 5$.
    
    Now, we note that we may assume $r \ge 2.37$. This is because if $r \le 2.37,$ then $\frac{p_1}{r} \ge 0.169,$ and it is easy to verify that $\rho(p) \le 5.912$ for any $p \in [0.169, 0.402]$ (for instance, by using Lemma \ref{lem:more_bash} to bound $\rho^{(1)}(p), \rho^{(2)}(p)$, and $\rho^{(5)}(p)$, and using Cases \ref{eq:1.g.i} and \ref{eq:2.d} for $\rho^{(3)}(p)$ and $\rho^{(4)}(p)$). Therefore, for any $\rho \ge 5.912,$ if $r \le 2.37$ then Equations \eqref{eq:copy_1} and \eqref{eq:copy_3} cannot hold simultaneously. In addition, we may also assume that $r \le 4.18,$ since if $r \ge 3.5$, then we can use the simpler bound of $\rho(p_1) \cdot \left(1+\frac{1}{4 r \cdot (r/(2p_1)-1)}\right) \le 5.912$.
    
    We recall that for $p \in [0.096, 0.402]$, $\rho^{(1)}(p) = 3+2\sqrt{2}$, $\rho^{(2)}(p) = 1+2 \cdot p + (1-p)\cdot \delta_1+2\sqrt{2 \cdot p^2+(1-p)\cdot \delta_1},$ $\rho^{(3)}(p) = \frac{(1-p)\cdot(1+\sqrt{\delta_1})^2+p\cdot(1-\sqrt{\delta_3})^2}{1-p+p (1-\sqrt{\delta_3})^2}$, $\rho^{(4)}(p) = \frac{(1-2p)\cdot(1+\sqrt{\delta_1})^{2}\cdot(\delta_1+(\sqrt{\delta_1}+\sqrt{\delta_3})^{2})+p\cdot\left(1+\sqrt{\delta_1}\right)^2\cdot \delta_3}{(1-2p)\cdot(\delta_1+(\sqrt{\delta_1}+\sqrt{\delta_3})^{2})+p\cdot(1+\sqrt{\delta_1})^2\cdot \delta_3}$, and $\rho^{(5)}(p) = 5.68$.

    Now, let $\rho = \kmeansratio$, and suppose there exist $0 \le \theta_0 \le \theta \le \theta_1 \le 1$ and $1 \le r_0 \le r \le r_1$ such that Equations \eqref{eq:copy_1}, \eqref{eq:copy_2}, \eqref{eq:copy_3}, and \eqref{eq:copy_4} can be simultaneously satisfies for nonnegative $Q_1, Q_2, Q_3, Q_4, Q_5, R_1, R_2, R_3, R_4, R_5$. Then, in fact we must be able to satisfy the weaker conditions
\begin{align*}
    \mathfrak{D}' &= \sum_{i=1}^{5} \left(Q_i-\frac{p_1}{r_0} \cdot R_i\right) \\
    \rho \cdot \mathfrak{D}' &< \frac{\theta_1}{r_0} \cdot \sum_{i=1}^{5} \rho^{(i)}(p_1) \cdot (Q_i-p_1 \cdot R_1) + \left(1-\frac{\theta_0}{r_1}\right) \cdot \rho(p_1) \cdot \left(\mathfrak{D}' + \frac{\theta_1}{r_0} \cdot \sum_{i=1}^{4} R_i\right) \\
    \rho \cdot \mathfrak{D}' &< \sum_{i=1}^{5} \rho^{(i)} \left(\frac{p_1}{r_1}\right) \cdot \left(Q_i-\frac{p_1}{r_1} \cdot R_i\right),
\end{align*}
    and \eqref{eq:copy_4}, while having $Q_1, Q_2, Q_3, Q_4, Q_5,R_1, R_2, R_3, R_4, R_5$ all be nonnegative. Indeed, the conditions are weaker since we have decreased the value of $\mathcal{D}'$ and increased all terms on the right-hand side (noting that each $\rho^{(i)}$ is a non-increasing function in the range $[0, 0.402]$).
    
    For every $0 \le \theta_0 \le 0.99$ and $2.37 \le r_0 \le 4.17$ such that $\theta_0, r_0$ are integral multiples of $0.01$, we look at the intervals $\theta \in [\theta_0, \theta_0 + 0.01]$ and $r \in [r_0, r_0+0.01]$. If this region has a nonnegative solution to these inequalities, then we further partition the region $[\theta_0, \theta_0 + 0.01] \times [r_0, r_0+0.01]$ into a $10 \times 10$ grid of dimensions $0.001$. If one of these regions has a nonnegative solution to these inequalities, we partition one step further into a grid of dimensions $0.0001$ (we will not need to partition beyond this).
    Using this procecdure, we are able to obtain that there is \emph{no solution} in $\theta \in [0,1]$ and $r \in [2.37, 4.18]$ when $\rho = \kmeansratio$, which allows us to establish that our algorithm provides a polynomial-time $\boxed{\kmeansratio}$-approximation for Euclidean $k$-means clustering.
    
    See Appendix \ref{app:files} for links to the Python code.
    
\subsection{The $k$-median case}

The $k$-median case is almost identical, except for the modified equations and modified choices of $\rho^{(i)}$. This time, we wish to show that for any $\theta \in [0, 1]$ and $r \ge 1$, there exists $0 \le \theta_0 \le \theta \le \theta_1 \le 1$ and $1 \le r_0 \le r \le r_1$ such that one cannot satisfy
\begin{align*}
    \mathfrak{D}' &= \sum_{i=1}^{3} \left(Q_i-\frac{p_1}{r_0} \cdot R_i\right) \\
    \rho \cdot \mathfrak{D}' &< \frac{\theta_1}{r_0} \cdot \sum_{i=1}^{3} \rho^{(i)}(p_1) \cdot (Q_i-p_1 \cdot R_1) + \left(1-\frac{\theta_0}{r_1}\right) \cdot \rho(p_1) \cdot \left(\mathfrak{D}' + \frac{\theta_1}{r_0} \cdot \sum_{i=1}^{4} R_i\right) \\
    \rho \cdot \mathfrak{D}' &< \rho^{(1)} \left(\frac{p_1}{r_1}\right) \cdot \left(Q_i-\frac{p_1}{r_1} \cdot R_i\right)+\rho^{(2)} \left(\frac{p_1}{r_1}\right) \cdot \left(Q_i-\frac{p_1}{r_1} \cdot R_i\right)+\rho^{(3)} \left(\frac{p_1}{r_0}\right) \cdot \left(Q_i-\frac{p_1}{r_1} \cdot R_i\right),
\end{align*}
    where $Q_1, Q_2, Q_3, R_1, R_2, R_3$ are nonnegative.
    In the above equations, we set $p_1 = 0.068$, $\delta_1 = \sqrt{2}, \delta_2 = 1.395$, and $\delta_3 = 2-\sqrt{2}$. Also, recall that for $p \in [0.01, 0.068],$ we have that $\rho^{(1)}(p) \le \max\left(1+\delta_2, \sqrt{3\left(X+Y\right)+2\sqrt{2\left(X+Y\right)^{2}-\delta_2^2 \cdot XY}}\right),$
    where $X = p^2+p(1-p) \cdot 1.1$ and $Y = (1-p)^2 + \frac{p(1-p)}{1.1},$ $\rho^{(2)}(p) \le \frac{(1+\sqrt{2})-(3-\sqrt{2})\cdot(2p-2p^2)}{1-(2-\sqrt{2})\cdot(2p-2p^2)},$ and that $\rho^{(3)}(p) \le \frac{1}{\frac{1}{2}-2\cdot(2-d_2)\cdot p}$.
    
    We remark that in the final equation, we use $\rho^{(3)}(p_1/r_0)$ instead of $\rho^{(3)}(p_1/r_1)$ - this is because $\rho^{(3)}$ is a decreasing function on the region $[0, 0.068]$, as opposed to $\rho^{(1)}$ and $\rho^{(2)}$ which are both decreasing functions.
    
    First, we may assume that $r \in [2.4, 3.42].$ Indeed, if $1 \le r < 2.4$, one can use the more naive bound of $\rho(p_1/r)$, which is less than $2.406$. If $r > 3.42$, one can instead use the bound of 
\[\rho(p_1) \cdot \left(1+\frac{1}{4 r \cdot \left(\frac{p_0 \cdot r}{p_1}-1\right)}\right) \le 2.395 \cdot \left(1+\frac{1}{4 \cdot 3.42 \cdot \left(\frac{0.337 \cdot 3.42}{0.068}-1\right)}\right) < 2.406.\]
    To finish, we apply a similar method as in the $k$-means case. We split the region $(\theta, r) \in [0, 1] \times [2.4, 3.42]$ into grid blocks of size $0.005 \times 0.005$ with $\theta_0, \theta_1, r_0, r_1$ being the endpoints in each direction. We verify that the linear program has no solution when $\rho = 2.406$ for each grid block: if it does, we further refine the grid block into smaller $0.001 \times 0.001$-sized pieces and verify each of the smaller pieces.
    
    See Appendix \ref{app:files} for links to the Python code.
    
\section{Changes to Construction of Roundable Solutions} \label{app:why_am_i_doing_this}

In this section, we explain how Ahmadian et al.~\cite{ahmadian2017better} implicitly prove Theorem \ref{thm:ahmadian_roundable}, up to some minor modifications of their algorithm and analysis. Because the algorithm and analysis is almost entirely the same, we only describe the differences between the algorithm and analysis in \cite{ahmadian2017better} and what we need for our Theorem \ref{thm:ahmadian_roundable}.

The only changes in the overall algorithm will be as follows. We will set some small constant $\kappa$ such that $\eps = \kappa^2$. We will then set $K = \Theta(\eps^{-1} \gamma^{-4} \log \kappa)$ as opposed to $K = \Theta(\eps^{-1} \gamma^{-4})$ in \cite[Algorithm 2, Line 4]{ahmadian2017better}, and set the definition of \emph{stopped} \cite[Section 7]{ahmadian2017better} to be that $j \in \mathcal{D}$ is stopped if $\exists j' \neq j \in \mathcal{D}$ such that $(1+\kappa) \bar{\alpha}_j \ge d(j, j') + \kappa^{-1} \cdot \bar{\alpha}_{j'}$, as opposed to $\exists j' \neq j \in \mathcal{D}$ such that $2 \bar{\alpha}_j \ge d(j, j') + 6 \cdot \bar{\alpha}_{j'}$. Here, we are letting $\bar{\alpha}_j = \sqrt{\alpha_j}$ for $j \in \mathcal{D}$.

We now describe how this changes the claims throughout \cite[Sections 7-8]{ahmadian2017better}. We only describe the changes to the statements, because the proofs do not change at all. For nearly the remainder of this appendix, we will consider a new definition of roundable, neither the one in \cite{ahmadian2017better} nor our Definition \ref{def:roundable}. Our modified definition will instead have that: for all $j \in \mathcal{D} \backslash \mathcal{D}_B$ and all $A \in [2 \kappa, 1/(2 \kappa)]$, $(1+A + 10 \eps/\kappa)^2 \cdot \alpha_j \ge \left(d(j, w(j)) + A \cdot \sqrt{\tau_{w(j)}}\right)^2$, which replaces Conditions 3a and 3b in Definition \ref{def:roundable} (or Condition 2a in \cite[Definition 5.1]{ahmadian2017better}). In addition, our modified definition will have that for all clients $j \in \mathcal{D}$, $\kappa^{-2} \cdot \gamma \cdot \text{OPT}_{k'} \ge \sum_{j \in \mathcal{D}_B} (d(j, w(j)) + A \cdot \sqrt{\tau_{w(j)}})^2$ for all $A \in [2\kappa, 1/(2\kappa)]$, which replaces our Condition 3c in Definition \ref{def:roundable} (or condition 2b in \cite[Definition 5.1]{ahmadian2017better})

\medskip

We are now ready to describe how each claim in \cite{ahmadian2017better} changes (or stays the same).

\cite[Lemma 7.1]{ahmadian2017better} still holds with our new definition of \emph{stopped}: the same proof still works.

The (unnumbered) claim in the first paragraph of \cite[Section 8]{ahmadian2017better} still goes through, with our modified definition of \emph{roundable} (i.e., the one presented in this section).

In \cite[Section 8.1]{ahmadian2017better}, both \cite[Lemma 8.1]{ahmadian2017better} and \cite[Lemma 8.2]{ahmadian2017better} still hold with our new definition of stopped, with essentially no changes to the proof. Likewise, in \cite[Section 8.2]{ahmadian2017better}, Lemmas 8.3, 8.4, 8.5 and Corollary 8.6 in \cite{ahmadian2017better} still hold with our new definition of stopped.

In \cite[Section 8.3]{ahmadian2017better}, we change the definition of $\mathcal{B}$, the ``potentially bad'' clients (see \cite[Equation (8.1)]{ahmadian2017better}), to be $$\mathcal{B} = \left\{j \in \mathcal{D}: j \text{ is undecided and } (1+\kappa) \cdot \bar{\alpha}_j < d(j, j') + \frac{1}{\kappa} \cdot \bar{\alpha}_j^{(0)}\right\},$$
where $\bar{\alpha}_j^{(0)} = \sqrt{\alpha_j^{(0)}}$ refers to the value of $\alpha_j$ at the start of a \Call{RaisePrice} solution (i.e., when the solution $\mathcal{S}$ is labeled as $\mathcal{S}^{(0)}$ in our Algorithm \ref{alg:main}).
This contrasts to the original definition of $\mathcal{B}$, which was the set of undecided clients with $2 \bar{\alpha}_j < d(j, j') + 6 \cdot \bar{\alpha}_j^{(0)}$, in a similar way to how our definition of stopped contrasts with the original definition.

\cite[Lemma 8.7]{ahmadian2017better} is now as follows. For any $(\alpha, z)$ produced during \Call{RaisePrice}{}, for every client $j \in \mathcal{D}$ the following holds:
\begin{itemize}
    \item If $j \in \mathcal{D} \backslash \mathcal{B}$ then there exists a tight facility $i$ such that $(1+A+\eps/\kappa) \cdot \bar{\alpha}_j \ge d(j, i) \cdot A \cdot \sqrt{t_i}$ for all $A \in [2\kappa, 1/(2\kappa)]$.
    \item There exists a tight facility $i$ such that $\frac{1}{\kappa} \cdot \bar{\alpha}_j^{(0)} \ge d(j, i) + A \cdot \sqrt{t_i}$ for all $A \in [2\kappa, 1/(2\kappa)]$.
\end{itemize}
Again, the same proof holds.

We now move to \cite[Section 8.4]{ahmadian2017better}. We update \cite[Lemma 8.8]{ahmadian2017better} to be that if $j$ has a tight edge to some facility $i$, then $\alpha_{j'} \le \frac{5^2}{\kappa^4} \cdot \alpha_j$ for any $j'$ with a tight edge to $i$. In the proof, we would replace the stronger statement \cite[Equation (8.2)]{ahmadian2017better} with: $(1+\kappa) \cdot \bar{\alpha}_{j'} \le d(j', j) + \frac{4}{\kappa} \cdot \bar{\alpha}_j$. In addition, we would update the Claim inside the proof of \cite[Lemma 8.8]{ahmadian2017better} to be that: there is some tight facility $i^*$ in $(\alpha^{(0)}, z^{(0)})$ and also:
$$d(j_1, i^*) \le (1+\kappa) \bar{\alpha}_{j_1}^{(0)} \le (1+\kappa) \bar{\alpha}_j \hspace{0.5cm} \text{and} \hspace{0.5cm} \alpha_{j''}^{(0)} \le (1+\eps) \alpha_{j_1}^{(0)} \le (1+\eps) \alpha_j \text{ for all } j'' \in N^{(0)}(i^*).$$
Up to these changes, the rest of the proof of Lemma 8.8 in \cite{ahmadian2017better} is essentially unchanged.

Next, we update \cite[Lemma 8.9]{ahmadian2017better} to replace ``$\alpha_j^{(0)} \ge 20^2 \theta_s$ or $\alpha_j \ge 20^2 \theta_s$'' with ``$\alpha_j^{(0)} \ge \frac{6^2}{\kappa^4} \theta_s$ or $\alpha_j \ge \frac{6^2}{\kappa^4} \theta_s$'' - again the same proof still holds. 

\cite[Proposition 8.10]{ahmadian2017better} and its proof still hold, except that we have to replace $C_1 = \lceil \log_{1+\eps} (20^4) \rceil$ with $C_1 = \lceil \log_{1+\eps} 6^4/\kappa^8 \rceil = O(\eps^{-1} \log \kappa^{-1})$. So, if we set $\eps_z = n^{-6(K+C_1+2)-3}$ for our new choice of $C_1,$ Proposition 8.10 in \cite{ahmadian2017better} holds.

We update \cite[Proposition 8.11]{ahmadian2017better} to say that if $j \in \mathcal{B}$ for some $(\alpha, z)$ produced by \Call{RaisePrice}{}, then $\kappa^4 \cdot \theta_s \le \alpha_j^{(0)} \le \frac{6^4}{\kappa^{12}} \theta_s$. We replace the last equation in the Claim in the Proposition's proof with: $\kappa^2 \cdot \theta_s \le \alpha_{j'}^{(0)} \le \frac{6^4}{\kappa^8} \cdot \theta_s$. The same proof still holds. We also update the definition of $\mathcal{W}(\sigma)$ to be the set $\{j \in \mathcal{D}: \kappa^8/6^2 \cdot \theta_s \le \alpha_j^{(0)} \le 6^6/\kappa^{16} \cdot \theta_s \text{ for some s}\},$ where \Call{RaisePrice}{} defines the parameters $\theta_s$ based on the shift parameter $\sigma \in [0, K/2)$. With these definitions, and our modified choice of $K = \Theta(\eps^{-1} \gamma^{-4} \log \kappa^{-1})$, we will have that \cite[Corollary 8.12]{ahmadian2017better} still holds.

We now move to \cite[Section 8.5]{ahmadian2017better}. We keep their definitions of $\gamma$-\emph{close neighborhoods} and of \emph{dense} facilities and clients.
We also let the sets $\mathcal{F}_D, \mathcal{D}_D, \mathcal{F}_S^{(\ell)}, \mathcal{D}_S^{(\ell)}(i)$, and $\mathcal{D}_B$ be defined the same way (modulo our change in definition of $\mathcal{B}$). We also define $\tau_i$ the same way, and we will let $H^{(0)}$ and $IS^{(0)}$ simply represent the conflict graphs $H^{(0)}(\delta_1)$ and $IS_1^{(0)}$ as generated by our Algorithm \ref{alg:main}, respectively, at each iteration corresponding to making a new solution $\mathcal{S}^{(0)}$. Note that we are choosing $\delta = \delta_1 = \frac{4+8\sqrt{2}}{7}$, so $\sqrt{2} \le \sqrt{\delta} \le 2$.

With these, it is quite simple to see that \cite[Lemma 8.13]{ahmadian2017better} still holds, where the choice $\rho = \rho(0)$ in the proof is the approximation constant of the LMP algorithm in \cite{ahmadian2017better} with only a single parameter $\rho$ based on $\delta = \frac{4+8\sqrt{2}}{7}.$ In addition, \cite[Lemma 8.14]{ahmadian2017better} still holds, except that we replace $\text{OPT}_k$ with $\text{OPT}_{k'}$, since the final inequality in the proof relates $\sum_{j \in \mathcal{D}_{> \gamma}} d(j, IS^{(0)})^2 \le \sum_{j \in \mathcal{D}} d(j, IS^{(0)})^2$ to $\text{OPT}_{k'}$ now since $k'$ is the minimum of $k$ and all sizes of the sets that become $IS^{(0)}$ at some point. \cite[Corollary 8.15]{ahmadian2017better} (and the following Remark 8.16) also hold, due to our updated definition of $\mathcal{W}(\sigma)$ and $K$. 

Now, we update \cite[Lemma 8.17]{ahmadian2017better} to say: for any $j \in \mathcal{D}_D \cap \mathcal{B}$, either:
\begin{itemize}
    \item There exists a tight facility $i \in \mathcal{F}$ such that for all $A \in [2\kappa, 1/(2\kappa)]$, $(1+A + 10 \cdot \eps/\kappa) \bar{\alpha}_j \ge d(j, i) + A \cdot \sqrt{t_i}$
    \item There exists a special facility $i \in \mathcal{F}_S$ such that for all $A \in [2\kappa, 1/(2\kappa)]$, $(1+A + 10 \cdot \eps/\kappa) \cdot \bar{\alpha}_j \ge d(j, i) + A \cdot \sqrt{\tau_i}.$
\end{itemize}
    Again, the proof holds with minimal change.
    
We now move to \cite[Section 8.6]{ahmadian2017better}, the final section of Ahmadian et al.'s analysis. We first look at how \cite[Proposition 8.18]{ahmadian2017better} changes. The fact that $\alpha$ is feasible for $\text{DUAL}(\lambda+\frac{1}{n})$, that $\alpha_j \ge 1$ for all $j$, and that $z_i \in [\lambda, \lambda+\frac{1}{n}]$ are all still true. We now have that $(1+A+10 \eps/\kappa)^2 \alpha_j \ge (d(j, i)+A \sqrt{\tau_i})^2$ for all clients $j$ not in $\mathcal{B} \backslash \mathcal{D}_D \subset \mathcal{D}_B$ by using our modified versions of Lemma 8.7 and Lemma 8.17. In addition, our modified Lemma 8.7 tells us that even for bad clients $j \in \mathcal{D}_B$, there exists a tight facility $i$ such that $\kappa^{-2} \cdot \alpha_j^{(0)} \ge (d(j, i) + A \cdot \sqrt{t_i})^2$ for all $A \in [2\kappa, 1/(2\kappa)]$. Hence, we precisely have that $\kappa^{-2} \cdot \gamma \cdot \text{OPT}_{k'} \ge \sum_{j \in \mathcal{D}_B} (d(j, w(j)) + A \cdot \sqrt{\tau_{w(j)}})^2$ for all $A \in [2\kappa, 1/(2\kappa)]$, by adding over all clients $j \in \mathcal{D}_B$ and using Corollary 8.15, which still holds unchanged, apart from replacing $\text{OPT}_k$ with $\text{OPT}_{k'}$. The final part of proving \cite[Proposition 8.18]{ahmadian2017better}, i.e., verifying Condition 4 in our definition \ref{def:roundable}, holds where the only change is replacing $\text{OPT}_k$ with $\text{OPT}_{k'}$. Thus, we have that each solution that is generated is $(\lambda, k')$-roundable.

Finally, we have that \cite[Theorem 8.19]{ahmadian2017better} still holds with essentially no change, meaning that each call to \Call{RaisePrice}{} takes polynomial time and generates a polynomial number of $(\lambda, k')$-roundable solutions for our modified definition of roundable. This also implies that Algorithm \ref{alg:main} runs in polynomial time, since \Call{GraphUpdate}{} clearly takes polynomial time, and since the total number of times we call \Call{RaisePrice}{} is at most $|\mathcal{F}| \cdot L,$ which is polynomial since $\eps_z^{-1}$ and $m = |\mathcal{F}|$ are both polynomial in $n$.

Now, we note that each time we update our quasi-independent set in \Call{GraphUpdate}{}, the new set $(I_1^{(\ell, r)}, I_2^{(\ell, r)}, I_3^{(\ell, r)})$ only depends on $I_1^{(\ell, r-1)}$ and has no dependence on our choice of $I_2^{(\ell, r-1)}$ or $I_3^{(\ell, r-1)}$. Therefore, if we ignore the sets $I_2, I_3$ and only focus on $I_1$, the procedure of generating the sequence $\{I_1^{(\ell, r)}\}$ is in fact identical to the procedure in Ahmadian et al.\cite{ahmadian2017better}. The only difference is that we choose our stopping point based on the first time that $|I_1^{(\ell, r)}| + p_1 \cdot |I_2^{(\ell, r)} \cup I_3^{(\ell, r)}| < k,$ as opposed to the first time that $|I_1^{(\ell, r)}| \le k$ as done in \cite{ahmadian2017better}.
Because of this, our Algorithm \ref{alg:main} in fact works exactly as the main algorithm in \cite{ahmadian2017better} if we only focus on $I_1^{(\ell, r)}$ and set $\delta = \delta_1 = \frac{4+8\sqrt{2}}{7}.$ The only differences are the way we choose when to stop the procedure and the way we update $K$ and the definition of stopped clients and definition of the set $\mathcal{B}$.

As a result, we have that each solution $(\alpha, z, \mathcal{F}_S, \mathcal{D}_S)$ generated is $\lambda$ is $(\lambda, k)$-roundable, up to our modified definition of roundable (where $\lambda$ is chosen accordingly). By this, we mean that we replace Condition 3 in \ref{def:roundable} with:
\begin{enumerate}[label=\alph*)]
    \item For all $j \in \mathcal{D} \backslash \mathcal{D}_B$ and all $A \in [2\kappa, 1/(2\kappa)],$ $(1+A+10 \eps/\kappa)^2 \cdot \alpha_j \ge (d(j, w(j)) + A \cdot \sqrt{\tau_{w(j)}})^2$.
    \item For all $j \in \mathcal{D}$ and all $A \in [2\kappa, 1/(2 \kappa)]$, $\kappa^{-2} \cdot \gamma \cdot \text{OPT}_{k'} \ge \sum_{j \in \mathcal{D}_B} (d(j, w(j) + A \cdot \sqrt{\tau_{w(j)}})^2$.
\end{enumerate}
Recall that $\eps = \kappa^2 \ll 1$.
Now, by setting $A = 2\kappa$, we have that 
\[(1+12 \kappa)^2 \cdot \alpha_j = \left(1+2 \kappa + \frac{10 \eps}{\kappa}\right)^2 \cdot \alpha_j \ge \left(d(j, w(j)) + 2 \kappa \cdot \sqrt{\tau_{w(j)}}\right)^2 \ge c(j, w(j)),\]
and by setting $A = 1/(2\kappa)$, we have
\[\left(\frac{1+12 \kappa}{2\kappa}\right)^2 \cdot \alpha_j = \left(1 + \frac{1}{2\kappa} + 10 \cdot \frac{\eps}{\kappa}\right)^2 \cdot \alpha_j \ge \left(d(j, w(j)) + \frac{1}{2 \kappa} \cdot \sqrt{\tau_{w(j)}}\right)^2 \ge \frac{1}{(2\kappa)^2} \cdot \tau_{w(j)}.\]
Finally, by setting $A = 1$, we have
\[\kappa^{-2} \cdot \gamma \cdot \text{OPT}_{k'} \ge \sum_{j \in \mathcal{D}_B} \left(d(j, w(j)) + \sqrt{\tau_{w(j)}}\right)^2 \ge \sum_{j \in \mathcal{D}_B} \left(c(j, w(j)) + \tau_{w(j)}\right).\]

Therefore, by setting $\eps' = (1+12 \kappa)^2 - 1 = O(\sqrt{\eps})$, we have that $(1+\eps') \cdot \alpha_j \ge c(j, w(j))$ and $(1+\eps') \cdot \alpha_j \ge \tau_{w(j)}$. In addition, if we set $\gamma' = \kappa^{-2} \cdot \gamma$, we have that $\gamma' \cdot \text{OPT}_k \ge \sum_{j \in \mathcal{D}_B} (c(j, w(j)) + \tau_{w(j)}).$ Therefore, since $\eps = \kappa^2$, and if we assume that $\gamma \ll \eps^2$, then we have that $\eps' \ll 1$ and $\gamma' \ll \eps \ll \eps'$. So, we can replace $\eps'$ with $\eps$ and $\gamma'$ with $\gamma$, we have an algorithm that still runs in polynomial time (since the old values of $\kappa, \gamma, \eps$ are still polynomial factors in the new values of $\eps, \gamma$ which are all constants, even if they are arbitrary small). But now, we have that each solution $(\alpha^{(\ell)}, z^{(\ell)}, \mathcal{F}_S^{(\ell)}, \mathcal{D}_S^{(\ell)})$ satisfies the actual Condition 3 in Definition \ref{def:roundable}, for our new values of $\eps$ and $\gamma$.

Overall, we have that the algorithm runs in polynomial time, and each solution is $k'$-roundable, where $k'$ is the minimum of $k$ and $\min |I_1^{(0)}|$ over the course of the algorithm. Each pair of consecutive solutions is close as in Theorem 8.19 in \cite{ahmadian2017better} (which follows from their Proposition 8.10). Next, we have that each time we create a solution $\mathcal{S}^{(0)}$, Lemma 8.1 in \cite{ahmadian2017better}, which holds in our setting, tells us that every client $(\alpha^{(0)}, z^{(0)})$ is decided. Since $\mathcal{B}$ is a subset of undecided facilities, $\mathcal{B} = \emptyset$ for a solution $\mathcal{S}^{(0)}$, which means that $\mathcal{F}_S = \emptyset$ based on the definition of $\mathcal{F}_S$. In addition, our modified version of \cite[Lemma 8.7]{ahmadian2017better} holds for all $j$ since $\mathcal{B} = \emptyset,$ which means that we can set the bad clients $\mathcal{D}_B$ to be $\emptyset$. So, for each $\mathcal{S}^{(0)}$, the special facilities and bad clients are both empty. Next, we have that $I^{(\ell, r)}$ is a nested quasi-independent set because of how we defined $\mathcal{V}^{(\ell, r)}$ and how we defined $I^{(\ell, r)}$ in our \Call{GraphUpdate}{} procedure. Finally, we had that $|\mathcal{V}^{(\ell, r)} \backslash \mathcal{V}^{(\ell, r+1)}| \le 1$ as described at the end of Subsection \ref{subsec:k_means_alg_prelim}, and that we created $I_1^{(\ell, r+1)}$ from $I_1^{(\ell, r)}$ by removing a single point from $\mathcal{V}^{(\ell, r)}$ if $|\mathcal{V}^{(\ell, r)} \backslash \mathcal{V}^{(\ell, r+1)}| = 1$ (which may or may not be in $I_1^{(\ell, r)}$), and then extending to a maximal independent set of $\mathcal{V}^{(\ell+1, r)}$. So, we have that $|I_1^{(\ell, r)} \backslash I_1^{(\ell, r+1)}| \le 1.$ This means that all of the statements of Theorem \ref{thm:ahmadian_roundable} hold.

\section{Limit of \cite{ahmadian2017better} in Obtaining Improved Approximations} \label{app:limit}

In this section, we show that the algorithm of Ahmadian et al.~\cite{ahmadian2017better} cannot guarantee an LMP approximation better than $1+\sqrt{2}$ in the case of $k$-median.
In more detail, we show that there exists a set of clients $\clients$, facilities $\facilities$, and parameter $\lambda > 0$ such that for any choice $\delta \ge 1$ in the pruning phase, the LMP algorithm described in the preliminary subsection \ref{subsec:k_means_alg_prelim} does not obtain better than a $(1+\sqrt{2})$-approximation for $k$-median.
As a result, their technique cannot guarantee an LMP approximation for all choices $\lambda$, which means any improvement to their analysis would have to move significantly outside the LMP framework.


\medskip

We start with the $k$-median case. First, consider the points $j$, $i_1=j_1$, and $i_2=j_2$ such that $j, j_1, j_2$ are collinear in that order, $d(j, j_1) = T$, and $d(j_1, j_2) = \sqrt{2} \cdot T$ for some choice of $T > 0$. 
Consider applying the LMP algorithm described in Section \ref{subsec:witness_and_conflict} on just these points $\facilities = \{i_1, i_2\}$ and $\clients = \{j,j, \dots, j, j_1, j_2\}$, where we set $\lambda = T$ and will include a large number $N$ of copies of $j$.
In this case, the growing phase will set $\alpha_j = \alpha_{j_1} = \alpha_{j_2} = T$, where $i_1$ and $i_2$ both become tight. Also, $N(i_1) = \{j_1\}$ (with each copy of $j$ barely not being in it) and $N(i_2) = \{j_2\}$.
One also obtains that $t_{i_1} = t_{i_2} = T$. Then, if $\delta \ge \sqrt{2}$, then
$i_1, i_2$ are connected in the conflict graph $H(\delta),$ which means that the pruning phase will only allow either $i_1$ or $i_2$ to be in our set $S$.
The algorithm is arbitrary, and may set $S = \{i_2\}$. In this case, the total clustering cost is $N \cdot T \cdot (1+\sqrt{2}) + \sqrt{2} \cdot T = (1+\sqrt{2}) \cdot T \cdot (N+1) - T,$ whereas the dual is $\sum \alpha_j - \lambda \cdot 1 = T \cdot (N+2) - T = T \cdot (N+1)$.
If $\delta < \sqrt{2},$ then both $i_1, i_2$ are included, so the primal is $T \cdot N$ and the dual is $\sum \alpha_j - \lambda \cdot 2 = T \cdot N$.

Next, we consider a point $j$ as well as points $i_1, \dots, i_h$, such that $i_1, \dots, i_h$ form a regular simplex with centroid $j$ and pairwise distances $T' \cdot \sqrt{2} \cdot (1-\eps)$ between each $i_r$ and $i_s$, for some $T' > 0$ and arbitrarily small $\eps$.
Consider applying the LMP algorithm described in Section \ref{subsec:witness_and_conflict} on just these points $\facilities = \{i_1, \dots, i_h\}$ and $\clients = \{j\}$, where we set $\lambda = T' \cdot \left(1-(1-\eps)\sqrt{\frac{h-1}{h}}\right)$.
In this case, we will have that since $d(\alpha_j, i_r) = T' \cdot (1-\eps) \cdot \sqrt{\frac{h-1}{h}}$ for all $1 \le r \le h$, all facilities $i_r$ will become tight with $\alpha_j = t_{i_r} = T'$ for all $1 \le r \le h$. 
If $\delta < \sqrt{2}$, since the pairwise distances are more than $T' \cdot \delta$, the conflict graph will be empty so all facilities will be in the independent set. Therefore, the clustering cost will be $T' \cdot (1-\eps) \cdot \sqrt{\frac{h-1}{h}}$, and the dual will be 
\[\alpha_j - \lambda \cdot h = T' \left(1 - h\left(1-(1-\eps)\sqrt{\frac{h-1}{h}}\right)\right) = T' \cdot \left((1-\eps) \sqrt{h(h-1)} - (h-1)\right) \le T' \cdot \left(1-\eps \cdot h\right).\]
Else, if $\delta \ge \sqrt{2}$, the conflict graph $H(\delta)$ is complete on $i_1, \dots, i_h$, so only one facility will be in the independent set. The clustering cost is still $T' \cdot (1-\eps) \cdot \sqrt{\frac{h-1}{h}}$, and the dual will be 
\[\alpha_j - \lambda \cdot 1 = T' \left(1 - \left(1-(1-\eps)\sqrt{\frac{h-1}{h}}\right)\right) = T' \cdot (1-\eps) \cdot \sqrt{\frac{h-1}{h}}.\]

Now, we fix $\eps$ as a very small constant, and $h = \Theta(\eps^{-3})$. Finally, we set $T = 1 = \lambda$ and $T' = 1/\left(1-(1-\eps)\sqrt{\frac{h-1}{h}}\right) = \Theta(\eps^{-1})$. Finally, we set $N = \Theta(\eps^{-2})$, and consider the concatenation of each of the two cases described above, where the corresponding cases are sufficiently far apart in Euclidean space that there is no interaction.

If $\delta \ge \sqrt{2}$, then the overall clustering cost is 
\[(1+\sqrt{2}) \cdot T \cdot (N+1)-T + T' \cdot (1-\eps) \cdot \sqrt{\frac{h-1}{h}} = (1+\sqrt{2}) \cdot N \cdot (1 + O(\eps))\]
whereas the total dual is
\[T \cdot N + T' \cdot (1-\eps) \cdot \sqrt{\frac{h-1}{h}} = N \cdot (1+O(\eps)).\]
So, we do not obtain better than a $1+\sqrt{2}-O(\eps)$ approximation in this case.
If $\delta < \sqrt{2}$, then the total dual is in fact negative, as it is at most
\[T \cdot N + T' \cdot (1-\eps \cdot h) \le -\Omega(\eps^{-3}).\]
Overall, there is no choice of $\delta$ that we can set to improve over a $1+\sqrt{2}$ approximation.

\end{document}